\documentclass[]{article}

\usepackage{amsmath}                    
\usepackage[bbgreekl]{mathbbol}
\DeclareSymbolFontAlphabet{\mathbbl}{bbold}

\usepackage{amsfonts,amssymb,amsthm,mathtools}    
\DeclareSymbolFontAlphabet{\mathbbm}{bbold}
\DeclareSymbolFontAlphabet{\mathbb}{AMSb}%
\usepackage{tensor,mathrsfs}
\usepackage[breaklinks=true,backref=none]{hyperref}  
\usepackage[utf8]{inputenc} 
\usepackage[T1]{fontenc}
\usepackage{geometry} 				\geometry{inner=3cm,outer=3cm,top=2cm,bottom=2cm}
\usepackage{lmodern}
\usepackage{enumerate}
\usepackage{titlesec}
\usepackage{bm}
\usepackage[dvipsnames]{xcolor}            
\usepackage[customcolors]{hf-tikz}
\usepackage{array}
\usepackage{titling}
\usepackage{authblk}
\usepackage[numbers,sort&compress]{natbib}
\setlength{\bibsep}{5pt plus .5ex}
\usepackage{scalerel}

\graphicspath{{./font/}{./}}

\newcommand{\dd}{\mathchoice
	{\mathbbm{d}\rrule{.087ex}{1.605ex}\hspace*{0.15ex}} 
	{\mathbbm{d}\rrule{.087ex}{1.605ex}\hspace*{0.15ex}} 
	{\mathbbm{d}\rrule{.08ex}{1.125ex}\hspace*{0.15ex}}  
	{\mathbbm{d}\rrule{.06ex}{.8ex}\hspace*{0.15ex}}     
}

\newlength{\alturaL}\settoheight{\alturaL}{\L}
\newcommand{\LL}{\includegraphics[height=1.1\alturaL]{./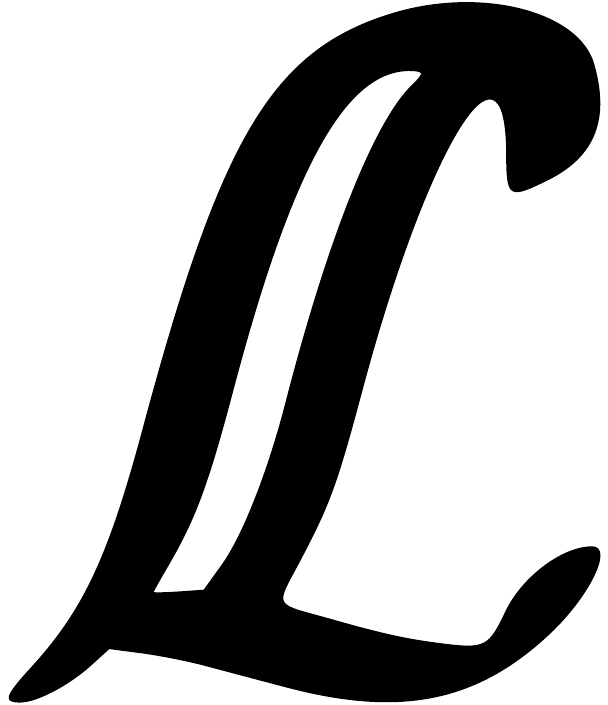}}

\newlength{\alturaO}\settoheight{\alturaO}{$\Omega$}
\newcommand{\OOmega}{\includegraphics[height=\alturaO]{./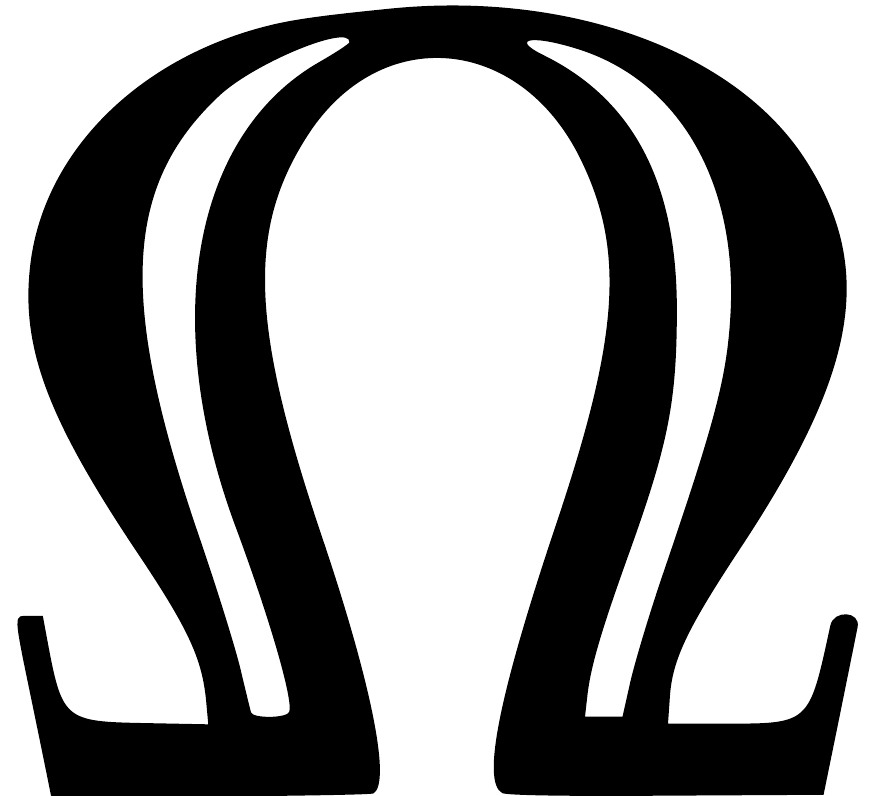}}

\newlength{\alturaI}\settoheight{\alturaI}{$\imath$}
\newcommand{\ii}{\raisebox{-.04ex}{\includegraphics[height=1.1\alturaI]{./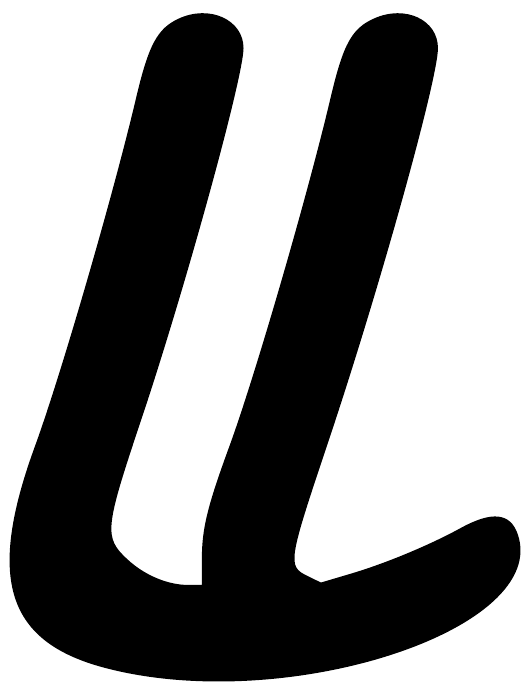}}}

\newlength{\alturaJ}\settoheight{\alturaJ}{$\Omega$}
\newcommand{\jj}{\raisebox{-.37ex}{\includegraphics[height=.9\alturaJ]{./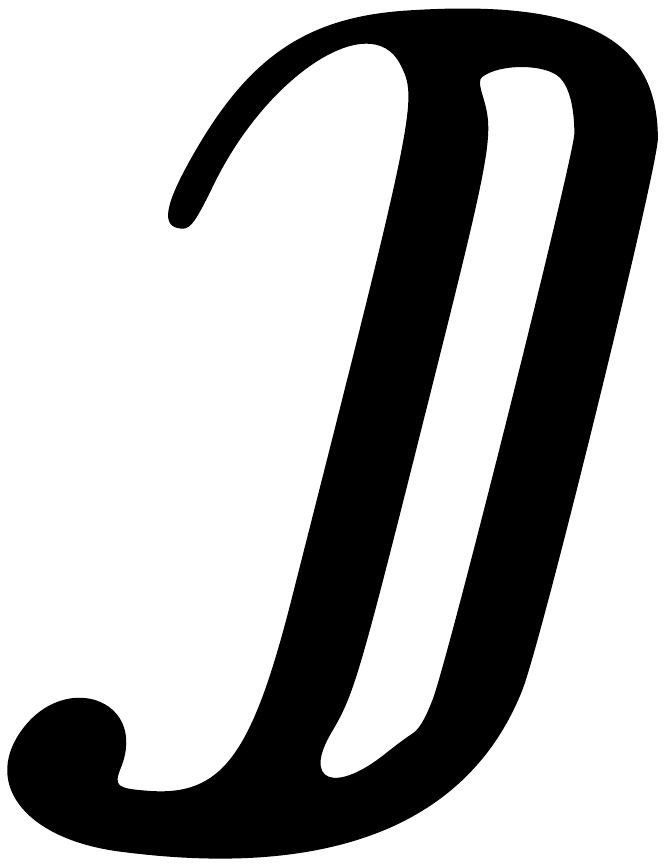}}}

\newlength{\alturaX}\settoheight{\alturaX}{$X$}
\newcommand{\campos}{\raisebox{-.075ex}{\includegraphics[height=1.05\alturaX]{./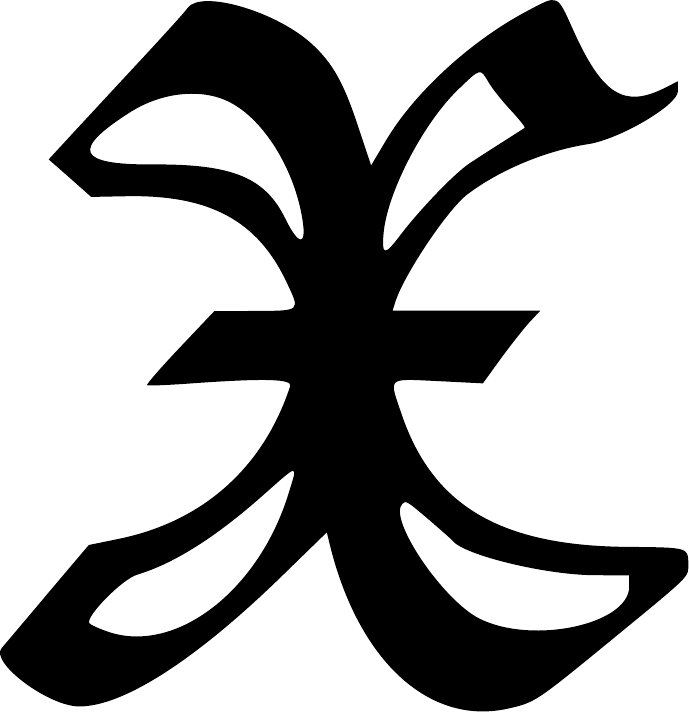}}}

\pdfsuppresswarningpagegroup=1

\newcommand{\definicion}{\hfsetfillcolor{blue!5}\hfsetbordercolor{blue}}
\newcommand{\degeneracion}{\hfsetfillcolor{red!5}\hfsetbordercolor{red!50!black}}
\newcommand{\comentario}{\hfsetfillcolor{green!5}\hfsetbordercolor{green!50!black}}

\usepackage{xpatch}
\makeatletter
\AtBeginDocument{\xpatchcmd{\@thm}{\thm@headpunct{.}}{\thm@headpunct{}}{}{}}
\makeatother

\usepackage[nottoc]{tocbibind}          
\usepackage{pstricks}
\usepackage{tikz}
\newcommand*\circled[1]{\tikz[baseline=(char.base)]{
		\node[shape=circle,draw,inner sep=2pt] (char) {#1};}}
\usetikzlibrary{cd}
\usepackage[title]{appendix}

\usepackage{mathtools}
\usepackage{enumitem}
\usepackage{verbatim}
\usepackage{titletoc}
\titlecontents{section}[3em]{\addvspace{10pt}\bfseries}
{\contentslabel{2.5em}}{\hspace{-2.5em}}{\hfill\contentspage}
\dottedcontents{subsection}[6.3em]{}{3.3em}{0.7pc}
\dottedcontents{subsubsection}[10.5em]{}{4.2em}{0.7pc}

\titlespacing*{\subsubsection}{0pt}{2.2ex plus 1ex minus .2ex}{0.1ex plus .2ex}
\titlespacing*{\subsection}{0pt}{1.7ex plus 1ex minus .2ex}{0.1ex plus .2ex}
\titlespacing*{\section}{0pt}{.8ex plus 1ex minus .2ex}{0.1ex plus .2ex}
\titleformat*{\subsubsection}{\Large\bfseries\sffamily}
\titleformat*{\subsection}{\LARGE\bfseries\sffamily}
\titleformat*{\section}{\centering\huge\bfseries\sffamily}


\newcommand{\corurl}{BrickRed}  \newcommand{\corcite}{red}
\newcommand{\corlink}{blue}    \newcommand{\corfile}{black}

\hypersetup{
	linktocpage,
	colorlinks,
	urlcolor=\corurl,	
	citecolor=\corcite,
	linkcolor=\corlink,
	filecolor=\corfile,
	pdfnewwindow,
	pdftitle={Covariant space methods},
	pdfauthor={Juan Margalef Bentabol},
	pdfsubject={CPS}}


\setlength{\skip\footins}{4ex}  
\frenchspacing                  
\setlength{\vfuzz}{3pt}         
\setlength{\hfuzz}{3pt}         

\allowdisplaybreaks  
\setlength{\parindent}{0em}      

\makeatletter
\newcommand{\pushright}[1]{\ifmeasuring@#1\else\omit\hfill$\displaystyle#1$\fi\ignorespaces}
\newcommand{\pushleft}[1]{\ifmeasuring@#1\else\omit$\displaystyle#1$\hfill\fi\ignorespaces}
\makeatother

\newcommand{\updown}[3]{\overset{#1}{\underset{#2}{#3}}}   

\makeatletter
\newcommand{\raisemath}[1]{\mathpalette{\raisem@th{#1}}}
\newcommand{\raisem@th}[3]{\raisebox{#1}{$#2#3$}}
\makeatother
\newcommand{\peqsub}[2]{#1_{\raisemath{.2pt}{\hspace*{-0.1ex}\scriptscriptstyle #2}\hspace*{-0.2ex}}}
\newcommand{\peqsubfino}[3]{#1_{\raisemath{#3}{\hspace*{-0.1ex}\scriptscriptstyle #2}\hspace*{-0.2ex}}}

\makeatletter
\DeclareRobustCommand{\rvdots}{%
	\vbox{
		\baselineskip4\p@\lineskiplimit\z@
		\kern-\p@
		\hbox{.}\hbox{.}\hbox{.}
}}
\makeatother

\newcommand{\N}{\mathbb{N}}     
\newcommand{\R}{\mathbb{R}}     
\renewcommand{\S}{\mathbb{S}}   
\newcommand{\HH}{\mathbbm{H}}     

\def\QED{{\boldmath$\rule{0.5em}{0.5em}$}}                                
\def\markatright#1{\leavevmode\unskip\nobreak\quad\hspace*{\fill}{#1}}    
\def\qed{\markatright{\QED}}                                              


\newtheorem{theorem}{Theorem}[section]
\newcommand{\lateral}{\peqsub{\partial}{L}}

\makeatletter
\let\c@equation\c@theorem
\makeatother
\usepackage{chngcntr}
\counterwithin{equation}{section}
\newtheorem{definition}[theorem]{Definition}       
                   \newtheorem{proposition}[theorem]{Proposition}
\newtheorem{remark}[theorem]{Remark}
\newtheorem{lemma}[theorem]{Lemma}
\newtheorem*{lemma*}{Lemma}                        \newtheorem{corollary}[theorem]{Corollary}

\renewenvironment{proof}[1][Proof]{\textbf{#1.} }{\qed\\}     

\theoremstyle{definition}

\usepackage{graphicx}
\parindent=0pt
\makeatletter
\newcommand\rrule[3][0pt]{%
	\ifdim#2>#3\math@hrule[#1]{#2}{#3}\else\math@vrule[#1]{#2}{#3}\fi}
\newcommand\math@hrule[3][0pt]{%
	\gdef\mystery@factor{0.07}%
	\@tempdima=#3%
	\rule[#1]{0pt}{#3}
	\raisebox{.5\@tempdima+#1}{%
		\makebox[#2][l]{\kern-.5\@tempdima\@@mathrule{#2}{#3}}}%
}
\newcommand\math@vrule[3][0pt]{%
	\gdef\mystery@factor{0.0}%
	\@tempdima=#2%
	\rule[#1]{0pt}{#3}
	\raisebox{-.0\@tempdima+#1}{%
		\kern0.5\@tempdima%
		\rotatebox{90}{\kern-0.5\@tempdima\makebox[#3][l]{\@@mathrule{#3}{#2}}}%
		\kern0.5\@tempdima}%
}
\def\@@mathrule#1#2{%
	\@tempdimb=#2%
	\@tempdima=\dimexpr#1-\mystery@factor\@tempdimb
	\pdfliteral{%
		q []0 d %
		1 J 
		\strip@pt\@tempdimb\space w \strip@pt\@tempdimb\space 0 m %
		\strip@pt\@tempdima\space 0 l S Q }}
\makeatother

\makeatletter
\DeclareFontFamily{OMX}{MnSymbolE}{}
\DeclareSymbolFont{MnLargeSymbols}{OMX}{MnSymbolE}{m}{n}
\SetSymbolFont{MnLargeSymbols}{bold}{OMX}{MnSymbolE}{b}{n}
\DeclareFontShape{OMX}{MnSymbolE}{m}{n}{
	<-6>  MnSymbolE5
	<6-7>  MnSymbolE6
	<7-8>  MnSymbolE7
	<8-9>  MnSymbolE8
	<9-10> MnSymbolE9
	<10-12> MnSymbolE10
	<12->   MnSymbolE12
}{}
\DeclareFontShape{OMX}{MnSymbolE}{b}{n}{
	<-6>  MnSymbolE-Bold5
	<6-7>  MnSymbolE-Bold6
	<7-8>  MnSymbolE-Bold7
	<8-9>  MnSymbolE-Bold8
	<9-10> MnSymbolE-Bold9
	<10-12> MnSymbolE-Bold10
	<12->   MnSymbolE-Bold12
}{}

\let\llangle\@undefined
\let\rrangle\@undefined
\DeclareMathDelimiter{\llangle}{\mathopen}%
{MnLargeSymbols}{'164}{MnLargeSymbols}{'164}
\DeclareMathDelimiter{\rrangle}{\mathclose}%
{MnLargeSymbols}{'171}{MnLargeSymbols}{'171}
\makeatother

\newcommand{\smallcirc}{\mathbin{\text{\raisebox{0.185ex}{\scalebox{0.7}{$\circ$}}}}}

\hbadness = 10001

\newcommand{\F}{\mathcal{F}}
\newcommand{\E}{\hspace*{-0.1ex}E}
\renewcommand{\d}{\mathrm{d}}
\newcommand{\DD}{\mathbb{D}}
\renewcommand{\SS}{\mathbb{S}}
\renewcommand{\L}{\mathcal{L}}
\newcommand{\XX}{\mathbb{X}}
\newcommand{\QQ}{\mathbb{Q}}
\newcommand{\YY}{\mathbb{Y}}
\newcommand{\VV}{\mathbb{V}}
\newcommand{\WW}{\mathbb{W}}
\newcommand{\ZZ}{\mathbb{Z}}
\newcommand{\FF}{\mathbb{F}}

\newcommand{\Sym}{\mathbbm{Sym}}
\newcommand{\Sol}{\mathrm{Sol}}

\newcommand{\Lag}{\mathrm{Lag}}
\newcommand{\CDS}{\mathrm{CDS}}

\newcommand{\prolong}{\mathrm{prol}}
\newcommand{\vol}{\mathrm{vol}}

\def\equivintt{{\setbox0\hbox{\ensuremath{\mathrel{\equiv}}}\rlap{\hbox to \wd0{\hss\ensuremath\int\hss}}\box0}}

\newcommand{\equivint}{\mathrel{\equivintt}}

\def\uptolidss{{\setbox0\hbox{\ensuremath{\mathrel{=}}}\rlap{\hbox to \wd0{\hss\raisebox{1.2ex}{\ensuremath{\scriptscriptstyle\mathcal{L}}}\hss}}\box0}}

\newcommand{\uptolids}{\mathrel{\uptolidss}}

\def\wwedgee{{\setbox0\hbox{\ensuremath{\mathrel{\wedge}}}\rlap{\hbox to \wd0{\hss\,\ensuremath\wedge\hss}}\box0}}

\newcommand{\wwedge}{\mathrel{\wwedgee}}

\title{Geometric formulation of the Covariant Phase Space methods with boundaries}
\author[1,3]{Juan Margalef-Bentabol}
\author[2,3]{Eduardo J.S.~Villaseñor}
\affil[1]{Institute for Gravitation and the Cosmos and Physics Department. Penn State
	University. PA 16802, USA.
	\vspace*{1ex} \mbox{}}
\affil[2]{Departamento de Matemáticas, Universidad Carlos III de Madrid. Avda. de la
	Universidad 30, 28911 Leganés, Spain.
	\vspace*{1ex} \mbox{}}
\affil[3]{Grupo de Teorías de Campos y Física Estadística. Instituto Gregorio Millán (UC3M).
	Unidad Asociada al Instituto de Estructura de la Materia, CSIC, Madrid, Spain.}
\date{}

\begin{document}
	\thispagestyle{empty}
	\renewcommand{\thepage}{Cover}
	
	\mbox{}\vspace*{6ex}
	
	\centerline{\begin{minipage}{1.1\linewidth}
			\begin{center}
				\Huge{\textsf{\textbf{\thetitle}}}
			\end{center}
	\end{minipage}}
	
	\mbox{}\vspace*{1.5ex}
	
	\begin{center}
		
		\mbox{}
		
		\Large{Juan Margalef-Bentabol${}^{1,3}$\qquad Eduardo J.S.~Villaseñor${}^{2,3}$}\\\vspace*{2.5ex}
		
		\small{${}^1$ Institute for Gravitation and the Cosmos and Physics Department. Penn State
			University. PA 16802, USA}\\[1.5ex]
		\small{${}^2$ Departamento de Matemáticas, Universidad Carlos III de Madrid. Avda. de la Universidad 30, 28911 Leganés, Spain.}\\[1.5ex]
		\small{${}^3$ Grupo de Teorías de Campos y Física Estadística. Instituto Gregorio Millán (UC3M). Unidad Asociada al Instituto de Estructura de la Materia, CSIC, Madrid, Spain}
	\end{center}
	
	\mbox{}
	
	\begin{abstract}\noindent\normalsize
		We analyze in full-detail the geometric structure of the covariant phase space (CPS) of any local field theory defined over a space-time with boundary. To this end, we introduce a new frame: the ``relative bicomplex framework''. It is the result of merging an extended version of the ``relative framework'' (initially developed in the context of algebraic topology by R.~Bott and L.W.~Tu in the 1980s to deal with boundaries) and the variational bicomplex framework (the differential geometric arena for the variational calculus). The relative bicomplex framework is the natural one to deal with field theories with boundary contributions, including corner contributions. In fact, we prove a formal equivalence between the relative version of a theory with boundary and the non-relative version of the same theory with no boundary. With these tools at hand, we endow the space of solutions of the theory with a (pre)symplectic structure canonically associated with the action and which, in general, has boundary contributions. We also study the symmetries of the theory and construct, for a large group of them, their Noether currents, and charges. Moreover, we completely characterize the arbitrariness (or lack thereof for fiber bundles with contractible fibers) of these constructions. This clarifies many misconceptions about the role of the boundary terms in the CPS description of a field theory. Finally, we provide what we call the CPS-algorithm to construct the aforementioned (pre)symplectic structure and apply it to some relevant examples.
	\end{abstract}

	\setcounter{tocdepth}{3}\setcounter{secnumdepth}{3}
	
	\newpage
	\thispagestyle{empty}\renewcommand{\thepage}{Table of contents}
	
	\tableofcontents\newpage
	
	\renewcommand{\thepage}{\arabic{page}}\setcounter{page}{1}
	
	\section{Introduction}
	The dynamical evolution of a given field theory on a globally hyperbolic $n$-space-time $M= [t_0,t_f]\times \Sigma$ is governed by its field equations together with some boundary conditions. If the theory is well-posed, we can evolve uniquely the initial data defined over a Cauchy surface, at least for a small interval of the evolution parameter. In general, a lot of interesting theories, such as general relativity or Yang-Mills, are only well-posed up to a gauge transformation, meaning that the evolution exists but is non-unique.\vspace*{2ex}
	
	If we consider the second-order Lagrangian framework, we have to define a space of field configurations $\mathcal{Q}$ over a Cauchy surface $\Sigma\subset M$ and consider its tangent bundle $T\mathcal{Q}$, which has no additional canonical structure. However, the cotangent bundle has a canonical symplectic structure which plays an essential role in the definition of the Hamiltonian framework \cite{tesis,margalef2016hamiltonian,chernoff2006properties}. This approach is excellent to understand the dynamical evolution of a system for a given time but it is not as well suited to understand some non-local concepts, such as the entropy of a black hole in general relativity \cite{wald1993black}. This can be achieved by considering the whole solution $\phi$ over $M$ instead of a curve $\{\varphi_\tau\}_\tau$ of fields $\mathcal{Q}$ over $\Sigma$.\vspace*{2ex}
	
	Let $\Sol(M)$ be the space of solutions of our theory and $\CDS(\Sigma)$ the Cauchy data set, formed by all admissible initial data that we can put over a fixed Cauchy surface. We assume, as it is the case in the Hamiltonian framework, that $\CDS(\Sigma)$ is a submanifold of $T^*\!\mathcal{Q}$ with the inclusion $\imath$. We can define a ``polarization'' map $\mathcal{P}:\Sol(M)\to\CDS(\Sigma)$ of the form $\mathcal{P}(\phi)=(q(\phi),p(\phi))\in T^*\!\mathcal{Q}$. Here, $q(\phi)\in\mathcal{Q}$ represents the initial position and $p(\phi)\in T^*_{q(\phi)}\mathcal{Q}$ the initial momentum. If we evolve them with the equations of motion of the theory, we recover, at least for a small evolution parameter, the solution $\phi$. This map is bijective if the theory is well-posed. If we have some degeneracy (gauge freedom), then the map is only surjective. In any case, we have the following diagram
	\[\Big(T^*\!\mathcal{Q},\Omega\Big)\qquad\longrightarrow\qquad\Big(\mathrm{CDS}(\Sigma),\imath^*\Omega\Big)\qquad\longrightarrow\qquad\Big(\Sol(M),\omega:=\mathcal{P}^*\imath^*\Omega\Big)\]
	The (pre)symplectic structure $\omega$ is fundamental in the study of some global issues but it is not easy to obtain. Besides, it requires a choice of polarization $\mathcal{P}$. It seems then desirable to have alternative methods to define a (pre)symplectic structure $\OOmega$ over $\Sol(M)$ and ways to check if it is equivalent to $\omega$. This is precisely the main goal of this work. Namely, in this paper, we develop the geometrical foundations to construct a (pre)symplectic structure over $\Sol(M)$ adapted to the physical problem. After that, we study some symmetries of the theory and find an interesting subset of them which turn out to be Hamiltonian. All this is done within the context of the so-called Covariant Phase Space (CPS) methods. Finally, we study several examples where $\OOmega$ is indeed equivalent to $\omega$. However, we will also show one particular example where both structures differ. It is important to mention at this point that more work is needed to understand fully this issue.\vspace*{2ex}
	
	The most important contribution of this paper is the introduction of what we refer to as the ``relative bicomplex framework''. In particular, we develop the ``relative framework'', which is the natural one to deal with boundaries, and merge it with the bicomplex framework, which is the natural one to deal with fields. In fact, we prove that all the results that hold in the non-relative framework without boundary, also hold in the relative framework with boundary. Moreover, it can easily be extended to include boundary corners. This should clarify many misconceptions involving the boundary and boundary terms in the CPS methods. Moreover, we completely characterize the arbitrariness of the aforementioned constructions. These results are also useful when no boundaries are present. Finally, we provide a simple four-steps algorithm that can be implemented for any local action theory to obtain the (pre)symplectic structure over $\Sol(M)$.
	
	\subsection{State of the art}
	A careful historical review can be found in the introductions of \cite{khavkine2014covariant,reyes2004covariant,ashtekar1991covariant}. Here we only consider some of the highlights of this area to shed some light about the motivation of this and similar works.\vspace*{2ex}
	
	The idea to consider the geometric structure of the space of fields can be traced all the way back to J.L.~Lagrange \cite{de1808memoire}. Since then, it has subsequently often reappeared. Just to mention a few, S.~Lie, J.G.~Darboux, E.~Cartan, or E.~Nother were interested in these kinds of problems. The more concrete idea of considering the space of solutions instead of the space of initial Cauchy data begins to appear in the physics literature circa 1960. To the best of our knowledge, the first explicit mention is due to Bergman and Kommar \cite{bergmann1962recent}, where they consider the \emph{frozen phase space}. After that, it is easier to find results in the mathematical journals (related to the inverse variational problems). For instance, in the 60s \cite{tonti1969variational}, in the 70s \cite{takens1977symmetries,takens1979global,krupka2015introduction}, and in the 80s \cite{vinogradov1984b,vinogradov1984c,tulczyjew1980euler,olver1995equivalence,anderson1989variational,tsujishita1982,marsden1986covariant}. Following these and similar references, one can see that some results have been (re)obtained over and over using different notations and frameworks. For some modern reviews and extended results see \cite{vitagliano2009secondary,vinogradov2006domains,moreno2013natural,vitolo1999different,bocharov1999symmetries}.\vspace*{2ex}
	
	Meanwhile, in the physics community, the CPS methods regained interest in the 80s with the paper of \v{C}.~Crnkovi\'{c} and E.~Witten \cite{crnkovic1987covariant}. Almost simultaneously, G.~Zuckerman \cite{zuckerman1987action} published a paper (halfway between the physics and mathematics literature) dealing with a very similar problem. After that, we can find many papers discussing the covariant phase space methods \cite{ashtekar1991covariant,crnkovic1987symplectic,crnkovic1988symplectic,lee1990local,wald1993black,iyer1994some,barnich2008surface,corichi2014hamiltonian,corichi2016actions,anderson1996classification}.\vspace*{2ex}
	
	The study of these methods for manifolds with boundaries is scarcer. The first one we are aware of was due to A.M.~Vinogradov \cite{vinogradov1984b,vinogradov1984c}, although it is not studied in much detail. In the physics literature, some boundary conditions at infinity were considered in \cite{wald1993black,ashtekar1991covariant,asktekarIsolated}, although the methods used are a bit \emph{ad-hoc} because, as we will see in section \ref{section: symplectic structure}, the symplectic structure used there is not the most natural one. After that, several attempts to consider exact counter-terms in the symplectic structure (that upon integration become boundary terms) were considered for example in \cite{compere2008setting,andrade2015stability} to obtain a more sensible symplectic structure. The first serious attempt to write a review about these results can be found in \cite{harlow2019covariant}, although it does not include almost any of the aforementioned mathematical references. This led the authors to (re)obtain some results that are obvious in a more geometric language e.g.~the inclusion of their $C$ term is completely natural as we will see in section \ref{section: variations}. Moreover, they fail to see that their (unnecessary) argument to justify the inclusion of the $C$ term also applies to their definition of symmetries, as we show in equation \eqref{eq: (L_X L,L_x l)}. The most relevant literature that deals with the geometry of field theories in manifolds with boundaries can be found in \cite{barbero2019generalizations,diaz2019dirac,margalef2016hamiltonianb,tesis,vinogradov2006domains,Barbero_G_2014}.
	
	\subsection{Structure of the paper}
	After this introduction, we proceed in section \ref{section: mathematical background} with a quick review of the mathematical results and notation needed for the rest of the paper. In particular, we develop the new ``relative bicomplex framework'' based in the relative pairs \cite{bot1982differential} and the variational bicomplex \cite{anderson1989variational}. The central part of this work is section \ref{section: space of fields}, where we study in detail the space of fields, the relevant objects that can be defined there, and the ambiguities that may arise. We also prove that the definition of the objects with geometric meaning, under very mild hypotheses, is not ambiguous. As a byproduct, in section \ref{section: summary of algorithm} we provide what we call the CPS-algorithm to define the canonical presymplectic structure over the space of solutions. Finally, section \ref{section: examples} is devoted to applying the CPS-algorithm to several relevant examples. Some additional results and material for the interested reader are included in the appendices. It is worth mentioning that this work is thought to have three different self-contained ways of reading it:
	\begin{enumerate}
		\item Sections \ref{section: summary of algorithm} and \ref{section: examples}, to get an idea of how the CPS-algorithm works in concrete examples.
		\item Sections \ref{section: mathematical background} and \ref{section: space of fields}, to understand the origin of the algorithm introduced in \ref{section: summary of algorithm} and used in \ref{section: examples}.
		\item Motivated readers are encouraged to read also appendix \ref{Appendix: jets}, where we formalize these concepts in the more natural (although arguably more cumbersome) language of jets.
	\end{enumerate}
	
	As a visual help, some of the formulas are framed with different boxes to highlight their origin:
	
	\begin{center}
		\definicion
		\tikzmarkin{ejemplo1prima}(.25,-0.2)(-0.03,0.45)\tikzmarkin{ejemplo1}(.2,-0.2)(-0.1,0.45)Definitions\tikzmarkend{ejemplo1}\tikzmarkend{ejemplo1prima}\qquad\qquad
		\degeneracion\tikzmarkin{ejemplo2}(.2,-0.2)(-0.1,0.45)Degeneracies\tikzmarkend{ejemplo2}\qquad\qquad
		\comentario\tikzmarkin{ejemplo3prima}(.2,-0.25)(.02,0.5)\tikzmarkin{ejemplo3}(.2,-0.2)(-0.1,0.45)Relevant results\tikzmarkend{ejemplo3}\tikzmarkend{ejemplo3prima}
	\end{center}
	
	\section{Mathematical background}\label{section: mathematical background}
	\subsection{Motivation}
	The space of fields $\F$ over $M$ is an infinite-dimensional manifold defined as the space of sections of a particular bundle. The topology and differential structure of $\F$ are usually taken for granted despite the subtleties inherent to any infinite-dimensional manifold. We somewhat agree with the common belief that quite often it is not necessary to be completely rigorous about it and one can proceed in analogy with the finite-dimensional case. However, we want to \emph{stress} two key facts:
	\begin{itemize}
		\item This is not always the case, especially in the presence of boundaries.
		\item Even accepting a reasonable lack of mathematical rigor, there is plenty of space to improve the formalism without making it more complicated, even in the presence of boundaries.
	\end{itemize}
	This section is precisely devoted to the latter point: introduce the required mathematical background to understand the rest of the paper using a language similar to the one most commonly used in the literature (treating $\F$ as an infinite-dimensional differential manifold where all the usual finite-dimensional manipulations are valid), but using a globally geometric approach. Regarding the first point, we have included in appendix \ref{Appendix: jets} a very basic introduction to the formalism of jet bundles. This language is the natural one to prove important results and perform computations. In this same appendix, we explain how to connect both formalisms. 
	
	\subsection[Differential geometry on \texorpdfstring{$M$}{M}]{Differential geometry on \texorpdfstring{$\boldsymbol{M}$}{M}}
	\subsubsection*{Tensor fields}
	Over an $n$-manifold $M$, we have its spaces of tensor fields i.e.~sections of the bundles $(T M)^{\otimes r}\otimes(T^*\!M)^{\otimes s}$. Taking the sections with $(r,s)=(1,0)$ leads to the space of vector fields $\mathfrak{X}(M)$, while $(r,s)=(0,1)$ leads to the space of $1$-form fields $\Omega^1(M)$. More generally, we have the graded algebra of form fields $\Omega(M)$ with the wedge product $\wedge$. The exterior derivative $\peqsub{\d}{M}$ over $\Omega(M)$ defines the complex
	\[0\longrightarrow\Omega^0(M)\overset{\peqsub{\d}{M}}{\longrightarrow}\Omega^1(M)\overset{\peqsub{\d}{M}}{\longrightarrow}\cdots\overset{\peqsub{\d}{M}}{\longrightarrow}\Omega^{n-1}(M)\overset{\peqsub{\d}{M}}{\longrightarrow}\Omega^n(M)\longrightarrow0\]Other important operations that we will use extensively are the Lie derivative $\L_V$ of any tensor field and the interior product $\iota_V$ of any form field, both of them with respect to a vector field $V\in\mathfrak{X}(M)$.\vspace*{2ex}
	
	For non-compact manifolds, we often have to restrict ourselves to forms with compact support $\Omega^k_c(M)$, in order to guarantee their integrability. To ease the notation, and because it will be more cumbersome in the next few sections, we will not mention anything about the integrability. We assume in the following that all the objects involved can be integrated whenever necessary.
	
	\subsubsection*{Closed and exact forms}
	If $\alpha\in\Omega^k(M)$, we say that its degree, denoted by $|\alpha|$, is $k$. A form $\alpha\in\Omega(M)$ is closed if $\peqsub{\d}{M}\alpha=0$ (it belongs to the kernel of $\peqsub{\d}{M}$), while it is exact if $\alpha=\peqsub{\d}{M}\beta$ for some $\beta\in\Omega(M)$ (it belongs to the image of $\peqsub{\d}{M}$). As $\peqsub{\d}{M}^2=0$, every exact form is closed. The converse depends on the de Rham cohomology
	\[H^k(M)=\frac{\mathrm{ker}\,(\peqsub{\d}{M})_k}{\mathrm{Im}\,(\peqsub{\d}{M})_{k-1}}\qquad \qquad(\peqsub{\d}{M})_k:\Omega^k(M)\to\Omega^{k+1}(M)\] 
	
	\subsubsection*{Local description}
	If $U\subset M$ is  a local patch with coordinates $\{x^i\}$, then we have (using  multi-index notation)
	\[\alpha=\sum_{|I|=|\alpha|}\alpha_I\peqsub{\d}{M} x^I\qquad\qquad\peqsub{\d}{M}\alpha=\sum_{j=1}^n\sum_{|I|=|\alpha|}\frac{\partial\alpha_I}{\partial x^k}\peqsub{\d}{M} x^k\wedge\peqsub{\d}{M} x^{I}\]

	\subsection[Differential geometry on \texorpdfstring{$(M,\partial M)$}{(M,∂ M)}]{Differential geometry on \texorpdfstring{$\boldsymbol{(M,\partial M)}$}{(M,∂ M)}}\label{section: geometry Mxpartial M}
	The best way to deal with forms in manifolds with boundaries is to use what we call the relative framework. Consider the pair $(M,N)$ where $N$ is a submanifold $\jmath:N\hookrightarrow M$ of codimension $1$. Define 
	\[\definicion
	\tikzmarkin{relativeformprima}(0.25,-0.2)(-0.25,0.45)
	\tikzmarkin{relativeform}(0.2,-0.2)(-0.2,0.45)
	\Omega^k(M,N):=\Omega^k(M)\oplus\Omega^{k-1}(N)
	\tikzmarkend{relativeform}
	\tikzmarkend{relativeformprima}
	\]
	and endow it with the following operations
	\[\begin{array}{lll}
	\multicolumn{3}{l}{\displaystyle\bullet\ (\alpha,\beta)\underline{\wedge}(\gamma,\delta):=\left(\alpha\wedge \gamma,\frac{(-1)^{|\alpha|}}{2}(\jmath^*\alpha)\wedge \delta+\frac{1}{2}\beta\wedge(\jmath^*\gamma)\right)}\\[2ex]
	\bullet\ \underline{\d}(\alpha,\beta):=(\peqsub{\d}{M}\alpha,\jmath^*\alpha-\peqsub{\d}{N}\beta)&\qquad\quad&\bullet\ \underline{\iota}_V(\alpha,\beta):=(\iota_V\alpha,-\iota_{\overline{V}}\beta)\\[1ex]
	\bullet\ \underline{\L}_V(\alpha,\beta):=(\L_V\alpha,\L_{\overline{V}}\beta)&&\bullet\ \underline{F}^*(\alpha,\beta):=(F^*\alpha,(F|_P)^*\beta)
	\end{array}\]
	where $V\in\mathfrak{X}(M)$ satisfies $\overline{V}:=V|_N\in\mathfrak{X}(N)$ and $F:(S,P)\to(M,N)$ is a map of pairs. The usual properties hold (see page \pageref{eq: d^2=0}). In particular, $\underline{\d}^2=0$, so we can define the cohomology of $(\Omega^k(M,N),\underline{\d})$. This is known as the \textbf{relative cohomology} and is denoted as $H^k(M,N)$ \cite[p.78]{bot1982differential}. Thus, by definition, $[(\alpha_1,\beta_1)]=[(\alpha_2,\beta_2)]$ if and only if
	\[\alpha_2=\alpha_1+\peqsub{\d}{M} Y\qquad\qquad \beta_2=\beta_1+\jmath^*Y-\peqsub{\d}{N}Z\]
	for some $(Y,Z)\in\Omega^{k-1}(M,N)$. We can also define the integral of top-forms as
	\[\definicion
	\tikzmarkin{relativeintegralprima}(0.25,-0.45)(-0.25,0.6)
	\tikzmarkin{relativeintegral}(0.2,-0.45)(-0.2,0.6)
	\int_{(M,N)}(\alpha,\beta):=\int_M\alpha-\int_N \beta
	\tikzmarkend{relativeintegral}
	\tikzmarkend{relativeintegralprima}
	\]
	From now on, we assume that $N\subset \partial M$ and, for simplicity, we will omit the subindex of the exterior derivatives as the manifolds where they act will always be clear from the context. This whole framework is still valid when $\partial M=\varnothing$.\vspace*{2ex}
	
	If we define the relative boundary $\underline{\partial}(M,N):=(\partial M\setminus N,\partial N)$, which satisfies $\underline{\partial}^2(M,N)=\varnothing$, and the relative inclusion $\underline{\jmath}:\underline{\partial}(M,N)\hookrightarrow(M,N)$, we have a relative version of the Stokes' theorem 
	\begin{align}\label{eq: stokes pair}
	\comentario\tikzmarkin{relativeStokesprima}(0.2,-0.5)(-0.2,0.65)
	\tikzmarkin{relativeStokes}(0.2,-0.45)(-0.2,0.6)
	\int_{(M,N)}\underline{\d}(\alpha,\beta)=\int_{\underline{\partial}(M,N)}\underline{\jmath}^*(\alpha,\beta)
	\tikzmarkend{relativeStokes}
	\tikzmarkend{relativeStokesprima}
	\end{align}
	where we have taken into account that the orientation of $\partial N\subset N$ is opposite to the orientation of $\partial N\subset \partial M$ (see remark \ref{remark: stokes minus} below). Of course, we can take $N=\partial M$ in which case $\underline{\partial}(M,N)=\varnothing$ and the last integral is zero. This result conforms to a general trend: results that hold for $M$ with boundary $\partial M$ also hold for $(M,N)$ with boundary $\underline{\partial}(M,N)$. In particular, results that hold for a manifold \emph{without} boundary, hold for $(M,\partial M)$ because $\underline{\partial}(M,\partial M)=\varnothing$.
	
	\subsection[Differential geometry on \texorpdfstring{$\F$}{F}]{Differential geometry on \texorpdfstring{$\boldsymbol{\F}$}{F}}\label{section: diff geom F}
	\subsubsection*{Tensor fields}
	Let $\F$ be the space of sections of some bundle $E\overset{\pi}{\to} M$\!. We assume that the infinite-dimensional manifold $\F$ has a differential structure and that we also have tensor fields over $\F$ (see appendix \ref{Appendix: jets} for a formal definition of these structures). In particular, we have vector fields $\XX\in\campos(\F)$ and $k$-form fields $\alpha\in\OOmega^k(\F)$ with the wedge product $\wwedge$. This graded algebra with the exterior derivative $\dd$ defines the complex
	\begin{equation}\label{eq: dd complex}
	0\longrightarrow\OOmega^0(\F)\overset{\dd}{\longrightarrow}\OOmega^1(\F)\overset{\dd}{\longrightarrow}\cdots\overset{\dd}{\longrightarrow}\OOmega^{k}(\F)\overset{\dd}{\longrightarrow}\OOmega^{k+1}(\F)\overset{\dd}{\longrightarrow}\cdots
	\end{equation}
	Other important operations that we will use extensively are the Lie derivative $\LL_\VV$ of any tensor field and the interior product $\ii_\VV$ of any form field, both of them with respect to a vector field $\VV\in\campos(\F)$.
	
	\subsubsection*{Closed and exact forms}
	If $\alpha\in\OOmega^k(M)$ we say that its degree, denoted by $\|\alpha\|$, is $k$. A form $\alpha\in\OOmega(\F)$ is closed if $\dd\alpha=0$ and exact if $\dd\alpha=\beta$ for some $\beta\in\OOmega(\F)$. As $\dd^2=0$, every exact form is closed. The converse depends on the cohomology but we will refrain from defining it here because it is not as straightforward as in the finite-dimensional case. Besides, it will not be necessary for our purposes.
	
	\subsubsection*{``Local'' description}
	In this case, there is no exact analog to the coordinates $\{x^i\}$ and their exterior derivatives $\d x^i$. However, we can \emph{formally} introduce a similar concept. First, recall that $\phi=(\phi^1,\cdots,\phi^m)\in\F$ is a section of a bundle $E\to M$\!. Each component $\phi^I$ is a section of a subbundle $E^I\subset E$ so $\F$ can be thought of as a Cartesian product of spaces of fields $\F^I$. Define the $I$-th evaluation function $\mathrm{Eval}^I_p:\F\to E^I$ at $p\in M$ as 
	\[\mathrm{Eval}^I_p(\phi)=\phi^I(p)\in E^I_p\]
	Although in general it is quite ill-behaved, in the following we proceed as if it were smooth in order to connect with the standard physics literature (see section \eqref{section: connection previous formalism} for the proper definition in the language of jets). Thus, we can compute its exterior derivative $\dd \mathrm{Eval}^I_p:T\F\to TE^I$, a $1$-form field of $\F$ with coefficients on $TE^I$. If we take a vector $\VV_{\!\phi}=(\VV^{\scalebox{.6}{$(1)$}}_{\!\phi},\ldots,\VV^{\scalebox{.6}{$(m)$}}_{\!\phi})\in T_\phi\F$, where each component $\VV_{\!\phi}^{\scalebox{.6}{$(I)$}}$ is given by a curve $\{\phi^I_\tau\}_\tau\subset\F^I$ with $\phi^I_0=\phi^I$ and $\partial_\tau|_0\phi^I_\tau=\VV^{\scalebox{.6}{$(I)$}}_{\!\phi}$, then 
	\[\dd_\phi\mathrm{Eval}^I_p(\VV_{\!\phi})=\left.\frac{\d}{\d\tau}\right|_{\tau=0}\mathrm{Eval}^I_p(\phi_\tau)=\left.\frac{\d}{\d\tau}\right|_{\tau=0}\phi^I_\tau(p)=\VV^{\scalebox{.6}{$(I)$}}_{\!\phi}(p)\]
	Thus, if we formally remove the dependence on $p$, we can define $\dd\phi^I$ as
	\[\definicion
	\tikzmarkin{ddphi}(0.25,-0.2)(-0.25,0.45)
	\tikzmarkin{ddphiprima}(0.2,-0.2)(-0.2,0.45)
	\dd\phi^I(\VV_{\!\phi}):=\dd_\phi\mathrm{Eval}^I\!(\VV_{\!\phi})=\VV^{\scalebox{.6}{$(I)$}}_{\!\phi}
	\tikzmarkend{ddphi}
	\tikzmarkend{ddphiprima}
	\]
	
	\subsection[Differential geometry on \texorpdfstring{$M\times\F$}{M x F}]{Differential geometry on \texorpdfstring{$\boldsymbol{M\times\F}$}{M x F}}\label{section: geometry MxF}
	\subsubsection*{Tensor fields}
	The differential structure of a product manifold is defined using the differential structures of the factors. One can then just consider tensor fields over $M\times\F$ in the usual fashion. However, here we have a feature that complicates everything: the second factor $\F$ is defined in terms of the first one $M$, they are entangled! In particular, the dependence on the points of $M$ may appear both explicitly and implicitly through elements of $\F$. The key here will be to only consider tensor fields with a local dependence on the points of $M$ in both senses. Doing that will allow us to somewhat disentangle the factors.\vspace*{2ex}
	
	For simplicity, we focus our attention on the differential forms because they have more structure and are more interesting, although similar arguments apply to general tensor fields. First we define the variational bicomplex
	\begin{equation}\label{eq: bicomplex}
	\begin{tikzcd}[row sep = scriptsize]
	&\rvdots & \rvdots& &\rvdots\\
	0\arrow[r]  & \OOmega^{(0,k)}(M\times\F)\arrow[r,"{\d}"]\arrow[u,"{\dd}"]  & \OOmega^{(1,k)}(M\times\F)\arrow[r,"{\d}"]\arrow[u,"{\dd}"]  &\cdots\arrow[r,"{\d}"] & \OOmega^{(n,k)}(M\times\F)\arrow[r]\arrow[u,"{\dd}"] &0\\
	& \rvdots\arrow[u,"{\dd}"] & \rvdots\arrow[u,"{\dd}"]  & & \rvdots\arrow[u,"{\dd}"] &\\
	0\arrow[r]  & \OOmega^{(0,1)}(M\times\F)\arrow[r,"{\d}"]\arrow[u,"{\dd}"]  & \OOmega^{(1,1)}(M\times\F)\arrow[r,"{\d}"]\arrow[u,"{\dd}"]  &\cdots\arrow[r,"{\d}"] & \OOmega^{(n,1)}(M\times\F)\arrow[r]\arrow[u,"{\dd}"] &0\\
	0\arrow[r]  & \OOmega^{(0,0)}(M\times\F)\arrow[r,"{\d}"]\arrow[u,"{\dd}"]  & \OOmega^{(1,0)}(M\times\F)\arrow[r,"{\d}"]\arrow[u,"{\dd}"]  &\cdots\arrow[r,"{\d}"] & \OOmega^{(n,0)}(M\times\F)\arrow[r]\arrow[u,"{\dd}"]  &0\\
	& 0\arrow[u]  & 0\arrow[u]  & & 0\arrow[u]\arrow[u]  &
	\end{tikzcd}
	\end{equation}
	$\OOmega^{(r,s)}(M\times\F)$ is the space of forms of degree $r$ in $M$ (horizontal part) and $s$ in $\F$ (vertical part). These spaces with $\wwedge$ define a bigraded algebra with two exterior derivatives: the horizontal $\d$, increasing $r$, and the vertical $\dd$, increasing $s$. The wedge product $\wwedge$ restricted to $(k,0)$-forms coincides with $\wedge$. We will often abuse notation and use the latter. If we replace $(M,\d)$ by the relative pair $((M,N),\underline{\d})$ of section \ref{section: geometry Mxpartial M} and use the relative analogues, we can define the \textbf{relative bicomplex framework}
	\[\definicion
	\tikzmarkin{rrelativeformprima}(0.15,-0.28)(-0.17,0.48)
	\tikzmarkin{rrelativeform}(0.1,-0.28)(-0.12,0.48)
	\displaystyle\OOmega^{(r,s)}\Big((M,N)\times\F\Big):=\OOmega^{(r,s)}(M\times\F)\oplus\OOmega^{(r-1,s)}(N\times\F)
	\tikzmarkend{rrelativeform}
	\tikzmarkend{rrelativeformprima}\]
	\[\begin{array}{lcl}
	\multicolumn{3}{l}{\displaystyle\bullet\ (\alpha,\beta)\underline{\wwedge}(\gamma,\delta):=\left(\alpha\wwedge \gamma,\frac{(-1)^{|\alpha|}}{2}(\jmath^*\alpha)\wwedge \delta+\frac{1}{2}\beta\wwedge(\jmath^*\gamma)\right)}\\[2ex]
	\displaystyle\bullet\ \underline{\dd}(\alpha,\beta):=(\dd\alpha,\dd\beta)&\quad\quad&\displaystyle\bullet\ \underline{\ii}_\VV(\alpha,\beta):=(\ii_\VV\alpha,\ii_\VV\beta)\\[1.2ex]
	\displaystyle\bullet\ \underline{\LL}_\VV(\alpha,\beta):=(\LL_\VV\alpha,\LL_\VV\beta)&&\displaystyle\bullet\ \underline{\FF}^*(\alpha,\beta):=(\FF^*\!\alpha,\FF^*\!\beta)
	\end{array}\]
	
	\subsubsection*{From \texorpdfstring{$\boldsymbol{M\times\F}$}{MxF} to \texorpdfstring{$\boldsymbol{M}$}{M}}
	An element $\alpha\in\OOmega^{(r,s)}(M\times\F)$ can be ``projected'' over $M$ if we fix some $\phi\in\F$ and $\VV^{1}_{\!\phi},\dots,\VV^{s}_{\!\phi}\in T_\phi\F$. Let us define $\alpha(\phi;\VV^{1}_{\!\phi},\dots,\VV^{s}_{\!\phi})\in\Omega^r(M)$, with the shorthand notation $\alpha(\phi;\VV^{i}_{\!\phi})$, as
	\begin{align}\label{eq: projection form}
	[\alpha(\phi;\VV^{i}_{\!\phi})]_p(V_p^{1},\dots,V_p^{r}):=\alpha_{(p,\phi)}(V_p^{1},\dots,V_p^{r},\VV^{1}_{\!\phi},\dots,\VV^{s}_{\!\phi})\qquad\qquad V_p^1,\dots,V_p^r\in T_pM
	\end{align}
	This simply separates the arguments in horizontal and vertical \cite{zuckerman1987action}. However, the base points are not separated in the sense that the dependence is on the whole field $\phi:M\to E$ instead of just $\phi(p)$. From now on, we restrict ourselves to $\OOmega_{\mathrm{loc}}^{(r,s)}(M\times\F)$. This is defined as the forms $\alpha$ such that $\alpha(\phi;\VV^i_\phi)$ only depends on $(p,\phi(p))$, on $\VV_{\!\phi}^i(p)$, and on finitely many of the derivatives of $\VV_{\!\phi}^i$ at $p$. 
	Finally, it is not hard to prove that if we project $\d\alpha\in\Omega^{(r+1,s)}(M\times\F)$, we obtain the exterior derivative of the projection:
	\[(\d\alpha)(\phi;\VV^{i}_{\!\phi})=\d\Big(\alpha(\phi;\VV^{i}_{\!\phi})\Big)\]
	This same construction can be generalized to any other tensor fields in $M$ and on $\F$ with the same concept of bigradation and the same idea of local tensor fields.
	
	\subsubsection*{Background and dynamical fields}
	Among the fields we are considering, some of them might be considered as background objects. It is convenient to separate them and denote the background fields as $\tilde{\phi}\in\widetilde{\F}$ and reserve the notation $\phi\in\F$ for the dynamical ones. Of course, the local dependence applies to both types of fields. In the following, although we should write $\F\times\widetilde{\F}$, we will ease the notation and write simply $\F$ understanding that there might be additional dependence on some fixed fields. \vspace*{2ex}
	
	With this notation at hand, a Lagrangian is an element $L\in\OOmega^{(n,0)}_{\mathrm{loc}}(M\times\F)$ such that $L(\phi|\tilde{\phi})\in\Omega^n(M)$ is local in the usual sense. For instance, consider the following Lagrangian
	\[L(\phi|g)=g^{-1}(\d\phi,\d\phi)\vol_g\in\Omega^n(M)\]
	where $\phi\in\F=\Omega^0(M)$ is the dynamical field and $g\in\widetilde{\F}=\mathrm{Met}(M)$ is the fixed field.

	\subsubsection*{Properties}
	Here we summarize the degrees of certain operators over $(M,N)\times\F$
	\[\begin{array}{|cc|cc|cc|}\hline
	\underline{\d}   &  \underline{\dd}  & \underline{\iota}_X & \underline{\ii}_\XX & \underline{\L}_X  & \underline{\LL}_\XX\\ \hline
	\Big((1,1),0\Big) & \Big((0,0),1\Big) &  \Big((-1,-1),0\Big)  &  \Big((0,0),-1\Big) & \Big((0,0),0\Big) & \Big((0,0),0\Big)\rule{0ex}{3.1ex}\\[1ex]\hline
	\end{array}\]
	The usual properties of differential geometry hold over $(M,N)\times\F$ 
	\begin{align}
	&\bullet\ \underline{\d}^2=0&&\underline{\dd}^2=0\label{eq: d^2=0}\\[1ex]
	&\bullet\ (\alpha,\beta)\underline{\wedge}(\gamma,\delta)=(-1)^{|\alpha||\gamma|}(\gamma,\delta)\underline{\wedge}(\alpha,\beta)&&(\alpha,\beta)\underline{\wwedge}(\gamma,\delta)=(-1)^{|\alpha||\gamma|+\|\alpha\|\|\gamma\|}(\gamma,\delta)\underline{\wwedge}(\alpha,\beta)\label{eq: supersym}\\[1ex]
	&\bullet\ \underline{\L}_V=\underline{\iota}_V\underline{\d}+\underline{\d}\,\underline{\iota}_V&&\underline{\LL}_{\VV}=\underline{\ii}_{\VV}\underline{\dd}+\underline{\dd}\,\underline{\ii}_{\VV}\label{eq: Cartan}\\[1ex]
	&\bullet\ \begin{array}{l}
	\hspace*{-1ex}\underline{\d}\Big((\alpha,\beta)\underline{\wedge}(\gamma,\delta)\Big)=\underline{\d}(\alpha,\beta)\underline{\wedge}(\gamma,\delta)+(-1)^{|\alpha|}(\alpha,\beta)\underline{\wedge}\,\underline{\d}(\gamma,\delta)\\[1ex]
	\hspace*{-1ex}\underline{\dd}\Big((\alpha,\beta)\underline{\wwedge}(\gamma,\delta)\Big)=\underline{\dd}(\alpha,\beta)\underline{\wwedge}(\gamma,\delta)+(-1)^{\|\alpha\|}(\alpha,\beta)\underline{\wwedge}\,\underline{\dd}(\gamma,\delta)\label{eq: Leibniz}
	\end{array}\span\span\\[.5ex]
	&\bullet\ \begin{array}{l}
	\hspace*{-1ex}\underline{\iota}_V\Big((\alpha,\beta)\underline{\wedge}(\gamma,\delta)\Big)=\underline{\iota}_V(\alpha,\beta)\underline{\wedge}(\gamma,\delta)+(-1)^{|\alpha|}(\alpha,\beta)\underline{\wedge}\,\underline{\iota}_V(\gamma,\delta)\\[1ex]
	\hspace*{-1ex}\underline{\ii}_\VV\Big((\alpha,\beta)\underline{\wwedge}(\gamma,\delta)\Big)=\underline{\ii}_\VV(\alpha,\beta)\underline{\wwedge}(\gamma,\delta)+(-1)^{\|\alpha\|}(\alpha,\beta)\underline{\wwedge}\,\underline{\ii}_\VV(\gamma,\delta)\label{eq: Leibniz imath}
	\end{array}\span\span\\[1ex]
	&\bullet\ \underline{\L}_V\underline{\d}=\underline{\d}\,\underline{\L}_V&&\underline{\LL}_{\VV}\underline{\dd}=\underline{\dd}\,\underline{\LL}_{\VV}\label{eq: LLdd=ddLL}\\[1ex]
	&\bullet\ \underline{F}^*\underline{\iota}_{F_{\!*}\!V}(\alpha,\beta)=\underline{\iota}_V\underline{F}^*(\alpha,\beta)&&\underline{\mathbb{F}}^*\underline{\ii}_{\mathbb{F}_{\!*}\!\VV}(\alpha,\beta)=\underline{\ii}_\VV\underline{\mathbb{F}}^*(\alpha,\beta)\label{eq: pullback interior product}
	\end{align} 
	Moreover, the operations in $\F$ and in $M$ commute. For instance
	\begin{equation}\label{eq: d dd=dd d}
	\underline{\d}\,\underline{\dd}=\underline{\dd}\,\underline{\d}\qquad \qquad\qquad\qquad \underline{\ii}_\VV\underline{\d}=\underline{\d}\,\underline{\ii}_\VV\qquad \qquad\qquad\qquad \underline{\iota}_V\underline{\dd}=\underline{\dd}\,\underline{\iota}_V
	\end{equation}
	In most of the mathematical literature those objects anti-commute (see section \ref{section: connection previous formalism}). This can be reconcile considering $(-1)^r\dd$ instead of $\dd$ and analogously for the other operators.

	\subsubsection*{Closed and exact forms}
	We know that a sufficient condition for a closed $k$-form of $M$ to be also exact is that $H^k(M)=0$. Of course, this is not a necessary condition. In fact, we have the following result:
	\begin{theorem}[Horizontal exactness theorem]\label{theorem: Wald exact - close}\mbox{}\\
		Let $r<n$ and $(\alpha,\beta)\in\OOmega^{(r,s)}_{\mathrm{\mathrm{loc}}}((M,\partial M)\times\F)$. If $(\alpha,\beta)$ is $\underline{\d}$-closed and one of the following two conditions holds
		\[(\alpha,\beta)(0;\VV_{\!\phi}^I)=0\qquad\text{or}\qquad s>0\]
		then there exists $(\gamma,\delta)\in\OOmega^{(r-1,s)}_{\mathrm{loc}}((M,\partial M)\times\F)$ such that $\underline{\d}(\gamma,\delta)=(\alpha,\beta)$.
	\end{theorem}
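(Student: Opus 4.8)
The plan is to peel off the relative structure, reduce to the horizontal exactness of the ordinary variational bicomplex over a single manifold, and then reassemble. Unwinding the definitions, $(\alpha,\beta)$ being $\underline{\d}$-closed means exactly $\d\alpha=0$ on $M$ and $\jmath^*\alpha=\d\beta$ on $N:=\partial M$; and exhibiting $(\gamma,\delta)\in\OOmega^{(r-1,s)}_{\mathrm{loc}}((M,\partial M)\times\F)$ with $\underline{\d}(\gamma,\delta)=(\alpha,\beta)$ amounts to finding $\gamma\in\OOmega^{(r-1,s)}_{\mathrm{loc}}(M\times\F)$ and $\delta\in\OOmega^{(r-2,s)}_{\mathrm{loc}}(N\times\F)$ with $\d\gamma=\alpha$ and $\jmath^*\gamma-\d\delta=\beta$.

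\textbf{Step 1 (the non-relative statement).} First I would establish the boundaryless version over an arbitrary $m$-manifold $P$: if $\sigma\in\OOmega^{(r,s)}_{\mathrm{loc}}(P\times\F)$ is $\d$-closed with $r<m$ and either $s>0$ or $\sigma$ vanishes on the zero section, then $\sigma=\d\tau$ for some $\tau\in\OOmega^{(r-1,s)}_{\mathrm{loc}}(P\times\F)$, and in the $s=0$ case one may additionally take $\tau$ vanishing on the zero section. For $s>0$ this is the fibre-scaling homotopy: with $\Lambda_t$ the fibrewise dilation $\phi\mapsto t\phi$ and $\mathbb{E}$ the associated vertical (Euler) vector field, one writes $\sigma=\sigma-\Lambda_0^*\sigma=\int_0^1\Lambda_t^*\LL_{\mathbb{E}}\sigma\,\d t$ (the $t=0$ term drops because $s>0$ forces $\Lambda_0^*\sigma=0$, which is also what makes the integrand integrable at $t=0$), expands $\LL_{\mathbb{E}}$ by Cartan's formula \eqref{eq: Cartan}, and uses that $\dd$ and $\ii_{\mathbb{E}}$ commute with $\d$ \eqref{eq: d dd=dd d} together with $\d$-exactness one vertical degree lower; an induction on $s$ closes it. For $s=0$ with $\sigma$ vanishing on the zero section one bootstraps: $\dd\sigma$ is $\d$-closed of vertical degree $1$, so by the previous case $\dd\sigma=\d\eta$ locally, whence $\sigma=\sigma-\Lambda_0^*\sigma=\int_0^1\Lambda_t^*\ii_{\mathbb{E}}\dd\sigma\,\d t=\d\!\int_0^1\Lambda_t^*\ii_{\mathbb{E}}\eta\,\d t$; since $\mathbb{E}$ is the zero vector on the zero section, $\ii_{\mathbb{E}}\eta$ — and hence the primitive — vanishes there. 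All the operators involved are the local fibrewise/jet homotopy operators, so nothing is affected by $P$ (or $\partial P$) having a boundary.

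\textbf{Step 2 (reduction).} Apply Step 1 with $P=M$ (here $\dim M=n>r$) to $\alpha$: since $\d\alpha=0$ and either $s>0$ or $\alpha$ vanishes on the zero section, we obtain $\gamma\in\OOmega^{(r-1,s)}_{\mathrm{loc}}(M\times\F)$ with $\d\gamma=\alpha$, and with $\gamma$ vanishing on the zero section in the $s=0$ case. Restrict to $N=\partial M$: since horizontal pullback commutes with $\d$, $\d(\jmath^*\gamma-\beta)=\jmath^*\alpha-\d\beta=0$. Now $\jmath^*\gamma-\beta$ has horizontal degree $r-1<n-1=\dim N$, and either $s>0$ or $\jmath^*\gamma-\beta$ vanishes on the zero section (because $\gamma$ and $\beta$ both do, the latter by hypothesis). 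So Step 1 with $P=N$ yields $\delta\in\OOmega^{(r-2,s)}_{\mathrm{loc}}(N\times\F)$ with $\d\delta=\jmath^*\gamma-\beta$. Then $(\gamma,\delta)\in\OOmega^{(r-1,s)}_{\mathrm{loc}}((M,\partial M)\times\F)$ and $\underline{\d}(\gamma,\delta)=(\d\gamma,\jmath^*\gamma-\d\delta)=(\alpha,\beta)$, as required.

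The main obstacle is the $s=0$ half of Step 1: plain de Rham exactness fails there (the horizontal cohomology in degree $r$ still sees $H^r(M)$), and it is precisely the vanishing of the form on the zero section — together with the fact that the Euler vector field is zero there, which makes the homotopy operator "pointed" — that both rescues the statement and, crucially, allows the argument to be iterated on $N=\partial M$ in Step 2 with no further hypotheses. The remaining ingredients (locality-preservation of the homotopies, the degree bookkeeping, and the unwinding of $\underline{\d}$) are routine.
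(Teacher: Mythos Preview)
Your Step~2 is exactly the paper's proof: unwind $\underline{\d}$-closedness into $\d\alpha=0$ and $\jmath^*\alpha=\d\beta$, apply the non-relative horizontal exactness theorem on $M$ to get $\alpha=\d\gamma$, then apply it again on $\partial M$ to $\jmath^*\gamma-\beta$ to get $\delta$. The paper does precisely this, simply citing Wald and Takens for the non-relative input rather than reproving it.

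Your Step~1, however, has a genuine gap. The fibrewise scaling $\Lambda_t$ and its Euler vector field $\mathbb{E}$ produce a contracting homotopy for the \emph{vertical} differential $\dd$, not the horizontal one $\d$: expanding $\LL_{\mathbb{E}}$ by Cartan gives $\LL_{\mathbb{E}}=\ii_{\mathbb{E}}\dd+\dd\,\ii_{\mathbb{E}}$, and since $\ii_{\mathbb{E}}$ and $\d$ (anti)commute, no $\d$-exact piece is produced. Your ``induction on $s$'' can handle the $\dd\,\ii_{\mathbb{E}}\sigma$ term (because $\ii_{\mathbb{E}}\sigma$ sits one vertical degree lower), but the $\ii_{\mathbb{E}}\dd\sigma$ term still has vertical degree $s$ and you have given no mechanism to exhibit it as $\d$-exact. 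Worse, your $s=0$ argument invokes the $s=1$ case, while the $s>0$ induction needs $s=0$ as its base --- the scheme is circular as written. The actual proofs in Wald and Takens (and Anderson's \emph{Variational Bicomplex}) build a \emph{horizontal} homotopy operator directly from the jet structure (total derivatives and higher Euler operators), which is a different mechanism; the scaling/zero-section hypothesis enters only to kill the residual $H^r(M)$ contribution. So your reduction to the non-relative case is correct and matches the paper, but the sketch of that non-relative input does not stand on its own.
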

	\begin{proof}\mbox{}\\
		From the definition of $\underline{\d}$, we have on one hand $\d\alpha=0$ over $M\times\F$. From \cite{wald1990identically} or \cite[th.4.4]{takens1979global}, it follows (their proofs can be adapted to the case of a manifold with boundary) that $\alpha=\d \gamma$ for some $\gamma\in\OOmega^{(r-1,s)}_{\mathrm{loc}}(M\times\F)$. On the other hand, now we have $0=\jmath^*\alpha-\d\beta=\d(\jmath^*\gamma-\beta)$, so applying \cite{wald1990identically} again leads to the result.
	\end{proof}
	
	Notice that if $s=0$, the non-zero elements of $H^r(M)$, those which are independent of the fields, are non-exact. This is the reason why we have to include the vanishing condition for $\phi=0$.
	
	\subsection[\texorpdfstring{$M$}{M} as a space-time]{\texorpdfstring{$\boldsymbol{M}$}{M} as a space-time}\label{subsection: M space-time}
	The $n$-manifold $M$ is chosen so that it represents a physically reasonable space-time. As usual, we take it connected and oriented. Moreover, we require that $M$ admits a foliation by Cauchy hypersurfaces (although most of the results only require an arbitrary foliation). Without loss of generality, $M=I\times \Sigma$ for some interval $I=[t_i,t_f]$ and some $(n-1)$-manifold $\Sigma$ with boundary $\partial\Sigma$ (possibly empty). If we denote $\Sigma_i=\{t_i\}\times\Sigma$ and $\Sigma_f=\{t_f\}\times\Sigma$, we can split $\partial M$ in three distinguished parts
	\[\partial M=\Sigma_i\cup \peqsub{\partial}{L} M\cup \Sigma_f\]
	the lids $\Sigma_i,\Sigma_f$ and the ``lateral boundary'' $\lateral M:=I\times \partial \Sigma$. Notice that $M=I \times \Sigma$ is a manifold with corners $\partial\Sigma_i\cup\partial\Sigma_f$ which is equal, as a set, to $\partial(\lateral M)$. This is not a problem as most results of differential geometry, like Stokes' theorem, hold when the corners are simple enough (as they are here). The following diagram will be useful to keep track of the possible embeddings and the induced geometric objects, especially in section \ref{section: examples}, where we include some examples with detailed computations.
	\begin{equation}\label{eq: diagrama}
	\begin{tikzcd}[row sep = scriptsize]
	\Big(\Sigma,\gamma,D,\{a,b,\ldots\}\Big)\arrow[rrr,"{(\imath,n^\alpha)}",hook]  & & &
	\Big(M,g,\nabla,\{\alpha,\beta,\ldots\}\Big)\\\\
	\Big(\partial\Sigma,\overline{\gamma},\overline{D},\{\overline{a},\overline{b},\ldots\}\Big) 
	\arrow[uu, "{(\overline{\jmath},\mu^a)}",hook]
	\arrow[rrr,"{(\overline{\imath},\overline{m}^{\overline{\alpha}})}"', hook] & & &
	\Big(\partial M,\overline{g},\overline{\nabla},\{\overline{\alpha},\overline{\beta},\ldots\}\Big)
	\arrow[uu,"{(\jmath,\nu^\alpha)}"', hook]
	\end{tikzcd}
	\end{equation}
	The $4$-tuples consist of the manifold, the (pullback) metric, its associated Levi-Civita connection (only for non-degenerate metrics), and the abstract indices associated with the manifold. The labels of the arrows represent the embedding and the normal vector field (normalized in the non-degenerate case). For instance $(\imath,n^\alpha)$ represents the embedding $\imath$ and the vector field $n^\alpha$ defined over $\imath(\Sigma)$ and $g$-normal to it. The normal vector field is chosen future pointing for the Cauchy embeddings (vertical arrows in the diagram) and outward to the boundary (horizontal arrows). Notice that in the bottom lid $\Sigma_i$, there is a discrepancy as it can be seen as embedded through $\imath_i$ or as part of $\partial M$\!. Most often it appears as the boundary of $M$\!, so we choose the outwards pointing convention (past pointing). As a reminder to the reader, most of the objects living completely at the boundary are denoted with an overline (like $\overline{g}$, which in index notation is $\overline{g}_{\overline{\alpha}\overline{\beta}}$). Some of the objects included in the digram will not be used in this paper but we include them for completion. Finally, notice that the embeddings $\jmath$ and $\overline{\jmath}$ are fixed, since the boundaries are fixed. However $\imath$ and $\overline{\imath}$, which embed $(\Sigma,\partial\Sigma)$ in $(M,\partial M)$, can be chosen among all the embeddings satisfying $\imath(\partial\Sigma)\subset\lateral M$\!.\vspace*{2ex}
	
	As $(M,g)$ is oriented, we define the metric volume form $\vol_g$ that assigns $1$ to every positive orthonormal basis. We now orient $\Sigma$ and $\lateral M$ (in the non-degenerate case, the degenerate case needs a bit more work) with $\vol_{\gamma}$ and $\vol_{\overline{g}}$ respectively given by
	\begin{align}\label{eq: orientation sigma}
	\jmath^*(\iota_{\vec{U}}\vol_g)=\nu_\alpha U^\alpha\vol_{\overline{g}}\qquad\qquad\qquad\imath^*(\iota_{\vec{U}}\vol_g)=-n_\alpha U^\alpha\vol_\gamma
	\end{align}
	for every $\vec{U}$. These orientations are the ones that make Stokes' theorem hold. Finally, $\partial\Sigma$ can be oriented as the boundary of $\Sigma$. Thus $\vol_{\overline{\gamma}}$ is given (in the non-degenerate case, of course) by
	\begin{align}\label{eq: reverse orientation}
	\overline{\jmath}^*(\iota_{\vec{V}}\vol_\gamma)=\mu_a V^a\vol_{\overline{\gamma}}\qquad\quad\longrightarrow\quad\qquad\overline{\imath}^*(\iota_{\vec{W}}\vol_{\overline{g}})=+\overline{m}_{\overline{\alpha}} W^{\overline{\alpha}}\vol_{\overline{\gamma}}
	\end{align}
	If we define  $\nu_\perp:=- n_\alpha \nu^\alpha$ and $m_\perp:=- n_\alpha \jmath^\alpha_{\overline{\alpha}}m^{\overline{\alpha}}$, the last equations follows easily from
	\begin{align}\label{eq: nu y m}
	\nu^\alpha=\nu_\perp n^\alpha+m_\perp\,\imath^\alpha_a\mu^a\quad \qquad\qquad\qquad \jmath^\alpha_{\overline{\alpha}}m^{\overline{\alpha}}=m_\perp n^\alpha+\nu_\perp\,\imath^\alpha_a\mu^a
	\end{align}
	
	\begin{remark}\label{remark: stokes minus}\mbox{}\\
		If we use the Stokes' theorem from $\lateral M$ to $\partial(\lateral M)=\partial\Sigma_i\cup\partial\Sigma_f$, a minus sign appears.
	\end{remark}

	\section{Geometric structures in CPS}\label{section: space of fields}
	As in the previous section, we take $M$\!, which admits a foliation, a connected and oriented $n$-manifold. The values of the fields of $\F$ over $\lateral M$ may or may not be fixed. Although we will not consider it in the following, these techniques can be applied to ``nice'' constrained systems (submanifolds of the infinite jet bundle $J^\infty\!E$, see appendix \ref{Appendix: jets} and the example in page \pageref{example: constrain systems}).
	
	\subsection{Lagrangians}
	\begin{definition}\label{star}\mbox{}\\
		We define a \textbf{pair of Lagrangians} as an element of
		\[\Lag(M):=\OOmega_{\mathrm{loc}}^{(n,0)}((M,\partial M)\times\F)\]
	\end{definition}
	From sections \ref{section: geometry Mxpartial M} and \ref{section: geometry MxF} we know that
	\begin{equation}\label{eq: relation Lagrangians}[(L_1,\overline{\ell}_1)]=[(L_2,\overline{\ell}_2)]\ \quad \equiv\ \quad(L_2,\overline{\ell}_2)=(L_1,\overline{\ell}_1)+\underline{\d}(Y,\overline{y})\ \quad \equiv\ \quad \begin{array}{l}L_2=L_1+\d Y\\\overline{\ell}_2=\overline{\ell}_1+\jmath^*Y-\d\overline{y}\end{array}
	\end{equation}

	\subsection{Action}\label{section: action}
	\begin{definition}\mbox{}\\
		A \textbf{local action} is a map $\SS:\F\to\R$ of the form
		\begin{equation}\label{eq: action}\definicion
		\tikzmarkin{Sphiprima}(0.25,-0.45)(-0.25,0.6)
		\tikzmarkin{Sphi}(0.2,-0.45)(-0.2,0.6)
		\SS(\phi)=\int_{(M,\partial M)}(L,\overline{\ell})(\phi)
		\tikzmarkend{Sphi}
		\tikzmarkend{Sphiprima}
		\end{equation}
		for some local Lagrangians $(L,\overline{\ell})\in\Lag(M)$.
	\end{definition}
	The previous integral only makes sense if we project the Lagrangians over $M$ as explained in section \ref{section: geometry MxF}. Nonetheless, as it is always clear from the context, sometimes we will omit the field and write simply expressions like
	\[\SS=\int_{(M,\partial M)}(L,\overline{\ell})\]
	
	\begin{remark}\label{remark: pair of pairs}\mbox{}\\
		The previous definition can be tweaked to allow more general actions:
		\begin{equation}\label{eq: accion corner 1}
		\SS=\int_{(M,\lateral M)}(L,\overline{\ell})-\int_{\underline{\partial}(M,\lateral M)}(\Lambda,\overline{\lambda})
		\end{equation}
		with $(L,\overline{\ell})\in\Omega^n((M,\lateral M)\times\F)$ and $(\Lambda,\overline{\lambda})\in\Omega^{n-1}(\underline{\partial}(M,\lateral M)\times\F)$. This action is computed by integrating $L$ over the bulk $M$\!, $\overline{\ell}$ over the lateral boundary $\lateral M$, $\Lambda$ over the lids $\Sigma_i\cup\Sigma_f$, and $\overline{\lambda}$ over $\partial\Sigma_i\cup\partial\Sigma_f$ (``corner'' terms). These actions appear for instance in \cite{booth2005horizon,hartle1981boundary,sorkin1975time}.\vspace*{2ex}
		
		In section \ref{section: geometry Mxpartial M} we mentioned that $(M,N)$ behaves like a manifold with boundary $\underline{\partial}(M,N)$ and that, in particular, $(M,\partial M)$ behaves as a manifold with no boundary. We can then repeat the argument and consider the pair of pairs $((M,\lateral M),\underline{\partial}(M,\lateral M))$, the relative inclusion $\underline{\jmath}:\underline{\partial}(M,\lateral M)\hookrightarrow(M,\lateral M)$ and the analogous relative boundary, exterior derivative and integral
		\begin{align*}
		&\bullet\ \underline{\underline{\partial}}\Big((M,\lateral M),\underline{\partial}(M,\lateral M)\Big):=\Big(\underline{\partial}(M,\lateral M)\setminus\underline{\partial}(M,\lateral M),\underline{\partial}^2(M,\lateral M)\Big)=\varnothing\\[1.2ex]
		&\bullet\ \underline{\underline{\d}}\Big((\alpha,\overline{\beta}),(\gamma,\overline{\delta})\Big):=\Big(\underline{\d}(\alpha,\overline{\beta}),\underline{\jmath}^*(\alpha,\overline{\beta})-\underline{\d}(\gamma,\overline{\delta})\Big)\\[0.8ex]
		&\bullet\ \int_{((M,\lateral M),\underline{\partial}(M,\lateral M))}\Big((\alpha,\overline{\beta}),(\gamma,\overline{\delta})\Big):=\int_{(M,\lateral M)}(\alpha,\overline{\beta})-\int_{\underline{\partial}(M,\lateral M)}(\gamma,\overline{\delta})
		\end{align*}
		As this new ``manifold'' has no boundary, it behaves like a true manifold without boundary. In fact, theorem \ref{theorem: Wald exact - close} holds and the action \eqref{eq: accion corner 1} can be rewritten as
		\begin{equation}\label{eq: accion corner 2}
		\SS=\int_{((M,\lateral M),\underline{\partial}(M,\lateral M))}\big((L,\overline{\ell}),(\Lambda,\overline{\lambda})\big)
		\end{equation}
		which is formally equivalent to \eqref{eq: action} which, in turn, is equivalent to an integral over a manifold without boundary. We have decided to work with \eqref{eq: action} to show explicitly the equivalence with an action over a space with no boundary. The interested reader will have little problem adapting our computations to the action \eqref{eq: accion corner 2}. Moreover, both actions are equivalent if we consider ``Dirichlet'' conditions over $\underline{\partial}(M,\lateral M)$ i.e.~fixing the initial and final values of $\phi\in\F$.
	\end{remark}
	
	\begin{definition}\label{def: S-equiv}\mbox{}\\
		Two pairs of Lagrangians $(L_i,\overline{\ell}_i)\in\Lag(M)$ are $\boldsymbol{\int}$\!\textbf{-equivalent}, and denoted $(L_1,\overline{\ell}_1)\equivint(L_2,\overline{\ell}_2)$, if for every $\phi\in\F$, we have
		\[\int_{(M,\partial M)} (L_1,\overline{\ell}_1)(\phi)=\int_{(M,\partial M)} (L_2,\overline{\ell}_2)(\phi)\]
	\end{definition}

	This is an equivalence relation and each class is associated with an action $\SS$ that we denote $\SS=[\![(L,\overline{\ell})]\!]$. We have two equivalence relations, \eqref{eq: relation Lagrangians} and \ref{def: S-equiv}, which in general are different. Nonetheless, they are equal for \textbf{contractible bundles} i.e.~fibered bundles with contractible fibers ($M$ is not necessarily contractible). Within this category we find vector bundles, affine bundles (e.g.~Yang-Mills theories), some principle bundles (e.g.~Dirac monopole or BPST), and some quotient bundles (e.g.~Riemannian metrics or, up to a topological obstruction given by the Euler class, the bundle of Lorentzian metrics).
	
	\begin{lemma}\label{lemma: equivalence relations}\mbox{}
		\begin{itemize}
			\item If $[(L_1,\overline{\ell}_1)]=[(L_2,\overline{\ell}_2)]$, then $(L_1,\overline{\ell}_1)\equivint(L_2,\overline{\ell}_2)$.
			\item If $E\to M$ is a contractible bundle and $(L_1,\overline{\ell}_1)\equivint(L_2,\overline{\ell}_2)$, then $[(L_1,\overline{\ell}_1)]=[(L_2,\overline{\ell}_2)]$.
		\end{itemize}
	\end{lemma}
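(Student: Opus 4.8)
The first bullet is immediate from the relative Stokes theorem. If $[(L_1,\overline{\ell}_1)]=[(L_2,\overline{\ell}_2)]$ then, by \eqref{eq: relation Lagrangians}, there is a local $(Y,\overline{y})\in\OOmega^{(n-1,0)}_{\mathrm{loc}}((M,\partial M)\times\F)$ with $(L_2,\overline{\ell}_2)=(L_1,\overline{\ell}_1)+\underline{\d}(Y,\overline{y})$, so for every $\phi\in\F$
\[
\int_{(M,\partial M)}(L_2,\overline{\ell}_2)(\phi)-\int_{(M,\partial M)}(L_1,\overline{\ell}_1)(\phi)=\int_{(M,\partial M)}\underline{\d}(Y,\overline{y})(\phi)=\int_{\underline{\partial}(M,\partial M)}\underline{\jmath}^*(Y,\overline{y})(\phi)=0,
\]
where the middle equality is \eqref{eq: stokes pair} and the last one uses $\underline{\partial}(M,\partial M)=\varnothing$. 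Hence $(L_1,\overline{\ell}_1)\equivint(L_2,\overline{\ell}_2)$.

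For the second bullet the plan is to produce, under the contractibility hypothesis, a \emph{local} $\underline{\d}$-primitive of $(\Delta L,\Delta\overline{\ell}):=(L_2-L_1,\overline{\ell}_2-\overline{\ell}_1)$; by \eqref{eq: relation Lagrangians} this is exactly the assertion $[(L_1,\overline{\ell}_1)]=[(L_2,\overline{\ell}_2)]$. First note that $(\Delta L,\Delta\overline{\ell})$ is automatically $\underline{\d}$-closed, since $\underline{\d}(\Delta L,\Delta\overline{\ell})=(\d\Delta L,\jmath^*\Delta L-\d\Delta\overline{\ell})$ vanishes for dimensional reasons ($\d\Delta L$ is an $(n+1)$-form on $M$, while $\jmath^*\Delta L$ and $\d\Delta\overline{\ell}$ are $n$-forms on the $(n-1)$-manifold $\partial M$). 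The hypothesis $(L_1,\overline{\ell}_1)\equivint(L_2,\overline{\ell}_2)$ says the action $\phi\mapsto\int_{(M,\partial M)}(\Delta L,\Delta\overline{\ell})(\phi)$ is identically zero; applying $\dd$ and commuting it past the integral gives $\int_{(M,\partial M)}\underline{\dd}(\Delta L,\Delta\overline{\ell})=0$ as a $1$-form on $\F$, where $\underline{\dd}(\Delta L,\Delta\overline{\ell})\in\OOmega^{(n,1)}_{\mathrm{loc}}((M,\partial M)\times\F)$ is again $\underline{\d}$-closed because $\underline{\d}\,\underline{\dd}=\underline{\dd}\,\underline{\d}$.

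Now I would use the contractibility of the fibers in two ways. Being contractible, the fibers make $\F$ contractible and equip the variational bicomplex over $(M,\partial M)\times\F$ with a vertical homotopy operator $h$ that is \emph{local} (induced by a fiberwise deformation retraction of $E$ onto the section $0$, hence acting pointwise on jets) and commutes with $\underline{\d}$. In vertical degree $0$ the corresponding homotopy identity expresses $(\Delta L,\Delta\overline{\ell})$ as the sum of its restriction to the section $0$ — a field-independent relative $n$-form — and the term $h\,\underline{\dd}(\Delta L,\Delta\overline{\ell})$; it suffices to exhibit a local $\underline{\d}$-primitive for each summand. For the second summand: the vanishing of $\int_{(M,\partial M)}\underline{\dd}(\Delta L,\Delta\overline{\ell})$ against all variations, together with \eqref{eq: stokes pair} (so that the $\underline{\d}$-exact part of the integration-by-parts splitting $\underline{\dd}(\Delta L,\Delta\overline{\ell})=(\text{relative source form})+\underline{\d}\Theta$ integrates to zero over $\underline{\partial}(M,\partial M)=\varnothing$), forces the relative source form to vanish by the fundamental lemma of the calculus of variations — first applied with variations supported in $\mathrm{int}\,M$ to kill the bulk part, then with arbitrary variations to kill the $\partial M$ part. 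By the structure of the Euler--Lagrange complex of the (relative) variational bicomplex \cite{anderson1989variational,takens1979global}, this yields $\underline{\dd}(\Delta L,\Delta\overline{\ell})=\underline{\d}\Theta$ with $\Theta\in\OOmega^{(n-1,1)}_{\mathrm{loc}}$ local, so $h\,\underline{\dd}(\Delta L,\Delta\overline{\ell})=\underline{\d}(h\Theta)$ with $h\Theta$ local. For the first summand: it is $\underline{\d}$-closed (all relative $n$-forms are) and its integral equals $\int_{(M,\partial M)}(\Delta L,\Delta\overline{\ell})(0)=0$ by hypothesis, so it is $\underline{\d}$-exact by the relative de Rham theorem $H^n(M,\partial M)\cong\R$ with isomorphism given by integration \cite{bot1982differential}, and its primitive is field-independent, hence trivially local. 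Adding the two primitives gives the desired local $(\gamma,\delta)$ with $\underline{\d}(\gamma,\delta)=(\Delta L,\Delta\overline{\ell})$.

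The only real difficulty — and the only place the contractibility of the fibers is genuinely needed — is the \emph{locality} of every primitive produced along the way: the naive antiderivatives (integrating $L$ along paths in $M$, or reconstructing a Lagrangian from its variation) are non-local, and it is precisely the acyclicity of the variational bicomplex for contractible bundles, i.e.~the existence of the local horizontal and vertical homotopy operators underlying Theorem~\ref{theorem: Wald exact - close} and its top-horizontal-degree counterpart, that repairs this. The residual obstruction detected by $\int_{(M,\partial M)}$ is purely topological and is removed by evaluating the hypothesis at the single section $0$. The cleanest setting to make all these homotopy operators explicit is the jet language of Appendix~\ref{Appendix: jets}.
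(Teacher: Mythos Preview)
Your proof of the first bullet is identical to the paper's: both invoke the relative Stokes' theorem \eqref{eq: stokes pair} together with $\underline{\partial}(M,\partial M)=\varnothing$.

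For the second bullet your argument is correct but genuinely different from the paper's. The paper defers to Theorem~\ref{theorem: trivial action trivial lagrangians} in the appendix, which proceeds as follows: since $(\Delta L,\Delta\overline{\ell})$ defines the zero action it is a null Lagrangian, so its class lives in $H^n_{\mathrm{null}}(M)$; Theorem~\ref{theorem: null lagrangians} (adapted from \cite[Th.~5.9]{anderson1989variational}) gives $H^n_{\mathrm{null}}(M)\cong H^n(E,\partial E)$, and contractibility of the fibers collapses this to $H^n(M,\partial M)\cong\R$; the generator is $(\pi^\infty_M)^*(\vol_M,\vol_{\partial M})$, and integrating the relation $[(\Delta L,\Delta\overline{\ell})]=\alpha\cdot[(\pi^\infty_M)^*(\vol_M,\vol_{\partial M})]$ against any $\phi$ forces $\alpha=0$. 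Your route instead uses the fiberwise retraction to build a \emph{vertical} homotopy operator $h$ (local, commuting with $\underline{\d}$), splitting $(\Delta L,\Delta\overline{\ell})$ into its value at the zero section plus $h\,\underline{\dd}(\Delta L,\Delta\overline{\ell})$; you then argue separately that each piece is $\underline{\d}$-exact, the first by ordinary relative de~Rham and the second via the null-Lagrangian decomposition $\underline{\dd}(\Delta L,\Delta\overline{\ell})=\underline{\d}\Theta$. The two approaches are closely related under the hood --- the isomorphism $H^n_{\mathrm{null}}(M)\cong H^n(E,\partial E)$ is itself proved in \cite{anderson1989variational} by exactly the kind of vertical homotopy you invoke --- but the paper packages everything into a single cohomological statement and reads off the answer from the one-dimensionality of $H^n(M,\partial M)$, whereas you explicitly construct the primitive. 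Your approach has the virtue of making the role of contractibility concrete (it is what gives you the local $h$); the paper's has the virtue of isolating the obstruction cleanly as a single real number.

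Two small remarks. First, the sentence ``the fibers make $\F$ contractible'' is not needed and is not quite the right statement: what you actually use is the local vertical homotopy on the jet bundle, which comes directly from a fiberwise retraction of $E$ and does not require any global statement about $\F$. Second, the step ``then with arbitrary variations to kill the $\partial M$ part'' implicitly uses that $\jmath^*\Theta-\dd\Delta\overline{\ell}$ admits an interior-Euler decomposition on $\partial M$; this is true, but note that $\jmath^*\Theta$ lives over the jets of $E$ restricted to $\partial M$ (not the jets of $E|_{\partial M}$, since transverse derivatives appear), so the relevant ``boundary source form'' and the fundamental lemma must be phrased in that larger bundle --- exactly the issue the paper flags around equation~\eqref{eq: dl+Theta} and at the end of Appendix~\ref{section: connection previous formalism}. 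Your citation of the relative Euler--Lagrange complex covers this, but it is the one place where some care is genuinely required.
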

	\begin{proof}\mbox{}\\
		The first point is clear from the relative Stokes' theorem \eqref{eq: stokes pair}. The second one is proven in \ref{theorem: trivial action trivial lagrangians}.
	\end{proof}

	The condition that $E\to M$ is a contractible bundle cannot be removed. To see this, we can consider trivial Lagrangians, also known as null or closed Lagrangians, in spaces without boundary. They are non-exact Lagrangians whose Euler-Lagrange equations are zero. They are characterized by $H^n(E)$ (see theorem \ref{theorem: null lagrangians}). Indeed, in \cite[page 207]{anderson1989variational} there are examples of null Lagrangians which are non-exact Lagrangians $L$ over non-contractible bundles $E\to M$ such that, for every $\phi\in\F$, $L(\phi)=\d Y_\phi$ for some $Y_\phi\in\Omega^{n-1}(M)$. Applying Stokes' theorem we obtain that $\SS=[\![L]\!]=0$ but $L\neq \d Y$ because the potential $Y_\phi$ is non-local in $\phi$ i.e.~$L\equivint 0$ but $[L]\neq0$. \vspace*{2ex}
	
	The same happens always if $M$ has boundary because $H_c^n(M)=0$ \cite{weintraub2014differential}. Thus $L(\phi)$ is always exact but in general, $L$ is not. As in the previous paragraph, we could evaluate and use Stokes' theorem to obtain a boundary integral. Of course, we still get Euler-Lagrange equations because the boundary integrand is non-local (depends on the whole field $\phi$) and the usual computations of variations are not valid.\vspace*{2ex}
	
	Summarizing, for contractible bundles the action $\S:=[\![(L,\overline{\ell})]\!]$ is the initial object from which everything derives and nothing depends on the choice of representative Lagrangians. Otherwise, we have to be careful because one might be using ill-behaved representatives like the ones mentioned in the previous paragraph. Of course, one can try to always work with nice representatives, but it is not straightforward because the Lagrangians $L$ are not the problem, their potentials $Y_\phi$ are. For non-contractible bundles, it is better to just consider $\sigma:=[(L,\overline{\ell})]$ as the initial object instead of $\SS$ and proceed analogously.
	
	\begin{remark}\label{remark: action inicial}\mbox{}\\
		From now on, the action $\S:=[\![(L,\overline{\ell})]\!]$ over a \textbf{contractible bundle} $E\to M$ is our starting point. As mentioned before, everything holds if one considers a general bundle $E$ and replaces $\SS$ by $\sigma:=[(L,\overline{\ell})]$.
	\end{remark}

	\subsection{Space of solutions}
	We define the \textbf{space of solutions} of the action $\SS$ as the set of its critical points:
	\[\definicion
	\tikzmarkin{definitionSOLprima}(0.25,-0.3)(-0.25,0.55)
	\tikzmarkin{definitionSOL}(0.2,-0.3)(-0.2,0.55)
	\Sol(\SS)=\Big\{\phi\in\F\ /\ \dd_\phi \SS\uptolids0\Big\}
	\tikzmarkend{definitionSOL}
	\tikzmarkend{definitionSOLprima}\]
	where $\uptolids$ means that the equality is up to integral terms over the lids $\underline{\partial}(M,\lateral M)$. Those terms vanish once we fix initial and final values on the fields. This space will depend on the functional form of $\SS$ as well as on the definition of $\F$. We denote the inclusion $\peqsub{\jj}{\SS}:\Sol(\SS)\hookrightarrow\F$.
	
	\subsection{Variations}\label{section: variations}
	\subsubsection*{Variation of the Lagrangians}
	
	Applying \eqref{eq: first variation} to the Lagrangian $L\in\OOmega^{(n,0)}(M\times \F)$, we see that $\dd L$ is ``decomposable'' i.e.~there exist unique $E_I\in\OOmega^{(n,0)}(M\times \F)$ and some $\Theta^L\in\OOmega^{(n-1,1)}(M\times \F)$, linear in $\dd\phi^I$ and their derivatives, such that 
	\begin{equation}\label{eq: dL}
	\dd L=E_I\wedge\dd\phi^I+\d\Theta^L
	\end{equation}
	In some mathematical references the first term appears as $\boldsymbol{E}:=E_I\wedge\dd\phi^I\in\OOmega^{(n,1)}(M\times\F)$, which is a source form \cite{zuckerman1987action}, and in some physical references is denoted
	\[E_I\wedge\dd \phi^I=\frac{\delta L}{\delta\phi^I}\delta\phi^I\d x^n\]
	We will stick with the proposed notation $E_I\wedge\dd \phi^I$. Now we want to obtain a decomposition similar to \eqref{eq: dL} but over the boundary. The same theorem guarantees that $\dd \overline{\ell}$ is decomposable but, notice that upon integration of the previous expression, an additional term $\jmath^*\Theta^L$ appears on the boundary. We impose the condition that $\jmath^*\Theta^L$ is also decomposable over the ``lateral boundary'' $\lateral M$ (this is usually achieved by including some boundary conditions in the definition of $\F$, as we will see in the examples of section \ref{section: examples}). With this additional hypothesis and equation \eqref{eq: first variation}, we get that
	\begin{equation}\label{eq: dl+Theta}
	\dd\overline{\ell}=\overline{b}_I\wedge\dd\phi^I+\jmath^*\Theta^L-\d\overline{\theta}^{(L,\overline{\ell})}
	\end{equation}
	over $\lateral M$ for some unique $\overline{b}_I\in\OOmega^{(n-1,0)}(\lateral M\times \F)$ and some $\overline{\theta}^{(L,\overline{\ell})}\in \OOmega^{(n-2,1)}(\lateral M\times \F)$. Notice that \eqref{eq: dl+Theta} does not hold in general over $\Sigma_i\cup\Sigma_f\subset\partial M$\!. However, restricting to $(M,\lateral M)\subset(M,\partial M)$ we have
	\begin{equation}\label{eq: d(L,l)}
	\comentario
	\tikzmarkin{dLprima}(0.2,-0.32)(-0.2,0.57)
	\tikzmarkin{dL}(0.2,-0.27)(-0.2,0.52)
	\underline{\dd}(L,\overline{\ell})=(E_I,\overline{b}_I)\wedge\dd\phi^I+\underline{\d}\Big(\Theta^L,\overline{\theta}^{(L,\overline{\ell})}\Big)
	\tikzmarkend{dL}
	\tikzmarkend{dLprima}
	\end{equation}
	\vspace*{-1ex}
	
	where the wedge here is acting component-wise. Notice that, in fact, this equality only makes sense over $\OOmega^{(n,1)}((M,\lateral M)\times\F)$, as $\overline{b}_I$ and $\overline{\theta}$ are only defined over the lateral boundary $\lateral M$\!.

	\subsubsection*{Variation of the Action}
	
	Now, from equations \eqref{eq: action}, \eqref{eq: dd complex}, and \eqref{eq: d(L,l)}, we can compute the variation of $\SS=[\![(L,\overline{\ell})]\!]$ to obtain
	\begin{align}\label{eq: dS}
	\comentario\tikzmarkin{variationSprima}(0.2,-0.5)(-0.2,0.65)
	\tikzmarkin{variationS}(0.2,-0.45)(-0.2,0.6)
	\dd\SS&=\int_{(M,\lateral M)}(E_I,\overline{b}_I)\wedge\dd\phi^I+\int_{\underline{\partial}(M,\lateral M)}\left(\underline{\jmath}^*\Big(\Theta^L,\overline{\theta}^{(L,\overline{\ell})}\Big)-(\dd\overline{\ell},0)\right)
	\tikzmarkend{variationS}
	\tikzmarkend{variationSprima}
	\end{align}
	where we have used equation \eqref{eq: stokes pair} and
	\[\int_{(M,\partial M)}(\alpha,\overline{\beta})=\int_{(M,\lateral M)}(\alpha,\overline{\beta})-\int_{\underline{\partial}(M,\lateral M)}(\overline{\beta},0)\]
	In order to compute the critical points, we recall that $\dd_\phi\SS\uptolids0$ is equivalent to requiring $\dd_\phi \SS(\mathbb{V}_{\!\phi})=0$ for every $\mathbb{V}_{\!\phi}\in T_\phi\F$ vanishing in a neighborhood of the lids. Notice that the last integral of \eqref{eq: dS} vanishes because it is linear in $\VV_{\!\phi}$ and a finite number of its derivatives, leading to
	\begin{align*}
	0=\dd_\phi\SS(\mathbb{V}_{\!\phi})&=\int_{(M,\lateral M)}\Big(E_I(\phi)\VV^I_{\!\phi},\overline{b}_I(\phi)\overline{\VV}^I_{\!\phi}\Big)=\int_ME_I(\phi)\VV^I_{\!\phi}-\int_{\lateral M}\overline{b}_I(\phi)\overline{\VV}^I_{\!\phi}
	\end{align*}
	The first integral vanishes if $E_I(\phi)=0$ as, in general, we assume that $\mathbb{V}_{\!\phi}$ is arbitrary away from the lids (for constrained systems one has to be more careful but usually one can easily deal with those particular cases). The second integral is a bit trickier precisely because it is more common to have some constraints i.e.~boundary conditions. If $\overline{\VV}^{\raisebox{-.5ex}{\,\scalebox{0.6}{$I$}}}_{\!\phi}$ is arbitrary, then $\overline{b}_I(\phi)=0$. This is the case, for instance, of Neumann (natural) boundary conditions. We can also have $\overline{\VV}^{\raisebox{-.5ex}{\,\scalebox{0.6}{$I$}}}_{\!\phi}=0$ if $\F$ is defined such that $\phi^I$ is fixed at the boundary i.e.~imposing Dirichlet boundary conditions (see for instance the first two examples of section \ref{section: examples} for a careful discussion). In that case, $\overline{b}_I$ is arbitrary. Nonetheless, for convenience, we usually define $\overline{b}_I:=0$ whenever $\overline{\VV}^{\raisebox{-.5ex}{\,\scalebox{0.6}{$I$}}}_{\!\phi}=0$. Of course, we can have a point-wise mixture of both cases or more complicated cases (that have to be studied individually). With this convention, the space of solutions is simply
	\[\comentario\tikzmarkin{caracterizacionSOL}(0.2,-0.35)(-0.2,0.6)
	\tikzmarkin{caracterizacionSOLprima}(0.2,-0.3)(-0.2,0.55)
	\Sol(\SS)=\Big\{\phi\in\F\ /\ (E_I,\overline{b}_I)(\phi)=0\Big\}\overset{\peqsub{\jj}{\SS}}{\hookrightarrow}\F
	\tikzmarkend{caracterizacionSOL}
	\tikzmarkend{caracterizacionSOLprima}\]
	This space does not depend on the chosen Lagrangians $(L,\overline{\ell})$ because it is defined as the set of critical points of $\SS$ (up to lid-terms). This, together with \eqref{eq: d(L,l)}, implies in particular the following result.
	
	\begin{lemma}\mbox{}\label{lemma: equivalent Lagrangians same eq}\\
		If $(L,\overline{\ell})\equivint(0,0)$, then $\underline{\dd}( L,\overline{\ell})=\underline{\d}\Big(\Theta^{L},\overline{\theta}^{(L,\overline{\ell})}\Big)$ for some $\Big(\Theta^{L},\overline{\theta}^{(L,\overline{\ell})}\Big)\in\OOmega^{(n-1,1)}((M,\lateral M)\times\F)$.
	\end{lemma}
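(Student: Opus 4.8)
The plan is to unwind the definition of $\int$-equivalence to $(0,0)$ and feed the resulting identities into the decomposition machinery already established in equations \eqref{eq: dL} and \eqref{eq: dl+Theta}. First I would observe that if $(L,\overline{\ell})\equivint(0,0)$, then for every $\phi\in\F$ the action $\SS(\phi)=\int_{(M,\partial M)}(L,\overline{\ell})(\phi)$ vanishes, hence $\dd\SS=0$ identically on $\F$. Applying the variational formula \eqref{eq: dS} and the fact that the lid integral there vanishes (it is linear in $\VV_\phi$ and finitely many derivatives), we get $\int_{(M,\lateral M)}(E_I,\overline{b}_I)\wwedge\dd\phi^I=0$ for all $\phi$ and all $\VV_\phi$, which forces $(E_I,\overline{b}_I)=0$ by the usual fundamental-lemma-of-the-calculus-of-variations argument (arbitrariness of $\VV_\phi$ away from the lids, and our convention $\overline{b}_I:=0$ where $\overline{\VV}^I_\phi$ is constrained).

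Next I would substitute $(E_I,\overline{b}_I)=0$ into the master decomposition \eqref{eq: d(L,l)}, which reads $\underline{\dd}(L,\overline{\ell})=(E_I,\overline{b}_I)\wwedge\dd\phi^I+\underline{\d}\bigl(\Theta^L,\overline{\theta}^{(L,\overline{\ell})}\bigr)$ in $\OOmega^{(n,1)}((M,\lateral M)\times\F)$. The first term is now zero, so we are left precisely with $\underline{\dd}(L,\overline{\ell})=\underline{\d}\bigl(\Theta^L,\overline{\theta}^{(L,\overline{\ell})}\bigr)$, with $\bigl(\Theta^L,\overline{\theta}^{(L,\overline{\ell})}\bigr)\in\OOmega^{(n-1,1)}((M,\lateral M)\times\F)$, which is the claimed conclusion. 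So the proof is essentially: $\int$-trivial $\Rightarrow$ trivial Euler--Lagrange data $\Rightarrow$ the boundary term is all that survives in \eqref{eq: d(L,l)}.

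The main obstacle is making the deduction $(E_I,\overline{b}_I)=0$ fully rigorous, i.e.\ justifying that the vanishing of the \emph{integral} $\int_{(M,\lateral M)}(E_I,\overline{b}_I)\wwedge\dd\phi^I$ for every field and every variation implies the pointwise vanishing of the integrand. In the bulk this is the standard localization argument using variations supported in small patches; on the lateral boundary one must invoke the decomposability hypothesis on $\jmath^*\Theta^L$ that was built into the definition of $\F$, together with the convention fixing $\overline{b}_I$ in the Dirichlet case, so that $\overline{b}_I$ is genuinely the coefficient extracted in \eqref{eq: dl+Theta} and not something ambiguous. A secondary subtlety is the role of the lids: one has to note that $\SS\equiv 0$ already gives $\dd_\phi\SS\uptolids 0$ with \emph{no} lid remainder, and that the passage from $(M,\partial M)$ to $(M,\lateral M)$ via $\int_{(M,\partial M)}(\alpha,\overline{\beta})=\int_{(M,\lateral M)}(\alpha,\overline{\beta})-\int_{\underline{\partial}(M,\lateral M)}(\overline{\beta},0)$ does not reintroduce the missing relation \eqref{eq: dl+Theta} on the lids.

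Alternatively, and perhaps more cleanly, one can bypass re-deriving $(E_I,\overline{b}_I)=0$ by quoting the already-stated characterization $\Sol(\SS)=\{\phi\in\F\mid (E_I,\overline{b}_I)(\phi)=0\}$ together with the observation that an $\int$-trivial pair of Lagrangians has $\Sol(\SS)=\F$ (every field is critical), hence $(E_I,\overline{b}_I)$ vanishes on all of $\F$; then proceed exactly as above with \eqref{eq: d(L,l)}. I would present the argument in this streamlined form, relegating the localization details to a reference to the examples of section~\ref{section: examples} where the arbitrariness of $\VV_\phi$ and the boundary conventions are spelled out.
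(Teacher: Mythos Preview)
Your proposal is correct and follows essentially the same route as the paper: the paper derives the lemma as an immediate consequence of the fact that $\Sol(\SS)$ depends only on $\SS$ (so $(L,\overline{\ell})\equivint(0,0)$ forces $\Sol(\SS)=\F$, i.e.\ $(E_I,\overline{b}_I)\equiv 0$) together with \eqref{eq: d(L,l)}. Your ``streamlined'' alternative at the end is exactly how the paper phrases it, and your longer first version simply unpacks the same localization argument that underlies the characterization of $\Sol(\SS)$.
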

	
	A last comment is in order now: although physically the important objects are the equations of motion with their boundary conditions, they are not here! The fundamental object is the action as it is the tool used to build the rest of the geometric structures. We can find different actions with the same space of solutions that lead to different symplectic geometries over the CPS. They all define the same physics (at least classically) but some formulations are better suited for our purposes than others. A very simple example is if we take $L_2=\lambda L_1$ for some $\lambda\in\R\setminus\{0,1\}$ and $\partial M=\varnothing$. The actions and the equations are \emph{different}, $\SS_2=\lambda\SS_1$ and $E^{(2)}=\lambda E^{(1)}$, although they define the same space of solutions. Another example is given by trivial Lagrangians as we mentioned at the end of section \ref{section: action}. In \ref{example: scalar field no equation} we will see a more elaborate example.\vspace*{2ex}
	
	Let us restate lemma \ref{lemma: equivalence relations} in a more useful way (as we explained in remark \ref{remark: action inicial}, we are only considering contractible bundles):
	
	\begin{lemma}\label{lemma: (L2,l2)}\mbox{}\\
		$(L_1,\overline{\ell}_1)\equivint(L_2,\overline{\ell}_2)$ if and only if there exists $(Y,\overline{y})\in\OOmega^{(n-1,0)}((M,\partial M)\times \F)$ with
		\begin{equation}\label{eq: L2=L1}\degeneracion
		\tikzmarkin{eqL1L2}(.2,-0.35)(-0.3,0.57)
		(L_2,\overline{\ell}_2)=(L_1,\overline{\ell}_1)+\underline{\d}(Y,\overline{y})\qquad\equiv\qquad \begin{array}{l}L_2=L_1+\d Y\\\overline{\ell}_2=\overline{\ell}_1+\jmath^*Y-\d \overline{y}\end{array}
		\tikzmarkend{eqL1L2}
		\end{equation}
	\end{lemma}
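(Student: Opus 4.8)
The plan is to prove Lemma \ref{lemma: (L2,l2)} by combining Lemma \ref{lemma: equivalence relations} with the observation that the relation $(L_1,\overline{\ell}_1)\equivint(L_2,\overline{\ell}_2)$ can be rephrased in terms of $\int$-triviality of the difference, and that $\underline{\d}$-exactness of the difference is exactly condition \eqref{eq: L2=L1}. First I would note that the ``if'' direction is immediate: if $(L_2,\overline{\ell}_2)=(L_1,\overline{\ell}_1)+\underline{\d}(Y,\overline{y})$, then $[(L_1,\overline{\ell}_1)]=[(L_2,\overline{\ell}_2)]$ in relative cohomology, so by the first bullet of Lemma \ref{lemma: equivalence relations} we get $(L_1,\overline{\ell}_1)\equivint(L_2,\overline{\ell}_2)$ (this is just the relative Stokes' theorem \eqref{eq: stokes pair}, since $\underline{\partial}(M,\partial M)=\varnothing$). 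The equivalence of the $\underline{\d}(Y,\overline{y})$ equation with the pair of equations $L_2=L_1+\d Y$, $\overline{\ell}_2=\overline{\ell}_1+\jmath^*Y-\d\overline{y}$ is just the definition of $\underline{\d}$ from section \ref{section: geometry Mxpartial M}.

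For the ``only if'' direction, I would set $(L,\overline{\ell}):=(L_2,\overline{\ell}_2)-(L_1,\overline{\ell}_1)\in\Lag(M)$ and observe that $\equivint$ is linear in the obvious sense, so $(L_1,\overline{\ell}_1)\equivint(L_2,\overline{\ell}_2)$ means precisely $(L,\overline{\ell})\equivint(0,0)$. Now I would like to invoke the second bullet of Lemma \ref{lemma: equivalence relations}: since $E\to M$ is a contractible bundle (we are in the standing assumption of remark \ref{remark: action inicial}), $(L,\overline{\ell})\equivint(0,0)$ implies $[(L,\overline{\ell})]=[(0,0)]$, i.e.\ $(L,\overline{\ell})=\underline{\d}(Y,\overline{y})$ for some $(Y,\overline{y})\in\Omega^{(n-1,0)}((M,\partial M)\times\F)$. (Here one uses that $(0,0)$ is already $\underline{\d}$-closed — indeed $\underline{\d}(0,0)=0$ — so $[(0,0)]$ makes sense, and $[(L,\overline{\ell})]=[(0,0)]$ means $(L,\overline{\ell})-(0,0)$ is $\underline{\d}$-exact.) Then $(L_2,\overline{\ell}_2)=(L_1,\overline{\ell}_1)+\underline{\d}(Y,\overline{y})$, which unwound gives \eqref{eq: L2=L1}. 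One subtlety worth checking is that the $(Y,\overline{y})$ produced lies in the \emph{local} space $\Omega^{(n-1,0)}_{\mathrm{loc}}$ — but this is guaranteed by the horizontal exactness theorem \ref{theorem: Wald exact - close}, whose proof is what underlies Theorem \ref{theorem: trivial action trivial lagrangians} / the second bullet of Lemma \ref{lemma: equivalence relations}.

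Since both halves of the proof reduce to citing Lemma \ref{lemma: equivalence relations} plus unpacking the definition of $\underline{\d}$, the only real content is bookkeeping: verifying that $\equivint$ respects differences (trivial from Definition \ref{def: S-equiv}, since $\int_{(M,\partial M)}$ is linear) and that ``$[(\alpha,\beta)]=[(\gamma,\delta)]$'' in relative cohomology unwinds to exactly the displayed two-line condition with the correct signs ($\overline{\ell}_2=\overline{\ell}_1+\jmath^*Y-\d\overline{y}$, matching the relative coboundary formula $\beta_2=\beta_1+\jmath^*Y-\peqsub{\d}{N}Z$ from section \ref{section: geometry Mxpartial M}). The one place I expect to need a sentence of care is the \textbf{contractibility hypothesis}: this lemma is false for general bundles (as the discussion of null Lagrangians after Lemma \ref{lemma: equivalence relations} shows), so the proof genuinely uses that we are in the contractible-bundle setting of remark \ref{remark: action inicial}, and I would make this explicit rather than let it hide inside the citation. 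In short, the proof is: ($\Leftarrow$) relative Stokes, ($\Rightarrow$) Lemma \ref{lemma: equivalence relations} second bullet applied to the difference — and I would present it in exactly that order.
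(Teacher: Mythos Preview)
Your proposal is correct and matches the paper's approach exactly: the paper introduces this lemma with the sentence ``Let us restate lemma \ref{lemma: equivalence relations} in a more useful way (as we explained in remark \ref{remark: action inicial}, we are only considering contractible bundles)'' and gives no separate proof, so the content is precisely what you wrote --- ($\Leftarrow$) relative Stokes, ($\Rightarrow$) the second bullet of Lemma \ref{lemma: equivalence relations} applied to the difference, under the standing contractibility hypothesis. Your explicit flagging of the contractibility assumption and the unpacking of $\underline{\d}$ are exactly the bookkeeping the paper leaves implicit.
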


	Notice in particular that, given $(L,\overline{\ell})\in\Lag(M)$, if there exists $Y\in\OOmega^{(n-1,0)}(M\times \F)$ such that $\overline{\ell}+\jmath^*Y$ is exact, then $(L,\overline{\ell})\equivint(L+\d Y,0)$ and we can remove the boundary term of the action $\SS$.
	
	\begin{lemma}\label{lemma: (Theta2,theta2)}\mbox{}\\
		If  $(L_1,\overline{\ell}_1)\equivint(L_2,\overline{\ell}_2)$, then 
		\[\degeneracion
		\tikzmarkin{tupla}(0.1,-0.16)(-0.1,0.66)
		\Big(\Theta^{L_2},\overline{\theta}^{(L_2,\overline{\ell}_2)}\Big)=\Big(\Theta^{L_1},\overline{\theta}^{(L_1,\overline{\ell}_1)}\Big)+\underline{\d}(Z,\overline{z})+\underline{\dd}(Y,\overline{y})\qquad\equiv\qquad\begin{array}{l}\Theta^{L_2}=\Theta^{L_1}+\d y+\dd Y\\\overline{\theta}^{(L_2,\overline{\ell}_2)}=\overline{\theta}^{(L_1,\overline{\ell}_1)}+\jmath^*y-\d \overline{z}+\dd \overline{y}
		\tikzmarkend{tupla}\end{array}\]
		for some $(Z,\overline{z})\in\OOmega^{(n-2,1)}((M,\lateral M)\times\F)$, $(Y,\overline{y})\in\OOmega^{(n-1,0)}((M,\lateral M)\times \F)$. Moreover, if $\jmath^*\Theta^{L_1}$ is decomposable, then so is $\jmath^*\Theta^{L_2}$.
	\end{lemma}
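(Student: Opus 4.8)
The plan is to differentiate the primitive identity supplied by Lemma \ref{lemma: (L2,l2)} and then apply the Horizontal exactness theorem \ref{theorem: Wald exact - close}. Since $(L_1,\overline{\ell}_1)\equivint(L_2,\overline{\ell}_2)$ on a contractible bundle, Lemma \ref{lemma: (L2,l2)} gives a (local) $(Y,\overline{y})\in\OOmega^{(n-1,0)}((M,\partial M)\times\F)$ —which, after restricting its boundary component to $\lateral M$, will be the $(Y,\overline{y})$ of the statement— with $(L_2,\overline{\ell}_2)=(L_1,\overline{\ell}_1)+\underline{\d}(Y,\overline{y})$. First I would apply $\underline{\dd}$ to this equation and commute the two exterior derivatives by \eqref{eq: d dd=dd d}, obtaining $\underline{\dd}(L_2,\overline{\ell}_2)=\underline{\dd}(L_1,\overline{\ell}_1)+\underline{\d}\,\underline{\dd}(Y,\overline{y})$ as an identity in $\OOmega^{(n,1)}((M,\lateral M)\times\F)$.

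Next I would feed the decomposition \eqref{eq: d(L,l)} for each Lagrangian pair into this identity. The difference $(L_2-L_1,\overline{\ell}_2-\overline{\ell}_1)=\underline{\d}(Y,\overline{y})$ is $\underline{\d}$-exact, hence $\equivint(0,0)$ by the relative Stokes' theorem \eqref{eq: stokes pair} (recall $\underline{\partial}(M,\partial M)=\varnothing$); so Lemma \ref{lemma: equivalent Lagrangians same eq}, together with the linearity and the uniqueness of the source coefficients in \eqref{eq: d(L,l)}, forces the pairs $(E_I,\overline{b}_I)$ associated with $(L_1,\overline{\ell}_1)$ and with $(L_2,\overline{\ell}_2)$ to coincide. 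Cancelling this common source term leaves
\[\underline{\d}\Big[\big(\Theta^{L_2},\overline{\theta}^{(L_2,\overline{\ell}_2)}\big)-\big(\Theta^{L_1},\overline{\theta}^{(L_1,\overline{\ell}_1)}\big)-\underline{\dd}(Y,\overline{y})\Big]=0.\]
The bracketed pair lies in $\OOmega^{(n-1,1)}_{\mathrm{loc}}((M,\lateral M)\times\F)$, with horizontal degree $n-1<n$ and vertical degree $s=1>0$; hence Theorem \ref{theorem: Wald exact - close} —whose proof, as noted there, adapts to a manifold with boundary, and $(M,\lateral M)$ behaves as one— provides $(Z,\overline{z})\in\OOmega^{(n-2,1)}_{\mathrm{loc}}((M,\lateral M)\times\F)$ with $\underline{\d}(Z,\overline{z})$ equal to that bracket. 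Expanding the resulting pair identity with $\underline{\dd}(Y,\overline{y})=(\dd Y,\dd\overline{y})$ and $\underline{\d}(Z,\overline{z})=(\d Z,\jmath^*Z-\d\overline{z})$, and renaming $y:=Z$, reproduces the two displayed component equations.

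For the final assertion I would pull the identity for $\Theta^{L_2}$ back along $\jmath$, obtaining $\jmath^*\Theta^{L_2}=\jmath^*\Theta^{L_1}+\d(\jmath^*y)+\dd(\jmath^*Y)$ over $\lateral M$: here $\d(\jmath^*y)$ is decomposable with vanishing source part, and $\dd(\jmath^*Y)$ is decomposable by \eqref{eq: first variation} applied to the top–horizontal–degree form $\jmath^*Y$ on $\lateral M$, so —decomposability being additive— $\jmath^*\Theta^{L_2}$ is decomposable whenever $\jmath^*\Theta^{L_1}$ is. The only step needing genuine care, as opposed to bookkeeping, is the cancellation of the source coefficients: one must be sure that $\underline{\d}\,\underline{\dd}(Y,\overline{y})$ carries no source-form contribution (this is exactly what Lemma \ref{lemma: equivalent Lagrangians same eq} delivers) and that all the primitives stay \emph{local} and live over the pair $(M,\lateral M)$ rather than $(M,\partial M)$, so that the hypotheses of Theorem \ref{theorem: Wald exact - close} are literally met.
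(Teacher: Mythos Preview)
Your argument is correct and follows essentially the same route as the paper: obtain $(Y,\overline{y})$ from Lemma~\ref{lemma: (L2,l2)}, use Lemma~\ref{lemma: equivalent Lagrangians same eq} (equivalently, the equality of the source pairs) to conclude that $\underline{\d}\big[(\Theta^{L_2},\overline{\theta}^{(L_2,\overline{\ell}_2)})-(\Theta^{L_1},\overline{\theta}^{(L_1,\overline{\ell}_1)})-\underline{\dd}(Y,\overline{y})\big]=0$, and then invoke the horizontal exactness theorem~\ref{theorem: Wald exact - close} to produce $(Z,\overline{z})$; the decomposability claim is handled identically via \eqref{eq: first variation}. Your version simply spells out a few bookkeeping steps (the source-term cancellation and the renaming $y:=Z$) that the paper leaves implicit.
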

	\begin{proof}\mbox{}\\
		From lemma \ref{lemma: (L2,l2)} we know that 	$(L_2,\overline{\ell}_2)=(L_1,\overline{\ell}_1)+\underline{\d}(Y,\overline{y})$ for some $(Y,\overline{y})\in\OOmega^{(n-1,0)}((M,\partial M)\times \F)$. Now, from lemma \ref{lemma: equivalent Lagrangians same eq} we have
		\[\underline{\d}\left(\Theta^{L_2}-\Theta^{L_1},\overline{\theta}^{(L_2,\overline{\ell}_2)}-\overline{\theta}^{(L_1,\overline{\ell}_1)}\right)=\underline{\dd}(L_2-L_1,\overline{\ell}_2-\overline{\ell}_1)=\underline{\dd}\,\underline{\d}(Y,\overline{y})\overset{\eqref{eq: d dd=dd d}}{=}\underline{\d}\,\underline{\dd}(Y,\overline{y})\]
		Applying the horizontal exactness theorem \ref{theorem: Wald exact - close}, we see that there exists $(Z,\overline{z})\in\OOmega^{(n-2,1)}((M,\lateral)\times \F)$ such that
		\[\Big(\Theta^{L_2}-\Theta^{L_1},\overline{\theta}^{(L_2,\overline{\ell}_2)}-\overline{\theta}^{(L_1,\overline{\ell}_1)}\Big)-\underline{\dd}(Y,\overline{y})=\underline{\d}(Z,\overline{z})\]	
		Finally, assume that $\jmath^*\Theta^{L_1}$ is decomposable. As $\dd\jmath^*Y$ is also decomposable according to equation \eqref{eq: first variation}, we see that $\jmath^*\Theta^{L_2}$ is decomposable.
	\end{proof}
	
	Notice in particular that even if we consider $(L_2,\overline{\ell}_2)=(L_1,\overline{\ell}_1)$ in the previous lemma, we obtain that $(\Theta^{L_1},\overline{\theta}^{(L_1,\overline{\ell}_1)})$ and $(\Theta^{L_2},\overline{\theta}^{(L_2,\overline{\ell}_2)})$ are only equal up to a $\underline{\d}$-exact term. However, this is good enough as this implies that they are equal on cohomology, which is what we need for the following.

	\subsection{Symplectic structure}\label{section: symplectic structure}
	
	\begin{definition}\label{def: OOmega}\mbox{}\\
		Given a local action $\SS=[\![(L,\overline{\ell})]\!]$ and a Cauchy embedding $\underline{\imath}=(\imath,\overline{\imath}):(\Sigma,\partial\Sigma)\hookrightarrow (M,\lateral M)$, we define its associated \textbf{(pre)symplectic form}
		\[\definicion
		\tikzmarkin{definitionOOmegaprima}(0.15,-0.48)(-0.25,0.63)
		\tikzmarkin{definitionOOmega}(0.1,-0.48)(-0.2,0.63)
		\OOmega_\SS^\imath:=\int_{(\Sigma,\partial\Sigma)}\underline{\dd}\,\underline{\imath}^*\Big(\Theta^L,\overline{\theta}^{(L,\overline{\ell})}\Big)\in\OOmega^2(\F)
		\tikzmarkend{definitionOOmega}
		\tikzmarkend{definitionOOmegaprima}\]
	\end{definition}
	As $\underline{\dd}^2=0$, $\OOmega_\SS^\imath$ is clearly closed but it might be degenerate. It will be useful in the following to define the \textbf{symplectic currents} 
	\begin{equation}\label{eq: omega and overline omega}\definicion
	\tikzmarkin{overlineomegaprima}(.25,-0.3)(-0.35,0.52)
	\tikzmarkin{overlineomega}(.2,-0.3)(-0.3,0.52)
	\Big(\Omega^\Theta,\overline{\omega}^{(\Theta,\overline{\theta})}\Big):=\underline{\dd}\Big(\Theta^L,\overline{\theta}^{(L,\overline{\ell})}\Big)\in\OOmega^{(n-1,2)}((M,\lateral M)\times\F)
	\tikzmarkend{overlineomega}
	\tikzmarkend{overlineomegaprima}
	\end{equation}
	We write $\Omega^\Theta$ instead of $\Omega^{(L,\Theta^L)}$ as it is clear that $\omega$ depends on $L$ through $\Theta$ (analogously for $\overline{\omega}$).
	
	\begin{proposition}\mbox{}\\
		$\OOmega_\SS^\imath$ does not depend on the chosen Lagrangians representatives.
	\end{proposition}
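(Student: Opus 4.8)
The plan is to derive the statement from Lemma \ref{lemma: (Theta2,theta2)}. The form $\OOmega_\SS^\imath$ of Definition \ref{def: OOmega} is manufactured solely out of the potential pair $\big(\Theta^L,\overline{\theta}^{(L,\overline{\ell})}\big)$, and that lemma already records exactly how this pair can change when the representative of the action $\SS=[\![(L,\overline{\ell})]\!]$ is changed; so the whole task reduces to checking that every admissible change of $\big(\Theta^L,\overline{\theta}^{(L,\overline{\ell})}\big)$ leaves the integral in Definition \ref{def: OOmega} untouched.

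First I would take two pairs of Lagrangians $(L_1,\overline{\ell}_1)\equivint(L_2,\overline{\ell}_2)$ in $\Lag(M)$ and invoke Lemma \ref{lemma: (Theta2,theta2)} to obtain
\[\big(\Theta^{L_2},\overline{\theta}^{(L_2,\overline{\ell}_2)}\big)=\big(\Theta^{L_1},\overline{\theta}^{(L_1,\overline{\ell}_1)}\big)+\underline{\d}(Z,\overline{z})+\underline{\dd}(Y,\overline{y})\]
for some $(Z,\overline{z})\in\OOmega^{(n-2,1)}((M,\lateral M)\times\F)$ and $(Y,\overline{y})\in\OOmega^{(n-1,0)}((M,\lateral M)\times\F)$. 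The residual freedom in the choice of $\big(\Theta^L,\overline{\theta}^{(L,\overline{\ell})}\big)$ for a fixed Lagrangian, pointed out just after that lemma, is the particular case $(L_2,\overline{\ell}_2)=(L_1,\overline{\ell}_1)$, $(Y,\overline{y})=0$, so handling the displayed identity settles everything. Substituting it into $\OOmega_\SS^\imath=\int_{(\Sigma,\partial\Sigma)}\underline{\dd}\,\underline{\imath}^*\big(\Theta^L,\overline{\theta}^{(L,\overline{\ell})}\big)$, the difference of the two resulting forms is
\[\int_{(\Sigma,\partial\Sigma)}\underline{\dd}\,\underline{\imath}^*\underline{\d}(Z,\overline{z})\;+\;\int_{(\Sigma,\partial\Sigma)}\underline{\dd}\,\underline{\imath}^*\underline{\dd}(Y,\overline{y}),\]
and I would kill the two summands separately.

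For the $\underline{\dd}(Y,\overline{y})$ summand: the relative pullback $\underline{\imath}^*$ of the pair-embedding $\underline{\imath}=(\imath,\overline{\imath})\colon(\Sigma,\partial\Sigma)\hookrightarrow(M,\lateral M)$ commutes with $\underline{\dd}$ (it acts trivially on $\F$), so this term equals $\underline{\dd}^2\,\underline{\imath}^*(Y,\overline{y})=0$ by \eqref{eq: d^2=0}. For the $\underline{\d}(Z,\overline{z})$ summand: $\underline{\imath}^*$ also commutes with $\underline{\d}$ (it is a map of pairs), and $\underline{\d}$ commutes with $\underline{\dd}$ by \eqref{eq: d dd=dd d}, so
\[\underline{\dd}\,\underline{\imath}^*\underline{\d}(Z,\overline{z})=\underline{\dd}\,\underline{\d}\,\underline{\imath}^*(Z,\overline{z})=\underline{\d}\big(\underline{\dd}\,\underline{\imath}^*(Z,\overline{z})\big),\]
which is $\underline{\d}$-exact on $(\Sigma,\partial\Sigma)$. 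The relative Stokes' theorem \eqref{eq: stokes pair} then converts its integral over $(\Sigma,\partial\Sigma)$ into an integral over $\underline{\partial}(\Sigma,\partial\Sigma)=(\partial\Sigma\setminus\partial\Sigma,\partial(\partial\Sigma))=\varnothing$, which vanishes. Hence $\OOmega_\SS^\imath$ does not change, and for non-contractible bundles one repeats the argument verbatim with $\sigma=[(L,\overline{\ell})]$ and relation \eqref{eq: relation Lagrangians} in place of $\SS$ and Lemma \ref{lemma: (Theta2,theta2)}.

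I do not foresee a real obstacle: once Lemma \ref{lemma: (Theta2,theta2)} is available the argument is pure bookkeeping. The two points deserving explicit care are the commutation of the pair-embedding pullback $\underline{\imath}^*$ with both exterior derivatives --- a ``usual property'' (page \pageref{eq: d^2=0}) but one that should be made explicit since $\underline{\imath}$ maps $(\Sigma,\partial\Sigma)$ into $(M,\lateral M)$ --- and the identity $\underline{\partial}(\Sigma,\partial\Sigma)=\varnothing$, i.e.\ the fact, already the moral of Section \ref{section: geometry Mxpartial M}, that $(\Sigma,\partial\Sigma)$ behaves like a closed manifold; it is precisely this that makes the surviving $\underline{\d}$-exact term integrate to zero.
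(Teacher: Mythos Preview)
Your proof is correct and follows essentially the same route as the paper's own proof: both invoke Lemma \ref{lemma: (Theta2,theta2)}, kill the $\underline{\dd}(Y,\overline{y})$ term via $\underline{\dd}^2=0$, and kill the $\underline{\d}(Z,\overline{z})$ term by commuting $\underline{\dd}$ past $\underline{\d}$ and applying the relative Stokes' theorem \eqref{eq: stokes pair} on $(\Sigma,\partial\Sigma)$. You spell out more explicitly the commutations of $\underline{\imath}^*$ with $\underline{\d}$ and $\underline{\dd}$ and the vanishing of $\underline{\partial}(\Sigma,\partial\Sigma)$, which the paper leaves implicit, but the argument is the same.
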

	\begin{proof}\mbox{}\\
		Consider $(L_1,\overline{\ell}_1)\equivint(L_2,\overline{\ell}_2)$. Applying lemma \ref{lemma: (Theta2,theta2)} and equations \eqref{eq: stokes pair} and \eqref{eq: d^2=0}, we have\allowdisplaybreaks[0]
		\begin{align*}
		(\OOmega_\SS^\imath)_2&=\int_{(\Sigma,\partial\Sigma)}\underline{\dd}\,\underline{\imath}^*\Big(\Theta^{L_2},\overline{\theta}^{(L_2,\overline{\ell}_2)}\Big)=\int_{(\Sigma,\partial\Sigma)}\underline{\imath}^*\Big\{\underline{\dd}\Big(\Theta^{L_1},\overline{\theta}^{(L_1,\overline{\ell}_1)}\Big)+\underline{\d}\,\underline{\dd}(Z,\overline{z})+\underline{\dd}^2(Y,\overline{y})\Big\}=(\OOmega_\SS^\imath)_1
		\end{align*}
		\mbox{}\vspace*{-6ex}
		
	\end{proof}

	\begin{proposition}\label{prop: OOmega_SS}\mbox{}\\
		The (pre)symplectic structure $\OOmega_\SS:=\peqsub{\jj}{\SS}^*\OOmega_\SS^\imath$ on $\Sol(\SS)$ is independent of the embedding $\imath:\Sigma\hookrightarrow M$\!.
	\end{proposition}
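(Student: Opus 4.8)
\mbox{}\\
The plan is to reduce the statement to the on-shell relative closedness of the symplectic current, followed by a relative Stokes argument between two Cauchy surfaces. First I would rewrite the (pre)symplectic form in terms of the symplectic current \eqref{eq: omega and overline omega}: since $\underline{\dd}$ commutes with pullbacks and with integration over the $M$-directions, Definition \ref{def: OOmega} gives $\OOmega_\SS^\imath=\int_{(\Sigma,\partial\Sigma)}\underline{\imath}^*\big(\Omega^\Theta,\overline{\omega}^{(\Theta,\overline{\theta})}\big)$, and hence
\[
\OOmega_\SS=\peqsub{\jj}{\SS}^*\OOmega_\SS^\imath=\int_{(\Sigma,\partial\Sigma)}\underline{\imath}^*\Big[\big(\Omega^\Theta,\overline{\omega}^{(\Theta,\overline{\theta})}\big)\big|_{\Sol(\SS)}\Big],
\]
where $\big(\cdot\big)\big|_{\Sol(\SS)}$ denotes the pullback along $\mathrm{id}_M\times\peqsub{\jj}{\SS}$ to $(M,\lateral M)\times\Sol(\SS)$.

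Next I would show that $\big(\Omega^\Theta,\overline{\omega}^{(\Theta,\overline{\theta})}\big)\big|_{\Sol(\SS)}$ is $\underline{\d}$-closed. Applying $\underline{\dd}$ to \eqref{eq: d(L,l)} and using $\underline{\dd}^2=0$ and $\underline{\dd}\,\underline{\d}=\underline{\d}\,\underline{\dd}$ from \eqref{eq: d^2=0} and \eqref{eq: d dd=dd d}, together with $\dd\,\dd\phi^I=0$, one obtains $\underline{\d}\big(\Omega^\Theta,\overline{\omega}^{(\Theta,\overline{\theta})}\big)=-\,\underline{\dd}(E_I,\overline{b}_I)\wedge\dd\phi^I$. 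Pulling back along $\mathrm{id}_M\times\peqsub{\jj}{\SS}$ and using that this operation commutes with $\underline{\dd}$, the right-hand side becomes $-\,\underline{\dd}\big((E_I,\overline{b}_I)\big|_{\Sol(\SS)}\big)\wedge\dd\phi^I$, which vanishes because $(E_I,\overline{b}_I)$ is identically zero on $\Sol(\SS)$ by the characterization of the space of solutions. Hence $\underline{\d}\big[\big(\Omega^\Theta,\overline{\omega}^{(\Theta,\overline{\theta})}\big)\big|_{\Sol(\SS)}\big]=0$.

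Finally, given two Cauchy embeddings $\underline{\imath}_0,\underline{\imath}_1:(\Sigma,\partial\Sigma)\hookrightarrow(M,\lateral M)$, I would use that, since $M=I\times\Sigma$ is foliated by Cauchy hypersurfaces, their images are homologous relative to $\lateral M$ and cobound a region $W\subseteq M$; then $(W,\lateral M\cap W)$ is a relative submanifold of $(M,\lateral M)$ whose relative boundary $\underline{\partial}(W,\lateral M\cap W)$ is the disjoint union of $\underline{\imath}_0(\Sigma,\partial\Sigma)$ and $\underline{\imath}_1(\Sigma,\partial\Sigma)$ with opposite orientations, the lateral portion of $\partial W$ being absorbed into $\lateral M\cap W$ and the corners $\underline{\imath}_k(\partial\Sigma)$ providing exactly the relative-boundary component. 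Applying the relative Stokes theorem \eqref{eq: stokes pair} to $\big(\Omega^\Theta,\overline{\omega}^{(\Theta,\overline{\theta})}\big)\big|_{\Sol(\SS)}$ over $(W,\lateral M\cap W)$, the bulk integral vanishes by the previous step, so the integrals over $\underline{\imath}_0(\Sigma,\partial\Sigma)$ and $\underline{\imath}_1(\Sigma,\partial\Sigma)$ agree, which by the first step means that $\OOmega_\SS$ is the same whether computed with $\imath_0$ or $\imath_1$. The hard part will be the geometric bookkeeping of the last step: correctly identifying $\underline{\partial}(W,\lateral M\cap W)$ with the two oppositely oriented copies of $(\Sigma,\partial\Sigma)$ and tracking all signs and corner contributions (precisely where the relative framework pays off), together with the auxiliary fact that any two Cauchy embeddings with boundary in $\lateral M$ may be taken homologous rel $\lateral M$.
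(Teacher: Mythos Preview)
Your proposal is correct and follows essentially the same route as the paper: compute $\underline{\d}\big(\Omega^\Theta,\overline{\omega}^{(\Theta,\overline{\theta})}\big)=-\underline{\dd}(E_I,\overline{b}_I)\wedge\dd\phi^I$ from \eqref{eq: d(L,l)}, observe that this vanishes after pulling back to $\Sol(\SS)$, and then apply the relative Stokes theorem \eqref{eq: stokes pair} over the region $N$ bounded by the two Cauchy slices so that $\underline{\partial}(N,\lateral N)=(\Sigma_1,\partial\Sigma_1)\cup(\Sigma_2,\partial\Sigma_2)$ with opposite orientations. The paper additionally disposes of the case where the two Cauchy surfaces intersect by interposing a third one, which you may want to mention; your worries about corner bookkeeping are unfounded, as the relative framework absorbs them automatically.
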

	\begin{proof}\mbox{}\\
		Consider two Cauchy surfaces $\Sigma_1:=\imath_1(\Sigma)$ and $\Sigma_2:=\imath_2(\Sigma)$, $\Sigma_1$ in the future of $\Sigma_2$, that do not intersect (if they do, we consider a third one not intersecting any of them and repeat the argument twice). Denote by $N$ the manifold bounded by both Cauchy surfaces. Its boundary is $\partial N=\Sigma_1\cup\lateral N\cup\Sigma_2$. We denote $\underline{i}:\underline{\partial}(N,\lateral N)\hookrightarrow (N,\lateral N)$. The exterior derivate of the symplectic currents \eqref{eq: omega and overline omega} is
		\begin{align*}
		&\underline{\d}\Big(\Omega^\Theta,\overline{\omega}^{(\Theta,\overline{\theta})}\Big)\overset{\eqref{eq: d dd=dd d}}{=}\underline{\dd}\,\underline{\d}\Big(\Theta^L,\overline{\theta}^{(L,\overline{\ell})}\Big)\overset{\eqref{eq: d(L,l)}}{=}\underline{\dd}\Big(\underline{\dd}\big(L,\overline{\ell})-(E_I,\overline{b}_I)\wedge\dd\phi^I\Big)\updown{\eqref{eq: d^2=0}}{\eqref{eq: Leibniz}}{=}-\underline{\dd}(E_I,\overline{b}_I)\wwedge \dd\phi^I
		\end{align*}
		Then, we have
		\begin{align*}
		-\!\int_{(N,\lateral N)}\underline{\dd}(E_I,\overline{b}_I)\wwedge \dd\phi^I=\int_{(N,\lateral N)}\underline{\d}\big(\Omega^\Theta,\overline{\omega}^{(\Theta,\overline{\theta})})\overset{\mathrm{\ref{eq: stokes pair}}}{=}\int_{\underline{\partial}(N,\lateral N)}\underline{i}^*(\Omega^\Theta,\overline{\omega}^{(\Theta,\overline{\theta})})=\OOmega_\SS^{\imath_1}-\OOmega_\SS^{\imath_2}
		\end{align*}
		In the last equality we have used that each connected component of $\underline{\partial}(N,\lateral N)=(\Sigma_1,\partial\Sigma_1)\cup(\Sigma_2,\partial\Sigma_2)$ has opposite orientation. Clearly $\underline{\jj}_\SS^*\underline{\dd}(E_I,\overline{b}_I)=\underline{\dd}\,\underline{\jj}_\SS^*(E_I,\overline{b}_I)=0$, so indeed $\peqsub{\jj}{\SS}^*\OOmega_\SS^{\imath_1}=\peqsub{\jj}{\SS}^*\OOmega_\SS^{\imath_2}$.
	\end{proof}

	\subsection{Symmetries}
	\subsubsection*{Definition and main properties}
	\begin{definition}\label{def: symmetry}\mbox{}\\
		We say that a vector field $\XX\in\campos(\F)$ is a \textbf{symmetry} of the action $\SS$ if  $\LL_\XX\SS\uptolids0$. We denote $\Sym(\SS)$ the set of vector fields which are symmetries of $\SS$.
	\end{definition}
	
	A symmetry vector field does not change the action. Let us prove that, as one might expect, it leaves invariant $\Sol(\SS)$. Notice that the converse is not true in general: there are symmetries of $\Sol(\SS)$ that do not give rise to symmetries of $\SS$ (e.g.~the scaling transformation \cite[p.255]{olver2000applications}).
	
	\begin{proposition}\label{proposition: XX tangente a Sol}\mbox{}\\
		If $\SS$ is a local action and $\XX\in\Sym(\SS)$, then $\XX_\phi\in T_\phi\Sol(\SS)$ for every $\phi\in\Sol(\SS)$.  In particular
		\begin{equation}\label{eq: overline XX}\definicion
		\tikzmarkin{overline XXprima}(.25,-0.2)(-0.35,0.45)
		\tikzmarkin{overline XX}(.2,-0.2)(-0.3,0.45)
		\overline{\XX}:=\XX|_{\Sol(\SS)}\in\campos(\Sol(\SS))
		\tikzmarkend{overline XX}
		\tikzmarkend{overline XXprima}
		\end{equation}
	\end{proposition}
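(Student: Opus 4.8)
The strategy is to show that if $\XX$ is a symmetry of $\SS$, then its flow preserves the equations of motion, so the flow lines starting at solutions stay in $\Sol(\SS)$; differentiating at time zero gives $\XX_\phi \in T_\phi\Sol(\SS)$. Concretely, I would work infinitesimally. The key object is the variation $\dd\SS$ computed in \eqref{eq: dS}, or more precisely the relation \eqref{eq: d(L,l)}, $\underline{\dd}(L,\overline{\ell})=(E_I,\overline{b}_I)\wedge\dd\phi^I+\underline{\d}(\Theta^L,\overline{\theta}^{(L,\overline{\ell})})$. The plan is to apply $\underline{\LL}_\XX$ to this identity and use that $\underline{\LL}_\XX$ commutes with $\underline{\dd}$ and with $\underline{\d}$ (equations \eqref{eq: LLdd=ddLL} and \eqref{eq: d dd=dd d}), together with the Leibniz rule \eqref{eq: Leibniz imath} and Cartan's formula \eqref{eq: Cartan}.

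First I would record what $\XX\in\Sym(\SS)$ means at the level of forms: since $\LL_\XX\SS\uptolids 0$, the integrand of $\dd\SS$ contracted with $\XX$ — equivalently $\underline{\LL}_\XX(L,\overline{\ell})$ — must be $\underline{\d}$-exact up to the terms that integrate to zero on the lids and the bulk equation-of-motion term; i.e.\ there is some $(\alpha,\overline{\alpha})$ with $\underline{\LL}_\XX(L,\overline{\ell}) \uptolids \underline{\d}(\alpha,\overline{\alpha})$, where the term $(E_I,\overline{b}_I)\wedge \ii_\XX\dd\phi^I$ is handled because the $\uptolids$ relation already quotients by lid integrals and the symmetry condition must hold for all $\phi\in\F$. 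The cleanest route: apply $\underline{\LL}_\XX$ to \eqref{eq: d(L,l)}. The left side becomes $\underline{\dd}\,\underline{\LL}_\XX(L,\overline{\ell})$, which by the symmetry hypothesis is $\underline{\dd}$ of a $\underline{\d}$-exact (plus lid) term, hence itself $\underline{\d}$-exact up to lids after using $\underline{\dd}\,\underline{\d}=\underline{\d}\,\underline{\dd}$. The right side becomes $\underline{\LL}_\XX(E_I,\overline{b}_I)\wedge\dd\phi^I + (E_I,\overline{b}_I)\wedge\underline{\LL}_\XX\dd\phi^I + \underline{\d}\big(\underline{\LL}_\XX(\Theta^L,\overline{\theta}^{(L,\overline{\ell})})\big)$. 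Comparing the two sides and using the uniqueness part of the decomposition theorem (equation \eqref{eq: first variation}, invoked as in the derivation of \eqref{eq: dL}–\eqref{eq: d(L,l)}) — the ``source form'' piece is uniquely determined — one concludes that the source form $\underline{\LL}_\XX(E_I,\overline{b}_I)\wedge\dd\phi^I$ together with the $(E_I,\overline{b}_I)\wedge\underline{\LL}_\XX\dd\phi^I$ term must vanish modulo a $\underline{\d}$-exact term, which for a genuine source form forces $\underline{\LL}_\XX(E_I,\overline{b}_I)\wedge\dd\phi^I = -(E_I,\overline{b}_I)\wedge\underline{\LL}_\XX\dd\phi^I$. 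In particular, when restricted to $\Sol(\SS)$ where $(E_I,\overline{b}_I)=0$, the right-hand side vanishes, so $\peqsub{\jj}{\SS}^*\big(\underline{\LL}_\XX(E_I,\overline{b}_I)\big)\wedge\dd\phi^I = 0$, i.e.\ the derivative of the constraint functions along $\XX$ vanishes on the constraint surface. This is exactly the statement that $\XX_\phi$ is tangent to $\Sol(\SS)=\{(E_I,\overline{b}_I)=0\}$ at each $\phi\in\Sol(\SS)$, and then $\overline\XX := \XX|_{\Sol(\SS)}$ is a well-defined vector field on $\Sol(\SS)$.

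The main obstacle is the bookkeeping around the $\uptolids$ relation and the use of the uniqueness in the variational decomposition: one must argue carefully that a term of the form ``source form plus $\underline{\d}$-exact, equal on $(M,\lateral M)$'' forces the source form to vanish on $\Sol(\SS)$, since lid contributions and $\underline{\d}$-exact pieces do not affect tangency of $\XX$ to the zero set of $(E_I,\overline{b}_I)$. A cleaner, more geometric alternative avoiding the decomposition subtleties is to integrate the flow $\Phi_\tau$ of $\XX$ directly: since $\LL_\XX\SS\uptolids 0$, one has $\Phi_\tau^*\SS \uptolids \SS$, hence $\Sol(\Phi_\tau^*\SS)=\Sol(\SS)$ and, by naturality, $\Phi_\tau$ maps $\Sol(\SS)$ to itself; differentiating at $\tau=0$ yields $\XX_\phi\in T_\phi\Sol(\SS)$. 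I would present the infinitesimal computation as the main argument and mention the flow argument as a conceptual check, being careful in the former that the term $(E_I,\overline{b}_I)\wedge\underline{\LL}_\XX\dd\phi^I$ — though not a source form on all of $\F$ — does vanish after pulling back by $\peqsub{\jj}{\SS}$, which is all that is needed for the conclusion $\overline{\XX}\in\campos(\Sol(\SS))$.
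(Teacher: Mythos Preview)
Your main infinitesimal argument has a genuine gap: you assume that $\XX\in\Sym(\SS)$ implies $\underline{\LL}_\XX(L,\overline{\ell})$ is $\underline{\d}$-exact (``plus lid'') as a local form on $(M,\partial M)\times\F$, and then take $\underline{\dd}$ of that potential. But the paper explicitly points out (just before Definition~\ref{def: d-symmetry}) that this inference fails in general. The hypothesis $\LL_\XX\SS\uptolids 0$ only gives, via the inverse relative Stokes theorem, that $\underline{\LL}_\XX(L,\overline{\ell})(\phi)$ is $\underline{\d}$-exact on $(M,\partial M)$ for each fixed $\phi$, with a potential that need not be local in $\phi$; so your step ``the left side is $\underline{\dd}$ of a $\underline{\d}$-exact term'' is unjustified, and the subsequent appeal to uniqueness of the variational decomposition collapses. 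What you describe would be a valid argument for $\XX\in\Sym_{\d}(\SS)$, but Proposition~\ref{proposition: XX tangente a Sol} is stated for the strictly larger class $\Sym(\SS)$.

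The paper's proof avoids this by never leaving the integral level: it applies $\LL_\XX$ to the expression \eqref{eq: dS} for $\dd\SS$ (not to the local identity \eqref{eq: d(L,l)}), uses $\LL_\XX\dd\SS=\dd\LL_\XX\SS\uptolids 0$ directly from the definition of symmetry, evaluates at $\phi\in\Sol(\SS)$ so that the $(E_I,\overline{b}_I)(\phi)$ term drops, and then tests the remaining $1$-form $\int_{(M,\lateral M)}\underline{\LL}_\XX(E_I,\overline{b}_I)(\phi)\wedge\dd\phi^I$ against vectors $\ZZ$ supported away from the lids to conclude $\underline{\LL}_\XX(E_I,\overline{b}_I)(\phi)=0$. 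No local $\underline{\d}$-potential for $\underline{\LL}_\XX(L,\overline{\ell})$ is ever needed. Your flow argument is closer in spirit and could be made rigorous, but as written it is only a sketch; the integral argument is both shorter and uses nothing beyond the actual hypothesis.
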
	
	\begin{proof}\mbox{}\\
		We have that $\Sol(\SS)=\{(E_I,\overline{b}_I)=0\}$. So we have to prove that those conditions are preserved i.e.~$\underline{\LL}_{\XX_\phi}(E_I,\overline{b}_I)=0$ for every $\phi\in\Sol(\SS)$. Taking the Lie derivative of \eqref{eq: dS} with respect to some vector field $\YY\in\campos(\F)$, we obtain
		\begin{align*}
		\LL_\YY&\dd\SS=\int_{(M,\lateral M)} \Big(\underline{\LL}_\YY( E_I, \overline{b}_I)\wedge\dd\phi^I+ (E_I,\overline{b}_I)\wedge\LL_\YY\dd\phi^I\Big)+\int_{\underline{\partial}(M,\lateral M)}\underline{\LL}_\YY\!\left(\underline{\jmath}^*(\Theta^L,\overline{\theta}^{(L,\overline{\ell})})-(\dd\overline{\ell},0)\right)
		\end{align*}		
		The previous expression is an element of $\OOmega^1(\F)$. Taking as base point any $\phi\in\Sol(\S)$, so $(E_I,\overline{b}_I)(\phi)=0$, and considering $\YY=\XX\in\Sym(\SS)$, so $\LL_\XX\SS\uptolids0$, leads to
		\begin{align*}
		0\uptolids(\dd\LL_\XX\SS)_\phi\overset{\eqref{eq: LLdd=ddLL}}{=}(\LL_\XX\dd\SS)_\phi&\uptolids\int_{(M,\lateral M)}\underline{\LL}_\XX(E_I,\overline{b}_I)(\phi)\wedge\dd\phi^I
		\end{align*}
		As it is usual in variational calculus, we evaluate this $1$-form of $\F$ at every vector $\ZZ$ that, as a map over $M$, vanishes at a neighborhood of the lids. As this set is dense, we obtain that $\underline{\LL}_\XX(E_I,\overline{b}_I)=0$ (and, in particular, the last integral vanishes).
	\end{proof}
	
	Our goal now is to determine if $\XX\in\Sym(\SS)$ restricted to the space of solutions $\overline{\XX}:=\XX|_{\Sol(\SS)}\in\campos(\Sol(\SS))$ is a Hamiltonian vector field i.e.~if there exists some function $\HH^\SS_{\overline{\XX}}:\Sol(\SS)\to\R$ such that 
	\begin{equation}
	\ii_{\overline{\XX}}\OOmega_\SS=\dd\HH^\SS_{\overline{\XX}}
	\end{equation}
	It is unlikely that this holds in general, but we can obtain an interesting subset of symmetries which are also Hamiltonian.
	
	\subsubsection*{\texorpdfstring{$\boldsymbol{\d}$}{d}-symmetries}
	If $\XX\in\Sym(\SS)$, then the inverse of the relative Stokes' theorem implies that $\underline{\LL}_\XX(L,\overline{\ell})(\phi)$ is $\underline{\d}$-exact over $(M,\partial M)$ for every $\phi\in\F$. However, in section \ref{section: action} we saw that this does not imply that $\underline{\LL}_\XX(L,\overline{\ell})$ is $\underline{\d}$-exact over $(M,\partial M)\times\F$. Moreover, here we do not have the analog of lemma \ref{lemma: equivalent Lagrangians same eq}, so in general we cannot expect to obtain a \emph{local} $\underline{\d}$-potential for $\underline{\LL}_\XX(L,\overline{\ell})$. This motivates the following definition.
	
	\begin{definition}\mbox{}\label{def: d-symmetry}
		\begin{itemize}
			\item $\XX\in\campos(\F)$ is a $\boldsymbol{\underline{d}}$\textbf{-symmetry} (or infinitesimal variational symmetry) of $(L,\overline{\ell})\in\Lag(M)$ if
			\begin{equation}\label{eq: (L_X L,L_x l)}
			\definicion
			\tikzmarkin{definitionAlphaBetaprima}(0.25,-0.3)(-0.25,0.55)
			\tikzmarkin{definitionAlphaBeta}(0.2,-0.3)(-0.2,0.55)
			\underline{\LL}_\XX(L,\overline{\ell})=\underline{\d}\Big(S^L_\XX,\overline{s}^{(L,\overline{\ell})}_\XX\Big)
			\tikzmarkend{definitionAlphaBeta}
			\tikzmarkend{definitionAlphaBetaprima}\end{equation}
			for some $\Big(S^L_\XX,\overline{s}^{(L,\overline{\ell})}_\XX\Big)\in\OOmega^{(n-1,0)}((M,\partial M)\times\F)$.
			\item $\XX\in\campos(\F)$ is a $\boldsymbol{\underline{d}}$\textbf{-symmetry} of $\SS=[\![(L,\overline{\ell})]\!]$ if it is a $\underline{\d}$-symmetry of $(L,\overline{\ell})$. We denote $\Sym_{\d}(\SS)$ the set of vector fields which are $\underline{d}$-exact symmetries of $\SS$.
		\end{itemize}
	\end{definition}
	The last definition does not depend on the representative if we consider an additional condition:
	\begin{remark}\mbox{}\\
		From now on, we assume $H^{n-1}(M,\partial M)=0$. Later in \ref{remark: Hn-1 not 0} we will comment on the consequences of having $H^{n-1}(M,\partial M)\neq0$.
	\end{remark}
	
	\begin{lemma}\label{lemma: (S,s)}\mbox{}\\
		If $\XX\in\campos(\F)$ is a $\underline{\d}$-symmetry of $(L_1,\overline{\ell}_1)$, it is also a $\underline{\d}$-symmetry of $(L_2,\overline{\ell}_2)\equivint(L_1,\overline{\ell}_1)$ with
		\begin{equation}\label{eq: alpha beta def}\degeneracion
		\tikzmarkin{degeneracionAlphaBeta}(0.2,-0.3)(-0.2,0.56)
		\Big(S^{L_2}_{\XX},\overline{s}^{(L_2,\overline{\ell}_2)}_{\XX}\Big)=\Big(S^{L_1}_{\XX},\overline{s}^{(L_1,\overline{\ell}_1)}_{\XX}\Big)+\underline{\LL}_\XX(Y,\overline{y})+\underline{\d}(A,\overline{a})
		\tikzmarkend{degeneracionAlphaBeta}
		\end{equation}
		for some $(A,\overline{a})\in\OOmega^{(n-2,0)}((M,\partial M)\times\F)$ and $(Y,\overline{y})\in\OOmega^{(n-1,0)}((M,\partial M)\times \F)$.
	\end{lemma}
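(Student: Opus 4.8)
The plan is to mimic the proof of Lemma \ref{lemma: (Theta2,theta2)}, replacing the vertical differential $\underline{\dd}$ by the Lie derivative $\underline{\LL}_\XX$ wherever it appeared. First I would invoke Lemma \ref{lemma: (L2,l2)}: since $(L_2,\overline{\ell}_2)\equivint(L_1,\overline{\ell}_1)$ and we are working over a contractible bundle (remark \ref{remark: action inicial}), there exists $(Y,\overline{y})\in\OOmega^{(n-1,0)}((M,\partial M)\times\F)$ with $(L_2,\overline{\ell}_2)=(L_1,\overline{\ell}_1)+\underline{\d}(Y,\overline{y})$. Applying $\underline{\LL}_\XX$ to both sides and using that $\underline{\LL}_\XX$ commutes with $\underline{\d}$ (equation \eqref{eq: LLdd=ddLL} in its $(M,N)\times\F$ form), I obtain
\[
\underline{\LL}_\XX(L_2,\overline{\ell}_2)=\underline{\LL}_\XX(L_1,\overline{\ell}_1)+\underline{\d}\,\underline{\LL}_\XX(Y,\overline{y})=\underline{\d}\Big(S^{L_1}_\XX,\overline{s}^{(L_1,\overline{\ell}_1)}_\XX\Big)+\underline{\d}\,\underline{\LL}_\XX(Y,\overline{y}),
\]
where the second equality uses the hypothesis that $\XX$ is a $\underline{\d}$-symmetry of $(L_1,\overline{\ell}_1)$. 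This already exhibits $\underline{\LL}_\XX(L_2,\overline{\ell}_2)$ as $\underline{\d}$ of something, namely $\big(S^{L_1}_\XX,\overline{s}^{(L_1,\overline{\ell}_1)}_\XX\big)+\underline{\LL}_\XX(Y,\overline{y})$, which proves that $\XX$ is a $\underline{\d}$-symmetry of $(L_2,\overline{\ell}_2)$ with a valid choice of potential.

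The remaining point is to show that \emph{any} potential $\big(S^{L_2}_\XX,\overline{s}^{(L_2,\overline{\ell}_2)}_\XX\big)$ differs from the one just constructed by a $\underline{\d}$-exact term $\underline{\d}(A,\overline{a})$. This is where the hypothesis $H^{n-1}(M,\partial M)=0$ enters. Given two potentials for the same $\underline{\d}$-closed quantity $\underline{\LL}_\XX(L_2,\overline{\ell}_2)$, their difference $\big(S^{L_2}_\XX,\overline{s}^{(L_2,\overline{\ell}_2)}_\XX\big)-\big(S^{L_1}_\XX,\overline{s}^{(L_1,\overline{\ell}_1)}_\XX\big)-\underline{\LL}_\XX(Y,\overline{y})$ is $\underline{\d}$-closed and lives in $\OOmega^{(n-1,0)}((M,\partial M)\times\F)$, hence has vertical degree $s=0$. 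Here I cannot directly apply the horizontal exactness theorem \ref{theorem: Wald exact - close} in the same clean way as in Lemma \ref{lemma: (Theta2,theta2)}, because there the relevant form had $s>0$; instead I would appeal to the vanishing-at-zero alternative of that theorem together with $H^{n-1}(M,\partial M)=0$, exactly as done for the definition of $\underline{\d}$-symmetry being representative-independent (the discussion right after Definition \ref{def: d-symmetry}). This gives a local $(A,\overline{a})\in\OOmega^{(n-2,0)}((M,\partial M)\times\F)$ with $\underline{\d}(A,\overline{a})$ equal to that difference, which rearranges precisely to \eqref{eq: alpha beta def}.

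The main obstacle I anticipate is the last step: controlling the ambiguity in the $(n-1,0)$-potential with only the assumption $H^{n-1}(M,\partial M)=0$, since at vertical degree zero the field-independent cohomology classes of $M$ can obstruct exactness (this is exactly the phenomenon flagged after theorem \ref{theorem: Wald exact - close}). One must be careful to combine the topological hypothesis $H^{n-1}(M,\partial M)=0$ with the local (Wald–Takens-type) exactness result to land a \emph{local} potential $(A,\overline{a})$ rather than merely a global one; the $s=0$ case does not follow from the $s>0$ branch of the theorem. Once that is handled, the statement ``if $\jmath^*\Theta$'' type subtleties do not arise here (there is no decomposability claim in this lemma), so the bookkeeping is lighter than in Lemma \ref{lemma: (Theta2,theta2)}, and the equivalence $\equivint$-version of \eqref{eq: alpha beta def} follows by the same $\underline{\d}$/$\underline{\LL}_\XX$ component-wise unpacking used throughout section \ref{section: variations}.
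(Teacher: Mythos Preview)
Your proof is correct and follows essentially the same route as the paper: apply $\underline{\LL}_\XX$ to $(L_2,\overline{\ell}_2)=(L_1,\overline{\ell}_1)+\underline{\d}(Y,\overline{y})$, commute $\underline{\LL}_\XX$ past $\underline{\d}$, and then invoke $H^{n-1}(M,\partial M)=0$ for the remaining ambiguity. One small clarification: the paper explicitly remarks that the horizontal exactness theorem \ref{theorem: Wald exact - close} is \emph{not} applicable here (neither hypothesis holds, since $s=0$ and $(S,\overline{s})(0)$, $(Y,\overline{y})(0)$ need not vanish), so the exactness of the closed $(n-1,0)$ difference comes purely from the topological assumption $H^{n-1}(M,\partial M)=0$ rather than from any combination with the Wald--Takens result.
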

	\begin{proof}\mbox{}\\
		From equations \eqref{eq: L2=L1} and \eqref{eq: (L_X L,L_x l)} we obtain
		\begin{align*}
		\underline{\LL}_\XX(L_2,\overline{\ell}_2)=\underline{\LL}_\XX(L_1,\overline{\ell}_1)+\underline{\LL}_\XX\, \underline{\d}( Y,\overline{y})\overset{\eqref{eq: d dd=dd d}}{=}\underline{\d}\left(\!\Big(S^{L_1}_{\XX},\overline{s}^{(L_1,\overline{\ell}_1)}_{\XX}\Big)+\underline{\LL}_\XX(Y,\overline{y})\right)
		\end{align*}
		The result follows from $H^{n-1}(M,\partial M)=0$. Notice that we cannot apply the horizontal exactness theorem  \ref{theorem: Wald exact - close} because both $(S,\overline{s})(0)$ and $(Y,\overline{y})(0)$ can be non-zero.
	\end{proof}

	\begin{proposition}\label{proposition: characterization symmetry}\mbox{}\\
		If $\SS$ is a local action, then $\Sym_{\d}(\SS)\subset\Sym(\SS)$.
	\end{proposition}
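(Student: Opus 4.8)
The strategy is to push the defining equation of a $\underline{\d}$-symmetry through the relative Stokes' theorem \eqref{eq: stokes pair} and observe that the relative boundary $\underline{\partial}(M,\partial M)$ is empty. So I would start by fixing $\XX\in\Sym_{\d}(\SS)$, write $\SS=[\![(L,\overline{\ell})]\!]$, and take, as in Definition \ref{def: d-symmetry}, a local pair $\Big(S^L_\XX,\overline{s}^{(L,\overline{\ell})}_\XX\Big)\in\OOmega^{(n-1,0)}\big((M,\partial M)\times\F\big)$ with $\underline{\LL}_\XX(L,\overline{\ell})=\underline{\d}\Big(S^L_\XX,\overline{s}^{(L,\overline{\ell})}_\XX\Big)$. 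The entire proof then consists in evaluating $\LL_\XX\SS$.

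Concretely, the first step is to rewrite $\LL_\XX\SS$ as an integral over $(M,\partial M)$ of $\underline{\LL}_\XX(L,\overline{\ell})$. Since $\SS$ is a $0$-form on $\F$, Cartan's formula \eqref{eq: Cartan} gives $\LL_\XX\SS=\ii_\XX\dd\SS$; and from the definition \eqref{eq: action} together with the fact that the vertical derivative commutes with the projection-and-integration procedure of section \ref{section: geometry MxF}, one has $\dd\SS=\int_{(M,\partial M)}\underline{\dd}(L,\overline{\ell})$. Bringing $\underline{\ii}_\XX$ inside the integral and applying \eqref{eq: Cartan} once more, now to $(L,\overline{\ell})$ — whose $\F$-degree is zero, so that $\underline{\ii}_\XX(L,\overline{\ell})=0$ — this yields $\LL_\XX\SS=\int_{(M,\partial M)}\underline{\ii}_\XX\underline{\dd}(L,\overline{\ell})=\int_{(M,\partial M)}\underline{\LL}_\XX(L,\overline{\ell})$. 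The second step is to substitute the $\underline{\d}$-symmetry hypothesis and invoke relative Stokes \eqref{eq: stokes pair}:
\[
\LL_\XX\SS=\int_{(M,\partial M)}\underline{\d}\Big(S^L_\XX,\overline{s}^{(L,\overline{\ell})}_\XX\Big)=\int_{\underline{\partial}(M,\partial M)}\underline{\jmath}^*\Big(S^L_\XX,\overline{s}^{(L,\overline{\ell})}_\XX\Big).
\]
Since $\underline{\partial}(M,\partial M)=\big(\partial M\setminus\partial M,\partial(\partial M)\big)=\varnothing$, the right-hand side vanishes, so $\LL_\XX\SS=0$; in particular $\LL_\XX\SS\uptolids0$, i.e. $\XX\in\Sym(\SS)$.

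I do not expect a genuine obstacle: the argument is a short chain of already-established identities. The only points that warrant a moment's care are the interchange of $\dd$ (hence of $\ii_\XX$) with the projection and integration over $M$ — which is exactly what the local-form formalism of section \ref{section: geometry MxF} is designed to license — and the triviality of $\underline{\partial}(M,\partial M)$, which is what makes the would-be boundary term disappear outright rather than merely up to the ``$\uptolids$'' ambiguity. In fact the computation proves the sharper statement that a $\underline{\d}$-symmetry satisfies $\LL_\XX\SS=0$ exactly; had we only assumed $\underline{\LL}_\XX(L,\overline{\ell})$ to be $\underline{\d}$-exact over $(M,\lateral M)$ rather than over $(M,\partial M)$, the same manipulation would leave precisely the lid contribution $\int_{\underline{\partial}(M,\lateral M)}\underline{\jmath}^*\big(S^L_\XX,\overline{s}^{(L,\overline{\ell})}_\XX\big)$, which is exactly the ``$\uptolids0$'' slack, so the chosen definition is the natural one.
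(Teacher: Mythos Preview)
Your proposal is correct and follows essentially the same route as the paper: the paper's own proof is a one-liner saying that applying the relative Stokes' theorem \eqref{eq: stokes pair} to the defining equation \eqref{eq: (L_X L,L_x l)} yields Definition \ref{def: symmetry}, which is exactly what you spell out in detail. Your closing observation that the argument in fact gives $\LL_\XX\SS=0$ on the nose (because $\underline{\partial}(M,\partial M)=\varnothing$), and your remark on what would survive if the exactness were only assumed over $(M,\lateral M)$, are nice clarifications that go slightly beyond the paper's terse statement.
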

	\begin{proof}\mbox{}\\
		Applying the relative Stokes' theorem \eqref{eq: stokes pair} to \eqref{eq: (L_X L,L_x l)} leads to \ref{def: symmetry}.
	\end{proof}

	\subsubsection*{Currents, Charges, and Potentials of a \texorpdfstring{$\boldsymbol{\d}$}{d}-Symmetry}
	Given $\XX$, a $\underline{\d}$-symmetry of $(L,\overline{\ell})\in\Lag(M)$, we define its $\boldsymbol{\XX}$\textbf{-current} (or Noether current)
	\begin{equation}\label{eq: noether currents}
	\definicion
	\tikzmarkin{definitionJjprima}(0.25,-0.3)(-0.25,0.55)
	\tikzmarkin{definitionJj}(0.2,-0.3)(-0.2,0.55)
	\Big(J^\Theta_\XX,\overline{\jmath}^{(\Theta,\overline{\theta})}_\XX\Big):=\Big(S^L_\XX,\overline{s}_\XX^{(L,\overline{\ell})}\Big)-\underline{\ii}_\XX\Big(\Theta^L,\overline{\theta}^{(L,\overline{\ell})}\Big)\in\OOmega^{(n-1,0)}((M,\lateral M)\times\F)
	\tikzmarkend{definitionJj}
	\tikzmarkend{definitionJjprima}
	\end{equation}
	
	\begin{lemma}\label{lemma: (J,j)}\mbox{}\\
		Let $\XX\in\Sym_{\d}(\SS)$ and $(L_1,\overline{\ell}_1)\equivint(L_2,\overline{\ell}_2)$ two representatives of $\SS$, then
		\begin{equation}\label{eq: J2=J1+d}
		\degeneracion
		\tikzmarkin{degeneracionJj}(0.2,-0.3)(-0.2,0.55)
		\Big(J^{\Theta_2}_\XX,\overline{\jmath}^{(\Theta_2,\overline{\theta}_2)}_\XX\Big)=\Big(J^{\Theta_1}_\XX,\overline{\jmath}^{(\Theta_1,\overline{\theta}_1)}_\XX\Big)+\underline{\d}\Big(\!(A,\overline{a})-\underline{\ii}_\XX(Z,\overline{z})\Big)
		\tikzmarkend{degeneracionJj}
		\end{equation}
		for some $(A,\overline{a})\in\OOmega^{(n-2,0)}((M,\lateral M)\times \F)$ and $(Z,\overline{z})\in\OOmega^{(n-2,1)}((M,\lateral M)\times \F)$, so they are equal on cohomology.
	\end{lemma}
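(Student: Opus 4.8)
The plan is to combine the two preceding lemmas that describe how the auxiliary data transforms under a change of representative. Recall that by definition
\[
\Big(J^\Theta_\XX,\overline{\jmath}^{(\Theta,\overline{\theta})}_\XX\Big)=\Big(S^L_\XX,\overline{s}_\XX^{(L,\overline{\ell})}\Big)-\underline{\ii}_\XX\Big(\Theta^L,\overline{\theta}^{(L,\overline{\ell})}\Big),
\]
so the difference between the two currents is controlled by the differences $\big(S^{L_2}_\XX,\overline{s}^{(L_2,\overline{\ell}_2)}_\XX\big)-\big(S^{L_1}_\XX,\overline{s}^{(L_1,\overline{\ell}_1)}_\XX\big)$ and $\underline{\ii}_\XX\big[(\Theta^{L_2},\overline{\theta}^{(L_2,\overline{\ell}_2)})-(\Theta^{L_1},\overline{\theta}^{(L_1,\overline{\ell}_1)})\big]$. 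First I would invoke lemma \ref{lemma: (S,s)} to write $\big(S^{L_2}_\XX,\overline{s}^{(L_2,\overline{\ell}_2)}_\XX\big)=\big(S^{L_1}_\XX,\overline{s}^{(L_1,\overline{\ell}_1)}_\XX\big)+\underline{\LL}_\XX(Y,\overline{y})+\underline{\d}(A,\overline{a})$, with the same $(Y,\overline{y})$ that relates $(L_2,\overline{\ell}_2)$ to $(L_1,\overline{\ell}_1)$ via \eqref{eq: L2=L1}. Then I would apply lemma \ref{lemma: (Theta2,theta2)}, which gives $\big(\Theta^{L_2},\overline{\theta}^{(L_2,\overline{\ell}_2)}\big)=\big(\Theta^{L_1},\overline{\theta}^{(L_1,\overline{\ell}_1)}\big)+\underline{\d}(Z,\overline{z})+\underline{\dd}(Y,\overline{y})$ for some $(Z,\overline{z})\in\OOmega^{(n-2,1)}((M,\lateral M)\times\F)$ and the same $(Y,\overline{y})$.

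Next I would subtract, obtaining
\[
\Big(J^{\Theta_2}_\XX,\overline{\jmath}^{(\Theta_2,\overline{\theta}_2)}_\XX\Big)-\Big(J^{\Theta_1}_\XX,\overline{\jmath}^{(\Theta_1,\overline{\theta}_1)}_\XX\Big)=\underline{\LL}_\XX(Y,\overline{y})+\underline{\d}(A,\overline{a})-\underline{\ii}_\XX\underline{\d}(Z,\overline{z})-\underline{\ii}_\XX\underline{\dd}(Y,\overline{y}).
\]
The key computational step is to recognize the cancellation between the $\underline{\LL}_\XX(Y,\overline{y})$ term and the $\underline{\ii}_\XX\underline{\dd}(Y,\overline{y})$ term. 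By the Cartan formula \eqref{eq: Cartan}, $\underline{\LL}_\XX=\underline{\ii}_\XX\underline{\dd}+\underline{\dd}\,\underline{\ii}_\XX$, so $\underline{\LL}_\XX(Y,\overline{y})-\underline{\ii}_\XX\underline{\dd}(Y,\overline{y})=\underline{\dd}\,\underline{\ii}_\XX(Y,\overline{y})$; but $(Y,\overline{y})$ has vertical degree $0$, hence $\underline{\ii}_\XX(Y,\overline{y})=0$ and that whole piece vanishes. What remains is $\underline{\d}(A,\overline{a})-\underline{\ii}_\XX\underline{\d}(Z,\overline{z})$. Finally, using that $\underline{\ii}_\XX$ and $\underline{\d}$ commute (equation \eqref{eq: d dd=dd d}, the identity $\underline{\ii}_\VV\underline{\d}=\underline{\d}\,\underline{\ii}_\VV$), I can rewrite $-\underline{\ii}_\XX\underline{\d}(Z,\overline{z})=-\underline{\d}\,\underline{\ii}_\XX(Z,\overline{z})$, so the difference of currents equals $\underline{\d}\big((A,\overline{a})-\underline{\ii}_\XX(Z,\overline{z})\big)$, which is exactly \eqref{eq: J2=J1+d}. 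Being a $\underline{\d}$-exact difference, the two currents agree in relative cohomology.

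I expect the main subtlety, rather than obstacle, to be bookkeeping: making sure that the $(Y,\overline{y})$ appearing in lemma \ref{lemma: (S,s)} is the \emph{same} one as in lemma \ref{lemma: (Theta2,theta2)} (both trace back to the single relation \eqref{eq: L2=L1} between the representatives, so this is fine), and that the degrees line up — $(A,\overline{a})\in\OOmega^{(n-2,0)}$ from lemma \ref{lemma: (S,s)} and $\underline{\ii}_\XX(Z,\overline{z})\in\OOmega^{(n-2,1)}$ drops to vertical degree... wait, $\underline{\ii}_\XX$ lowers vertical degree by one, so $\underline{\ii}_\XX(Z,\overline{z})\in\OOmega^{(n-2,0)}((M,\lateral M)\times\F)$, matching $(A,\overline{a})$ so that their combination lives in the stated space. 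One should also note that $\underline{\ii}_\XX\underline{\dd}(Y,\overline{y})$ a priori lands in $\OOmega^{(n-1,0)}$, consistent with everything else. No new cohomological hypothesis beyond what lemmas \ref{lemma: (S,s)} and \ref{lemma: (Theta2,theta2)} already assume is needed, so the proof is essentially a two-line formal manipulation once those two lemmas are in hand.
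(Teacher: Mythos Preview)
Your proposal is correct and follows essentially the same approach as the paper: substitute the transformation laws from lemmas \ref{lemma: (Theta2,theta2)} and \ref{lemma: (S,s)} into the definition \eqref{eq: noether currents}, then use Cartan's formula \eqref{eq: Cartan} (together with $\underline{\ii}_\XX(Y,\overline{y})=0$ since $(Y,\overline{y})$ has vertical degree $0$) and the commutation $\underline{\ii}_\XX\underline{\d}=\underline{\d}\,\underline{\ii}_\XX$ from \eqref{eq: d dd=dd d} to collapse everything to a single $\underline{\d}$-exact term. The paper's write-up is terser but the logic is identical.
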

	\begin{proof}\mbox{}\\
		From lemmas \ref{lemma: (Theta2,theta2)} and \ref{lemma: (S,s)}, we have
		\begin{align*}
		\Big(J^{\Theta_2}_\XX,\overline{\jmath}^{(\Theta_2,\overline{\theta}_2)}_\XX\Big)&\!=\!\Big(S^{L_1}_\XX,\overline{s}^{(L_1,\overline{\ell}_1)}_\XX\Big)+\underline{\LL}_\XX(Y,\overline{y})+\underline{\d}(A,\overline{a})-\underline{\ii}_\XX\Big(\!\Big(\Theta^{L_1},\overline{\theta}^{(L_1,\overline{\ell}_1)}\Big)+\underline{\d}(Z,\overline{z})+\underline{\dd}(Y,\overline{y})\Big)\updown{\eqref{eq: noether currents}}{\eqref{eq: Cartan}}{=}\\
		&\!=\!\Big(J^{\Theta_1}_\XX,\overline{\jmath}^{(\Theta_1,\overline{\theta}_1)}_\XX\Big)+\underline{\d}\big(A-\ii_\XX Z,\overline{a}-\ii_\XX\overline{z}\big)
		\end{align*}
		
		\mbox{}\vspace*{-7.5ex}
		
	\end{proof}	
	
	\begin{definition}\label{def: Noether charge}\mbox{}\\
		Let $\SS=[\![(L,\overline{\ell})]\!]$ be a local action and $\underline{\imath}=(\imath,\overline{\imath}):(\Sigma,\partial\Sigma)\hookrightarrow (M,\lateral M)$ some Cauchy embedding. For every $\XX\in\Sym_{\d}(\SS)$, we define the $\boldsymbol{\XX}$\textbf{-charge} (or Noether charge) 
		\[
		\definicion
		\tikzmarkin{definitionHprima}(0.25,-0.45)(-0.25,0.6)
		\tikzmarkin{definitionH}(0.2,-0.45)(-0.2,0.6)
		\HH_{\XX}^{\SS,\imath}:=\int_{(\Sigma,\partial\Sigma)}\underline{\imath}^*\Big(J^\Theta_\XX,\overline{\jmath}^{(\Theta,\overline{\theta})}_\XX\Big)\in\OOmega^0(\F)
		\tikzmarkend{definitionH}
		\tikzmarkend{definitionHprima}\]
	\end{definition}

	\begin{lemma}\mbox{}\\
		$\HH_\XX^{\SS,\imath}$ is $\R$-linear in $\XX$ and it only depends on $\SS$, $\XX$, and $\imath:\Sigma\hookrightarrow M$\!.
	\end{lemma}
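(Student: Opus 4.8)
The plan is to establish the two claims—$\R$-linearity and independence of all auxiliary choices—separately, each reducing to a single structural fact from the preceding lemmas together with the observation that $(\Sigma,\partial\Sigma)$ is closed in the relative sense, meaning $\underline{\partial}(\Sigma,\partial\Sigma)=(\partial\Sigma\setminus\partial\Sigma,\partial(\partial\Sigma))=\varnothing$. Since the charge is the integral over $(\Sigma,\partial\Sigma)$ of the pullback of the relative current \eqref{eq: noether currents}, any $\underline{\d}$-exact alteration of that current will drop out after integration.

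I would begin with the independence from the hidden data: the representative $(L,\overline{\ell})$ of $\SS$, the potential $(\Theta^L,\overline{\theta}^{(L,\overline{\ell})})$, and the symmetry potential $(S^L_\XX,\overline{s}^{(L,\overline{\ell})}_\XX)$. Lemma \ref{lemma: (J,j)} (which absorbs the ambiguities in $\Theta$ and $S$ through Lemmas \ref{lemma: (Theta2,theta2)} and \ref{lemma: (S,s)}, the case $L_1=L_2$ covering a fixed Lagrangian with different potentials) tells us that two admissible currents are related by $(J_2,\overline{\jmath}_2)=(J_1,\overline{\jmath}_1)+\underline{\d}(C,\overline{c})$ for some $(C,\overline{c})\in\OOmega^{(n-2,0)}((M,\lateral M)\times\F)$. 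Pullback is a chain map for the relative complex, so $\underline{\imath}^*\underline{\d}(C,\overline{c})=\underline{\d}\,\underline{\imath}^*(C,\overline{c})$, and the two charges differ by $\int_{(\Sigma,\partial\Sigma)}\underline{\d}\,\underline{\imath}^*(C,\overline{c})$. The relative Stokes' theorem \eqref{eq: stokes pair} rewrites this as an integral over $\underline{\partial}(\Sigma,\partial\Sigma)=\varnothing$, hence it vanishes and $\HH^{\SS,\imath}_2=\HH^{\SS,\imath}_1$. This shows the charge depends only on $\SS$, $\XX$, and $\imath$, with the genuine $\imath$-dependence left untouched, as the statement asserts.

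For the linearity, I would first note that $\Sym_{\d}(\SS)$ is an $\R$-vector space, since $\underline{\LL}_\XX(L,\overline{\ell})$ is $\R$-linear in $\XX$; thus, given $\underline{\d}$-symmetries $\XX_1,\XX_2$ with chosen potentials $(S_i,\overline{s}_i)$ and scalars $a,b\in\R$, the form $a(S_1,\overline{s}_1)+b(S_2,\overline{s}_2)$ is an admissible symmetry potential for $a\XX_1+b\XX_2$. With this linear selection, and using that $\underline{\ii}_\XX(\Theta^L,\overline{\theta}^{(L,\overline{\ell})})$ is likewise $\R$-linear in $\XX$, the current \eqref{eq: noether currents} is $\R$-linear in $\XX$, and so is its integral $\HH^{\SS,\imath}_\XX$. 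The previous paragraph then guarantees that this value is independent of the particular linear selection, so linearity is a genuine property of the charge rather than an artifact of the choice.

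The only genuine subtlety—and the step I would treat most carefully—is the ordering in the linearity argument: linearity must first be proven for one linear choice of symmetry potentials and only afterwards promoted to the charge itself by invoking choice-independence. Beyond this, the argument relies only on routine facts (the commutativity $\underline{\imath}^*\underline{\d}=\underline{\d}\,\underline{\imath}^*$ and the $\R$-linearity of $\underline{\LL}_\XX$ and $\underline{\ii}_\XX$ in $\XX$), so I do not anticipate a serious obstacle; the essential work has already been carried out in Lemmas \ref{lemma: (Theta2,theta2)}, \ref{lemma: (S,s)}, and \ref{lemma: (J,j)}.
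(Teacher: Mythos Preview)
Your proposal is correct and follows essentially the same approach as the paper: both parts reduce to the fact that any ambiguity in the current is $\underline{\d}$-exact (via Lemmas \ref{lemma: (Theta2,theta2)}, \ref{lemma: (S,s)}, \ref{lemma: (J,j)} and the hypothesis $H^{n-1}(M,\partial M)=0$) and hence integrates to zero over $(\Sigma,\partial\Sigma)$ by relative Stokes. The only cosmetic difference is the order: the paper proves linearity first by directly showing that $(S^L_{\XX}+\lambda S^L_{\YY}-S^L_{\XX+\lambda\YY},\,\overline{s}_\XX+\lambda\overline{s}_\YY-\overline{s}_{\XX+\lambda\YY})$ is $\underline{\d}$-exact, whereas you establish choice-independence first and then obtain linearity by picking the linear representative---these are two presentations of the same computation.
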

	\begin{proof}\mbox{}\\
		From definition \eqref{eq: (L_X L,L_x l)}, if we use that $H^{n-1}(M,\partial M)=0$ and the fact that $\underline{\LL}_\XX$ is $\R$-linear in $\XX$, we obtain \[\Big(S^L_{\XX}+\lambda S^L_{\YY}-S^L_{\XX+\lambda\YY},\overline{s}_\XX^{(L,\overline{\ell})}+\lambda\overline{s}_\YY^{(L,\overline{\ell})}-\overline{s}_{\XX+\lambda\YY}^{(L,\overline{\ell})}\Big)=\underline{\d}\big(B,\overline{b}\big)\]
		for some fixed $\lambda\in\R$. From that and the linearity of $\underline{\ii}_\XX$ in $\XX$, we can easily prove the linearity of $\HH_\XX^{\SS,\imath}$ from its definition and the relative Stokes' theorem \eqref{eq: stokes pair}.\vspace*{2ex}
		
		The independence of $(L,\overline{\ell})$ is a direct consequence of lemma \ref{lemma: (J,j)} and the relative Stokes' theorem.
	\end{proof}

	\begin{proposition}\label{proposition: HHXX}\mbox{}\\
		$\HH^\SS_{\overline{\XX}}:=\peqsub{\jj}{\SS}^*\HH_\XX^{\SS,\imath}$  only depends on $\SS$ and on $\overline{\XX}:=\XX|_{\Sol(\SS)}$.
	\end{proposition}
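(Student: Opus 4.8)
The claim is that $\HH^\SS_{\overline{\XX}}:=\peqsub{\jj}{\SS}^*\HH_\XX^{\SS,\imath}$ depends only on $\SS$ and on $\overline{\XX}=\XX|_{\Sol(\SS)}$. Two things must be checked: independence of the Cauchy embedding $\imath$, and independence of the extension $\XX$ of $\overline{\XX}$ off $\Sol(\SS)$. (Independence of the Lagrangian representative $(L,\overline\ell)$ was already established in the previous lemma.) I would take these in turn.

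For the embedding-independence, the strategy mirrors the proof of Proposition \ref{prop: OOmega_SS}. Pick two non-intersecting Cauchy surfaces $\Sigma_1=\imath_1(\Sigma)$, $\Sigma_2=\imath_2(\Sigma)$ bounding a region $N$ with $\partial N=\Sigma_1\cup\lateral N\cup\Sigma_2$, and apply the relative Stokes' theorem \eqref{eq: stokes pair} on $(N,\lateral N)$ to the relative Noether current $(J^\Theta_\XX,\overline\jmath^{(\Theta,\overline\theta)}_\XX)$. This yields
\[
\HH_\XX^{\SS,\imath_1}-\HH_\XX^{\SS,\imath_2}=\int_{(N,\lateral N)}\underline{\d}\Big(J^\Theta_\XX,\overline{\jmath}^{(\Theta,\overline{\theta})}_\XX\Big).
\]
The key computation is to show the integrand $\underline{\d}(J^\Theta_\XX,\overline\jmath_\XX)$ vanishes on $\Sol(\SS)$. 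Using the definition \eqref{eq: noether currents}, $\underline{\d}(S^L_\XX,\overline s_\XX)=\underline{\LL}_\XX(L,\overline\ell)$ from \eqref{eq: (L_X L,L_x l)}, Cartan's formula \eqref{eq: Cartan} $\underline{\LL}_\XX=\underline{\ii}_\XX\underline{\dd}+\underline{\dd}\,\underline{\ii}_\XX$ applied to $(L,\overline\ell)$ (noting $\underline{\dd}(L,\overline\ell)$ has vertical degree $1$ and $(L,\overline\ell)$ is already top-degree in $M$ so $\underline{\dd}$ acts without the horizontal derivative), and the commutation $\underline{\ii}_\XX\underline{\d}=\underline{\d}\,\underline{\ii}_\XX$ from \eqref{eq: d dd=dd d}, one reduces $\underline{\d}(J^\Theta_\XX,\overline\jmath_\XX)$ to a contraction of $\underline{\dd}(E_I,\overline b_I)\wwedge\dd\phi^I$ (via \eqref{eq: d(L,l)}), which pulls back to zero under $\peqsub{\jj}{\SS}$ since $(E_I,\overline b_I)$ vanishes there. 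Accounting for the opposite orientations of the two boundary components of $\underline{\partial}(N,\lateral N)$ then gives $\peqsub{\jj}{\SS}^*\HH_\XX^{\SS,\imath_1}=\peqsub{\jj}{\SS}^*\HH_\XX^{\SS,\imath_2}$.

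For the independence of the off-$\Sol$ extension of $\overline{\XX}$, suppose $\XX_1,\XX_2\in\Sym_{\d}(\SS)$ with $\XX_1|_{\Sol(\SS)}=\XX_2|_{\Sol(\SS)}$, so $\WW:=\XX_1-\XX_2$ vanishes on $\Sol(\SS)$. One must show $\peqsub{\jj}{\SS}^*\HH^{\SS,\imath}_{\XX_1}=\peqsub{\jj}{\SS}^*\HH^{\SS,\imath}_{\XX_2}$. The difference of Noether currents is $(S^L_\WW,\overline s_\WW)-\underline{\ii}_\WW(\Theta^L,\overline\theta)$, and since $\WW$ restricts to zero, the interior-product piece pulls back to zero on $\Sol(\SS)$; for the $(S^L_\WW,\overline s_\WW)$ piece one uses that $\underline{\LL}_\WW(L,\overline\ell)$ is controlled on $\Sol(\SS)$ (the relation $\underline{\dd}(L,\overline\ell)=(E_I,\overline b_I)\wwedge\dd\phi^I+\underline{\d}(\Theta^L,\overline\theta)$ together with $\underline{\ii}_\WW$ contracting against arguments that vanish modulo the equations of motion) so that $\underline{\d}(S^L_\WW,\overline s_\WW)$ is exact with a potential vanishing on $\Sol(\SS)$; integrating over $(\Sigma,\partial\Sigma)$ and using relative Stokes' \eqref{eq: stokes pair} kills it. Combining with the $(L,\overline\ell)$-independence from the preceding lemma completes the proof.

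The main obstacle I anticipate is the bookkeeping in the extension-independence step: one has to be careful that $\underline{\ii}_\WW$ applied to a relative form whose components vanish on $\Sol(\SS)$ still produces something whose pullback under $\peqsub{\jj}{\SS}$ is zero — this is true for the $\Theta$-term directly, but for the $S$-term it requires unwinding the defining identity \eqref{eq: (L_X L,L_x l)} and showing the potential can be chosen to vanish on $\Sol(\SS)$ up to an exact relative form, which is where the hypothesis $H^{n-1}(M,\partial M)=0$ and Lemma \ref{lemma: (S,s)} enter. The embedding-independence step, by contrast, is essentially a transcription of the argument already given for $\OOmega_\SS$.
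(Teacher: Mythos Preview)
Your strategy matches the paper's proof: first show embedding-independence by computing $\underline{\d}\big(J^\Theta_\XX,\overline{\jmath}^{(\Theta,\overline{\theta})}_\XX\big)$ and observing it vanishes on $\Sol(\SS)$, then handle extension-independence by showing $\peqsub{\jj}{\SS}^*(S^L_{\XX_1}-S^L_{\XX_2},\overline{s}_{\XX_1}-\overline{s}_{\XX_2})$ is $\underline{\d}$-closed (hence exact by $H^{n-1}(M,\partial M)=0$) and applying relative Stokes. Two points need sharpening.

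In the embedding step you claim the computation reduces to ``a contraction of $\underline{\dd}(E_I,\overline b_I)\wwedge\dd\phi^I$''. That is the expression appearing in the proof of Proposition~\ref{prop: OOmega_SS} for the \emph{symplectic current}; for the \emph{Noether current} the identity is instead
\[
\underline{\d}\Big(J^\Theta_\XX,\overline{\jmath}^{(\Theta,\overline{\theta})}_\XX\Big)
=\underline{\LL}_\XX(L,\overline{\ell})-\underline{\ii}_\XX\Big(\underline{\dd}(L,\overline{\ell})-(E_I,\overline{b}_I)\wedge\dd\phi^I\Big)
=(E_I,\overline{b}_I)\,\LL_\XX\phi^I,
\]
as recorded in \eqref{eq: (dJ,dj)}. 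No $\underline{\dd}$ on $(E_I,\overline b_I)$ appears; the conclusion (vanishing on $\Sol(\SS)$) is the same, but your stated reduction is not what actually happens.

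In the extension step your phrasing ``$\underline{\d}(S^L_\WW,\overline s_\WW)$ is exact with a potential vanishing on $\Sol(\SS)$'' is garbled, and Lemma~\ref{lemma: (S,s)} is not the relevant input (it concerns change of Lagrangian, not of $\XX$). The clean argument is: $\underline{\d}\,\underline{\jj}_\SS^*(S^L_{\XX_1}-S^L_{\XX_2},\overline{s}_{\XX_1}-\overline{s}_{\XX_2})=\underline{\jj}_\SS^*\underline{\LL}_{\XX_1-\XX_2}(L,\overline{\ell})=\underline{\LL}_{\overline{\XX}_1-\overline{\XX}_2}\underline{\jj}_\SS^*(L,\overline{\ell})=0$ since $\overline{\XX}_1=\overline{\XX}_2$; then $H^{n-1}(M,\partial M)=0$ gives $\underline{\jj}_\SS^*(S_{\XX_2},\overline{s}_{\XX_2})=\underline{\jj}_\SS^*(S_{\XX_1},\overline{s}_{\XX_1})+\underline{\d}(C,\overline{c})$, and the charges agree after relative Stokes. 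This is what you are gesturing at, but it should be stated directly rather than routed through the variational formula for $\underline{\dd}(L,\overline\ell)$.
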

	\begin{proof}\mbox{}\\	
		On one hand we have
		\begin{align}\label{eq: (dJ,dj)}
		\begin{split}
		&\underline{\d}\Big(J^\Theta_\XX,\overline{\jmath}^{(\Theta,\overline{\theta})}_\XX\Big)\updown{\eqref{eq: (L_X L,L_x l)}}{\eqref{eq: d(L,l)}}{=}\underline{\LL}_\XX(L,\overline{\ell})-\underline{\ii}_\XX\Big(\underline{\dd}(L,\overline{\ell})-(E_I,\overline{b}_I)\wedge\dd\phi^I\Big)\overset{\eqref{eq: Cartan}}{=}(E_I,\overline{b}_I)\LL_\XX\phi^I
		\end{split}
		\end{align}
		We can then mimic the argument of the proof of \ref{prop: OOmega_SS} to show that $\peqsub{\jj}{\SS}^*\HH_\XX^{\SS,\imath}$ does not depend on the embedding. Consider now $\XX_1,\XX_2\in\Sym_{\d}(\SS)$ with $\overline{\XX}_1=\overline{\XX}_2$. Then
		\[\underline{\d}\,\underline{\jj}_\SS^*\Big(S^L_{\XX_1}-S^L_{\XX_2},\overline{s}_{\XX_1}^{(L,\overline{\ell})}-\overline{s}_{\XX_2}^{(L,\overline{\ell})}\Big)\overset{\eqref{eq: (L_X L,L_x l)}}{=}\underline{\jj}_\SS^*\Big(\underline{\LL}_{\XX_1}(L,\overline{\ell})-\underline{\LL}_{\XX_2}(L,\overline{\ell})\Big)\updown{\eqref{eq: pullback interior product}}{\eqref{eq: Cartan}}{=}\underline{\LL}_{(\overline{\XX}_1-\overline{\XX}_2)}\underline{\jj}_\SS^*(L,\overline{\ell})=0\]
		where we have used  $(\peqsub{\jj}{\SS})_*\overline{\XX}_k=\XX_k$. Thus, from $H^{n-1}(M,\partial M)=0$,
		\[\underline{\jj}_\SS^*\Big(S_{\XX_2}^L,\overline{s}_{\XX_2}^{(L,\overline{\ell})}\Big)=\underline{\jj}_\SS^*\Big(S_{\XX_1}^L,\overline{s}_{\XX_1}^{(L,\overline{\ell})}\Big)+\underline{\d}(C,\overline{c})\]
		Finally
		\begin{align*}
		\HH^\SS_{\overline{\XX}_2}&=\peqsub{\jj}{\SS}^*\HH_{\XX_2}^{\SS,\imath}\overset{\eqref{eq: pullback interior product}}{=}\int_{(\Sigma,\partial\Sigma)} \underline{\imath}^*\left(\underline{\jj}_\SS^*\Big(S_{\XX_2}^L,\overline{s}_{\XX_2}^{(L,\overline{\ell})}\Big)-\underline{\ii}_{\overline{\XX}_2}\Big(\Theta^L,\overline{\theta}^{(L,\overline{\ell})}\Big)\right)=\\
		&=\int_{(\Sigma,\partial\Sigma)} \underline{\imath}^*\left(\underline{\jj}_\SS^*\Big(S_{\XX_1}^L,\overline{s}_{\XX_1}^{(L,\overline{\ell})}\Big)+\underline{\d}(C,\overline{c})-\underline{\ii}_{\overline{\XX}_1}\Big(\Theta^L,\overline{\theta}^{(L,\overline{\ell})}\Big)\right)\updown{\eqref{eq: stokes pair}}{\eqref{eq: pullback interior product}}{=}\peqsub{\jj}{\SS}^*\HH_{\XX_2}^\imath=\HH_{\overline{\XX}_2}
		\end{align*}
		\mbox{}\vspace*{-6ex}
		
	\end{proof}
	
	\begin{definition}\mbox{}\\
		Given $\XX\in\Sym_{\d}(\SS)$, we say that $(Q_\XX^\Theta,\overline{q}_\XX^{(\Theta,\overline{\theta})})\in\OOmega^{(n-2,0)}((M,\lateral M)\times\Sol(\SS))$ is the $\boldsymbol{\XX}$-\textbf{potential} (or Noether potential) if
		\[
		\definicion
		\tikzmarkin{Xcargaprima}(0.25,-0.3)(-0.25,0.5)
		\tikzmarkin{Xcarga}(0.2,-0.3)(-0.2,0.5)
		\underline{\d}\Big(Q_\XX^\Theta,\overline{q}_\XX^{(\Theta,\overline{\theta})}\Big)=\peqsub{\jj}{\SS}^*\Big(J_\XX^\Theta,\overline{\jmath}_\XX^{(\Theta,\overline{\theta})}\Big)
		\tikzmarkend{Xcarga}
		\tikzmarkend{Xcargaprima}\]
	\end{definition}
	From \eqref{eq: J2=J1+d}, it follows
	that
	\begin{equation}\label{eq: Q2=Q1+d}
	\degeneracion
	\tikzmarkin{degeneracionQq}(0.2,-0.3)(-0.2,0.52)
	\Big(Q_\XX^{\Theta_2},\overline{q}_\XX^{(\Theta_2,\overline{\theta}_2)}\Big)=\Big(Q_\XX^{\Theta_1},\overline{q}_\XX^{(\Theta_1,\overline{\theta}_1)}\Big)+(A,\overline{a})-\underline{\ii}_\XX(Z,\overline{z})+(B,\overline{b})
	\tikzmarkend{degeneracionQq}
	\end{equation}
	for some closed $(B,\overline{b})\in\Omega^{n-2}(M,\partial M)$, which might not be exact if $H^{n-2}(M,\partial M)\neq0$. From the relative Stokes' theorem \eqref{eq: stokes pair}, the $\XX$-charge is zero and so, as we will see in section \ref{section: gauge} below, $\overline{\XX}$ is a gauge vector field. Sometimes we only have a bulk potential $Q^\Theta_\XX$. In that case,  the $\XX$-charge can be written as a boundary integral.
	
	\begin{remark}\label{remark: Hn-1 not 0}\mbox{}\\
		If $H^{n-1}(M,\partial M)\neq0$, then we have to replace $\underline{\d}(A,\overline{a})$ in \eqref{eq: alpha beta def} and \eqref{eq: J2=J1+d} by a closed element $(T,\overline{t})$ that is not exact in general. In particular we have
		\[\int_{(\Sigma,\partial\Sigma)}\underline{\imath}^*\Big(J^{\Theta_2}_\XX,\overline{\jmath}^{(\Theta_2,\overline{\theta}_2)}_\XX\Big)=\int_{(\Sigma,\partial\Sigma)}\underline{\imath}^*\Big(J^{\Theta_1}_\XX,\overline{\jmath}^{(\Theta_1,\overline{\theta}_1)}_\XX\Big)+\int_{(\Sigma,\partial\Sigma)}\underline{\imath}^*(T,\overline{t})\]
		The last integral is purely topological as it is independent of the fields. This proves the following: if $H^{n-1}(M,\partial M)\neq0$, the $\XX$-charges are only defined up to topological terms. Moreover, there is one topological charge for every non-zero element $[(T,\overline{t})]$ in $H^{n-1}(M,\partial M)$. In particular, the vector field $\XX=0$ has non-zero charges so $\HH_\XX^{\SS,\imath}$ is not linear in $\XX$. This is unpleasant but not a big problem, as we are actually interested in the $\dd$-exterior derivative $\dd\HH_\XX^{\SS,\imath}$ which is linear in any case.
	\end{remark}
	
	\subsubsection*{Hamilton Equation}
	We have all the tools to prove, as we wanted, that the $\underline{\d}$-exact symmetries restricted to the space of solutions are Hamiltonian vector fields.
	\begin{theorem}\mbox{}\\
		If $\XX\in\Sym_{\d}(\SS)$ and we denote $\overline{\XX}:=\XX|_{\Sol(\SS)}\in\campos(\Sol(\SS))$ its restriction to the space of solutions, then
		\[
		\comentario
		\tikzmarkin{HamiltonianEqprima}(0.2,-0.28)(-0.2,0.45)
		\tikzmarkin{HamiltonianEq}(0.2,-0.23)(-0.2,0.4)
		\ii_{\overline{\XX}}\OOmega_\SS=\dd\HH^\SS_{\overline{\XX}}
		\tikzmarkend{HamiltonianEq}
		\tikzmarkend{HamiltonianEqprima}
		\]
	\end{theorem}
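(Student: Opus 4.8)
The plan is to reduce the claimed Hamilton equation to an identity between $(n-1,1)$-forms on $(M,\lateral M)\times\Sol(\SS)$ and then integrate over the relative Cauchy surface. Since $\OOmega_\SS$ is independent of the Cauchy embedding (Proposition \ref{prop: OOmega_SS}) and so is $\HH^\SS_{\overline{\XX}}$ (Proposition \ref{proposition: HHXX}), I would fix one embedding $\underline{\imath}=(\imath,\overline{\imath}):(\Sigma,\partial\Sigma)\hookrightarrow(M,\lateral M)$. Using Definition \ref{def: OOmega} together with \eqref{eq: omega and overline omega}, and the fact that the vertical derivative $\dd$ and the interior product $\ii_{\overline{\XX}}$ on $\Sol(\SS)$ commute with the horizontal integration $\int_{(\Sigma,\partial\Sigma)}$ and with the pullbacks $\underline{\imath}^*$ and $\underline{\jj}_\SS^*$ (the latter by \eqref{eq: pullback interior product} and $(\peqsub{\jj}{\SS})_*\overline{\XX}=\XX$), one rewrites the two sides as
\[\ii_{\overline{\XX}}\OOmega_\SS=\int_{(\Sigma,\partial\Sigma)}\underline{\imath}^*\underline{\jj}_\SS^*\,\underline{\ii}_\XX\Big(\Omega^\Theta,\overline{\omega}^{(\Theta,\overline{\theta})}\Big),\qquad\qquad\dd\HH^\SS_{\overline{\XX}}=\int_{(\Sigma,\partial\Sigma)}\underline{\imath}^*\underline{\jj}_\SS^*\,\underline{\dd}\Big(J^\Theta_\XX,\overline{\jmath}^{(\Theta,\overline{\theta})}_\XX\Big).\]
So it suffices to show that $\underline{\jj}_\SS^*\big(\underline{\ii}_\XX(\Omega^\Theta,\overline{\omega}^{(\Theta,\overline{\theta})})-\underline{\dd}(J^\Theta_\XX,\overline{\jmath}^{(\Theta,\overline{\theta})}_\XX)\big)$ has vanishing integral over $(\Sigma,\partial\Sigma)$.

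The second step is the algebraic identity, valid on all of $(M,\lateral M)\times\F$,
\[\underline{\ii}_\XX\Big(\Omega^\Theta,\overline{\omega}^{(\Theta,\overline{\theta})}\Big)-\underline{\dd}\Big(J^\Theta_\XX,\overline{\jmath}^{(\Theta,\overline{\theta})}_\XX\Big)=\underline{\LL}_\XX\Big(\Theta^L,\overline{\theta}^{(L,\overline{\ell})}\Big)-\underline{\dd}\Big(S^L_\XX,\overline{s}^{(L,\overline{\ell})}_\XX\Big),\]
obtained by inserting the definition \eqref{eq: noether currents}, using $(\Omega^\Theta,\overline{\omega}^{(\Theta,\overline{\theta})})=\underline{\dd}(\Theta^L,\overline{\theta}^{(L,\overline{\ell})})$ from \eqref{eq: omega and overline omega}, and Cartan's formula \eqref{eq: Cartan} in $\F$. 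Then I would compute $\underline{\d}$ of the right-hand side: differentiating \eqref{eq: d(L,l)}, using that operations in $M$ and in $\F$ commute \eqref{eq: d dd=dd d}, \eqref{eq: LLdd=ddLL}, and the $\underline{\d}$-symmetry condition \eqref{eq: (L_X L,L_x l)}, one gets
\[\underline{\d}\!\left(\underline{\LL}_\XX\Big(\Theta^L,\overline{\theta}^{(L,\overline{\ell})}\Big)-\underline{\dd}\Big(S^L_\XX,\overline{s}^{(L,\overline{\ell})}_\XX\Big)\right)=-\underline{\LL}_\XX\Big((E_I,\overline{b}_I)\wedge\dd\phi^I\Big).\]
On $\Sol(\SS)$ the right-hand side vanishes after pulling back: by Proposition \ref{proposition: XX tangente a Sol} one has $\underline{\jj}_\SS^*(E_I,\overline{b}_I)=0$ and $\underline{\jj}_\SS^*\underline{\LL}_\XX(E_I,\overline{b}_I)=\underline{\LL}_{\overline{\XX}}\underline{\jj}_\SS^*(E_I,\overline{b}_I)=0$, so the Leibniz rule \eqref{eq: Leibniz} kills it. Hence $\underline{\jj}_\SS^*\big(\underline{\LL}_\XX(\Theta^L,\overline{\theta}^{(L,\overline{\ell})})-\underline{\dd}(S^L_\XX,\overline{s}^{(L,\overline{\ell})}_\XX)\big)$ is a $\underline{\d}$-closed element of $\OOmega^{(n-1,1)}_{\mathrm{loc}}((M,\lateral M)\times\Sol(\SS))$.

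To finish I would apply the horizontal exactness theorem \ref{theorem: Wald exact - close} — adapted to the pair $(M,\lateral M)$ exactly as in the proof of Lemma \ref{lemma: (Theta2,theta2)} — to this form: its horizontal degree $n-1$ is $<n$ and its vertical degree is $1>0$, so it equals $\underline{\d}(P,\overline{p})$ for some $(P,\overline{p})\in\OOmega^{(n-2,1)}_{\mathrm{loc}}((M,\lateral M)\times\Sol(\SS))$. Combining with the displays above,
\[\ii_{\overline{\XX}}\OOmega_\SS-\dd\HH^\SS_{\overline{\XX}}=\int_{(\Sigma,\partial\Sigma)}\underline{\d}\,\underline{\imath}^*(P,\overline{p})=0,\]
where the last equality is the relative Stokes' theorem \eqref{eq: stokes pair} together with $\underline{\partial}(\Sigma,\partial\Sigma)=(\partial\Sigma\setminus\partial\Sigma,\partial\partial\Sigma)=\varnothing$; this is exactly the ``$(M,\partial M)$ behaves as a manifold without boundary'' phenomenon of Section \ref{section: geometry Mxpartial M}, applied to $\Sigma$. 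This proves $\ii_{\overline{\XX}}\OOmega_\SS=\dd\HH^\SS_{\overline{\XX}}$. The main obstacle I anticipate is the bookkeeping in the first step — verifying that $\ii_{\overline{\XX}}$ and $\dd$ genuinely pass through the horizontal integral and the pullbacks (the ``projection commutes with the operations'' principle of Section \ref{section: geometry MxF}, used in reverse), and that Theorem \ref{theorem: Wald exact - close} is legitimately applicable over $(M,\lateral M)$ and with field-dependence restricted to $\Sol(\SS)$. The conceptual content is short: on $\Sol(\SS)$ the ``anomaly'' $\underline{\LL}_\XX(\Theta^L,\overline{\theta}^{(L,\overline{\ell})})-\underline{\dd}(S^L_\XX,\overline{s}^{(L,\overline{\ell})}_\XX)$ becomes $\underline{\d}$-exact because of the equations of motion and Proposition \ref{proposition: XX tangente a Sol}, and $(\Sigma,\partial\Sigma)$ is boundaryless in the relative sense, so its integral vanishes.
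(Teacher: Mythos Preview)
Your reduction to the ``anomaly'' $\underline{\LL}_\XX(\Theta^L,\overline{\theta}^{(L,\overline{\ell})})-\underline{\dd}(S^L_\XX,\overline{s}^{(L,\overline{\ell})}_\XX)$ and the computation of its $\underline{\d}$-derivative are correct, and you rightly identify that the whole argument hinges on your last step. Unfortunately that step is a genuine gap: Theorem~\ref{theorem: Wald exact - close} is a statement about the \emph{free} variational bicomplex $\OOmega_{\mathrm{loc}}((M,\partial M)\times\F)$, and its proof (via \cite{wald1990identically,takens1979global}) uses the full jet-bundle structure of $J^\infty\!E$. Once you pull back by $\underline{\jj}_\SS$ you are working on the on-shell bicomplex, where horizontal cohomology in degree $n-1$ is generally \emph{not} trivial --- this is precisely where conservation laws live, and for $s\geq1$ nontrivial classes also occur (e.g.\ the symplectic current itself in bidegree $(n-1,2)$). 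So ``$\underline{\d}$-closed on $\Sol(\SS)$ $\Rightarrow$ $\underline{\d}$-exact on $\Sol(\SS)$'' is not available from the paper's toolkit, and your invocation of Theorem~\ref{theorem: Wald exact - close} ``with field-dependence restricted to $\Sol(\SS)$'' is not justified.

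The paper's proof circumvents exactly this obstruction. After reaching the same anomaly (your second display), it does \emph{not} attempt to exhibit a potential. Instead it chooses a bulk region $N$ with $\underline{\partial}(N,\lateral N)=(\imath(\Sigma),\partial\imath(\Sigma))\cup(\Sigma_i,\partial\Sigma_i)$ and applies the relative Stokes' theorem \eqref{eq: stokes pair} in the opposite direction, converting $\int_{(\Sigma,\partial\Sigma)}\underline{\imath}^*(\cdots)$ into $\int_{(N,\lateral N)}\underline{\d}(\cdots)$ plus a contribution on $\Sigma_i$. The bulk integrand is exactly $-\underline{\LL}_\XX\big((E_I,\overline{b}_I)\wedge\dd\phi^I\big)$, which vanishes on solutions by Proposition~\ref{proposition: XX tangente a Sol}; the $\Sigma_i$ contribution is then killed by the density argument used there (evaluating on vectors supported away from the lid). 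In other words, the paper trades ``on-shell closed $\Rightarrow$ on-shell exact'' for the much weaker statement ``$\underline{\d}\alpha$ vanishes on-shell, hence its bulk integral does'', which needs no cohomological input. Your computation of $\underline{\d}$ of the anomaly is precisely the ingredient that makes this cobordism argument work --- you had the right object, but the wrong way of using it.
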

	\begin{proof}\mbox{}\\
		Let  $\underline{\imath}:(\Sigma,\partial\Sigma)\hookrightarrow(M,\lateral M)$ be a Cauchy embedding. Then, we have
		\begin{align*}
		\ii_{\overline{\XX}}\OOmega_\SS&-\dd\HH^\SS_{\overline{\XX}}=\ii_{\overline{\XX}}\peqsub{\jj}{\SS}^*\OOmega^\imath_\SS-\dd\peqsub{\jj}{\SS}^*\HH^{\SS,\imath}_\XX\overset{\eqref{eq: pullback interior product}}{=}\peqsub{\jj}{\SS}^*\left(\ii_{\XX}\OOmega^\imath_\SS-\dd\HH^{\SS,\imath}_\XX\right)\updown{\mathrm{\ref{def: OOmega}}}{\mathrm{\ref{def: Noether charge}}}{=}\\
		&=\peqsub{\jj}{\SS}^*\left(\int_{(\Sigma,\partial\Sigma)}\underline{\ii}_{\XX}\underline{\dd}\,\underline{\imath}^*\Big(\Theta^L,\overline{\theta}^{(L,\overline{\ell})}\Big)-\int_{(\Sigma,\partial\Sigma)}\underline{\dd}\,\underline{\imath}^*\Big(J^\Theta_\XX,\overline{\jmath}^{(\Theta,\overline{\theta})}_\XX\Big)\right)\updown{\eqref{eq: Cartan}}{\eqref{eq: noether currents}}{=}\\
		&=\peqsub{\jj}{\SS}^*\left(\LL_{\XX}\int_{(\Sigma,\partial\Sigma)}\underline{\imath}^*\Big(\Theta^L,\overline{\theta}^{(L,\overline{\ell})}\Big)-\dd\int_{(\Sigma,\partial\Sigma)} \underline{\imath}^*\Big(S^L_\XX,\overline{s}_\XX^{(L,\overline{\ell})}\Big)\right)
		\end{align*}
		To prove that the last line is zero for every $\XX\in\Sym_{\d}(\SS)$ and every embedding $\imath$, we follow the idea of the proof of proposition \ref{proposition: XX tangente a Sol}. However, instead of $M$\!, we consider the manifold $N$ bounded by $\Sigma_i$ and $\imath(\Sigma)$. Notice that the integrals over the bottom lid $\Sigma_i$ are zero following the same idea as in \ref{proposition: XX tangente a Sol} of evaluating this expression over all vectors vanishing at a neighborhood of the lid. Applying the relative Stokes theorem \eqref{eq: stokes pair}, we can rewrite the last line of the previous computation as
		\begin{align*}
		&\peqsub{\jj}{\SS}^*\left(\LL_{\XX}\int_{(N,\lateral N)}\underline{\d}\Big(\Theta^L,\overline{\theta}^{(L,\overline{\ell})}\Big)-\dd\int_{(N,\lateral N)}\underline{\d}\Big(S^L_\XX,\overline{s}_\XX^{(L,\overline{\ell})}\Big)\right)\updown{\eqref{eq: d(L,l)}}{\eqref{eq: (L_X L,L_x l)}}{=}\\
		&=\peqsub{\jj}{\SS}^*\left(\LL_{\XX}\int_{(N,\lateral N)}\Big(\underline{\dd}(L,\overline{\ell})-(E_I,\overline{b}_I)\wedge\dd\phi^I\Big)-\dd\int_{(N,\lateral N)}\underline{\LL}_\XX (L,\overline{\ell})\right)\overset{\eqref{eq: LLdd=ddLL}}{=}\\
		&=-\peqsub{\jj}{\SS}^*\int_{(N,\lateral N)}\Big(\underline{\LL}_{\XX}(E_I,\overline{b}_I)\wedge\dd\phi^I+(E_I,\overline{b}_I)\wedge\dd\LL_{\XX}\phi^I\Big)
		\end{align*}
		This expression is zero over the space of solutions from proposition \ref{proposition: XX tangente a Sol}.
	\end{proof}

	\subsection{Gauge vector fields}\label{section: gauge}
	A symmetry $\XX\in\Sym_{\d}(\SS)$ has a flow over $\F$ which, by definition, does not change the value of $\SS$. It moves points (fields) around and, if we restrict the flow to the critical surface $\Sol(\SS)$, we know from proposition \ref{proposition: XX tangente a Sol} that it moves critical points (solutions) to other critical points (solutions). This latter movement is governed by the $\XX$-charge $\HH^\SS_{\overline{\XX}}$, which is the Hamiltonian of $\overline{\XX}$ over $(\Sol(\SS),\OOmega_\SS)$.
	
	\begin{definition}\label{def: gauge field}\mbox{}\\
		We say that a non-zero $\overline{\XX}\in\campos(\Sol(\SS))$ is a \textbf{gauge vector field} if 
		\[\definicion
		\tikzmarkin{definicionGaugeprima}(0.25,-0.2)(-0.25,0.45)
		\tikzmarkin{definicionGauge}(0.2,-0.2)(-0.2,0.45)
		\ii_{\overline{\XX}}\OOmega_\SS=0
		\tikzmarkend{definicionGauge}
		\tikzmarkend{definicionGaugeprima}\]
		We denote $\mathrm{Gauge}(\SS)\subset\campos(\Sol(\SS))$ the set of all gauge vector fields.
	\end{definition}
	A gauge vector field moves along the degenerate directions of the presymplectic form $\OOmega_\SS$. Of course, if $\OOmega_\SS$ is truly symplectic, there is no degenerate direction and, therefore, no gauge vector field.\vspace*{2ex}
	
	$\Sym(\SS)\subset\campos(\F)$ while $\mathrm{Gauge}(\SS)\subset\campos(\Sol(\SS))$. A gauge vector field is not a symmetry although it might be extandable to one. Conversely, $\XX\in\Sym_{\d}(\SS)$ induces the gauge vector field $\overline{\XX}:=\XX|_{\Sol(\SS)}$ if $\dd\HH^\SS_{\overline{\XX}}=0$.

	\subsection[From \texorpdfstring{$M$}{M}-vector fields to \texorpdfstring{$\F$}{F}-vector fields]{From \texorpdfstring{$\boldsymbol{M}$}{M}-vector fields to \texorpdfstring{$\boldsymbol{\F}$}{F}-vector fields}
	The purpose of this section is to define a canonical vector field $\XX_\xi\in\campos(\F)$ associated with $\xi\in\mathfrak{X}(M)$. We assume that $\xi$ is tangent to $\lateral M$ but not necessarily to $\Sigma_i$ and $\Sigma_f$ (we can always extend the interval $[t_i,t_f]$). We cannot stress enough the importance of the different base manifolds, $M$ and $\F$, of both vector fields. The key feature that we will exploit is that a field $\phi$ can be interpreted in two ways:
	\begin{itemize}
		\item As a tensor field on $M$\!. That is, a section $\phi:M\to E$ of some bundle $E\overset{\pi}{\to}M$ such that $\phi_p:=\phi(p)\in E_p:=\pi^{-1}(E)$. In particular, we can take its Lie derivative $(\L_\xi\phi)_p=\partial_\tau|_{0}(\varphi^\xi_\tau)^*\phi_p$, where $\{\varphi^\xi_\tau\}_\tau\subset\mathrm{Diff}(M)$ is the flow of $\xi$. If $\F$ is reasonable enough (in particular, the Lie derivative has to be well defined as it would happen if $E$ is what is called natural bundle \cite{Michor}), we will have $\L_\xi\phi\in\F$.
		\item As a point of $\F$. In particular, a vector field $\XX_\xi:\F\to T\F$ over $\F$ is a section of $T\F$. Thus, $(\XX_\xi)_\phi:=\XX_\xi(\phi)\in T_\phi\F\cong\F$. The last isomorphism comes from the fact that $\F$ is linear. The non-linear case is not as straightforward but in concrete examples, one can usually perform analogous constructions (see example \ref{example: parametrized YM} where $\F=\Omega^1(M)\times\mathrm{Diff}(M)$).
	\end{itemize}
	With those remarks, we can define
	\begin{equation}\label{eq: definicion XX_xi}
	\definicion
	\tikzmarkin{definicionXgiprima}(0.25,-0.2)(-0.25,0.4)
	\tikzmarkin{definicionXgi}(0.2,-0.2)(-0.2,0.4)
	(\XX_\xi^I)_{\phi}=\L_\xi\phi^I
	\tikzmarkend{definicionXgi}
	\tikzmarkend{definicionXgiprima}
	\end{equation}
	\begin{lemma}\mbox{}\\
		Given $\xi\in\mathfrak{X}(M)$, we have that 
		\begin{equation}\label{eq: LL_XX_xi=L_xi}
		\comentario
		\tikzmarkin{Xgiprima}(0.2,-0.25)(-0.2,0.45)
		\tikzmarkin{Xgi}(0.2,-0.2)(-0.2,0.4)
		\LL_{\XX_\xi}\phi^I=\L_\xi\phi^I
		\tikzmarkend{Xgi}
		\tikzmarkend{Xgiprima}
		\end{equation}
	\end{lemma}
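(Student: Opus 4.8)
The plan is to reduce the identity to the defining formula \eqref{eq: definicion XX_xi} together with the ``local'' description of $\dd\phi^I$ from Section~\ref{section: diff geom F}. Since $\phi^I$ is a $0$-form on $\F$ (a $\Gamma(E^I)$-valued function), the Cartan formula \eqref{eq: Cartan} gives $\LL_{\XX_\xi}\phi^I=\ii_{\XX_\xi}\dd\phi^I+\dd\,\ii_{\XX_\xi}\phi^I$, and the second summand vanishes because the interior product of a vector field with a $0$-form is zero. Hence $\LL_{\XX_\xi}\phi^I=\ii_{\XX_\xi}\dd\phi^I=\dd\phi^I(\XX_\xi)$, and it only remains to evaluate the right-hand side.

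First I would fix an arbitrary $\phi\in\F$ and write $[\dd\phi^I(\XX_\xi)]_\phi=\dd_\phi\phi^I\big((\XX_\xi)_\phi\big)$. Then I would apply the boxed identity $\dd\phi^I(\VV_{\!\phi})=\VV^{\scalebox{.6}{$(I)$}}_{\!\phi}$ with the choice $\VV_{\!\phi}=(\XX_\xi)_\phi\in T_\phi\F$, obtaining $\dd_\phi\phi^I\big((\XX_\xi)_\phi\big)=\big((\XX_\xi)_\phi\big)^{\scalebox{.6}{$(I)$}}=(\XX^I_\xi)_\phi$. Finally, substituting the definition \eqref{eq: definicion XX_xi}, $(\XX^I_\xi)_\phi=\L_\xi\phi^I$, which is the claim, since $\phi$ was arbitrary.

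I do not expect a genuine obstacle: the content of the lemma is simply that $\phi^I$ plays the role of a coordinate function on $\F$, so its Lie derivative along a vector field returns that field's $I$-th component, and by construction that component is $\L_\xi\phi^I$. The only delicate point is the usual one flagged in Section~\ref{section: diff geom F}: one manipulates $\mathrm{Eval}^I$ (equivalently $\phi^I$) and its exterior derivative formally, treating $\F$ as an ordinary manifold; a fully rigorous version is the corresponding statement in the jet language of Appendix~\ref{Appendix: jets}. Alternatively one could argue via the flow of $\XX_\xi$ on $\F$ — which, when it exists, is $\phi\mapsto(\varphi^\xi_\tau)^*\phi$ applied componentwise, so that $\LL_{\XX_\xi}\phi^I=\partial_\tau|_0\,(\varphi^\xi_\tau)^*(\phi^I)=\L_\xi\phi^I$ — but this route carries the extra burden of producing the flow, so the algebraic argument above is preferable.
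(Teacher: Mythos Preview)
Your proof is correct and follows essentially the same route as the paper: apply Cartan's formula \eqref{eq: Cartan} to reduce $\LL_{\XX_\xi}\phi^I$ to $\ii_{\XX_\xi}\dd\phi^I$, then evaluate using the identification $\dd\phi^I(\VV_{\!\phi})=\VV^{\scalebox{.6}{$(I)$}}_{\!\phi}$ and the definition \eqref{eq: definicion XX_xi}. The only cosmetic difference is that the paper unpacks the boxed identity inline via a curve $\{\phi^I_\tau\}_\tau$ representing $(\XX_\xi)_\phi$, whereas you cite that identity directly.
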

	\begin{proof}\mbox{}\\
		We recall from section \ref{section: diff geom F} that $\mathrm{Eval}^I\!(\phi)=\phi^I\in\F^I$. We consider now a path $\{\phi^I_\tau\}_\tau\subset\F^I$ such that $\phi^I_0=\phi^I$ and $\partial_\tau|_0\phi^I_\tau=(\XX_\xi^I)_{\phi}\in E_{\phi}$.
		\begin{align*}
		\LL_{\XX_\xi}\phi^I\overset{\eqref{eq: Cartan}}{=}\ii_{\XX_\xi}\dd \phi^I=\ii_{(\XX_\xi)_\phi}\dd_{\phi}\mathrm{Eval}^I=\left.\frac{\d}{\d \tau}\right|_{\tau=0}\mathrm{Eval}^I\!(\phi_\tau)=\left.\frac{\d}{\d \tau}\right|_{\tau=0}\phi^I_\tau=(\XX_\xi^I)_{\phi}\overset{\eqref{eq: definicion XX_xi}}{=}\L_\xi\phi^I
		\end{align*}
		
		\mbox{}\vspace*{-6ex}
		
	\end{proof}
	
	It is important to stress the very different nature of both sides of \eqref{eq: LL_XX_xi=L_xi}. On the LHS we have the Lie derivative of the evaluation function $\mathrm{Eval}^I$ in the $\XX_\xi\in\campos(\F)$ direction (computation over $\F$), while on the RHS we have the Lie derivative of $\phi\in\F$ in the $\xi\in\mathfrak{X}(M)$ direction (computation over $M$).\vspace*{2ex}

	One last comment is in order now. If we consider a vector field $V\in\mathfrak{X}(M)$ in a finite-dimensional manifold $M$\!, we can write it in some local coordinates $\{x^i\}_i$ as \[V=\sum_{i=1}^n\d x^i(V)\frac{\partial}{\partial x^i}\]
	Sometimes, it is customary to proceed analogously for vector fields $\VV\in\campos(\F)$ in an infinite-dimensional manifold $\F$ and write something like
	\[\VV=\int_M \dd \phi^I(\VV)\frac{\delta}{\delta\phi^I}\qquad \text{ in which case  }\qquad\XX_\xi=\int_M\L_\xi\phi\frac{\delta}{\delta \phi^I}\] 
	but we strongly advise against this practice. First, because $\phi^I$ does not play the same role as the usual coordinates. Second, because it is hard to give a rigorous meaning to $\delta/\delta\phi^I$. Finally, because there is simply no need. For all purposes, we just need  $\dd\phi^I(\VV)=\VV^I_{\!\phi}$.

	\subsubsection*{Currents, Charges, and Potentials of a space-time vector field}
	In general $\XX_\xi$ is not a $\underline{\d}$-symmetry, so we can not define $\HH_{\XX_\xi}^{\SS,\imath}$. However, we can still define a $\xi$-charge $\QQ_\xi$ associated with $\xi\in\mathfrak{X}(M)$. Only under certain circumstances, to be studied at the end of this section, $\overline{\XX}_\xi$ is the Hamiltonian vector field generated by $\QQ_\xi$.
	
	\begin{definition}\label{def: objetos para L}\mbox{}\\
		Given $\xi\in\mathfrak{X}(M)$ and $(L,\overline{\ell})\in\Lag(M)$, we define the $\boldsymbol{\xi}$\textbf{-current} and \textbf{$\boldsymbol{\xi}$-charge}
		\[
		\definicion
		\tikzmarkin{Noetherprima}(0.2,-0.4)(-0.15,0.6)
		\tikzmarkin{Noether}(0.15,-0.4)(-0.1,0.6)
		\Big(J^\Theta_\xi,\overline{\jmath}_\xi^{(\Theta,\overline{\theta})}\Big):=\Big(\!S_\xi^L,\overline{s}_{\overline{\xi}}^{\overline{\ell}}\,\Big)-\underline{\ii}_{\XX_\xi}\Big(\Theta^L,\overline{\theta}^{(L,\overline{\ell})}\Big)\qquad\qquad\qquad\QQ^{(L,\overline{\ell}),\imath}_{\xi}:=\int_{(\Sigma,\partial\Sigma)}\underline{ \imath}^*\Big(J^\Theta_\xi,\overline{\jmath}_\xi^{(\Theta,\overline{\theta})}\Big)\in\OOmega^0(\F)
		\tikzmarkend{Noether}
		\tikzmarkend{Noetherprima}\]
		\vspace*{-1.5ex}
		
		where $\Big(S^L_\xi,\overline{s}_{\overline{\xi}}^{\overline{\ell}}\Big):=\underline{\iota}_\xi (L,\overline{\ell})$.\vspace*{2ex}
		
		We say that $(Q_\xi^\Theta,\overline{q}_\xi^{(\Theta,\overline{\theta})})\in\OOmega^{(n-2,0)}((M,\lateral M)\times\Sol(\SS))$ are $\boldsymbol{\xi}$\textbf{-potentials} (defined up to a closed term which might not be exact) if
		\[
		\definicion
		\tikzmarkin{preHamiltonianprima}(0.25,-0.3)(-0.25,0.5)
		\tikzmarkin{preHamiltonian}(0.2,-0.3)(-0.2,0.5)
		\underline{\d}\Big(Q_\xi^\Theta,\overline{q}_\xi^{(\Theta,\overline{\theta})}\Big)=\peqsub{\jj}{\SS}^*\Big(J^\Theta_\xi,\overline{\jmath}_\xi^{(\Theta,\overline{\theta})}\Big)
		\tikzmarkend{preHamiltonian}
		\tikzmarkend{preHamiltonianprima}\]
		
	\end{definition}
	Notice that the $\xi$-charge is associated with a particular pair of Lagrangians $(L,\overline{\ell})$ and not with the action $\SS=[\![(L,\overline{\ell})]\!]$ like the $\XX$-charge $\HH_\XX^{\SS,\imath}$. Those charges are candidates to relevant quantities, although its precise physical interpretation depends on the problem at hand.
	
	\begin{lemma}\label{lemma: deg objetos L}\mbox{}\\
		Given $(L_2,\overline{\ell}_2)\equivint(L_1,\overline{\ell}_1)$, we have\allowdisplaybreaks[0]
		\begin{align*}
		\degeneracion
		\tikzmarkin{NoetherDEG}(0.15,-0.45)(-0.15,0.5)
		&\Big(J^{\Theta_2}_\xi,\overline{\jmath}_\xi^{(\Theta_2,\overline{\theta}_2)}\Big)=\Big(J^{\Theta_1}_\xi,\overline{\jmath}_\xi^{(\Theta_1,\overline{\theta}_1)}\Big)+\Big(\underline{\L}_\xi-\underline{\LL}_{\XX_\xi}\Big)(Y,\overline{y})-\underline{\d}\Big(\underline{\iota}_\xi(Y,\overline{y})+\underline{\ii}_{\XX_\xi}(Z,\overline{z})\Big)\\
		&\QQ^{(L_2,\overline{\ell}_2),\imath}_{\xi}=\QQ^{(L_1,\overline{\ell}_1),\imath}_{\xi}+\int_{(\Sigma,\partial\Sigma)}\underline{\imath}^*\big(\underline{\L}_\xi -\underline{\LL}_{\XX_\xi})(Y,\overline{y})\hspace*{29ex}\mbox{}
		\tikzmarkend{NoetherDEG}
		\end{align*}
	\end{lemma}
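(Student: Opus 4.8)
The plan is to reduce the whole statement to Lemma \ref{lemma: (L2,l2)} and Lemma \ref{lemma: (Theta2,theta2)}, after which everything is bookkeeping with the Cartan-type identities of section \ref{section: geometry MxF}. Since we are working with a contractible bundle, Lemma \ref{lemma: (L2,l2)} provides $(Y,\overline{y})\in\OOmega^{(n-1,0)}((M,\partial M)\times\F)$ with $(L_2,\overline{\ell}_2)=(L_1,\overline{\ell}_1)+\underline{\d}(Y,\overline{y})$, and Lemma \ref{lemma: (Theta2,theta2)} then gives $(Z,\overline{z})\in\OOmega^{(n-2,1)}((M,\lateral M)\times\F)$ with $(\Theta^{L_2},\overline{\theta}^{(L_2,\overline{\ell}_2)})=(\Theta^{L_1},\overline{\theta}^{(L_1,\overline{\ell}_1)})+\underline{\d}(Z,\overline{z})+\underline{\dd}(Y,\overline{y})$. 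These are the only two inputs.

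First I would track the change in the "potential part'' $(S_\xi^L,\overline{s}_{\overline{\xi}}^{\overline{\ell}})=\underline{\iota}_\xi(L,\overline{\ell})$ of the $\xi$-current. By Cartan's formula \eqref{eq: Cartan} on $(M,\lateral M)$,
\[\underline{\iota}_\xi(L_2,\overline{\ell}_2)=\underline{\iota}_\xi(L_1,\overline{\ell}_1)+\underline{\iota}_\xi\underline{\d}(Y,\overline{y})=\underline{\iota}_\xi(L_1,\overline{\ell}_1)+\underline{\L}_\xi(Y,\overline{y})-\underline{\d}\,\underline{\iota}_\xi(Y,\overline{y}).\]
Next I would track $\underline{\ii}_{\XX_\xi}(\Theta^L,\overline{\theta}^{(L,\overline{\ell})})$. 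The key observation is that $(Y,\overline{y})$ has vertical degree $0$, so $\underline{\ii}_{\XX_\xi}(Y,\overline{y})=0$, whence Cartan's formula in $\F$ \eqref{eq: Cartan} collapses to $\underline{\ii}_{\XX_\xi}\underline{\dd}(Y,\overline{y})=\underline{\LL}_{\XX_\xi}(Y,\overline{y})$; and $\underline{\ii}_{\XX_\xi}$ commutes with $\underline{\d}$ by \eqref{eq: d dd=dd d}, so $\underline{\ii}_{\XX_\xi}\underline{\d}(Z,\overline{z})=\underline{\d}\,\underline{\ii}_{\XX_\xi}(Z,\overline{z})$. Hence
\[\underline{\ii}_{\XX_\xi}\Big(\Theta^{L_2},\overline{\theta}^{(L_2,\overline{\ell}_2)}\Big)=\underline{\ii}_{\XX_\xi}\Big(\Theta^{L_1},\overline{\theta}^{(L_1,\overline{\ell}_1)}\Big)+\underline{\d}\,\underline{\ii}_{\XX_\xi}(Z,\overline{z})+\underline{\LL}_{\XX_\xi}(Y,\overline{y}).\]
Subtracting these two displays as prescribed by the definition of $(J^\Theta_\xi,\overline{\jmath}_\xi^{(\Theta,\overline{\theta})})$ yields precisely the first claimed identity, with $\underline{\d}$-exact remainder $-\underline{\d}\big(\underline{\iota}_\xi(Y,\overline{y})+\underline{\ii}_{\XX_\xi}(Z,\overline{z})\big)$.

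For the charge, I would pull back along $\underline{\imath}$ and integrate over $(\Sigma,\partial\Sigma)$. The only point requiring care is that the relative Stokes' theorem \eqref{eq: stokes pair} sends the $\underline{\d}$-exact term to an integral over $\underline{\partial}(\Sigma,\partial\Sigma)=(\partial\Sigma\setminus\partial\Sigma,\partial(\partial\Sigma))=\varnothing$, which vanishes, leaving $\QQ^{(L_2,\overline{\ell}_2),\imath}_{\xi}=\QQ^{(L_1,\overline{\ell}_1),\imath}_{\xi}+\int_{(\Sigma,\partial\Sigma)}\underline{\imath}^*(\underline{\L}_\xi-\underline{\LL}_{\XX_\xi})(Y,\overline{y})$. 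I do not expect a genuine obstacle: the substance is entirely in Lemmas \ref{lemma: (L2,l2)} and \ref{lemma: (Theta2,theta2)}, and the only subtleties are the degree count that kills $\underline{\dd}\,\underline{\ii}_{\XX_\xi}(Y,\overline{y})$ and checking that the commutation identities of section \ref{section: geometry MxF} carry over verbatim to the relative operators $\underline{\d}$, $\underline{\dd}$, $\underline{\iota}_\xi$, $\underline{\ii}_{\XX_\xi}$.
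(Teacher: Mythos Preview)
Your proposal is correct and follows exactly the approach the paper indicates: the paper's own proof is a one-line citation of \eqref{eq: Cartan}, \eqref{eq: d dd=dd d}, Lemmas \ref{lemma: (L2,l2)} and \ref{lemma: (Theta2,theta2)}, and the relative Stokes' theorem \eqref{eq: stokes pair}, and you have simply unpacked those ingredients in the natural order. The two subtleties you flag---that $\underline{\ii}_{\XX_\xi}(Y,\overline{y})=0$ by degree and that $\underline{\ii}_{\XX_\xi}$ commutes with $\underline{\d}$---are precisely the content of \eqref{eq: Cartan} and \eqref{eq: d dd=dd d} in this context, so nothing is missing.
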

	\begin{proof}\mbox{}\\
		The result follows from \eqref{eq: Cartan}, \eqref{eq: d dd=dd d}, lemmas \ref{lemma: (L2,l2)} and \ref{lemma: (Theta2,theta2)}, and the relative Stokes' theorem \ref{eq: stokes pair}.
	\end{proof}

	The operator $\L_\xi-\LL_{\XX_\xi}$ has a very clear interpretation. Recall first that $\phi\in\F$ represents the dynamical fields while $\tilde{\phi}\in\widetilde{\F}$ denotes the background objects (see section \ref{section: diff geom F}). The latter cannot vary over $\F$, i.e.~$\dd\tilde{\phi}=0$, however they play a non-trivial role in the computations over $M$\!. To see how, we duplicate all our geometric structures of $\F$ into $\widetilde{\F}$. For instance, we consider $\widetilde{\dd}$, $\widetilde{\LL}$, and so on.  We also define  $(\widetilde{\XX}_\xi)_{\tilde{\phi}}=\L_\xi\tilde{\phi}\in T_{\tilde{\phi}}\widetilde{\F}$. Notice in particular that we have $\widetilde{\dd}\phi=0$ and $\widetilde{\LL}_{\widetilde{\XX}_\xi}\tilde{\phi}=\L_\xi \tilde{\phi}$. 
	
	\begin{lemma}\label{lemma: L=LL+LL tilde}\mbox{}
		\begin{equation}\label{eq: L=LL+LL tilde}
		\comentario
		\tikzmarkin{LLLprima}(0.2,-0.32)(-0.2,0.5)
		\tikzmarkin{LLL}(0.2,-0.27)(-0.2,0.45)
		\underline{\L}_\xi=\underline{\LL}_{\XX_\xi}+\widetilde{\underline{\LL}}_{\widetilde{\XX}_\xi}
		\tikzmarkend{LLL}
		\tikzmarkend{LLLprima}
		\end{equation}
	\end{lemma}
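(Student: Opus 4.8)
The plan is to show that $D:=\underline{\L}_\xi-\underline{\LL}_{\XX_\xi}-\widetilde{\underline{\LL}}_{\widetilde{\XX}_\xi}$ vanishes on all local forms. First I would note that each of the three summands is an $\mathbb{R}$-linear derivation of bidegree $\big((0,0),0\big)$, being given by Cartan's magic formula \eqref{eq: Cartan}: $\underline{\L}_\xi=\underline{\iota}_\xi\underline{\d}+\underline{\d}\,\underline{\iota}_\xi$, $\underline{\LL}_{\XX_\xi}=\underline{\ii}_{\XX_\xi}\underline{\dd}+\underline{\dd}\,\underline{\ii}_{\XX_\xi}$, and the same with the background differential $\widetilde{\underline{\dd}}$ and contraction $\widetilde{\underline{\ii}}_{\widetilde{\XX}_\xi}$ for $\widetilde{\underline{\LL}}_{\widetilde{\XX}_\xi}$. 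Hence $D$ is again a bidegree-zero derivation: it obeys the graded Leibniz rule for $\underline{\wwedge}$ and is linear in $\xi$.

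Next I would check that $D$ commutes with each of the three exterior derivatives. Using that horizontal and vertical operations commute, \eqref{eq: d dd=dd d} together with $\underline{\d}\,\underline{\dd}=\underline{\dd}\,\underline{\d}$, one gets $\underline{\L}_\xi\underline{\dd}=\underline{\dd}\,\underline{\L}_\xi$, $\underline{\L}_\xi\widetilde{\underline{\dd}}=\widetilde{\underline{\dd}}\,\underline{\L}_\xi$ and $\underline{\LL}_{\XX_\xi}\underline{\d}=\underline{\d}\,\underline{\LL}_{\XX_\xi}$; from \eqref{eq: LLdd=ddLL}, $\underline{\LL}_{\XX_\xi}\underline{\dd}=\underline{\dd}\,\underline{\LL}_{\XX_\xi}$; and since the background fields enter only as an independent Cartesian factor, $\underline{\LL}_{\XX_\xi}$ commutes with $\widetilde{\underline{\dd}}$, and symmetrically $\widetilde{\underline{\LL}}_{\widetilde{\XX}_\xi}$ with $\underline{\d}$ and $\underline{\dd}$. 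Thus $D$ commutes with $\underline{\d}$, $\underline{\dd}$ and $\widetilde{\underline{\dd}}$.

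A bidegree-zero derivation commuting with all three differentials is determined by its action on the generators of the local differential algebra, which — in the jet description of appendix \ref{Appendix: jets} — are the evaluation functions $\phi^I$ of the dynamical fields and $\tilde\phi^J$ of the background fields; so it suffices to prove $D\phi^I=0$ and $D\tilde\phi^J=0$. For a dynamical evaluation function, $\underline{\L}_\xi\phi^I=\L_\xi\phi^I$ (this is how $\underline{\L}_\xi$ acts on the evaluation of the tensor field $\phi^I$ on $M$, as one checks directly from the flow of $\xi$), $\underline{\LL}_{\XX_\xi}\phi^I=\L_\xi\phi^I$ is precisely \eqref{eq: LL_XX_xi=L_xi}, and $\widetilde{\underline{\LL}}_{\widetilde{\XX}_\xi}\phi^I=0$ since $\widetilde{\underline{\dd}}\phi^I=0$ (dynamical fields are constant over $\widetilde{\F}$) and $\widetilde{\underline{\ii}}_{\widetilde{\XX}_\xi}$ annihilates $0$-forms, so Cartan's formula gives zero; hence $D\phi^I=\L_\xi\phi^I-\L_\xi\phi^I-0=0$. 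The computation on $\tilde\phi^J$ is symmetric: $\underline{\L}_\xi\tilde\phi^J=\L_\xi\tilde\phi^J$, $\underline{\LL}_{\XX_\xi}\tilde\phi^J=0$ (because $\underline{\dd}\tilde\phi^J=0$), and $\widetilde{\underline{\LL}}_{\widetilde{\XX}_\xi}\tilde\phi^J=\L_\xi\tilde\phi^J$ — the $\widetilde{\F}$-analogue of \eqref{eq: LL_XX_xi=L_xi}, noted right before the statement — so $D\tilde\phi^J=0$, and $D=0$ on $M\times\F$ (with the background factor understood). The relative statement then follows because $\underline{\L}_\xi$, $\underline{\LL}_{\XX_\xi}$ and $\widetilde{\underline{\LL}}_{\widetilde{\XX}_\xi}$ act component-wise on $\OOmega^{(r,s)}(M\times\F)\oplus\OOmega^{(r-1,s)}(N\times\F)$, with $\xi$ replaced by $\overline{\xi}:=\xi|_N$ in the $N$-slot (the relative framework already requiring $\xi$ tangent to $N$), where $\XX_\xi$ acts as $\XX_{\overline{\xi}}$ since $(\L_\xi\phi^I)|_N=\L_{\overline{\xi}}(\phi^I|_N)$; so the $N$-component equality is an instance of the one just proved.

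The step I expect to be the real obstacle is the reduction to generators: making precise the slogan ``a bidegree-zero derivation commuting with $\underline{\d},\underline{\dd},\widetilde{\underline{\dd}}$ is fixed by its values on the field evaluations $\phi^I,\tilde\phi^J$'' is exactly what the jet formalism of appendix \ref{Appendix: jets} provides, and one should check there that the field-independent part of the algebra contributes nothing extra. Everything else is routine manipulation with Cartan's formula and the commutation relations \eqref{eq: d dd=dd d}, \eqref{eq: LLdd=ddLL}, \eqref{eq: Cartan}, \eqref{eq: LL_XX_xi=L_xi}.
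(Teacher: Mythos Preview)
Your proposal is correct and follows essentially the same approach as the paper: check the identity on the evaluation functions $\phi^I$ and $\tilde\phi^J$ using \eqref{eq: LL_XX_xi=L_xi} and its background analogue, then extend to all local forms via the Leibniz and Cartan rules \eqref{eq: Cartan}--\eqref{eq: Leibniz imath}. The paper compresses this into a single sentence, whereas you have unpacked the extension step into the explicit ``derivation commuting with the differentials is determined on generators'' argument; your caveat about making this reduction precise via the jet formalism is well placed, since that is exactly what the paper's terse invocation of \eqref{eq: Cartan}--\eqref{eq: Leibniz imath} is implicitly relying on.
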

	\begin{proof}\mbox{}\\
		From \eqref{eq: LL_XX_xi=L_xi}, we have $\LL_{\XX_\xi}\phi=\L_\xi \phi$ and $\widetilde{\LL}_{\widetilde{\XX}_\xi}\tilde{\phi}=\L_\xi \tilde{\phi}$. Now applying \eqref{eq: Cartan}-\eqref{eq: Leibniz imath} leads to the result.
	\end{proof}
	
	This result is what one should expect: $\LL$ is unaware of $\tilde{\phi}\in\widetilde{\F}$ while $\widetilde{\LL}$ is unaware of $\phi\in\F$. However, $\L$ sees equally $\F$ and $\widetilde{\F}$. It is important to mention that in some examples there is no clear distinction between the dynamical and background objects. For instance, one might consider $\F$ as the space of metrics on $M$ with some prescribed scalar curvature $R$. Nonetheless, on these cases everything works out if one avoids $\widetilde{\underline{\LL}}_{\widetilde{\XX}_\xi}$ (and the space $\widetilde{\F}$ altogether) and uses $\L_\xi-\LL_{\XX_\xi}$ instead. This operator has also been considered before in \cite{Freidel}.
	
	\subsubsection*{Flux law}
	\begin{proposition}\label{proposition: flux law}\mbox{}\\
		Let $(L,\overline{\ell})\in\Lag(M)$ and $\underline{\imath}_1,\underline{\imath}_2:(\Sigma,\partial\Sigma)\hookrightarrow(M,\lateral M)$ be two non-intersecting embeddings and $N$ the manifold bounded by those two Cauchy hypersurfaces. Then, we have the following flux law	
		\[
		\comentario\tikzmarkin{fluxlawprima}(0.2,-0.5)(-0.2,0.65)
		\tikzmarkin{fluxlaw}(0.2,-0.45)(-0.2,0.6)
		\peqsub{\jj}{\SS}^*\QQ^{(L,\overline{\ell}),\imath_2}_\xi-\peqsub{\jj}{\SS}^*\QQ^{(L,\overline{\ell}),\imath_1}_\xi=\int_{(N,\lateral N)}\widetilde{\underline{\LL}}_{\tilde{\XX}_\xi}(L,\overline{\ell})
		\tikzmarkend{fluxlaw}
		\tikzmarkend{fluxlawprima}\]	
	\end{proposition}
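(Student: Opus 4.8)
The plan is to compute the relative horizontal derivative of the $\xi$-current, restrict it to the space of solutions, and integrate over the region $N$ using the relative Stokes' theorem \eqref{eq: stokes pair}, in close analogy with the proofs of Propositions \ref{prop: OOmega_SS} and \ref{proposition: HHXX}.

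First I would establish the pointwise identity
\[
\underline{\d}\Big(J^\Theta_\xi,\overline{\jmath}^{(\Theta,\overline{\theta})}_\xi\Big)=\widetilde{\underline{\LL}}_{\widetilde{\XX}_\xi}(L,\overline{\ell})+(E_I,\overline{b}_I)\,\L_\xi\phi^I .
\]
Starting from $(J^\Theta_\xi,\overline{\jmath}^{(\Theta,\overline{\theta})}_\xi)=\underline{\iota}_\xi(L,\overline{\ell})-\underline{\ii}_{\XX_\xi}(\Theta^L,\overline{\theta}^{(L,\overline{\ell})})$ of Definition \ref{def: objetos para L} and applying $\underline{\d}$: for the first term, Cartan's formula \eqref{eq: Cartan} gives $\underline{\d}\,\underline{\iota}_\xi(L,\overline{\ell})=\underline{\L}_\xi(L,\overline{\ell})-\underline{\iota}_\xi\underline{\d}(L,\overline{\ell})$, and $\underline{\d}(L,\overline{\ell})=0$ because $(L,\overline{\ell})$ already has top horizontal degree on $(M,\lateral M)$ (its bulk part lands in $\Omega^{n+1}(M)=0$ and its lateral part in $\Omega^n(\lateral M)=0$). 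For the second term, the $M$- and $\F$-operations commute \eqref{eq: d dd=dd d}, so $\underline{\d}\,\underline{\ii}_{\XX_\xi}(\Theta^L,\overline{\theta}^{(L,\overline{\ell})})=\underline{\ii}_{\XX_\xi}\,\underline{\d}(\Theta^L,\overline{\theta}^{(L,\overline{\ell})})$; substituting $\underline{\d}(\Theta^L,\overline{\theta}^{(L,\overline{\ell})})=\underline{\dd}(L,\overline{\ell})-(E_I,\overline{b}_I)\wedge\dd\phi^I$ from \eqref{eq: d(L,l)} and using that $(L,\overline{\ell})$ and $(E_I,\overline{b}_I)$ have vertical degree zero (so Cartan's formula and the Leibniz rule \eqref{eq: Leibniz imath} on $\F$ apply, together with \eqref{eq: LL_XX_xi=L_xi}) turns it into $\underline{\LL}_{\XX_\xi}(L,\overline{\ell})-(E_I,\overline{b}_I)\,\L_\xi\phi^I$. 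Subtracting and invoking Lemma \ref{lemma: L=LL+LL tilde}, $\underline{\L}_\xi-\underline{\LL}_{\XX_\xi}=\widetilde{\underline{\LL}}_{\widetilde{\XX}_\xi}$, gives the displayed identity; it is the $\xi$-analogue of \eqref{eq: (dJ,dj)}, the extra term $\widetilde{\underline{\LL}}_{\widetilde{\XX}_\xi}(L,\overline{\ell})$ being precisely the failure of $\XX_\xi$ to be a $\underline{\d}$-symmetry.

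Next I would pull back along $\peqsub{\jj}{\SS}:\Sol(\SS)\hookrightarrow\F$. Since $\peqsub{\jj}{\SS}$ acts only on the $\F$-factor it commutes with $\underline{\d}$, with $\underline{\imath}^*$, and with $\int_{(N,\lateral N)}$, while $\peqsub{\jj}{\SS}^*(E_I,\overline{b}_I)=0$ by the characterization of $\Sol(\SS)$; hence the second term above drops and $\peqsub{\jj}{\SS}^*\underline{\d}(J^\Theta_\xi,\overline{\jmath}^{(\Theta,\overline{\theta})}_\xi)=\peqsub{\jj}{\SS}^*\widetilde{\underline{\LL}}_{\widetilde{\XX}_\xi}(L,\overline{\ell})$. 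Integrating over $(N,\lateral N)$, applying the relative Stokes' theorem \eqref{eq: stokes pair} with $\underline{i}:\underline{\partial}(N,\lateral N)\hookrightarrow(N,\lateral N)$, and using $\underline{\partial}(N,\lateral N)=(\Sigma_1,\partial\Sigma_1)\cup(\Sigma_2,\partial\Sigma_2)$ with the two lids carrying opposite induced orientations, the boundary integral becomes $\peqsub{\jj}{\SS}^*\QQ^{(L,\overline{\ell}),\imath_2}_\xi-\peqsub{\jj}{\SS}^*\QQ^{(L,\overline{\ell}),\imath_1}_\xi$, which is the claim (the restriction $\peqsub{\jj}{\SS}^*$ on the right-hand side of the statement being suppressed, as elsewhere).

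The computation is essentially bookkeeping. The only delicate point is the orientation accounting in the last step — matching the sign convention that makes $\QQ^{\imath_2}_\xi$, rather than $\QQ^{\imath_1}_\xi$, appear with a plus sign — together with the consistent use of $(M,\lateral M)$ throughout, which is what guarantees that $\overline{\theta}^{(L,\overline{\ell})}$ and the currents are defined where \eqref{eq: d(L,l)} makes them available and that $\underline{\partial}(N,\lateral N)$ consists precisely of the two lateral-bounded slices. When there is no clean split between dynamical and background fields, one replaces $\widetilde{\underline{\LL}}_{\widetilde{\XX}_\xi}$ by $\underline{\L}_\xi-\underline{\LL}_{\XX_\xi}$ throughout, as explained after Lemma \ref{lemma: L=LL+LL tilde}, and nothing changes.
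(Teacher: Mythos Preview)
Your proposal is correct and follows essentially the same route as the paper: both establish the identity $\underline{\d}(J^\Theta_\xi,\overline{\jmath}^{(\Theta,\overline{\theta})}_\xi)=\widetilde{\underline{\LL}}_{\widetilde{\XX}_\xi}(L,\overline{\ell})+(E_I,\overline{b}_I)\,\L_\xi\phi^I$ via Cartan's formula, \eqref{eq: d dd=dd d}, \eqref{eq: d(L,l)}, \eqref{eq: LL_XX_xi=L_xi}, and Lemma~\ref{lemma: L=LL+LL tilde}, and then integrate over $(N,\lateral N)$ and apply the relative Stokes' theorem with the opposite orientations of the two lids. If anything, your version is slightly more explicit about when and why the pullback $\peqsub{\jj}{\SS}^*$ kills the $(E_I,\overline{b}_I)$ term.
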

	\begin{proof}\mbox{}\\
		We denote $\underline{i}:\underline{\partial}(N,\lateral N)\hookrightarrow (N,\lateral N)$. The exterior derivate of the $\xi$-current is
		\begin{align}\label{eq: (dJxi,djxi)}
		\begin{split}
		\underline{\d}\Big(J^\Theta_\xi,\overline{\jmath}^{(\Theta,\overline{\theta})}_\xi\Big)\!(\phi)&\overset{\eqref{def: objetos para L}}{=}\underline{\d}\left(\underline{\iota}_\xi(L,\overline{\ell})-\underline{\ii}_{\XX_\xi}\Big(\Theta^L,\overline{\theta}^{(L,\overline{\ell})}\Big)\right)\!(\phi)\updown{\eqref{eq: Cartan}\eqref{eq: d dd=dd d}}{\eqref{eq: d(L,l)}\eqref{eq: L=LL+LL tilde}\eqref{eq: LL_XX_xi=L_xi}}{=}\\
		&\hspace*{2.1ex} =\widetilde{\underline{\LL}}_{\widetilde{\XX}_\xi}(L,\overline{\ell})(\phi)+(E_I,\overline{b}_I)(\phi)\L_\xi\phi^I
		\end{split}\end{align} 
		Notice, by the way, that if $(L,\overline{\ell})$ is $\xi$-invariant, then \eqref{eq: (dJxi,djxi)} coincides with \eqref{eq: (dJ,dj)}.
		Integrating \eqref{eq: (dJxi,djxi)}
		\begin{align*}
		\int_{(N,\lateral N)}\Big(\widetilde{\underline{\LL}}_{\widetilde{\XX}_\xi}&(L,\overline{\ell})(\phi)+(E_I,\overline{b}_I)(\phi)\L_\xi\phi^I\Big)=\int_{(N,\lateral N)}\underline{\d}\Big(J^\Theta_\xi,\overline{\jmath}^{(\Theta,\overline{\theta})}_\xi\Big)\!(\phi)\overset{\eqref{eq: stokes pair}}{=}\\
		&=\int_{\underline{\partial}(N,\lateral N)}\underline{i}^*\Big(J^\Theta_\xi,\overline{\jmath}^{(\Theta,\overline{\theta})}_\xi\Big)\!(\phi)=\peqsub{\jj}{\SS}^*\QQ^{(L,\overline{\ell}),\imath_2}_\xi-\peqsub{\jj}{\SS}^*\QQ^{(L,\overline{\ell}),\imath_1}_\xi
		\end{align*}
		where we assume that $\imath_2(\Sigma)$ lies in the future of $\imath_1(\Sigma)$ and we have taken into account that each connected component of $\underline{\partial}(N,\lateral N)=(\Sigma_1,\partial\Sigma_1)\cup(\Sigma_2,\partial\Sigma_2)$ has opposite orientation.
	\end{proof}
	
	We see that a $\xi$-charge might depend on the Cauchy surface so, in particular, its pullback to the space of solutions is not a Hamiltonian function in general.
	
	\begin{lemma}\mbox{}
				\[
		\comentario\tikzmarkin{dQprima}(0.15,-0.5)(-0.15,0.65)
		\tikzmarkin{dQ}(0.15,-0.45)(-0.15,0.6)
	\dd\QQ^{(L,\overline{\ell}),\imath}_\xi=\ii_{\XX_\xi}\OOmega^\imath_\SS+\int_{(\Sigma,\partial\Sigma)}\underline{\imath}^*\Big(\underline{\imath}_\xi(E_I,\overline{b}_I)\wedge\dd\phi^I\Big)+\int_{(\Sigma,\partial\Sigma)}\underline{\imath}^*\widetilde{\underline{\LL}}_{\tilde{\XX}_\xi}\Big(\Theta^L,\overline{\theta}^{(L,\overline{\ell})}\Big)
		\tikzmarkend{dQ}
		\tikzmarkend{dQprima}\]
	\end{lemma}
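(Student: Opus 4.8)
The plan is to compute $\dd\QQ^{(L,\overline{\ell}),\imath}_\xi$ straight from the definitions and reduce everything to the fundamental variation formula \eqref{eq: d(L,l)}. Since $\dd$ passes through the integral over $M$ and through the fixed pullback $\underline{\imath}^*$ (a map of pairs on the $M$-factor commutes with the $\F$-exterior derivative $\underline{\dd}$), the first step is
\[
\dd\QQ^{(L,\overline{\ell}),\imath}_\xi=\int_{(\Sigma,\partial\Sigma)}\underline{\imath}^*\,\underline{\dd}\Big(J^\Theta_\xi,\overline{\jmath}_\xi^{(\Theta,\overline{\theta})}\Big),
\]
and by Definition \ref{def: objetos para L} the integrand splits as $\underline{\dd}\,\underline{\iota}_\xi(L,\overline{\ell})-\underline{\dd}\,\underline{\ii}_{\XX_\xi}\big(\Theta^L,\overline{\theta}^{(L,\overline{\ell})}\big)$, where I write $\underline{\iota}_\xi$ for the $M$-interior product appearing in the statement.

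For the first summand I would commute $\underline{\dd}$ past $\underline{\iota}_\xi$ using \eqref{eq: d dd=dd d}, insert \eqref{eq: d(L,l)}, and apply the Cartan formula \eqref{eq: Cartan} on the $M$-side. Since $\underline{\iota}_\xi$ annihilates the purely vertical factor $\dd\phi^I$, one has $\underline{\iota}_\xi\big((E_I,\overline{b}_I)\wedge\dd\phi^I\big)=\big(\underline{\iota}_\xi(E_I,\overline{b}_I)\big)\wedge\dd\phi^I$, so
\[
\underline{\dd}\,\underline{\iota}_\xi(L,\overline{\ell})=\big(\underline{\iota}_\xi(E_I,\overline{b}_I)\big)\wedge\dd\phi^I+\underline{\L}_\xi\Big(\Theta^L,\overline{\theta}^{(L,\overline{\ell})}\Big)-\underline{\d}\,\underline{\iota}_\xi\Big(\Theta^L,\overline{\theta}^{(L,\overline{\ell})}\Big).
\]
For the second summand I would use the Cartan formula \eqref{eq: Cartan} on the $\F$-side together with $\underline{\dd}\big(\Theta^L,\overline{\theta}^{(L,\overline{\ell})}\big)=\big(\Omega^\Theta,\overline{\omega}^{(\Theta,\overline{\theta})}\big)$ from \eqref{eq: omega and overline omega}, obtaining $-\underline{\dd}\,\underline{\ii}_{\XX_\xi}(\Theta^L,\overline{\theta})=-\underline{\LL}_{\XX_\xi}(\Theta^L,\overline{\theta})+\underline{\ii}_{\XX_\xi}\big(\Omega^\Theta,\overline{\omega}^{(\Theta,\overline{\theta})}\big)$. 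Adding the two pieces and invoking Lemma \ref{lemma: L=LL+LL tilde} to replace $\underline{\L}_\xi-\underline{\LL}_{\XX_\xi}$ by $\widetilde{\underline{\LL}}_{\widetilde{\XX}_\xi}$ leaves exactly four terms in $\underline{\dd}\big(J^\Theta_\xi,\overline{\jmath}_\xi^{(\Theta,\overline{\theta})}\big)$: the source term, $\widetilde{\underline{\LL}}_{\widetilde{\XX}_\xi}(\Theta^L,\overline{\theta})$, $-\underline{\d}\,\underline{\iota}_\xi(\Theta^L,\overline{\theta})$, and $\underline{\ii}_{\XX_\xi}(\Omega^\Theta,\overline{\omega})$.

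It then remains to apply $\int_{(\Sigma,\partial\Sigma)}\underline{\imath}^*$ to each term. The source term and the $\widetilde{\underline{\LL}}_{\widetilde{\XX}_\xi}(\Theta^L,\overline{\theta})$ term are already the second and third terms of the claimed identity. The term $\underline{\ii}_{\XX_\xi}(\Omega^\Theta,\overline{\omega}^{(\Theta,\overline{\theta})})$ gives $\ii_{\XX_\xi}\OOmega^\imath_\SS$ once one pulls $\underline{\ii}_{\XX_\xi}$ out of the $M$-integral and recognizes $\int_{(\Sigma,\partial\Sigma)}\underline{\imath}^*\underline{\dd}(\Theta^L,\overline{\theta})=\OOmega^\imath_\SS$ by Definition \ref{def: OOmega}. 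Finally, $-\underline{\d}\,\underline{\iota}_\xi(\Theta^L,\overline{\theta})$ is $\underline{\d}$-exact, so commuting $\underline{\imath}^*$ past $\underline{\d}$ and using the relative Stokes' theorem \eqref{eq: stokes pair} converts its integral over $(\Sigma,\partial\Sigma)$ into one over $\underline{\partial}(\Sigma,\partial\Sigma)=\varnothing$, hence $0$. Collecting the surviving contributions yields the stated formula.

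The computation is routine bookkeeping; the only points requiring care are (i) checking that each rearrangement of the operators $\dd$, $\underline{\dd}$, $\underline{\d}$, $\underline{\imath}^*$, $\underline{\iota}_\xi$, $\underline{\ii}_{\XX_\xi}$ is licensed by one of \eqref{eq: d^2=0}--\eqref{eq: d dd=dd d}, and (ii) the vanishing of the $\underline{\d}$-exact contribution, which rests precisely on the fact that the relative pair $(\Sigma,\partial\Sigma)$ has empty relative boundary — the same mechanism used in Proposition \ref{prop: OOmega_SS} to prove that $\OOmega_\SS$ is well defined.
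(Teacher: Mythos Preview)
Your proof is correct and follows essentially the same route as the paper's: commute $\underline{\dd}$ with $\underline{\iota}_\xi$, insert the variation formula \eqref{eq: d(L,l)}, apply the Cartan formula on both the $M$- and $\F$-sides, replace $\underline{\L}_\xi-\underline{\LL}_{\XX_\xi}$ by $\widetilde{\underline{\LL}}_{\widetilde{\XX}_\xi}$ via Lemma \ref{lemma: L=LL+LL tilde}, and kill the $\underline{\d}$-exact term with the relative Stokes' theorem on $(\Sigma,\partial\Sigma)$. The paper's version is more telegraphic but the logic and the cited identities are identical.
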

	\begin{proof}\mbox{}
		\begin{align*}
		\dd\QQ^{(L,\overline{\ell}),\imath}_\xi&\updown{\eqref{def: objetos para L}}{\eqref{eq: d dd=dd d}}{=}\int_{(\Sigma,\partial\Sigma)}\underline{\imath}^*\Big(\underline{\iota}_\xi\underline{\dd}(L,\overline{\ell})-\underline{\dd}\,\underline{\ii}_{\XX_\xi}\Big(\Theta^L,\overline{\theta}^{(L,\overline{\ell})}\Big)\Big)\updown{\eqref{eq: d(L,l)}\eqref{eq: Cartan}}{\eqref{eq: stokes pair}\eqref{eq: omega and overline omega}}{=}\\
		&=\int_{(\Sigma,\partial\Sigma)}\underline{\imath}^*\Big(\underline{\iota}_\xi(E_I,\overline{b}_I)\wedge\dd\phi^I+\big(\underline{\L}_\xi-\underline{\LL}_{\XX_\xi}\big)\Big(\Theta^L,\overline{\theta}^{(L,\overline{\ell})}\Big)+\underline{\ii}_{\XX_\xi}\Big(\Omega^\Theta,\overline{\omega}^{(\Theta,\overline{\theta})}\Big)\Big)
		\end{align*}
		And the results follows from \eqref{eq: L=LL+LL tilde} and \eqref{def: OOmega}.
	\end{proof}
	
	\subsubsection*{When is \texorpdfstring{$\boldsymbol{\XX_\xi}$}{Xₓ} a \texorpdfstring{$\boldsymbol{\d}$}{d}-symmetry?}
	
	We say that $\xi\in\mathfrak{X}(M)$ is a symmetry of $(L,\overline{\ell})\in\Lag(M)$, or that $(L,\overline{\ell})$ is $\xi$-invariant, if
	\begin{equation}\label{eq: L_xi=LL_XX}
	\underline{\L}_\xi(L,\overline{\ell})=\underline{\LL}_{\XX_\xi}(L,\overline{\ell})\qquad\qquad\equiv\qquad\qquad\widetilde{\underline{\LL}}_{\widetilde{\XX}_\xi}(L,\overline{\ell})=0
	\end{equation}
	Notice that, in this case, $(L_2,\overline{\ell}_2)=(L,\overline{\ell})+\underline{\d}(Y,\overline{y})$ is $\xi$-invariant if and only if $(Y,\overline{y})$ is $\xi$-invariant.\vspace*{2ex}
	
	As a final remark, notice that a sufficient (but not necessary) condition for $L$ to be $\xi$-invariant is that $\L_\xi\tilde{\phi}=0$ which, in turn, implies that $\xi$ leaves the background objects invariant. That happens, for instance, when we take $\xi$ as a Killing vector field of a fixed metric $g\in\widetilde{\F}$ of the theory.
	
	\begin{proposition}\label{proposition: XX_xi symmetry}\mbox{}\\
		If $(L,\overline{\ell})\in\Lag(M)$ is $\xi$-invariant, then $\XX_\xi$ is a $\underline{\d}$-symmetry of $\SS=[\![(L,\overline{\ell})]\!]$. 
	\end{proposition}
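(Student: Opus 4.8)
The plan is to show directly that $\underline{\LL}_{\XX_\xi}(L,\overline{\ell})$ is $\underline{\d}$-exact with a \emph{local} potential, since by Definition \ref{def: d-symmetry} this is precisely what it means for $\XX_\xi$ to be a $\underline{\d}$-symmetry of $(L,\overline{\ell})$, hence of $\SS=[\![(L,\overline{\ell})]\!]$. The $\xi$-invariance hypothesis \eqref{eq: L_xi=LL_XX} immediately lets me trade $\underline{\LL}_{\XX_\xi}(L,\overline{\ell})$ for the $M$-Lie derivative $\underline{\L}_\xi(L,\overline{\ell})$, so the whole statement reduces to exhibiting a local $\underline{\d}$-potential for $\underline{\L}_\xi(L,\overline{\ell})$. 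The one genuine observation needed is that $(L,\overline{\ell})$ sits in \emph{top} horizontal degree.

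Concretely, I would first record that $\OOmega^{(n+1,0)}\big((M,\partial M)\times\F\big)=\OOmega^{(n+1,0)}(M\times\F)\oplus\OOmega^{(n,0)}(\partial M\times\F)=0$, because $\dim M=n$ and $\dim\partial M=n-1$; hence $\underline{\d}(L,\overline{\ell})=0$. Applying Cartan's formula \eqref{eq: Cartan} on $M$ then gives $\underline{\L}_\xi(L,\overline{\ell})=\underline{\iota}_\xi\underline{\d}(L,\overline{\ell})+\underline{\d}\,\underline{\iota}_\xi(L,\overline{\ell})=\underline{\d}\,\underline{\iota}_\xi(L,\overline{\ell})$. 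Setting $\big(S^L_{\XX_\xi},\overline{s}^{(L,\overline{\ell})}_{\XX_\xi}\big):=\underline{\iota}_\xi(L,\overline{\ell})=\big(S^L_\xi,\overline{s}_{\overline{\xi}}^{\overline{\ell}}\big)$ — exactly the object appearing in Definition \ref{def: objetos para L} — and combining with \eqref{eq: L_xi=LL_XX} yields $\underline{\LL}_{\XX_\xi}(L,\overline{\ell})=\underline{\d}\big(S^L_{\XX_\xi},\overline{s}^{(L,\overline{\ell})}_{\XX_\xi}\big)$. Since $\underline{\iota}_\xi$ only contracts with the fixed field $\xi\in\mathfrak{X}(M)$ and $(L,\overline{\ell})$ is local, the pair $\big(S^L_{\XX_\xi},\overline{s}^{(L,\overline{\ell})}_{\XX_\xi}\big)$ is again a local element of $\OOmega^{(n-1,0)}\big((M,\partial M)\times\F\big)$, so this is the decomposition \eqref{eq: (L_X L,L_x l)} required by Definition \ref{def: d-symmetry}; therefore $\XX_\xi\in\Sym_{\d}(\SS)$. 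That the conclusion does not depend on the chosen representative is guaranteed by Lemma \ref{lemma: (S,s)} under the standing hypothesis $H^{n-1}(M,\partial M)=0$.

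There is no real obstacle here: the argument is two lines of Cartan calculus once one notes that a top-degree form has vanishing $\underline{\d}$. The only items deserving a word of care are (i) that $\xi$ restrict to a genuine vector field on the boundary so that $\underline{\iota}_\xi$ acts on the relative pair — which is exactly the standing assumption that $\xi$ is tangent to $\lateral M$ (extending the interval if one wants it tangent to $\Sigma_i,\Sigma_f$ as well), and (ii) locality of the potential, which is immediate. As a consistency check, with this choice of $(S,\overline{s})$ the $\XX_\xi$-current built from \eqref{eq: noether currents} coincides with the $\xi$-current of Definition \ref{def: objetos para L}, and \eqref{eq: (dJxi,djxi)} collapses to \eqref{eq: (dJ,dj)}, as already remarked after the flux law.
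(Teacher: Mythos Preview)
Your proof is correct and follows exactly the same route as the paper: use the $\xi$-invariance hypothesis \eqref{eq: L_xi=LL_XX} to replace $\underline{\LL}_{\XX_\xi}(L,\overline{\ell})$ by $\underline{\L}_\xi(L,\overline{\ell})$, then apply Cartan's formula together with the top-degree vanishing $\underline{\d}(L,\overline{\ell})=0$ to obtain $\underline{\d}\,\underline{\iota}_\xi(L,\overline{\ell})$, identifying the potential with $\big(S^L_\xi,\overline{s}_{\overline{\xi}}^{\overline{\ell}}\big)$. Your explicit remarks on locality, on the tangency of $\xi$ to $\lateral M$, and on the consistency with the $\xi$-current are welcome elaborations of points the paper leaves implicit or mentions only in passing.
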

	\begin{proof}\mbox{}
		\[\underline{\LL}_{\XX_\xi}(L,\overline{\ell})\overset{\eqref{eq: L_xi=LL_XX}}{=}\underline{\L}_\xi(L,\overline{\ell})\overset{\eqref{eq: LLdd=ddLL}}{=}\underline{\d}\,\underline{\iota}_\xi(L,\overline{\ell})\]
		where we recall that $\xi$ is tangent to $\lateral M$\!. From definition \ref{def: d-symmetry} we see that $\XX_\xi$ is a $\underline{\d}$-symmetry with
		\begin{equation}\label{eq: (S,s)=imat(L,l)}
		\Big(S_{\XX_\xi}^L,\overline{s}_{\XX_\xi}^{(L,\overline{\ell})}\Big)=\underline{\iota}_\xi( L,\overline{\ell})=\big(S_\xi^L,\overline{s}_{\overline{\xi}}^{\overline{\ell}}\big)
		\end{equation}
		\mbox{}\vspace*{-5.5ex}
		
	\end{proof}
	
	\begin{remark}\label{remark: inverse xi-inv}\mbox{}\\
		It is plausible that the converse holds. Indeed, if $\XX_\xi$ is a $\underline{d}$-symmetry, then
		\[\underline{\d}\Big(S^L_{\XX_\xi},\overline{s}^{(L,\overline{\ell})}_{\XX_\xi}\Big)=\underline{\LL}_{\XX_\xi}(L,\overline{\ell})\]
		We want now a  $\xi$-invariant $(L_2,\overline{\ell}_2)\equivint(L,\overline{\ell})$. It is not hard to prove that this is equivalent, by lemma \ref{lemma: (L2,l2)}, to solving
		\[\widetilde{\LL}_{\widetilde{\XX}_\xi}(Y,\overline{y})=\Big(S^L_{\XX_\xi},\overline{s}^{(L,\overline{\ell})}_{\XX_\xi}\Big)-\underline{\iota}_\xi(L,\overline{\ell})+\underline{\d}(M,\overline{m})\]
		for  $(Y,\overline{y})$ and $(M,\overline{m})$. There are always local solutions and, under the right topological conditions, we expect that those solutions can be glued together although we have not investigated this further.
	\end{remark}
	
	If $(L,\overline{\ell})$ is $\xi$-invariant, we have seen that $\XX_\xi\in\Sym_{\d}(\SS)$. Thus we have both the $\XX_\xi$-charge and the $\xi$-charge. Let us see that, in that case, they are the same.
	
	\begin{lemma}\mbox{}\\
		If $(L,\overline{\ell})\in\Lag(M)$ is $\xi$-invariant, then $\HH^{\SS,\imath}_{\XX_\xi}=\QQ^{(L,\overline{\ell}),\imath}_{\xi}$. In particular, $\peqsub{\jj}{\SS}^*\QQ^{(L,\overline{\ell}),\imath}_{\xi}$ is independent of $(L,\overline{\ell})$ and $\imath:\Sigma\hookrightarrow M$\!.
	\end{lemma}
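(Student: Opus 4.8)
The plan is to read off the claimed identity directly from the distinguished $\underline{\d}$-potential that was already produced in the proof of Proposition \ref{proposition: XX_xi symmetry}. Recall that when $(L,\overline{\ell})$ is $\xi$-invariant, equation \eqref{eq: (S,s)=imat(L,l)} gives the explicit choice $\big(S^L_{\XX_\xi},\overline{s}^{(L,\overline{\ell})}_{\XX_\xi}\big)=\underline{\iota}_\xi(L,\overline{\ell})=\big(S^L_\xi,\overline{s}^{\overline{\ell}}_{\overline{\xi}}\big)$. Substituting this into the definition of the $\XX_\xi$-current \eqref{eq: noether currents} and into the definition of the $\xi$-current in \ref{def: objetos para L}, both collapse to $\underline{\iota}_\xi(L,\overline{\ell})-\underline{\ii}_{\XX_\xi}\big(\Theta^L,\overline{\theta}^{(L,\overline{\ell})}\big)$, so that $\big(J^\Theta_{\XX_\xi},\overline{\jmath}^{(\Theta,\overline{\theta})}_{\XX_\xi}\big)=\big(J^\Theta_\xi,\overline{\jmath}^{(\Theta,\overline{\theta})}_\xi\big)$ as elements of $\OOmega^{(n-1,0)}((M,\lateral M)\times\F)$. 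Applying $\underline{\imath}^*$ and integrating over $(\Sigma,\partial\Sigma)$ then yields $\HH^{\SS,\imath}_{\XX_\xi}=\QQ^{(L,\overline{\ell}),\imath}_{\xi}$ at once, comparing Definitions \ref{def: Noether charge} and \ref{def: objetos para L}.

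The only subtlety, rather than an obstacle, is that $S^L_{\XX}$ in Definition \ref{def: d-symmetry} is only determined up to a $\underline{\d}$-closed term, hence (using $H^{n-1}(M,\partial M)=0$) up to a $\underline{\d}$-exact one. But the $\XX$-charge $\HH^{\SS,\imath}_{\XX}$ is insensitive to this ambiguity: a $\underline{\d}$-exact correction to $J^\Theta_\XX$ integrates to zero over $(\Sigma,\partial\Sigma)$ by the relative Stokes' theorem \eqref{eq: stokes pair}, since $\underline{\partial}(\Sigma,\partial\Sigma)=\varnothing$ (this is already recorded in the lemma immediately following Definition \ref{def: Noether charge}). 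Consequently it is legitimate to evaluate $\HH^{\SS,\imath}_{\XX_\xi}$ on the particular representative \eqref{eq: (S,s)=imat(L,l)}, for which the two currents coincide verbatim and no cohomological bookkeeping is needed.

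For the ``in particular'' clause I would chain two facts that are already available. First, the lemma immediately following Definition \ref{def: Noether charge} asserts that $\HH^{\SS,\imath}_{\XX}$ depends only on $\SS$, $\XX$ and $\imath$; in particular $\HH^{\SS,\imath}_{\XX_\xi}$ takes the same value for every $\xi$-invariant representative of $\SS$. Second, Proposition \ref{proposition: HHXX} gives $\peqsub{\jj}{\SS}^*\HH^{\SS,\imath}_{\XX_\xi}=\HH^\SS_{\overline{\XX}_\xi}$, depending only on $\SS$ and on $\overline{\XX}_\xi:=\XX_\xi|_{\Sol(\SS)}$; but $\overline{\XX}_\xi$ is itself representative-independent, because the defining formula $(\XX_\xi^I)_\phi=\L_\xi\phi^I$ in \eqref{eq: definicion XX_xi} never mentions $(L,\overline{\ell})$ and $\Sol(\SS)$ is common to all representatives. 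Combining with the first part, $\peqsub{\jj}{\SS}^*\QQ^{(L,\overline{\ell}),\imath}_{\xi}=\HH^\SS_{\overline{\XX}_\xi}$ is independent of $(L,\overline{\ell})$ and of $\imath$.

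I do not anticipate a genuine difficulty; the entire content is the identification \eqref{eq: (S,s)=imat(L,l)} forced by $\xi$-invariance, plus the handling of the (harmless) potential-level ambiguities. As a cross-check, one could instead deduce the ``in particular'' clause without invoking Proposition \ref{proposition: HHXX}: for two $\xi$-invariant representatives the interpolating $(Y,\overline{y})$ is itself $\xi$-invariant (as noted right after \eqref{eq: L_xi=LL_XX}), so $(\underline{\L}_\xi-\underline{\LL}_{\XX_\xi})(Y,\overline{y})=\widetilde{\underline{\LL}}_{\widetilde{\XX}_\xi}(Y,\overline{y})=0$ and the correction term in Lemma \ref{lemma: deg objetos L} vanishes even before pulling back to $\Sol(\SS)$; independence of $\imath$ then follows from the flux law of Proposition \ref{proposition: flux law}, whose right-hand side $\int_{(N,\lateral N)}\widetilde{\underline{\LL}}_{\widetilde{\XX}_\xi}(L,\overline{\ell})$ is zero by $\xi$-invariance.
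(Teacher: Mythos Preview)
Your proposal is correct and follows essentially the same route as the paper: plug the distinguished potential \eqref{eq: (S,s)=imat(L,l)} into the $\XX_\xi$-current to identify it with the $\xi$-current, then invoke Proposition \ref{proposition: HHXX} for the ``in particular'' clause. The paper's proof is just the one-line version of your first paragraph plus a reference to \ref{proposition: HHXX}; your additional discussion of the $(S,\overline{s})$-ambiguity and the alternative cross-check via Lemma \ref{lemma: deg objetos L} and the flux law are sound elaborations but not needed for the argument.
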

	\begin{proof}\mbox{}\\
		Plugging \eqref{eq: (S,s)=imat(L,l)} on the $\XX_\xi$-charge \ref{def: Noether charge} leads to the $\xi$-charge \ref{def: objetos para L}. The last statement is immediate from proposition \ref{proposition: HHXX}.
	\end{proof}
	
	It might seem strange that $\HH^{\SS,\imath}_{\XX_\xi}$ does not depend on the choice of Lagrangians while $\QQ^{(L,\overline{\ell}),\imath}_{\xi}$ does. Notice first that, from lemma \ref{lemma: deg objetos L}, we have
	\[\QQ^{(L_2,\overline{\ell}_2),\imath}_{\xi}=\QQ^{(L,\overline{\ell}),\imath}_{\xi}+\int_{(\Sigma,\partial\Sigma)}\underline{\imath}^*\underline{\widetilde{\LL}}_{\XX_\xi}(Y,\overline{y})\]
	
	The dependence on the representative comes from the fact that $(S_\xi^{L_2},\overline{s}_{\overline{\xi}}^{\overline{\ell}_2})$ does not transform as equation \eqref{eq: alpha beta def} unless $(\underline{\L}_\xi-\underline{\LL}_{\XX_\xi})(Y,\overline{y})=0$, in which case the last integral vanishes. Summarizing: if $\XX_\xi$ is a symmetry, then the $\xi$-charge does not depend on the representative but, if we pick a $\xi$-invariant pair of Lagrangian, the integrand is simpler in general.

	\begin{corollary}\label{proposition: xi Hamilton eq}\mbox{}\\
		If $(L,\overline{\ell})\in\Lag(M)$ is $\xi$-invariant
		\[
		\comentario
		\tikzmarkin{HamiltonianEqxiprima}(0.2,-0.33)(-0.2,0.55)
		\tikzmarkin{HamiltonianEqxi}(0.2,-0.28)(-0.2,0.5)
		\ii_{\overline{\XX}_\xi}\OOmega_\SS=\dd\peqsub{\jj}{\SS}^*\QQ^{(L,\overline{\ell}),\imath}_{\xi}
		\tikzmarkend{HamiltonianEqxi}
		\tikzmarkend{HamiltonianEqxiprima}
		\]
	\end{corollary}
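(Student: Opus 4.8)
The plan is to deduce this directly from the Hamilton Equation theorem for $\underline{\d}$-symmetries together with the identification of the $\XX_\xi$-charge with the $\xi$-charge in the $\xi$-invariant case. No new computation is needed; the work has already been done in the preceding results.

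First I would invoke Proposition \ref{proposition: XX_xi symmetry}: since $(L,\overline{\ell})$ is $\xi$-invariant, the induced vector field $\XX_\xi\in\campos(\F)$ is a $\underline{\d}$-symmetry of $\SS=[\![(L,\overline{\ell})]\!]$, with $\big(S_{\XX_\xi}^L,\overline{s}_{\XX_\xi}^{(L,\overline{\ell})}\big)=\underline{\iota}_\xi(L,\overline{\ell})=\big(S_\xi^L,\overline{s}_{\overline{\xi}}^{\overline{\ell}}\big)$ as recorded in \eqref{eq: (S,s)=imat(L,l)}. In particular $\XX_\xi$ falls within the scope of the Hamilton Equation theorem, so that, denoting $\overline{\XX}_\xi:=\XX_\xi|_{\Sol(\SS)}$, we have $\ii_{\overline{\XX}_\xi}\OOmega_\SS=\dd\HH^\SS_{\overline{\XX}_\xi}$, where $\HH^\SS_{\overline{\XX}_\xi}=\peqsub{\jj}{\SS}^*\HH^{\SS,\imath}_{\XX_\xi}$ by Proposition \ref{proposition: HHXX}.

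Next I would use the last Lemma above, which states precisely that for a $\xi$-invariant pair of Lagrangians $\HH^{\SS,\imath}_{\XX_\xi}=\QQ^{(L,\overline{\ell}),\imath}_{\xi}$ (this is just the observation that plugging \eqref{eq: (S,s)=imat(L,l)} into the definition of the $\XX_\xi$-charge in Definition \ref{def: Noether charge} reproduces the definition of the $\xi$-charge in Definition \ref{def: objetos para L}). Applying $\peqsub{\jj}{\SS}^*$ to both sides and substituting into the Hamilton equation of the previous paragraph yields
\[
\ii_{\overline{\XX}_\xi}\OOmega_\SS=\dd\peqsub{\jj}{\SS}^*\HH^{\SS,\imath}_{\XX_\xi}=\dd\peqsub{\jj}{\SS}^*\QQ^{(L,\overline{\ell}),\imath}_{\xi},
\]
which is the claimed identity.

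Since everything is a chain of already-proven facts, there is no serious obstacle; the only point deserving care is checking that $\XX_\xi$ genuinely lies in $\Sym_{\d}(\SS)$ so that the Hamilton Equation theorem applies, but this is exactly Proposition \ref{proposition: XX_xi symmetry}. One could alternatively remark that the result is independent of the chosen Cauchy embedding $\imath$ (by Proposition \ref{proposition: HHXX} applied to $\HH^{\SS,\imath}_{\XX_\xi}$, equivalently by the flux law of Proposition \ref{proposition: flux law} specialised to the $\xi$-invariant case where $\widetilde{\underline{\LL}}_{\widetilde{\XX}_\xi}(L,\overline{\ell})=0$), so the statement holds for every Cauchy embedding simultaneously.
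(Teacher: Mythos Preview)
Your argument is correct and is exactly the intended one: the paper states this as a corollary without proof, precisely because it follows by combining Proposition~\ref{proposition: XX_xi symmetry}, the Hamilton Equation theorem, and the preceding lemma identifying $\HH^{\SS,\imath}_{\XX_\xi}=\QQ^{(L,\overline{\ell}),\imath}_{\xi}$ in the $\xi$-invariant case.
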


	\subsection{Diff-invariant theories}\label{section: diff invariant}
	From now on, we assume that we have the action of diffeomorphisms $\mathrm{Diff}(M)\times\F\to\F$, given by $(\psi,\phi)\mapsto\psi^*\phi\in\F$ through pullbacks and/or pushforward according to the tensorial character of $\phi$. If $\F$ has no constraints, this is certainly the case. Otherwise, we might have to restrict the action to a smaller group $\mathrm{Diff}_0(M)$ of diffeomorphism respecting the constraints. If we consider Dirichlet conditions, where the values of the fields are zero at the boundary, we still have the full group of diffeomorphism acting on $\F$. However, for more complicated boundary conditions such as non-homogeneous Dirichlet conditions or constraints in the bulk, things may be a bit more complicated. For instance, if we consider the space of metrics with a fixed value at the boundary
	\[\F=\{g\in\mathrm{Met}(M)\ /\ \jmath^*g=\overline{g}_0\}\]
	then the diffeomorphism acting on $\F$ must restrict to a $\overline{g}_0$-isometry over the boundary i.e.~$\psi^*\overline{g}_0=\overline{g}_0$.\vspace*{2ex}
	
	An action $\SS$ is \textbf{Diff-invariant} if $\SS(\psi^*\phi)=\SS(\phi)$ for every $\psi\in\mathrm{Diff}(M)$. A form $\alpha\in\OOmega^{(r,0)}(M\times \F)$ is \textbf{Diff-invariant} if  $\alpha(\psi^*\phi)=(\psi^*\alpha)(\phi)$. It is important to realize the very different character of the action of the diffeomorphism in the last equation. On the LHS we are using the aforementioned action while on the RHS we use the usual action over $\OOmega^{(r,s)}(M\times\F)$, so $\psi^*\alpha\in\OOmega^{(r,0)}(M\times \F)$ which, in turn, implies that $(\psi^*\alpha)(\phi)\in\Omega^r(M)$ according to equation \eqref{eq: projection form}.\vspace*{2ex}
	
	If $(L,\overline{\ell})\in\Lag(M)$ is Diff-invariant, then so is $\SS=[\![(L,\overline{\ell})]\!]$. However, the converse is not true: we can build some $(L_2,\overline{\ell}_2)\equivint(L,\overline{\ell})$ which is not Diff-invariant by taking $(Y,\overline{y})$ not Diff-invariant in \eqref{eq: L2=L1}.
	
	\begin{lemma}\label{lemma: LL_XX=L_xi}\mbox{}\\
		If $(L,\overline{\ell})\in\Lag(M)$ is Diff-invariant, then $(L,\overline{\ell})$ is $\xi$-invariant for every $\xi\in\mathfrak{X}(M)$.
	\end{lemma}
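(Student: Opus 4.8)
The plan is to differentiate the defining identity of Diff-invariance along the flow of $\xi$ and read off \eqref{eq: L_xi=LL_XX}. First I would let $\{\psi_\tau\}_\tau\subset\mathrm{Diff}(M)$ be the (local) flow generated by $\xi$; since a diffeomorphism of a manifold with boundary preserves $\partial M$, each $\psi_\tau$ induces a map of pairs $\underline{\psi}_\tau:(M,\partial M)\to(M,\partial M)$. (This is where one uses that $\xi$ is tangent to the boundary; if $\xi$ is only tangent to $\lateral M$ one extends the interval $[t_i,t_f]$ so that $\xi$ becomes tangent to the lids too, or one runs the pointwise argument below on $\lateral M$ and on the interior separately.) By the definition of Diff-invariance applied to $(L,\overline{\ell})$, for every $\phi\in\F$ and every $\tau$,
\[(L,\overline{\ell})(\psi_\tau^*\phi)=\underline{\psi}_\tau^*\big((L,\overline{\ell})(\phi)\big).\]

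Next I would differentiate both sides at $\tau=0$ with $\phi$ fixed. On the left, the curve $\tau\mapsto\psi_\tau^*\phi$ lies in $\F$, passes through $\phi$ at $\tau=0$, and has velocity $\tfrac{\d}{\d\tau}\big|_{0}\psi_\tau^*\phi=\L_\xi\phi=(\XX_\xi)_\phi$ by \eqref{eq: definicion XX_xi}; hence the left-hand derivative is $\underline{\ii}_{\XX_\xi}\underline{\dd}(L,\overline{\ell})$ evaluated at $\phi$, and because $(L,\overline{\ell})$ has vertical degree zero we have $\underline{\ii}_{\XX_\xi}(L,\overline{\ell})=0$, so Cartan's formula \eqref{eq: Cartan} identifies this with $\big[\underline{\LL}_{\XX_\xi}(L,\overline{\ell})\big](\phi)$. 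On the right, $\underline{\psi}_\tau$ acts only on the $M$-factor with $\phi$ held fixed, so the derivative is the relative Lie derivative $\underline{\L}_\xi\big((L,\overline{\ell})(\phi)\big)$, which equals $\big[\underline{\L}_\xi(L,\overline{\ell})\big](\phi)$ since $\underline{\L}_\xi$ commutes with the projection onto $M$ described in section \ref{section: geometry MxF} (the same bookkeeping that gives $(\d\alpha)(\phi;\VV^i_\phi)=\d(\alpha(\phi;\VV^i_\phi))$ there). Equating and letting $\phi$ range over $\F$ yields $\underline{\LL}_{\XX_\xi}(L,\overline{\ell})=\underline{\L}_\xi(L,\overline{\ell})$, which is precisely the $\xi$-invariance condition \eqref{eq: L_xi=LL_XX}.

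The only genuinely delicate points are the ones that appear whenever $\F$ is treated as an honest infinite-dimensional manifold: that the $\mathrm{Diff}(M)$-action on $\F$ is differentiable in $\tau$ so that the identity may be differentiated, that $\tfrac{\d}{\d\tau}\big|_{0}\psi_\tau^*\phi$ is indeed $(\XX_\xi)_\phi$ (which uses the identification $T_\phi\F\cong\F$ coming from linearity of $\F$, with the usual case-by-case substitute in the non-linear setting of example \ref{example: parametrized YM}), and the boundary bookkeeping for $\xi$ near $\partial M$ mentioned above. I do not expect any of these to be a real obstacle; the content of the lemma is just that differentiating the functional relation $(L,\overline{\ell})(\psi^*\phi)=\underline{\psi}^*((L,\overline{\ell})(\phi))$ turns $\underline{\psi}^*$ into $\underline{\L}_\xi$ and the reparametrisation of $\phi$ into $\underline{\LL}_{\XX_\xi}$, which is exactly the equivalence \eqref{eq: L_xi=LL_XX} defining $\xi$-invariance.
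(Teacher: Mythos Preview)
Your proof is correct and follows essentially the same approach as the paper's: differentiate the Diff-invariance identity $(L,\overline{\ell})(\psi_\tau^*\phi)=\underline{\psi}_\tau^*((L,\overline{\ell})(\phi))$ along the flow of $\xi$ at $\tau=0$, identifying the $\phi$-side derivative with $\underline{\LL}_{\XX_\xi}(L,\overline{\ell})$ via Cartan's formula and the $M$-side derivative with $\underline{\L}_\xi(L,\overline{\ell})$. The paper runs the computation as a single chain of equalities starting from $\underline{\L}_\xi(L,\overline{\ell})(\phi)$ rather than differentiating the two sides separately, but the content is identical.
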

	\begin{proof}\mbox{}\\
		For a given $\xi\in\mathfrak{X}(M)$, we take $\{\psi_t\}_t$ a path of diffeomorphisms with $\psi_0=\mathrm{Id}$ and $\xi=\partial_t|_0\psi_t$. Then
		\begin{align*}
		\underline{\L}_\xi&(L,\overline{\ell})(\phi)=\left.\frac{\d}{\d t}\right|_{t=0}\psi_t^*(L,\overline{\ell})(\phi)=\left.\frac{\d}{\d t}\right|_{t=0}(L,\overline{\ell})(\psi_t^*\phi)=\underline{\dd}_\phi(L,\overline{\ell})\left(\!\left.\frac{\d}{\d t}\right|_{t=0}\!\psi_t^*\phi\right)=\\
		&=\underline{\dd}_\phi(L,\overline{\ell})(\L_\xi\phi)\overset{\eqref{eq: definicion XX_xi}}{=}\underline{\dd}_\phi(L,\overline{\ell})(\XX_\xi)=\Big(\underline{\ii}_{\XX_\xi}\underline{\dd}(L,\overline{\ell})\Big)\!(\phi)\overset{\eqref{eq: Cartan}}{=}\underline{\LL}_{\XX_\xi}(L,\overline{\ell})(\phi)
		\end{align*}	
		\mbox{}\vspace*{-6.5ex}
		
	\end{proof}
	
	With this last lemma, we can apply all the results of the previous section.

	\begin{proposition}\label{proposition: Diff invariant L S}\mbox{}
		\begin{itemize}
			\item If $(L,\overline{\ell})\in\Lag(M)$ is Diff-invariant, then $\XX_\xi\in\Sym_{\d}([\![(L,\overline{\ell})]\!])$ for every $\xi\in\mathfrak{X}(M)$.
			\item If $(L,\overline{\ell})\in\Lag(M)$ is Diff-invariant, then so is $\SS=[\![(L,\overline{\ell})]\!]$.
		\end{itemize}
	\end{proposition}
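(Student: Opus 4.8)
The plan is to obtain both statements as essentially immediate consequences of results already established above, so that the work is bookkeeping rather than new ideas. For the first bullet the key observation is that Diff-invariance is far stronger than $\xi$-invariance for a single $\xi$: by Lemma \ref{lemma: LL_XX=L_xi}, a Diff-invariant pair $(L,\overline{\ell})$ is $\xi$-invariant for \emph{every} $\xi\in\mathfrak{X}(M)$. Proposition \ref{proposition: XX_xi symmetry} then applies verbatim to each such $\xi$: the associated field $\XX_\xi\in\campos(\F)$ is a $\underline{\d}$-symmetry of $\SS=[\![(L,\overline{\ell})]\!]$, with potential $\big(S^L_{\XX_\xi},\overline{s}^{(L,\overline{\ell})}_{\XX_\xi}\big)=\underline{\iota}_\xi(L,\overline{\ell})$. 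Hence $\XX_\xi\in\Sym_{\d}([\![(L,\overline{\ell})]\!])$ for all $\xi$, which is exactly the claim. So this bullet is nothing more than chaining Lemma \ref{lemma: LL_XX=L_xi} into Proposition \ref{proposition: XX_xi symmetry}.

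For the second bullet I would argue directly from the definitions. Fix $\psi\in\mathrm{Diff}(M)$; it restricts to a diffeomorphism of $\partial M$ and therefore induces a map of pairs $\underline{\psi}\colon(M,\partial M)\to(M,\partial M)$. For $\phi\in\F$, the definition of the action gives
\[\SS(\psi^*\phi)=\int_{(M,\partial M)}(L,\overline{\ell})(\psi^*\phi)=\int_{(M,\partial M)}\underline{\psi}^*\big((L,\overline{\ell})(\phi)\big),\]
where the second equality is the Diff-invariance of the pair $(L,\overline{\ell})$ applied pointwise on $\F$, i.e. $L(\psi^*\phi)=\psi^*\big(L(\phi)\big)$ and $\overline{\ell}(\psi^*\phi)=(\psi|_{\partial M})^*\big(\overline{\ell}(\phi)\big)$. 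It then only remains to check $\int_{(M,\partial M)}\underline{\psi}^*(\alpha,\beta)=\int_{(M,\partial M)}(\alpha,\beta)$, which follows from the definition $\int_{(M,\partial M)}(\alpha,\beta)=\int_M\alpha-\int_{\partial M}\beta$ together with the ordinary change-of-variables formula $\int_M\psi^*\alpha=\int_M\alpha$ and $\int_{\partial M}(\psi|_{\partial M})^*\beta=\int_{\partial M}\beta$. Chaining these gives $\SS(\psi^*\phi)=\SS(\phi)$.

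The only point that needs a word of care is orientation: as throughout the paper ($M$ connected and oriented, with the orientations of $\Sigma$, $\lateral M$ and $\partial M$ fixed once and for all), $\mathrm{Diff}(M)$ must be read as the group of orientation-preserving diffeomorphisms, so that change of variables carries the correct sign; otherwise one would pick up a spurious global minus. With that convention there is no genuine obstacle — the first bullet is a corollary of Lemma \ref{lemma: LL_XX=L_xi} and Proposition \ref{proposition: XX_xi symmetry}, the second of the functoriality of the relative integral under pullback by a map of pairs — and indeed the second bullet was already asserted in the paragraph preceding Lemma \ref{lemma: LL_XX=L_xi}, the proof here merely supplying the elementary justification.
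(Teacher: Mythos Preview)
Your proposal is correct and matches the paper's approach. The paper gives no explicit proof of this proposition: it simply remarks ``With this last lemma, we can apply all the results of the previous section'' and then states the proposition, so the first bullet is indeed obtained by chaining Lemma~\ref{lemma: LL_XX=L_xi} into Proposition~\ref{proposition: XX_xi symmetry}, and the second bullet is the elementary change-of-variables argument already asserted in the paragraph preceding Lemma~\ref{lemma: LL_XX=L_xi}, exactly as you spell out.
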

	
	\begin{remark}\mbox{}\\
		Again, it is plausible that some sort of converse holds. Namely, if $\SS$ is Diff-invariant, then there exists a representative $(L,\overline{\ell})\in\Lag(M)$ that is Diff-invariant. A similar heuristic argument as the one used in \ref{remark: inverse xi-inv} applies here but, again, we have not investigated this issue further.
	\end{remark}
	
	\begin{lemma}\label{lemma: J=dQ}\mbox{}
		\begin{itemize}
			\item If $L$ is Diff-invariant, then there exists a $\xi$-charge $Q^\Theta_\xi\in\OOmega^{(n-2,0)}(M\times\Sol(\SS))$. In particular,
			\begin{equation}\label{eq: QQ=int borde}
			\QQ^{(L,\overline{\ell}),\imath}_\xi=\int_{\partial\Sigma}\overline{\imath}^*\left(Q^\Theta_\xi-\peqsub{\jj}{\SS}^*\overline{\jmath}_\xi^{(\Theta,\overline{\theta})}\right)
			\end{equation}
			\item If $(L,\overline{\ell})$ is Diff-invariant, then there exist $\xi$-charges $(Q_\xi^\Theta,\overline{q}_\xi^{(\Theta,\overline{\theta})})\in\OOmega^{(n-2,0)}((M,\lateral M)\times\Sol(\SS))$. In particular,
			\begin{equation}
			\QQ^\SS_\xi=0
			\end{equation}
		\end{itemize}
	\end{lemma}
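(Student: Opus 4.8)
The plan is to read everything off the on-shell closure of the $\xi$-Noether current, which is already encoded in \eqref{eq: (dJxi,djxi)}, and to supply the one genuinely new ingredient: the existence of a \emph{local} potential for a $\underline{\d}$-closed current that depends linearly on $\xi$ and finitely many of its derivatives.

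For the first bullet, note that the computation in the proof of Lemma \ref{lemma: LL_XX=L_xi} uses only Diff-invariance of $L$ (not of the pair), so $\L_\xi L=\LL_{\XX_\xi}L$, i.e.~$\widetilde{\LL}_{\widetilde{\XX}_\xi}L=0$ on $M\times\F$. The bulk component of the always-valid identity \eqref{eq: (dJxi,djxi)} then reads $\d J^\Theta_\xi=\widetilde{\LL}_{\widetilde{\XX}_\xi}L+E_I\,\L_\xi\phi^I=E_I\,\L_\xi\phi^I$ on $M\times\F$, hence $\peqsub{\jj}{\SS}^*\d J^\Theta_\xi=0$ because $E_I$ vanishes on $\Sol(\SS)$. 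Since $J^\Theta_\xi$ is local in $\phi$ and linear in $\xi$ and finitely many of its derivatives, the jet-space argument of Appendix \ref{Appendix: jets} (the classical statement, cf.~\cite{iyer1994some}, that the diffeomorphism Noether current is, on shell, the exterior derivative of a local potential) produces $Q^\Theta_\xi\in\OOmega^{(n-2,0)}(M\times\Sol(\SS))$ with $\peqsub{\jj}{\SS}^*J^\Theta_\xi=\d Q^\Theta_\xi$. Then, by definition of the relative integral, $\peqsub{\jj}{\SS}^*\QQ^{(L,\overline{\ell}),\imath}_\xi=\int_\Sigma\imath^*\peqsub{\jj}{\SS}^*J^\Theta_\xi-\int_{\partial\Sigma}\overline{\imath}^*\peqsub{\jj}{\SS}^*\overline{\jmath}^{(\Theta,\overline{\theta})}_\xi=\int_\Sigma\d(\imath^*Q^\Theta_\xi)-\int_{\partial\Sigma}\overline{\imath}^*\peqsub{\jj}{\SS}^*\overline{\jmath}^{(\Theta,\overline{\theta})}_\xi$, and ordinary Stokes on $\Sigma$, with the orientation conventions of Section \ref{subsection: M space-time}, turns the first term into $\int_{\partial\Sigma}\overline{\imath}^*Q^\Theta_\xi$, which is exactly \eqref{eq: QQ=int borde}.

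For the second bullet, Diff-invariance of the pair $(L,\overline{\ell})$ gives, by Lemma \ref{lemma: LL_XX=L_xi}, $\widetilde{\underline{\LL}}_{\widetilde{\XX}_\xi}(L,\overline{\ell})=0$, so \eqref{eq: (dJxi,djxi)} becomes $\underline{\d}\big(J^\Theta_\xi,\overline{\jmath}^{(\Theta,\overline{\theta})}_\xi\big)=(E_I,\overline{b}_I)\,\L_\xi\phi^I$, which is $\underline{\d}$-closed on $(M,\lateral M)\times\Sol(\SS)$. Running the same local-potential argument inside the relative framework --- where $(M,\lateral M)$ plays the role of a manifold without boundary, cf.~Section \ref{section: geometry Mxpartial M} --- yields local $\xi$-potentials $(Q^\Theta_\xi,\overline{q}^{(\Theta,\overline{\theta})}_\xi)\in\OOmega^{(n-2,0)}((M,\lateral M)\times\Sol(\SS))$ with $\peqsub{\jj}{\SS}^*\big(J^\Theta_\xi,\overline{\jmath}^{(\Theta,\overline{\theta})}_\xi\big)=\underline{\d}\big(Q^\Theta_\xi,\overline{q}^{(\Theta,\overline{\theta})}_\xi\big)$. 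Hence, setting $\QQ^\SS_\xi:=\peqsub{\jj}{\SS}^*\QQ^{(L,\overline{\ell}),\imath}_\xi$ (well defined by the previous lemma), $\QQ^\SS_\xi=\int_{(\Sigma,\partial\Sigma)}\underline{\imath}^*\underline{\d}\big(Q^\Theta_\xi,\overline{q}^{(\Theta,\overline{\theta})}_\xi\big)=\int_{(\Sigma,\partial\Sigma)}\underline{\d}\,\underline{\imath}^*\big(Q^\Theta_\xi,\overline{q}^{(\Theta,\overline{\theta})}_\xi\big)=\int_{\underline{\partial}(\Sigma,\partial\Sigma)}(\,\cdots)$ by the relative Stokes' theorem \eqref{eq: stokes pair}, and this vanishes because $\underline{\partial}(\Sigma,\partial\Sigma)=(\partial\Sigma\setminus\partial\Sigma,\partial(\partial\Sigma))=\varnothing$; that is, $\QQ^\SS_\xi=0$.

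The main obstacle is precisely the outsourced step: the existence of a \emph{local} potential for the Noether current on shell. Theorem \ref{theorem: Wald exact - close} does not apply verbatim, since the current has vertical degree $s=0$ and need not vanish at $\phi=0$ (e.g.~for Lagrangians with a cosmological-constant or a source term). What rescues the argument is the special structure of a diffeomorphism current --- its local and \emph{linear} dependence on $\xi$ and finitely many of its derivatives --- which, through the usual Noether-charge computation on the relevant jet bundle (the relative analogue of the Iyer--Wald reasoning), lets one read off the potential from the coefficients of the top derivatives of $\xi$, the remainder being proportional to $(E_I,\overline{b}_I)$ and hence killed by $\peqsub{\jj}{\SS}^*$. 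I would either carry this out in the jet language of Appendix \ref{Appendix: jets} or invoke it, after which only the bookkeeping displayed above remains.
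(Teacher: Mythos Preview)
Your argument is correct in spirit and the bookkeeping in both bullets (Stokes, relative Stokes, $\underline{\partial}(\Sigma,\partial\Sigma)=\varnothing$) is fine, but you take a different route to the existence of the local potential than the paper does, and you leave that step somewhat open.

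You view $J^\Theta_\xi$ as a form on $M\times\Sol(\SS)$ (field variable $\phi$), correctly note that Theorem~\ref{theorem: Wald exact - close} does not apply because $s=0$ and the current need not vanish at $\phi=0$, and then fall back on the Iyer--Wald ``integrate by parts in the derivatives of $\xi$'' mechanism, which you invoke rather than perform. That is a valid strategy, but it imports an external result. The paper sidesteps the obstruction by a change of viewpoint: it regards $\xi$ itself as the dynamical field, writing $J^\Theta_{\bullet}(\phi)\in\OOmega^{(n-1,0)}(M\times\F')$ with $\F':=\mathfrak{X}(M)$ and $\phi\in\Sol(\SS)$ fixed. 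In this new bicomplex the vertical degree is still $0$, but the \emph{other} hypothesis of Theorem~\ref{theorem: Wald exact - close} is now trivially met, since the current is linear in $\xi$ and hence $J^\Theta_0(\phi)=0$. The horizontal exactness theorem then applies directly and produces $Q^\Theta_\xi$ (and, with the relative version, the pair $(Q^\Theta_\xi,\overline{q}^{(\Theta,\overline{\theta})}_\xi)$) without any outside input. In effect, the very ``linear dependence on $\xi$'' that you identify as the rescue is precisely what the paper exploits---but by reinterpreting the field space rather than by redoing the Noether-charge computation. Your approach would ultimately reproduce the same potential; the paper's is shorter and stays entirely within the machinery already built.
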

	\begin{proof}\mbox{}\\
		We can view the Noether current as $J^\Theta_{\bullet}\in\OOmega^{(n-1,0)}(M\times(\F\times\mathfrak{X}(M)))$ i.e.~taking $\xi\in\mathfrak{X}(M)$ as a dynamical field. We now fix $\phi\in\F$, so we have $J^\Theta_{\bullet}(\phi)\in\OOmega^{(n-1,0)}(M\times\F')$ where $\F':=\mathfrak{X}(M)$. From equations \eqref{eq: (dJxi,djxi)} and \eqref{eq: L_xi=LL_XX}, we deduce
		\[\d J^\Theta_\bullet(\phi)=E_I(\phi)\L_\bullet\phi^I\in\OOmega^{(n,0)}(M\times\F')\]
		If $\phi\in\Sol(\SS)$ we see that $J^\Theta_\bullet(\phi)$ is closed and $J_0^\Theta(\phi)=0$ so, by the horizontal exactness theorem \ref{theorem: Wald exact - close} applied to $\F'$, it follows that it is $\d$-exact. The same argument using pairs applies to the second statement. The values of the $\xi$-charges are obtained applying the (relative) Stokes' theorem.
	\end{proof}
	\begin{remark}\mbox{}\\
		Notice that in \eqref{eq: QQ=int borde} we could have considered the $\xi$-potential $Q^\Theta_\xi+Q'$ for some $Q'\in\Omega^{n-2}(M)$ closed but not exact. It might seem that an additional term appears but in fact it vanishes
		\[\int_{\partial\Sigma}\overline{\imath}^*Q'=\int_{\Sigma}\imath^*\d Q'=0\qquad\longrightarrow\qquad\text{in particular }\overline{\imath}^*Q'\text{ is exact}\]
	\end{remark}

	\section{CPS-Algorithm}\label{section: summary of algorithm}
	In this section we provide the following four-step algorithm, that we denote CPS-algorithm, to obtain the symplectic structure over the space of solutions.
	
	\begin{enumerate}[label=\protect\circled{\arabic*}]\setcounter{enumi}{-1}
		\item Given an action $\SS:\F\to\R$ of a well defined theory, choose any Lagrangians $(L,\overline{\ell})$ with
		\[\SS=\int_M L-\int_{\partial M}\overline{\ell}\]
		\item Compute $\dd L=E_I\wedge\dd\phi^I+\d\Theta^L$. Choose any $\Theta^L$.
		\item Compute $\dd\overline{\ell}-\jmath^*\Theta^L=\overline{b}_I\wedge\dd\phi^I-\d\overline{\theta}^{(L,\overline{\ell})}$  over $\lateral M$\!. Choose any $\overline{\theta}^{(L,\overline{\ell})}$.
		\item Define $\Sol(\SS)=\{\phi\in\F\ /\ E_I(\phi)=0, \ \ \overline{b}_I(\phi)=0\}$ and the inclusion $\peqsub{\jj}{\SS}:\Sol(\SS)\hookrightarrow\F$.
		\item Compute the presymplectic structure
		\[\OOmega_\SS^\imath=\dd\left(\int_\Sigma\imath^*\Theta^L-\int_{\partial\Sigma}\overline\imath^*\overline{\theta}^{(L,\overline{\ell})}\right)\qquad\longrightarrow\qquad\OOmega_\SS=\peqsub{\jj}{\SS}^*\OOmega^\imath_\SS\]
		where $\imath:\Sigma\hookrightarrow M$ is a Cauchy embedding and $\overline{\imath}:=\imath|_{\partial\Sigma}:\partial\Sigma\hookrightarrow\lateral M$ is its restriction.
	\end{enumerate}
	Once we have the presymplectic structure, it is useful to add the following two steps that will provide a deeper insight into the theory at hand.
	
	\begin{enumerate}[label=\protect\circled{\arabic*}]\setcounter{enumi}{4}
		\item Study symmetries. Study if $\XX_\xi$ is a $\underline{\d}$-symmetry. Obtain $\xi$-currents, $\xi$-charges, and $\xi$-flux laws.
		\[\begin{array}{|l}
		\!J^\Theta_\xi=\iota_\xi L-\ii_{\XX_\xi}\Theta\\
		\!\overline{\jmath}^{(\Theta,\overline{\theta})}_{\xi}=-\iota_{\overline{\xi}}\overline{\ell}-\ii_{\XX_\xi}\overline{\theta}
		\end{array}\qquad\qquad\qquad \QQ^{(L,\overline{\ell}),\imath}_\xi=\int_\Sigma\imath^*J^\Theta_\xi-\int_{\partial\Sigma}\overline{\imath}^*\overline{\jmath}^{(\Theta,\overline{\theta})}_{\xi}\]
		\item If possible, compare with the presymplectic structure coming from the Hamiltonian formulation.
	\end{enumerate}

	\section{Examples}\label{section: examples}
	Let $(M,g)$ be a connected, oriented, globally hyperbolic $n$-space-time with boundary (possibly empty). Let $\jmath:\partial M\hookrightarrow M$ be the inclusion and $\overline{g}:=\jmath^*g$. Without loss of generality, we define the Lagrangians instead of the action. Moreover, as we have to perform first the computations over the bulk and subsequently over the boundary, we do not use the relative notation. Nonetheless, keep in mind that several results, like the $\xi$-charges, admit a very compact expression using the relative framework.
	
	\subsection{Scalar field with Robin boundary conditions}
	We consider $\F=\Omega^0(M)$ and define
	
	\[\definicion
	\tikzmarkin{definitionL1prima}(0.25,-0.42)(-0.25,0.62)
	\tikzmarkin{definitionL1}(0.2,-0.42)(-0.2,0.62)
	L(\phi)=\left(\frac{1}{2}g^{-1}(\d\phi,\d\phi)+V(\phi)\right)\!\vol_g\qquad\qquad\overline{\ell}(\phi)=\frac{1}{2}f\cdot \overline{\phi}^2\vol_{\overline{g}}
	\tikzmarkend{definitionL1}
	\tikzmarkend{definitionL1prima}
	\]
	for some $f:\partial M\to\R$. We have defined $\overline{\phi}:=\jmath^*\phi$.
	\begin{flalign*}
	&\pushleft{\circled{1}}\MoveEqLeft[1]
	\dd_\phi L&=\Big(\nabla^{\alpha}\phi\nabla_{\!\alpha}\dd\phi+V'(\phi)\dd\phi\Big)\vol_g=\\
	&\MoveEqLeft[1]&=\nabla_{\!\alpha}(\dd\phi\nabla^\alpha\phi)\vol_g-\Big(\square_g\phi-V'(\phi)\Big)\dd\phi\vol_g=\\
	&\MoveEqLeft[1]&=-\Big(\square_g\phi-V'(\phi)\Big)\vol_g\dd\phi+\d\big(\iota_{\vec{V}}\vol_g\big)\quad\qquad\qquad\longrightarrow\quad\qquad\qquad\Theta^L(\phi)=\iota_{\vec{V}}\vol_g\MoveEqLeft[1]
	\end{flalign*}
	where we define $\square_g:=g^{\alpha\beta}\nabla_{\!\alpha}\nabla_{\!\beta}$ and $\vec{V}:=\dd\phi\vec{\nabla}\phi$. Notice that in the last equality we have used that, by definition of divergence, $(\mathrm{div}_g\vec{V})\vol_g=\L_{\vec{V}}\vol_g=\d\iota_{\vec{V}}\vol_g$.
	\begin{flalign*}
	&\pushleft{\circled{2}}\MoveEqLeft[1] 
	\dd_\phi\overline{\ell}-\jmath^*\Theta^L(\phi)&=f\overline{\phi}\,\dd\overline{\phi}\vol_{\overline{g}}-\jmath^*(\iota_{\vec{V}}\vol_g)\overset{\eqref{eq: orientation sigma}}{=}\\
	&\MoveEqLeft[1]&=\left(f\overline{\phi}\,\dd\overline{\phi}-\jmath^*(\nu_\alpha V^\alpha)\right)\vol_{\overline{g}}=\\
	&\MoveEqLeft[1]&=-\left(\jmath^*\nabla_{\!\vec{\nu}}\phi-f\overline{\phi}\right)\vol_{\overline{g}}\,\dd\overline{\phi}\quad\qquad\longrightarrow\quad\qquad\overline{\theta}^{(L,\overline{\ell})}(\phi)=0\MoveEqLeft[1]
	\end{flalign*}
	\begin{flalign*}
	&\pushleft{\circled{3}}\MoveEqLeft[1]& 
	\begin{array}{|l}
	E(\phi)=-\Big(\square_g\phi-V'(\phi)\Big)\vol_g\\%
	\overline{b}(\phi)=-\left(\jmath^*\nabla_{\!\vec{\nu}}\phi-f\overline{\phi}\right)\vol_{\overline{g}}%
	\end{array}\qquad \Sol\left([\![(L,\overline{\ell})]\!]\right)=\Big\{\phi\in\Omega^0(M)\ /\ E(\phi)=0,\quad \overline{b}(\phi)=0\Big\}
	\end{flalign*} 
	Notice that we obtain the so-called Robin boundary conditions $\jmath^*\nabla_{\!\vec{\nu}}\phi=f\overline{\phi}$. As a particular case, if $f=0$, we obtain Neumann boundary conditions $\jmath^*\nabla_{\!\vec{\nu}}\phi=0$.
	
	\begin{flalign*}
	&\pushleft{\circled{4}}\MoveEqLeft[1]
	(\OOmega_\SS^\imath)_\phi&=\dd\int_\Sigma\imath^*\Theta^L(\phi)-\dd\int_{\partial\Sigma}\overline{\imath}^*\overline{\theta}^{(L,\overline{\ell})}(\phi)=\\
	&\MoveEqLeft[1]&=\int_\Sigma\dd\imath^*(\iota_{\vec{V}}\vol_g)-0\overset{\eqref{eq: orientation sigma}}{=}-\int_\Sigma\dd \imath^*\!\left(n^\alpha V_\alpha \vol_\gamma\right)=\\
	&\MoveEqLeft[1]&=-\int_\Sigma  \dd(\dd\varphi \,\imath^*\nabla_{\!\vec{n}}\phi) \vol_\gamma\overset{\eqref{eq: Leibniz}}{=}\int_\Sigma \dd\varphi \wwedge \dd\imath^*\!(\nabla_{\!\vec{n}}\phi) \vol_\gamma=\\
	&\MoveEqLeft[1]&=\int_\Sigma \dd\varphi \wwedge \imath^*\nabla_{\vec{n}}\dd\phi \vol_\gamma=\int_\Sigma \dd\varphi \wwedge \dd\imath^*\!\left(\L_{\vec{n}}\phi\right)\vol_\gamma\MoveEqLeft[1]
	\end{flalign*} 
	where $\vec{n}\in\mathfrak{X}(M)$ is the vector field $g$-normal to $\imath(\Sigma)\subset M$ and $\varphi:=\imath^*\phi\in\Omega^0(\Sigma)$. Recall that $\dd$ only acts on $\phi$ and not to $g$, as the latter is a fixed background structure of $M$\!.\vspace*{2ex}
	
	\circled{5} Using \eqref{eq: L_xi=LL_XX} and \ref{proposition: XX_xi symmetry}, we see that $\XX_\xi$ is a symmetry if and only if $\L_\xi g=0$ and $\L_{\overline{\xi}} f=0$. Using now \eqref{eq: orientation sigma},  \eqref{eq: reverse orientation}, and \ref{def: objetos para L} we obtain
	\[\QQ^{(L,\overline{\ell}),\imath}_\xi(\phi)=\int_\Sigma n_\alpha \xi_\beta\left(\nabla^\alpha\phi\nabla^\beta\phi-g^{\alpha\beta}\left[\frac{1}{2}g^{-1}(\d\phi,\d\phi)+V(\phi)\right]\right)\vol_\gamma-\int_{\partial\Sigma}\overline{m}_{\overline{\alpha}}\overline{\xi}_{\overline{\beta}}\left(\overline{g}^{\overline{\alpha}\overline{\beta}}\frac{1}{2}f\cdot{}\overline{\phi}^2\right)\vol_{\overline{\gamma}}\]
	The first term in parenthesis is twice the energy-momentum tensor $T^{\alpha\beta}$ given by $\widetilde{\dd}L=(T^{\alpha\beta}\vol_g)\widetilde{\dd} g_{\alpha\beta}$. Analogously for the boundary with $\widetilde{\dd}\overline{\ell}=(\overline{t}^{\overline{\alpha}\overline{\beta}}\vol_{\overline{g}})\widetilde{\dd}\overline{g}_{\overline{\alpha}\overline{\beta}}$. This leads precisely to the flux law \ref{proposition: flux law}.\vspace*{2ex}
	
	\circled{6} Let us now consider the Hamiltonian decomposition. Given an embedding $\imath:\Sigma\hookrightarrow M$, we can break the objects involved into the tangential and perpendicular part. In this case we have $\varphi:=\imath^*\phi$ and the metric $g\leftrightarrow(\gamma,N,\vec{N})$ where $\gamma:=\imath^*g$, $N$ is the lapse, and $\vec{N}$ the shift. The momentum, after identifying the cotangent with the tangent bundle (so we work with densities), is given \cite{margalef2016hamiltonianb,tesis} by
	\begin{equation}\label{eq: p campo esacalar}
	p=\frac{v-\L_{\vec{N}}\varphi}{N}=\imath^*\!\left(\frac{\L_{\partial_t}\phi-\L_{\vec{N}}\phi}{N}\right)=\imath^*\L_{\vec{n}}\phi
	\end{equation}
	where $v\in T_\phi \F$. The canonical symplectic form of the cotangent bundle is
	\[\Omega_{(\varphi,p)}((V_\varphi,V_p),(W_\varphi,W_p))=\int_\Sigma\Big(V_\varphi W_p-W_\varphi V_p\Big)\vol_\gamma\quad\qquad\equiv\quad\qquad\Omega_{(\varphi,p)}=\int_\Sigma\dd\varphi\wwedge\dd p \,\vol_\gamma\]
	while over $\Sol(\SS)$ we have
	\[(\OOmega_\SS)_\phi=\int_\Sigma \dd(\imath^*\phi) \wwedge \dd\imath^*\!\left(\L_{\vec{n}}\phi\right)\vol_\gamma\]
	So, by using equation \eqref{eq: p campo esacalar}, we obtain $(\mathcal{P}^*\Omega)_\phi=(\OOmega_\SS)_{\phi}$, where $\mathcal{P}:\Sol(\SS)\to\mathrm{CDS}(\Sigma)$ is the polarization map given the initial Cauchy data $(\imath^*\phi,\imath^*\L_{\vec{n}}\phi)\in T^*\Omega^0(\Sigma)$. 
	
	\subsection{Scalar field with Dirichlet boundary conditions}
	We consider $\F=\Omega^0(M)_0$ i.e.~scalar fields $\phi$ of $M$ with $\overline{\phi}:=\jmath^*\phi=0$ (in particular, $\dd\overline{\phi}=0$). We define
	\[\definicion
	\tikzmarkin{definitionL2prima}(0.25,-0.42)(-0.25,0.62)
	\tikzmarkin{definitionL2}(0.2,-0.42)(-0.2,0.62)
	L(\phi)=\left(\frac{1}{2}g^{-1}(\d\phi,\d\phi)+V(\phi)\right)\!\vol_g\qquad\qquad\overline{\ell}(\phi)=0
	\tikzmarkend{definitionL2}
	\tikzmarkend{definitionL2prima}
	\]

	\begin{flalign*}
	&\pushleft{\circled{1}}\MoveEqLeft[1]& 
	\dd_\phi L=-\Big(\square_g\phi-V'(\phi)\Big)\vol_g\dd\phi+\d\big(\iota_{\vec{V}}\vol_g\big)\qquad\longrightarrow\qquad\Theta^L(\phi)=\iota_{\vec{V}}\vol_g\MoveEqLeft[1]
	\end{flalign*}
	\begin{flalign*}
	&\pushleft{\circled{2}}\MoveEqLeft[1]&\dd_\phi\overline{\ell}-\jmath^*\Theta^L(\phi)=-\jmath^*\nabla_{\!\vec{\nu}}\phi\vol_{\overline{g}}\,\dd\overline{\phi}=0\qquad\longrightarrow\qquad\overline{\theta}^{(L,\overline{\ell})}(\phi)=0\MoveEqLeft[1]
	\end{flalign*}
	\begin{flalign*}
	&\pushleft{\circled{3}}\MoveEqLeft[1]&
	\begin{array}{|l}
	E(\phi)=-\Big(\square_g\phi-V'(\phi)\Big)\vol_g\\
	\overline{b}(\phi)=0
	\end{array}\qquad \Sol\left([\![(L,\overline{\ell})]\!]\right)=\Big\{\phi\in\Omega^0(M)_0\ /\ E(\phi)=0,\quad \overline{b}(\phi)=0\Big\}
	\end{flalign*}
	\begin{flalign*}
	&\pushleft{\circled{4}}\MoveEqLeft[1]&
	(\OOmega_\SS^\imath)_\phi=\int_\Sigma \dd\varphi \wwedge \imath^*\!\left(\L_{\vec{n}}\dd\phi\right)\vol_\gamma\MoveEqLeft[1]
	\end{flalign*}
	
	\circled{5}  This is analogous to the previous case i.e.~$\XX_\xi$ is a symmetry if and only if $\L_\xi g=0$.\vspace*{2ex}
	
	\circled{6} This is analogous to the previous case.\vspace*{1ex}
	
	\begin{remark}\mbox{}\\
		From now on, unless otherwise stated, we consider Neumann boundary conditions i.e.~$\overline{\ell}=0$ and $\phi\in\F$ is arbitrary over the boundary. Of course, it is easy to add a Robin term at the boundary or fix the values at the boundary to obtain Dirichlet conditions.
	\end{remark}

	\subsection{Some curious examples}\label{subsection: curious examples}
	\subsubsection*{Scalar field derived from Lagrange multipliers}
	It is interesting to obtain the scalar field equation, with $V=0$ for simplicity, using Lagrange multipliers. For that we consider $\F=\Omega^0(M)\times\Omega^0(M)$.
	\[\definicion
	\tikzmarkin{definitionL3prima}(0.25,-0.2)(-0.25,0.45)
	\tikzmarkin{definitionL3}(0.2,-0.2)(-0.2,0.45)
	L_1(\phi,\lambda)=-\lambda\square_g\phi\,\vol_g\qquad\qquad \overline{\ell}_1(\phi,\lambda)=-\overline{\lambda}\jmath^*(\nabla_{\!\vec{\nu}}\phi)\vol_{\overline{g}}
	\tikzmarkend{definitionL3}
	\tikzmarkend{definitionL3prima}\]
	
	When we vary in $\lambda$, we obtain directly the Klein-Gordon equation with Neumann boundary conditions. However, we also have to vary in $\phi$ and for that, it is better to take $Y=\iota_{\lambda\vec{\nabla}\phi}\vol_g$ in \eqref{eq: L2=L1}, to obtain the equivalent Lagrangians $L_2(\phi,\lambda)=g^{-1}(\d\phi,\d\lambda)\vol_g$ and $\overline{\ell}_2=0$.
	
	\begin{flalign*}
	&\pushleft{\circled{1}}\MoveEqLeft[1]&\dd_{(\phi,\lambda)} L_2=-\Big(\dd\phi\square_g\lambda+\dd\lambda\square_g\phi\Big)\vol_g+\d\iota_{\vec{W}}\vol_g\qquad\longrightarrow\qquad\Theta^L(\phi,\lambda)=\iota_{\vec{W}}\vol_g\MoveEqLeft[1]
	\end{flalign*}
	where $\vec{W}:=\dd\phi\vec{\nabla}\lambda+\dd\lambda\vec{\nabla}\phi$.
	\begin{flalign*}
	&\pushleft{\circled{2}}\MoveEqLeft[1]&\dd_{(\phi,\lambda)}\overline{\ell}_2-\jmath^*\Theta^L(\phi,\lambda)=-\left(\dd\overline{\phi}\jmath^*\nabla_{\!\vec{\nu}}\lambda+\dd\overline{\lambda}\jmath^*\nabla_{\!\vec{\nu}}\phi\right)\vol_{\overline{g}}\qquad\longrightarrow\qquad\overline{\theta}^{(L,\overline{\ell})}(\phi,\lambda)=0\MoveEqLeft[1]
	\end{flalign*}
	\circled{3} From the previous computations we obtain
	\[
	\begin{array}{|lc|l}
	E_1(\phi,\lambda)=-\square_g\lambda\,\vol_g &\quad & \overline{b}_1(\phi,\lambda)=-\jmath^*(\nabla_{\!\vec{\nu}}\lambda)\vol_{\overline{g}}\\
	E_2(\phi,\lambda)=-\square_g\phi\,\vol_g &\ & \overline{b}_2(\phi,\lambda)=-\jmath^*(\nabla_{\!\vec{\nu}}\phi)\vol_{\overline{g}}
	\end{array}\]
	\[\Sol\left(\SS\right)=\left\{\phi\in\F\ / \begin{array}{l}E_1(\phi,\lambda)=0,\\ E_2(\phi,\lambda)=0,\end{array}\ \ \begin{array}{l}\overline{b}_1(\phi,\lambda)=0\\\overline{b}_2(\phi,\lambda)=0\end{array}\!\right\}\]
	So this theory actually describes two uncoupled scalar fields with Neumann boundary conditions.\vspace*{2ex}
	
	\circled{4} The symplectic structure is given by
	\begin{align*}
	(\OOmega_\SS^\imath)_{(\phi,\lambda)}=\int_\Sigma \Big(\dd\varphi \wwedge \imath^*\!\left(\L_{\vec{n}}\dd\mu\right)+\dd\mu\wwedge \imath^*\!\left(\L_{\vec{n}}\dd\phi\right)\Big)\vol_\gamma
	\end{align*}
	where $\mu:=\imath^*\lambda$, $\varphi:=\imath^*\phi$, and $\vec{n}$ is the $g$-normal vector field to $\imath(\Sigma)\subset M$\!.\vspace*{2ex}
	
	\circled{5} Once again, $\XX_\xi$ is a symmetry if and only if $\L_\xi g=0$.\vspace*{2ex}
	
	\circled{6} This is analogous to the previous cases.

	\begin{remark}\label{remark: Theta no descomsable}\mbox{}\\
		If we had considered $L_3=L_1$ and $\overline{\ell}_3=0$, we would have obtained the same equations because
		\[\dd_{(\phi,\lambda)}L_3=-\left(\dd\lambda\square_g\phi+\dd\phi\square_g\lambda\right)\vol_g+\d\iota_{\vec{U}}\vol_g\]
		where $\vec{U}=\dd\phi\vec{\nabla}\lambda-\lambda\vec{\nabla}\dd\phi$. However, now the covariant derivative of $\dd\phi$ appears in $\Theta^{L_3}$. Thus
		\[\dd_{(\phi,\lambda)}\overline{\ell}-\jmath^*\Theta^{L_3}(\phi,\lambda)=\Big(\dd\overline{\phi}\jmath^*(\nabla_{\!\vec{\nu}}\lambda)-\overline{\lambda}\jmath^*(\nabla_{\!\vec{\nu}}\,\dd\phi)\Big)\vol_{\overline{g}}\]
		As $\vec{\nu}$ is not tangent to the boundary, we cannot ``integrate by parts''. Thus, this expression is not decomposable (see \eqref{eq: dl+Theta}) unless we require both fields to have Dirichlet boundary conditions or Neumann boundary conditions (they have to be included in the definition of $\F$).
	\end{remark}

	\subsubsection*{Scalar field with no equation of motion}\label{example: scalar field no equation}\label{example: constrain systems}
	It is interesting to realize that some conditions over $\phi$ can be included \emph{a priori} in the definition $\F$ or can arise \emph{a posteriori} from the variations of $\SS$. Furthermore, the way to introduce them might change completely the symplectic form obtained by the CPS algorithm. For simple conditions such as Robin or Dirichlet boundary conditions, it is clear how to do that, but for more complicated ones it is not. To illustrate this problem, let us consider a naive example where this phenomenon appears in the bulk.\vspace*{2ex}
	
	Consider $(M,g)$ with no boundary and $\F=\{\phi\in\Omega^0(M)\ /\ \square_g\phi=0\}$. Then we define the Lagrangians
	\[\definicion
	\tikzmarkin{definitionLfunnyprima}(0.25,-0.33)(-0.25,0.57)
	\tikzmarkin{definitionLfunny}(0.2,-0.33)(-0.2,0.57)
	L_1(\phi)=\frac{1}{2}g^{-1}(\d\phi,\d\phi)\qquad\qquad\qquad\qquad L_2(\phi)=0
	\tikzmarkend{definitionLfunny}
	\tikzmarkend{definitionLfunnyprima}
	\]
	
	If we compute the variations we obtain, using the computations of the previous examples, that
	\[\begin{array}{l}
	E_{(1)}(\phi)=-\square_g\phi\,\vol_g\\[1ex]
	\Theta^{L_1}(\phi)=\iota_{\vec{V}}\vol_g
	\end{array}\qquad\qquad\qquad\qquad
	\begin{array}{l}E_{(2)}(\phi)=0\\[1ex]
	\Theta^{L_2}(\phi)=0\end{array}\]
	Notice that both $E_{(1)}$ and $E_{(2)}$ are satisfied by every $\phi\in\F$, so we have that
	\[\Sol(\SS_1)=\F=\Sol(\SS_2)\]
	However
	\[(\OOmega_{\SS_1}\!)_\phi=\int_\Sigma \dd\varphi \wwedge \imath^*\!\left(\L_{\vec{n}}\dd\phi\right)\vol_\gamma\qquad\qquad\qquad\qquad (\OOmega_{\SS_2})_\phi=0\]
	These trivial examples show several things. 
	\begin{itemize}
		\item These two Lagrangians are \textbf{not} equal up to an exact form and have different equations of motion. However, they define the same space of solutions.
		\item The symplectic structure is not always the pullback of the canonical symplectic structure (they define the same theory but have different (pre)symplectic structure).
		\item The symplectic structure $\OOmega_\SS$ depends strongly on the particular form of $\SS$ and not only on $\Sol(\SS)$. \item This example is more useful in the context of boundary conditions: imposing them on $\F$ might spoil the final structures and the use we can make of them. So, in general, it is better to obtain them \emph{a posteriori} from the variations of $\SS$ than include them \emph{a priori} in the definition of $\F$.
		\item One of the problems with this example is that $\F$ is not an open set in $\Omega^0(M)$ and the computations of the variations are more delicate than what we have shown here. The proper way to deal with it is jet bundle framework (see appendix \ref{Appendix: jets} and \cite[11.7 (pag 121)]{vinogradov1984c}).
	\end{itemize}

	\subsection{Chern-Simons}
	Let $\F=\Omega^1(M)$ with $\mathrm{dim}(M)=2k+1$. If $A\in\F$, we denote $\overline{A}:=\jmath^*A\in\Omega^1(\partial M)$ and $a:=\imath^*A\in\Omega^1(\Sigma)$. Now, we define
	\[\definicion
	\tikzmarkin{definitionLCSprima}(0.25,-0.38)(-0.25,0.62)
	\tikzmarkin{definitionLCS}(0.2,-0.38)(-0.2,0.62)
	L(A)=-\frac{1}{k+1}A\wedge(\d A)^k\qquad\qquad\overline{\ell}(A)=0
	\tikzmarkend{definitionLCS}
	\tikzmarkend{definitionLCSprima}
	\]
	where $\alpha^k=\alpha\wedge\overset{(k)}{\cdots}\wedge\alpha$. 
	\begin{flalign*}
	&\pushleft{\circled{1}}\MoveEqLeft[1]\dd_{A} L&
	=-(\d A)^k\wedge\dd A+\frac{k}{k+1}\d\Big(A\wedge(\d A)^{k-1}\wedge\dd A\Big)\quad\longrightarrow\quad\Theta^L(A)=\frac{k}{k+1}A\wedge(\d A)^{k-1}\wedge\dd A\
	\end{flalign*}
	\begin{flalign*}
	&\pushleft{\circled{2}}\MoveEqLeft[1]&\dd_A\overline{\ell}-\jmath^*\Theta^L(A)=-\frac{k}{k+1}\overline{A}\wedge(\d \overline{A})^{k-1}\wedge\dd \overline{A}\qquad\longrightarrow\qquad\overline{\theta}^{(L,\overline{\ell})}(A)=0\MoveEqLeft[1]
	\end{flalign*}
	\begin{flalign*}
	&\pushleft{\circled{3}}\MoveEqLeft[1]&
	\begin{array}{|l}\rule{0ex}{3ex}
	E(A)=-(\d A)^k\\[1ex]
	\overline{b}(A)=-\dfrac{k}{k+1}\overline{A}\wedge(\d\overline{A})^{k-1}
	\end{array}\qquad\qquad\Sol\left(\SS\right)=\Big\{A\in\F \ / \ E(A)=0,\quad \overline{b}(A)=0 \Big\}
	\end{flalign*}
	\begin{flalign*}
	&\pushleft{\circled{4}}\MoveEqLeft[1]
	(\OOmega_\SS^\imath)_A&
	\updown{\eqref{eq: Leibniz}}{\eqref{eq: d dd=dd d}\eqref{eq: stokes pair}}{=}\frac{k}{2}\int_\Sigma \Big((\d a)^{k-1}\wwedge\dd a\wwedge \dd a\Big)-\frac{k(k-1)}{2(k+1)}\int_{\partial\Sigma}\Big(\overline{a}\wwedge (\d\overline{a})^{k-2}\wwedge\dd\overline{a}\wwedge\dd\overline{a} \Big)\MoveEqLeft[1]
	\end{flalign*}
	
	\circled{5} In this example we have no background object, so it is clear that $\LL_{\XX_\xi}L=\L_\xi L$ and the theory is invariant under diffeomorphisms. Thus, $\XX_\xi\in\Sym_{\d}(\SS)$ for every $\xi\in\mathfrak{X}(M)$. Moreover
	\begin{align*}
	&(\ii_{\XX_\xi}\OOmega_\SS^\imath)_A=k\int_\Sigma \imath^*\Big((\d A)^{k-1}\wedge\L_\xi A\wedge \dd A\Big)-\frac{k(k-1)}{k+1}\int_{\partial\Sigma}\overline{\imath}^*\Big(\overline{A}\wedge (\d\overline{A})^{k-2}\wedge\L_{\overline{\xi}}\overline{A}\wedge\dd\overline{A} \Big)\updown{\eqref{eq: Cartan}}{\eqref{eq: Leibniz}\eqref{eq: Leibniz imath}}{=}\\
	&=k\int_\Sigma \imath^*\Big(\d\{(\d A)^{k-1}\wedge\iota_\xi A\wedge \dd A\}-\frac{1}{k}\dd(\d A)^k\wedge\iota_\xi A+\frac{1}{k}\iota_\xi(\d A)^k\wedge \dd A\Big){}-{}\\
	&\phantom{=}-\frac{k}{k+1}\int_{\partial\Sigma}\overline{\imath}^*\Big((k-1)(\d\overline{A})^{k-1}\wedge\iota_{\overline{\xi}}\overline{A}\wedge\dd\overline{A}-\overline{A}\wedge\dd(\d \overline{A})^{k-1}\wedge\iota_{\overline{\xi}}\overline{A} +\overline{A}\wedge \iota_{\overline{\xi}}(\d\overline{A})^{k-1}\wedge\dd\overline{A} \Big)\updown{\mathrm{\eqref{eq: stokes pair}}}{\eqref{eq: diagrama}}{=}\\
	&=\int_\Sigma\imath^*\Big((\iota_\xi E)\wedge\dd A-(\iota_\xi A)\dd E\Big)-\int_{\partial\Sigma}\overline{\imath}^*\Big((\iota_{\overline{\xi}} \overline{b})\wedge\dd \overline{A}+(\iota_{\overline{\xi}} \overline{A})\dd \overline{b}\Big)
	\end{align*}
	so $\overline{\XX}_\xi:=\XX_\xi|_{\Sol(\SS)}$ is a gauge vector field. Its associated $\xi$-charge is given by
	\begin{align*}
	\QQ^{(L,\overline{\ell}),\imath}_\xi(A)&=\frac{1}{k+1}\int_\Sigma \imath^*\Big((\iota_\xi A)\wedge(\d A)^k-kA\wedge (\iota_\xi\d A)\wedge (\d A)^{k-1}+k A\wedge(\d A)^{k-1}\wedge \L_\xi A\Big)\updown{\eqref{eq: Cartan}}{\eqref{eq: Leibniz}}{=}\\
	&=\frac{1}{k+1}\int_\Sigma\imath^*\Big((\iota_\xi A)\wedge(\d A)^k-k\d(A\wedge \iota_\xi A\wedge (\d A)^{k-1})+k(\iota_\xi A)\wedge (\d A)^k\Big)\overset{\eqref{eq: diagrama}}{=}\\
	&=\int_\Sigma\imath^*(\iota_\xi A)\imath^*E+\int_{\partial\Sigma}\overline{\imath}^*(\iota_{\overline{\xi}}\overline{A})\overline{\imath}^*\overline{b}
	\end{align*}
	Of course, $\dd\peqsub{\jj}{\SS}^*\QQ^{(L,\overline{\ell}),\imath}_\xi\in\OOmega^1(\Sol(\SS))$ is zero, giving an alternative proof that $\overline{\XX}_\xi\in\mathrm{Gauge}(\SS)$. Consider now the vector field $(\XX_{(\lambda)})_A=\d \lambda\in T_A\F\cong \F$ for some $\lambda\in\Omega^0(M)$. Notice that it is a constant vector field over $\F$, as it does not depend on $A$. Its interior product with the symplectic form gives
	\[(\ii_{\XX_{(\lambda)}}\OOmega_\SS^\imath)_A=\int_\Sigma\imath^*(\lambda\wedge\dd E)+\int_{\partial\Sigma}(\jmath\smallcirc\overline{\imath})^*\lambda\wedge\left(\dd \overline{\imath}^*\overline{b}+\frac{k}{k+1}(\d \overline{a})^{k-1}\wedge\dd \overline{a}\right)\]
	If we consider Dirichlet boundary conditions (then $\dd\overline{a}=0$) or we assume that $\jmath^*\lambda=0$, then the last term vanishes. However, in general it does not, even when restricted to $\Sol(\SS)$. This proves that the inclusion of a boundary can spoil a gauge freedom.\vspace*{2ex}
	
	\circled{6} Let us consider the space-time decomposition $A\leftrightarrow(A_\perp,A^\top)$ for some fixed foliation and metric $g$. Those objects are related by $A=n\wedge A_\perp+A^\top$. It is not hard to prove \cite[ch.6]{tesis} that
	\[\d A=n\wedge (\d A)_\perp+(\d A)^\top\qquad\qquad\qquad\begin{array}{|l}\!(\d A)_\perp=\varepsilon\dfrac{\L_{\partial_t}A^\top-\L_{\vec{N}}A^\top}{\mathbf{N}}-\dfrac{\d^\top(\mathbf{N}A_\perp)}{\mathbf{N}}\\[2ex]\!(\d A)^\top=\d^\top\! A^\top\phantom{\dfrac{1}{2}}\end{array}\]
	where $\d^\top:=\d-\varepsilon n\wedge\iota_{\vec{n}}$, $\varepsilon=\iota_{\vec{n}}n=-1$, $\mathbf{N}$ is the lapse, and $\vec{N}$ the shift. Thus
	\[\frac{1}{k+1}A\wedge(\d A)^k=\frac{1}{k+1}n\wedge\Big(A_\perp\wedge\d^\top\!A^\top-k A^\top(\d A)_\perp\Big)\wedge(\d^\top\!A^\top)^{k-1}\]
	Identifying $\L_{\partial_t}A^\top$ as the velocity and using the fact $n=\varepsilon \mathbf{N}(\d t)$, it follows \cite{tesis} that the momenta are given by
	\begin{equation}\label{eq: p CS}
	P_\perp=0\qquad\qquad P^\top=\frac{k}{k+1}\imath^*\Big(A^\top\wedge(\d^\top\!A^\top)^{k-1}\Big)
	\end{equation}
	The canonical symplectic form of the first constraint manifold (where $P_\perp=0$) is
	\[\Omega_{(A_\perp,A^\top,P^\top)}=\int_\Sigma\dd A^\top\wwedge\dd P^\top \,\vol_\gamma\]
	while over $\Sol(\SS)$ we have, from the first line of the computation of \circled{4}, that
	\[(\OOmega_\SS)_A=\int_\Sigma \dd a\wwedge\dd\left(\frac{k}{k+1}a\wedge(\d a)^{k-1}\right)\]
	Thanks to equation \eqref{eq: p CS}, we see that $(\mathcal{P}^*\Omega)_{A}=(\OOmega_\SS)_{\mathcal{P}(A)}$ where $\mathcal{P}(A)=(A_\perp,A^\top,P^\top)$.
	
	\subsection{Yang-Mills}
	We consider a Lie algebra $\mathfrak{g}$ with a scalar product invariant under the adjoint representation (every semisimple or abelian algebra satisfies this condition) and define $\F=\Omega^1(M,\mathfrak{g})$. We can expand every  $A\in\F$ over some coordinate patch $\{x^\mu\}$ as
	\[A=A_\mu\d x^\mu \qquad\qquad A_\mu\in\mathfrak{g}\]
	Thanks to the invariant scalar product of $\mathfrak{g}$, we can identify $\mathfrak{g}^*$ with $\mathfrak{g}$. In particular, it allows us to define the trace $\mathrm{Tr}_{\mathfrak{g}}$ of two elements of $\mathfrak{g}$. If we use $\{I,J,\dots\}$ as abstract indices for the algebra, we have
	\[\mathrm{Tr}_{\mathfrak{g}}(A\times B)=A_I B^I\]
	We also have the Lie bracket $[\,,]$ of $\mathfrak{g}$ which induces the Lie bracket $[\,\wedge\,]$ of elements of $\F$ given by
	\[[A\wedge B]=[A_\mu,B_\nu]\d x^\mu \wedge  \d x^\nu\]
	meaning that over the algebra we take the Lie bracket and over forms the wedge product. This is easily generalized to forms of any degree (not necessarily the same) with values in $\mathfrak{g}$. In abstract index notation on $\mathfrak{g}$, we see that $[\,\wedge\,]$ is equivalent to defining some $f_{IJ}^{\ \ K}$ such that
	\[[A\wedge B]^K=f_{IJ}^{\ \ K}A^I\wedge B^J\]
	The following properties are a direct consequence of the combined properties of $[\,,]$ and $\wedge$ 
	\begin{align}
	&[\alpha\wedge\beta]=-(-1)^{|\alpha||\beta|}[\beta\wedge\alpha]\\
	&(-1)^{|\alpha||\gamma|}\big[\alpha\wedge[\beta\wedge\gamma]\big]+(-1)^{|\gamma||\beta|}[\gamma\wedge[\alpha\wedge\beta]\big]+(-1)^{|\alpha||\beta|}\big[\beta\wedge[\gamma\wedge\alpha]\big]=0\\
	&\mathrm{Tr}_{\mathfrak{g}}\Big(\alpha\wedge[\beta\wedge\gamma]\Big)=\mathrm{Tr}_{\mathfrak{g}}\Big([\alpha\wedge\beta]\wedge\gamma]\Big)\label{eq: Tr( [ wedge ])}
	\end{align}
	Finally, given $A\in\F$, we define its covariant derivative and curvature
	\[\mathcal{D}\alpha:=\d\alpha+[A\wedge\alpha]\qquad\qquad F:=\d A+\frac{1}{2}[A\wedge A]\]
	Notice that when computing the variations, the dependence in $A$ of $\mathcal{D}$ and $F$ has to be taken into account. For instance, it is easy to prove using the aforementioned properties that
	\begin{equation}\label{eq: dd F=D ddA}
	\dd F=\mathcal{D}\dd A
	\end{equation}
	It will also be useful to have the following properties which are a direct consequence of the previous definitions, equation \eqref{eq: Tr( [ wedge ])}, and the Leibniz rule
	\begin{align}
	&\mathrm{Tr}_{\mathfrak{g}}\Big(\mathcal{D}\alpha\wedge\beta\Big)=(-1)^{|\alpha|+1}\mathrm{Tr}_{\mathfrak{g}}\Big(\alpha\wedge\mathcal{D}\beta\Big)+\d\mathrm{Tr}_{\mathfrak{g}}\Big(\alpha\wedge\beta\Big)\label{eq: Leibniz D}\\
	&\mathcal{D}[\alpha\wedge\beta]=[\mathcal{D}\alpha\wedge\beta]+(-1)^{|\alpha|}[\alpha\wedge\mathcal{D}\beta]\\
	&\mathcal{D}^2\alpha=[F\wedge\alpha]\label{eq: D^2=[F}\\
	&\mathcal{D}F=0
	\end{align}
	With all these tools at hand, let us study the Yang-Mills theory given by the Lagrangian
	\[\definicion
	\tikzmarkin{definitionLYMprima}(0.25,-0.38)(-0.25,0.62)
	\tikzmarkin{definitionLYM}(0.2,-0.38)(-0.2,0.62)
	L(A)=-\frac{1}{2}\mathrm{Tr}_{\mathfrak{g}}\Big(F\wedge\star_gF\Big)\qquad\qquad\qquad\overline{\ell}(A)=0
	\tikzmarkend{definitionLYM}
	\tikzmarkend{definitionLYMprima}
	\]
	where $\star_g:\Omega^k(M)\to\Omega^{n-k}(M)$ is the Hodge star operator.
	
	\begin{flalign*}
	&\pushleft{\circled{1}}\MoveEqLeft[1]\dd_A L&\overset{\eqref{eq: dd F=D ddA}}{=}-\mathrm{Tr}_{\mathfrak{g}}\Big(\mathcal{D}\dd A\wedge\star_g F\Big)\overset{\eqref{eq: Leibniz D}}{=}\\
	&\MoveEqLeft[1]&\ \ =-\mathrm{Tr}_{\mathfrak{g}}\Big(\dd A\wedge\mathcal{D}\star_gF+\d(\dd A\wedge\star_gF)\Big)\qquad\longrightarrow\qquad\Theta^L(A)=-\mathrm{Tr}_{\mathfrak{g}}\Big(\dd A\wedge\star_gF\Big)\MoveEqLeft[1]
	\end{flalign*}
	\begin{flalign*}
	&\pushleft{\circled{2}}\MoveEqLeft[1]&\dd_A\overline{\ell}-\jmath^*\Theta^L(A)=\mathrm{Tr}_{\mathfrak{g}}\Big(\dd \overline{A}\wedge\jmath^*(\star_gF)\Big)\qquad\longrightarrow\qquad\overline{\theta}^{(L,\overline{\ell})}(A)=0\MoveEqLeft[1]
	\end{flalign*}
	\begin{flalign*}
	&\pushleft{\circled{3}}\MoveEqLeft[1]&
	\begin{array}{|l}E(A)=-\mathcal{D}\star_gF\\[1ex]\overline{b}(A)=\jmath^*(\star_gF)
	\end{array}\qquad\qquad\Sol\left(\SS\right)=\Big\{A\in\F \ / \ E(A)=0,\quad \overline{b}(A)=0 \Big\}\MoveEqLeft[1]
	\end{flalign*}
	\begin{flalign*}
	&\pushleft{\circled{4}}\MoveEqLeft[1]&
	(\OOmega_\SS^\imath)_A\updown{\eqref{eq: Leibniz}}{\eqref{eq: dd F=D ddA}}{=}\int_\Sigma\Big(\dd a\wwedge \imath^*(\star_g\mathcal{D}\dd A)\Big)\MoveEqLeft[1]
	\end{flalign*}

	\circled{5} As $g$ as the sole background object, we can prove that $\XX_\xi\in\Sym_{\d}(\SS)$ if $\xi$ is a $g$-Killing vector field. Using that $\L_\xi A=\iota_\xi F+\mathcal{D}\iota_\xi A$ and that, by definition, $\alpha\wedge\star_g\beta=\langle\alpha,\beta\rangle_g\vol_g$, we obtain
	\begin{align*}
	&\QQ^{(L,\overline{\ell}),\imath}_\xi(A)=\int_\Sigma\imath^*\!\left(-(\iota_\xi F)\wedge\star_g F-(\mathcal{D}\iota_\xi A)\wedge\star_g F-\frac{1}{2}\iota_\xi(F\wedge\star_g F)\right)=\\
	&=\int_\Sigma\imath^*\!\left(-\varepsilon\iota_{\vec{n}}\Big(n\wedge(\iota_\xi F)\wedge\star_g F\Big)+(\iota_\xi A)\wedge(\mathcal{D}\star_g F)-\frac{1}{2}\langle F,F\rangle_{\!g}\iota_\xi\vol_g\right)-\int_{\partial\Sigma}\overline{\jmath}^*\imath^*(\iota_\xi A\wedge\star_g F)\overset{\eqref{eq: diagrama}}{=}\\
	&=\int_\Sigma\imath^*\!\left(-\langle n\wedge\iota_\xi F, F\rangle_{\!g} \varepsilon\iota_{\vec{n}}\vol_g-\frac{\varepsilon}{2}\langle F,F\rangle_{\!g} n_\alpha \xi^\alpha \vol_\gamma\right)+\int_\Sigma\imath^*\!\Big((\iota_\xi A)\wedge E\Big)-\int_{\partial\Sigma}\overline{\imath}^*\jmath^*(\iota_{\xi} A\wedge \star_g F)\overset{\eqref{eq: pullback interior product}}{=}\\
	&=\int_\Sigma n_\alpha \xi_\alpha \left(-F^{\beta}_{\ \gamma}F^{\alpha\gamma}-\frac{\varepsilon}{4}g^{\alpha\beta}F^{\gamma\delta}F_{\gamma\delta}\right)\vol_\gamma+\int_\Sigma\imath^*\!\Big((\iota_\xi A)\wedge E\Big)+\int_{\partial\Sigma}\overline{\imath}^*(\iota_{\overline{\xi}} \overline{A}\wedge\overline{b})
	\end{align*}
	The first term in parenthesis is twice the energy-momentum tensor $T^{\alpha\beta}$ given by $\widetilde{\dd}L=(T^{\alpha\beta}\vol_g)\widetilde{\dd} g_{\alpha\beta}$.\vspace*{2ex}
	
	For a fixed $\lambda\in\Omega^0(M)$, we define the vector field $\XX_{(\lambda)}\in\campos(\F)$ given by $\dd A(\XX_{(\lambda)})=\mathcal{D}\lambda$. Unlike in the Chern-Simons case, this vector field is not constant as $\mathcal{D}$ depends on $A$. It is not hard to prove that
	\[(\ii_{\XX_{(\lambda)}}\OOmega_\SS^\imath)_A\updown{\eqref{eq: Leibniz D}\eqref{eq: D^2=[F}}{\eqref{eq: Tr( [ wedge ])}\eqref{eq: diagrama}}{=}\int_\Sigma\imath^*\mathrm{Tr}_{\mathfrak{g}}\Big(\lambda\dd E\Big)-\int_{\partial\Sigma}\overline{\imath}^*\mathrm{Tr}_{\mathfrak{g}}\Big((\jmath^*\lambda)\dd\overline{b}\Big)\]
	We see that $\XX_{(\lambda)}|_{\Sol(\SS)}\in\mathrm{Gauge}(\SS)$ for every $\lambda\in\Omega^0(M)$.\vspace*{2ex}
	
	\circled{6} It is convenient to rewrite now the symplectic structure. Using \eqref{eq: dd F=D ddA}, $\iota_{\vec{n}}F=\L_{\vec{n}}A-\mathcal{D}\iota_{\vec{n}}A$, and $g^{\alpha\beta}=\varepsilon n^\alpha n^\beta+\imath^\alpha_a\imath^\beta_b\gamma^{ab}$ (space-time decomposition for $\imath(\Sigma)\subset M$), we obtain
	\begin{align*}
	(\OOmega_\SS^\imath)_A&=\int_\Sigma \varepsilon\imath^*\iota_{\vec{n}}\Big(n\wedge\dd A\wwedge\!\star_g\dd F\Big)=\int_\Sigma \varepsilon\imath^*\Big(\langle n\wedge\dd A,\dd F\rangle_{\!g}\iota_{\vec{n}}\vol_g\Big)\overset{\eqref{eq: orientation sigma}}{=}\\
	&=\varepsilon\int_\Sigma \Big\langle\dd a,\dd\imath^*\iota_{\vec{n}}F\!\Big\rangle_{\!\!\gamma}\vol_\gamma\overset{(\varepsilon=-1)}{=}\\
	&=\int_\Sigma \Big\langle\dd a,\dd\imath^*\Big(\mathcal{D}\iota_{\vec{n}}A-\L_{\vec{n}} A\Big)\!\Big\rangle_{\!\!\gamma}\vol_\gamma
	\end{align*}
	Let us now prove that if we break $A\leftrightarrow(A_\perp,A^\top)$, then the parenthesis in the last line of the previous computation is precisely the momentum $p$ associated with $a=\imath^*A^\top$ ($p_\perp$ turns out to be zero). First, we can adapt the computations from \cite[ch.4]{tesis} to the non-Abelian case to prove that
	\[F=n\wedge F_\perp+F^\top\qquad\qquad\qquad\begin{array}{|l}\!F_\perp=\varepsilon\dfrac{\L_{\partial_t}A^\top-\L_{\vec{N}}A^\top}{\mathbf{N}}-\dfrac{\mathcal{D}^\top(\mathbf{N}A_\perp)}{\mathbf{N}}\\[2ex]\!F^\top=\d^\top\! A^\top+\dfrac{1}{2}[A^\top\wedge A^\top]\end{array}\]
	where $\mathcal{D}^\top$ is the covariant derivative associated to $A^\top$ and $\d^\top:=\d-\varepsilon n\wedge\iota_{\vec{n}}$. Thus
	\[\mathcal{D}^\top\!\alpha=\mathcal{D}\alpha-n\wedge\Big(\iota_{\vec{n}}\d\alpha+[A_\perp\wedge\alpha]\Big)\]
	Proceeding as in the previous example (see also \cite{tesis}), it can be checked that $F_\perp$ is precisely the momentum (up to the corresponding pullback). Finally, using
	\[\vec{n}=\frac{\partial_t-\vec{N}}{\mathbf{N}}\qquad\qquad\qquad n=\varepsilon \mathbf{N}(\d t)\]
	we can compute
	\begin{align*}
	&\mathcal{D}\iota_{\vec{n}}A-\L_{\vec{n}}A=\varepsilon\d A_\perp+\varepsilon[A\wedge A_\perp]-\frac{1}{\mathbf{N}}(\L_{\partial_t}-\L_{\vec{N}})\Big(n\wedge A_\perp+A^\top\Big)-\d(1/\mathbf{N})\wedge \mathbf{N}\iota_{\vec{n}}A=\\
	&\qquad=\varepsilon\frac{\mathbf{N}\d A_\perp+[A\wedge(\mathbf{N}A_\perp)]}{\mathbf{N}}-\frac{1}{\mathbf{N}}\L_{\partial_t-\vec{N}}(\varepsilon\d t\wedge \mathbf{N} A_\perp)-\frac{\L_{\partial_t}A^\top-\L_{\vec{N}}A^\top}{\mathbf{N}}+\varepsilon\frac{\d\mathbf{N}}{\mathbf{N}}\wedge A_\perp=\\
	&\qquad=\varepsilon\frac{\d(\mathbf{N}A_\perp)+[A\wedge(\mathbf{N}A_\perp)]}{\mathbf{N}}-\frac{\varepsilon}{\mathbf{N}} \d t\wedge\L_{\mathbf{N}\vec{n}}(\mathbf{N}A_\perp)-\varepsilon F_\perp-\varepsilon\frac{\mathcal{D}^\top(\mathbf{N}A_\perp)}{\mathbf{N}}-=\\
	&\qquad= F_\perp+\varepsilon\frac{(\mathcal{D}-\mathcal{D}^\perp)(\mathbf{N}A_\perp)-n\wedge\L_{\vec{n}}(\mathbf{N}A_\perp)}{\mathbf{N}}=\\
	&\qquad= F_\perp+\varepsilon n\wedge[A_\perp\wedge A_\perp]=F_\perp
	\end{align*}
	So indeed the symplectic form corresponds to the canonical one over the first constraint manifold.
	
	\subsection{Parametrized Yang-Mills}\label{example: parametrized YM}
	For our last example we consider the Yang-Mills but allowing the metric to vary in a very specific way, namely, through pullbacks by diffeomorphisms. We consider again a Lie algebra $\mathfrak{g}$ with a scalar product invariant under the adjoint representation, a globally hyperbolic space-time $(M,g)$, the space of fields $\F=\Omega^1(M)\times\mathrm{Diff}(M)$ which is non-linear, and
	\[\definicion
	\tikzmarkin{definitionLPYMprima}(0.25,-0.38)(-0.25,0.62)
	\tikzmarkin{definitionLPYM}(0.2,-0.38)(-0.2,0.62)
	L(A,Z)=-\frac{1}{2}\mathrm{Tr}_{\mathfrak{g}}(F\wedge_{Z^*\!g}F)\qquad\qquad\qquad\overline{\ell}(A,Z)=0
	\tikzmarkend{definitionLPYM}
	\tikzmarkend{definitionLPYMprima}
	\]
	As $L(Y^*A,Z\circ Y)=Y^*L(A,Z)$, from section \ref{section: diff invariant} we obtain that $L$ is invariant under diffeomorphisms.\vspace*{2ex}
	
	Let us compute the exterior derivative of $\peqsubfino{g}{Z}{-0.4ex}:=Z^*\!g$. For that, we consider some $\VV\in T_{(A,Z)}\F$ defined by a curve $\{(A,Z_\tau)\}_\tau$ of $\F$. In particular, $\VV$ has only component in the diffeormorphism direction, meaning that $\dd A(\VV)=0$ and $\dd Z(\VV)=\VV$, so in the following we forget about the $A$ component. Besides, recall that $\VV\in T_Z\mathrm{Diff}(M)$ defines a vector field over $Z$ i.e.~such that $\VV_{\!p}\in T_{Z(p)}M$\!. The fact that $\VV$ evaluated at $p$ does not belong to $T_pM$ is an inconvenience that can be solved as follows
	\begin{align}\label{eq: dd g(VV)=Lie}
	\begin{split}
	\dd \peqsubfino{g}{Z}{-0.4ex}(\VV)&=\left.\frac{\d}{\d\tau}\right|_{\tau=0}\peqsubfino{g}{{Z_\tau}}{-0.4ex}=\left.\frac{\d}{\d\tau}\right|_{\tau=0}Z^*_\tau g=\\
	&=\left.\frac{\d}{\d\tau}\right|_{\tau=0}(Z^{-1}_0\smallcirc Z_\tau)^*Z_0^*g=\left.\frac{\d}{\d\tau}\right|_{\tau=0}(Z_0^{-1}\smallcirc Z_\tau)^*\peqsubfino{g}{Z}{-0.4ex}
	\end{split}
	\end{align}
	where $Z_0=Z$. Notice that the new curve $\{Z^{-1}_0\smallcirc Z_\tau\}_\tau$ in $\mathrm{Diff}(M)$ passes through the identity with velocity
	\[\WW=\left.\frac{\d}{\d\tau}\right|_{\tau=0}(Z^{-1}_0\smallcirc Z_\tau)=(Z^{-1}_0)_*\left.\frac{\d}{\d\tau}\right|_{\tau=0}Z_\tau=(Z^{-1}_0)_*\VV\]
	Now $\WW_p\in T_p M$ as we wanted. In fact, notice that the last expression of \eqref{eq: dd g(VV)=Lie} is just the Lie derivative
	\[\dd \peqsubfino{g}{Z}{-0.4ex}(\VV)=\L_{\WW}\peqsubfino{g}{Z}{-0.4ex}=\L_{Z^{\scalebox{.6}{-}1}_*\dd Z(\VV)}\peqsubfino{g}{Z}{-0.4ex}\qquad\longrightarrow\qquad\dd \peqsubfino{g}{Z}{-0.4ex}=\peqsub{\L}{\underline{\dd Z}}\,\peqsubfino{g}{Z}{-0.4ex}\]
	Where we have defined $\underline{\dd Z}:=Z^{-1}_*\dd Z$. The expression on the right, where we used $\L_{\underline{\dd Z}}$, is just a useful notation and in order to have full meaning, it has to be evaluated over some $\VV$. Although we have to be careful with this notation, it will shorten the computations. For instance, $\iota_{\underline{\dd Z}}\iota_{\underline{\dd Z}}$ or $[\underline{\dd Z},\underline{\dd Z}]$ are not zero because $\wwedge$ is antisymmetric for elements of degree $(1,0)$ such as $\underline{\dd Z}$. In fact
	\[\iota_{\underline{\dd Z}}\iota_{\underline{\dd Z}}(\VV,\WW)=\iota_{\underline{\dd Z}(\VV)}\iota_{\underline{\dd Z}(\WW)}-\iota_{\underline{\dd Z}(\WW)}\iota_{\underline{\dd Z}(\VV)}=2\iota_{\underline{\dd Z}(\VV)}\iota_{\underline{\dd Z}(\WW)}\qquad [\underline{\dd Z},\underline{\dd Z}](\VV,\WW)=2[\underline{\dd Z}(\VV),\underline{\dd Z}(\WW)]\]
	It is necessary for the following to compute the $(0,2)$-form $\dd(\underline{\dd Z})$. To make it clearer, we are going to use different space-time indices for the domain space and the target space (although they are both the same). Indeed, we consider $Z:(M;\{\alpha,\beta,\dots\})\to(M;a,b,\{\dots\})$ so its pushforward $Z_*$ is denoted $Z_\alpha^a$ (it also denotes its pullback $Z^*$). We use the results of the appendix of \cite{tesis} together with the well known fact that the tangent vectors of $\mathrm{Diff}(M)$ are vector fields of $M$ to obtain
	\begin{align*}
	\dd(\underline{\dd Z})^\beta&=\dd\Big((Z^{-1})^\beta_b\dd Z^b\Big)=\dd(Z^{-1})^\beta_b\wwedge\dd Z^b=-(Z^{-1})^\beta_a(Z^{-1})^\gamma_b(\dd Z)^a_\gamma\wwedge\dd Z^b=\\
	&=-(Z^{-1})^\beta_a(Z^{-1})^\gamma_b Z^a_\alpha\nabla_{\!\gamma}((Z^{-1})^\alpha_c \dd Z^c)\wwedge\dd Z^b=-\delta^\beta_\alpha \nabla_{\!\gamma}((Z^{-1})^\alpha_c \dd Z^c)\wwedge((Z^{-1})^\gamma_b \dd Z^b)=\\
	&=-\nabla_{\!\gamma}\big(\underline{\dd Z}\big)^{\!\beta}\wwedge\big(\underline{\dd Z}\big)^{\!\gamma}=\big(\underline{\dd Z}\big)^{\!\gamma}\wwedge\nabla_{\!\gamma}\big(\underline{\dd Z}\big)^{\!\beta}=:\nabla_{\!\underline{\dd Z}}(\underline{\dd Z})^\beta
	\end{align*}
	Where the last equality is just the definition of the last term. Let us consider $\VV,\WW\in T_{(A,Z)}\F$. We denote $V:=\dd Z(\VV),W:=\dd Z(\WW)\in\mathfrak{X}(M)$, we then have
	\[\dd(\underline{\dd Z})(\VV,\WW)=\nabla_{\!Z^{\scalebox{0.6}{-}1}_*\!V}(Z^{\scalebox{0.6}{-}1}_*W)-\nabla_{\!Z^{\scalebox{.6}{-}1}_*\!W}(Z^{-1}_*W)=[Z^{-1}_*V,Z^{-1}_*W]=Z^{-1}_*[V,W]\]
	so $\dd(\underline{\dd Z})$ is, essentially, the Lie bracket once it is evaluated. It will be very useful to introduce
	\begin{equation}\label{def: DD}
	\DD:=\dd-\L_{\underline{\dd Z}}
	\end{equation}
	that measures the variation of the quantities once we subtracts the variation due to the diffomorphism. In particular, we have $\DD \peqsubfino{g}{Z}{-0.4ex}=0$. The following properties are immediate from the definition of $\DD$ and the properties of $\dd$ and $\L$.
	\begin{align*}
	&\bullet\ \dd \iota_{\underline{\dd Z}}\alpha=\iota\Big(\nabla_{\!\underline{\dd Z}}\underline{\dd Z}\Big)\alpha-\iota_{\underline{\dd Z}}\dd \alpha&&\bullet\ \DD^2=0\\
	&\bullet\ \DD(\alpha\wwedge\beta)=(\DD\alpha)\wwedge\beta+(-1)^{\|\alpha\|}\alpha\wwedge(\DD\beta)&&\bullet\ \DD(\star_{Z^*\!g}\alpha)=\star_{Z^*\!g}(\DD\alpha)\\
	&\bullet\ \DD F=\mathcal{D}\DD A
	\end{align*}
	\begin{flalign*}
	&\pushleft{\circled{1}}\MoveEqLeft[1]\dd_{(A,Z)}L&=\DD_{(A,Z)}L+\L_{\underline{\dd Z}}L\overset{\eqref{eq: Cartan}}{=}-\mathrm{Tr}_{\mathfrak{g}}\Big(\DD F\wedge\star_{Z^*\!g} F\Big)+\d\iota_{\underline{\dd Z}}L\overset{\eqref{eq: Leibniz D}}{=}\MoveEqLeft[1]\\
	&\MoveEqLeft[1]&=-\mathrm{Tr}_{\mathfrak{g}}\Big(\DD A\wedge\big(\mathcal{D}\star_{Z^*\!g} F)\Big)+\d\Big(\iota_{\underline{\dd Z}}L-\mathrm{Tr}_{\mathfrak{g}}\big(\DD A\wedge\star_{Z^*\!g} F\big)\Big)
	\end{flalign*}
	We take $\Theta^L(A,Z)=\iota_{\underline{\dd Z}}L-\mathrm{Tr}_{\mathfrak{g}}\big(\DD A\wedge\star_{Z^*\!g} F\big)$.
	\begin{flalign*}
	&\pushleft{\circled{2}}\MoveEqLeft[1]&\dd_{(A,Z)}\overline{\ell}-\jmath^*\Theta^L(A,Z)=\mathrm{Tr}_{\mathfrak{g}}\Big(\DD \overline{A}\wedge\jmath^*(\star_{Z^*\!g}F)\Big) \MoveEqLeft[1]
	\end{flalign*}
	where the first term of $\Theta^L$ vanishes because $\underline{\dd Z}$ is tangent to the boundary. We take $\overline{\theta}^{(L,\overline{\ell})}=0$.
	\begin{flalign*}
	&\pushleft{\circled{3}}\MoveEqLeft[1]&
	\begin{array}{|l}\!E_1(A,Z)=E_2(A,Z)=-\mathcal{D}\star_{Z^*\!g}F\\[1ex]\!\overline{b}_1(A,Z)=\overline{b}_2(A,Z)=\jmath^*(\star_{Z^*\!g}F)
	\end{array}\qquad\qquad\Sol\left(\SS\right)=\Big\{(A,Z)\in\F \ / \ \begin{array}{l}E_1(A,Z)=0\\\overline{b}_1(A,Z)=0\end{array} \Big\}
	\end{flalign*}
	Notice that the parametrization adds no additional equation or boundary condition.
	\begin{flalign*}
	&\pushleft{\circled{4}}\MoveEqLeft[1]&
	(\OOmega_\SS^\imath)_{(A,Z)}=\int_\Sigma\imath^*\Big(\iota\Big(\nabla_{\!\underline{\dd Z}}\underline{\dd Z}\Big)L-\iota_{\underline{\dd Z}}\dd L-\dd\mathrm{Tr}_{\mathfrak{g}}\big(\DD A\wedge\star_{Z^*\!g} F\big)\Big)=\MoveEqLeft[1]\\
	&\MoveEqLeft[1]&=\int_\Sigma \imath^*\Big(\iota\Big(\nabla_{\!\underline{\dd Z}}\underline{\dd Z}\Big)L-\iota_{\underline{\dd Z}}\mathrm{Tr}_{\mathfrak{g}}\Big(\DD A\wedge E\Big)-\iota_{\underline{\dd Z}}\d\Big(\iota_{\underline{\dd Z}}L-\mathrm{Tr}_{\mathfrak{g}}\big(\DD A\wedge b\big)\Big)-\dd\mathrm{Tr}_{\mathfrak{g}}\big(\DD A\wedge b\big)\Big)\updown{\eqref{eq: Cartan}}{\eqref{def: DD}}{=}\\
	&\MoveEqLeft[1]&=\int_\Sigma \imath^*\Big(\Big\{\iota\Big(\nabla_{\!\underline{\dd Z}}\underline{\dd Z}\Big)-\iota_{\underline{\dd Z}}\L_{\underline{\dd Z}}\Big\}L-\iota_{\underline{\dd Z}}\mathrm{Tr}_{\mathfrak{g}}\Big(\DD A\wedge E\Big)-\d\iota_{\underline{\dd Z}}\mathrm{Tr}_{\mathfrak{g}}(\DD A\wedge b)-\DD\mathrm{Tr}_{\mathfrak{g}}\big(\DD A\wedge b\big)\Big)\updown{\eqref{eq: diagrama}}{\eqref{eq: pullback interior product}}{=}\\
	&\MoveEqLeft[1]&=-\int_\Sigma\imath^*\iota_{\underline{\dd Z}}\mathrm{Tr}_{\mathfrak{g}}\Big(\DD A\wedge E\Big)-\int_{\partial\Sigma}\overline{\imath}^*\iota_{\underline{\overline{\dd Z}}}\mathrm{Tr}_{\mathfrak{g}}\Big(\DD \overline{A}\wedge\overline{b}\Big)+\int_\Sigma\imath^*\mathrm{Tr}_{\mathfrak{g}}\big(\DD A\wedge \star_{Z^*\!g}\mathcal{D}\DD A \big)
	\end{flalign*}
	When we integrate by parts, $\DD$ can be pulled back because $\underline{\dd Z}$ is tangent to the boundary. To get to the last line, we have used that the integral of the term with curly brackets vanishes. Indeed, if we evaluate this term at $\VV,\WW\in T_{(A,Z)}\F$, where we denote $\underline{V}:=\underline{\dd Z}(\VV)$ and $\underline{W}:=\underline{\dd Z}(\WW)$, we obtain
	\begin{align*}
	&\int_\Sigma\imath^*\Big\{\iota\Big(\nabla_{\!\underline{\dd Z}}\underline{\dd Z}\Big)-\iota_{\underline{\dd Z}}\L_{\underline{\dd Z}}\Big\}L(\VV,\WW)=\int_\Sigma\imath^*\big\{\iota_{[\underline{V},\underline{W}]}-\iota_{\underline{V}}\L_{\underline{W}}+\iota_{\underline{W}}\L_{\underline{V}}\big\}L\overset{\eqref{eq: Cartan}}{=}\\
	&=\int_\Sigma\imath^*\big\{[\L_{\underline{V}},\iota_{\underline{W}}]-\iota_{\underline{V}}\d\iota_{\underline{W}}+\iota_{\underline{W}}\L_{\underline{V}}\big\}L\overset{\eqref{eq: Cartan}}{=}\int_\Sigma\imath^*\d\iota_{\underline{V}}\iota_{\underline{W}}L=\int_{\partial\Sigma}\overline{\jmath}^*\imath^*\iota_{\underline{V}}\iota_{\underline{W}} L\updown{\eqref{eq: diagrama}}{\eqref{eq: pullback interior product}}{=}\int_{\partial\Sigma}\overline{\imath}^*\iota_{\overline{\underline{V}}}\iota_{\overline{\underline{W}}}\jmath^*L=0
	\end{align*}

	\circled{5} We have $\dd A(\XX_\xi)=\L_\xi A$. However, $\mathrm{Diff}(M)$ is not linear so we have to define $\dd Z(\XX_\xi)$. For that, we take advantage of the fact that $\underline{\dd Z}(\VV)$ is a vector field over $M$ to define $\underline{\dd Z}(\XX_\xi)=\xi$ or, equivalently, $\dd Z(\XX_\xi)=Z_*\xi$. With this definition, let us check that despite the presence of the background object $g$, we have that $\XX_\xi\in\Sym_{\d}(\SS)$ for every vector field $\xi\in\mathfrak{X}(M)$. First, notice that
	\[\ii_{\XX_\xi}\DD A=\ii_{\XX_\xi}\dd A-\ii_{\XX_\xi}\L_{\underline{\dd Z}}A=\L_\xi A-\L_\xi A=0\]
	Then
	\begin{align*}
	&\LL_{\XX_\xi}L=\ii_{\XX_\xi}\dd L=\d\iota_\xi L&&\longrightarrow&&S^L_{\XX_\xi}=\iota_\xi L(=S^L_\xi)\\
	&\LL_{\XX_\xi}\overline{\ell}=0=\jmath^*\iota_\xi L=\jmath^*S_\xi^L&&\longrightarrow&&\overline{s}^{(L,\overline{\ell})}_{\XX_\xi}=0
	\end{align*}
	
	So $\XX_\xi\in\Sym_{\d}(\SS)$ for every $\xi\in\mathfrak{X}(M)$. We can also check that $\overline{\XX}_\xi:=\XX_\xi|_{\Sol(\SS)}$ is a gauge vector field.
	\begin{align*}
	(\ii_{\XX_\xi}&\OOmega_\SS^\imath)_{(A,Z)}=-\int_{(\Sigma,\partial\Sigma)}\underline{\imath}^*\underline{\iota}_\xi\mathrm{Tr}_{\mathfrak{g}}\left(\DD A\wedge E,\DD \overline{A}\wedge\overline{b}\right)+\int_\Sigma\imath^*\Big\{\iota\Big([\xi,\underline{\dd Z}]\Big)-\iota_\xi\d\iota_{\underline{\dd Z}}+\iota_{\underline{\dd Z}}\d\iota_\xi\Big\}L\overset{\eqref{eq: Cartan}}{=}\\
	&=-\int_{(\Sigma,\partial\Sigma)}\underline{\imath}^*\underline{\iota}_\xi\mathrm{Tr}_{\mathfrak{g}}\left(\DD A\wedge E,\DD \overline{A}\wedge\overline{b}\right)+\int_\Sigma\imath^*\Big\{[\L_\xi,\iota_{\underline{\dd Z}}]-\L_\xi\iota_{\underline{\dd Z}}+\d\iota_\xi\iota_{\underline{\dd Z}}+\iota_{\underline{\dd Z}}\L_\xi\Big\}L=\\
	&=-\int_{(\Sigma,\partial\Sigma)}\underline{\imath}^*\underline{\iota}_\xi\mathrm{Tr}_{\mathfrak{g}}\left(\DD A\wedge E,\DD \overline{A}\wedge\overline{b}\right)+\int_{\partial\Sigma}\overline{\imath}^*\imath^*(\iota_\xi\iota_{\underline{\dd Z}}L)=\\
	&=-\int_\Sigma\imath^*\iota_\xi\mathrm{Tr}_{\mathfrak{g}}\Big(\DD A\wedge E\Big)-\int_{\partial\Sigma}\overline{\imath}^*\iota_{\overline{\xi}}\mathrm{Tr}_{\mathfrak{g}}\Big(\DD \overline{A}\wedge\overline{b}\Big)
	\end{align*}
	In the passage from the first to the second line we have used $\iota_{[V,W]}=[\L_V,\iota_W]$. To get to the last line we have used that $\xi$ and $\underline{\dd Z}$ are tangent to the boundary. Clearly $\overline{\XX}_\xi\in\mathrm{Gauge}(\SS)$. The same computation performed in the previous example shows that $\XX_{(\lambda)}$, given by $\dd A(\XX_{(\lambda)})=\mathcal{D}A$ and $\dd Z(\XX_{(\lambda)})=0$, is a $\underline{\d}$-symmetry.\vspace*{2ex}
	
	\circled{6} The symplectic form is, as one should expect, the same one obtained in the Yang-Mills example replacing $\dd A$ by $\DD A$ (recall that $\DD$ accounts for the variation up to diffeomorphism). Following \cite{tesis,margalef2016hamiltonian,margalef2016hamiltonianb} and the computation of the Yang-Mills example, one obtains again the isomorphism between the CPS symplectic structure and the one induced by the canonical symplectic form of the cotangent Hamiltonian framework.

	\section{Summary and Conclusions}
	In this paper, we have unraveled the geometric nature of the covariant phase space methods in manifolds with boundary and characterized the ambiguities that arise. To do that, we have developed the ``relative bicomplex framework'':
	\begin{itemize}
		\item We have taken the definition of the relative complex $(\Omega^k(M,N),\underline{\d})$ from \cite{bot1982differential} and we have extended it to include all the relevant geometric operations. In this framework, the relative manifold $(M,\partial M)$ has no relative boundary and the usual results for manifolds without boundary apply.
		\item We have taken the definition of the variational bicomplex \cite{anderson1989variational} and extended it to the relative case. This allows us to consider fields over relative pairs $(M,N)$.
	\end{itemize}
	This natural formalism, which also covers the cases with ``corner terms'' \cite{booth2005horizon,hartle1981boundary,sorkin1975time}, helps to clarify several common misconceptions regarding the role of some boundary terms. More specifically, we obtain the (pre)symplectic structure over the space of solutions which, in general, has a boundary contribution. We prove that this construction, in the case of contractible bundles, is intrinsically associated with the action and does not depend on the representative Lagrangians. We also prove that for non-contractible bundles this is not the case and some care is due. In any case, many physical applications are modeled over contractible bundles. For example, the bundle of connections of a Yang-Mills theory is an affine bundle or the bundle of metrics is modeled over a contractible quotient bundle.\vspace*{2ex}
	
	We study the symmetries of the action and find an interesting group of them, called $\underline{\d}$-symmetry, which are always Hamiltonian vector fields. We also define a map $\xi\in\mathfrak{X}(M)\mapsto\XX_\xi\in\campos(\F)$ and consider the associated $\xi$-currents $(J_\xi,\overline{\jmath}_\xi)$, $\xi$-charges $\QQ_\xi$, and $\xi$-potentials $(Q_\xi,\overline{q}_\xi)$. These are candidates to be relevant quantities although its specific physical meaning will depend on the problem at hand. We obtain a flux law for the $\xi$-charges and characterize when $\XX_\xi\in\campos(\F)$ is a symmetry. Finally, we provide the \textbf{CPS-algorithm} (expressed in the more standard non-relative language which is useful for concrete computations) and implement it in some relevant theories. We show that for the prototypical examples, like Yang-Mills or Chern-Simons, the CPS symplectic structure is isomorphic to the canonical one obtained from the cotangent Hamiltonian framework. However, we also provide a naive counterexample in \ref{subsection: curious examples} showing that this is not always the case. This is relevant especially when one considers theories with boundaries where some field-dependent objects are fixed over the boundary in a complicated manner (like isolated or dynamical horizons in general relativity). It remains to study in detail the necessary and sufficient conditions for this equivalence to hold.\vspace*{2ex}
	
	We plan to study in the near future several gravity theories with this powerful formalism. This study could shed some light on the strategies to follow in order to quantize those theories. Moreover, those techniques are very well suited to study some problems that arise in condensed matter theory, where the boundary plays a prominent role. For instance, the covariant phase space formalism provides a suitable framework to study quantum edge states and the appearance of degrees of freedom in the boundary. Another approach we plan to study is the multisymplectic formalism \cite{gotay1998momentum}  of the CPS methods with boundary, which is a natural generalization of the theory that has been developed in this paper.
	
	\section*{Acknowledgments}
	The first author wants to thank A.~Ashtekar, T.~de Lorenzo, N.~Khera, and M.~Schneider for the early discussion about the CPS methods. We want also to thank E.~Miranda and G.~Moreno for their help. Big thanks are due to F.~Barbero, M.~Castrillón, J.C.~Marrero, E.~Padrón, and the anonymous referee for the interesting discussions, the careful reading of this manuscript, and for providing useful references. Thanks also to A.~Martínez García for his help with some new glyphs. This work has been supported by the Spanish Ministerio de Ciencia Innovación y Universidades-Agencia Estatal de Investigación FIS2017-84440-C2-2-P grant. Juan Margalef-Bentabol is supported by the Eberly Research Funds of Penn State, by the NSF grant PHY-1806356, and by the Urania Stott fund of Pittsburgh foundation UN2017-92945.

	\newpage
	\begin{appendices}

		\section{Space of jets}\label{Appendix: jets}
		
		\subsection{Motivation}\label{section: motivation jets}
		
		Let $E\overset{\pi}{\to}M$ be a fibered manifold of rank $r$ and $\mathrm{dim}M=n$, and let $\F$ be its space of (local) sections. In this appendix we consider the non-relative version but the same definitions and results apply for the relative formulation introduced in section \ref{section: geometry Mxpartial M}.
		\begin{enumerate}\setcounter{enumi}{-1}
			\item An element $\phi\in\F$ is a smooth map $\phi:U\subset M\to E$ such that $\phi(p)\in E_p:=\pi^{-1}(p)$. Although unusual, $E$ can be thought of as all possible values that a field of $\F$ can take at $p$ for every $p\in M$\!. We can say that the bundle $\pi:E\to M$ has the information of $\F$ of degree $0$ (no derivatives involved). It will be useful to denote $\pi^0_M:=\pi$.
			\item An element $\VV\in\campos(\F)$ is a map $\VV:\F\to T\F$ such that $\VV_{\!\phi}\in T_\phi \F$. However, it can also be considered as a map $\VV_{\!\phi}:M\to TE$ such that $\VV_{\!\phi}(p)\in T_{\phi(p)}E_p$. To see that, consider a curve $\{\phi_\tau\}_\tau$ in $\F$ with $\phi_0=\phi$ and $\VV_{\!\phi}=\partial_\tau|_0\phi_\tau$. Then $\{\phi_\tau(p)\}_\tau$ is a curve in $E_p$ passing through $\phi(p)$ so $\VV_{\!\phi}(p)=\partial_\tau|_0\phi_\tau(p)\in T_{\phi(p)}E_p$. Thus, $TE$ can be thought of as all possible values that a field $\F$ can take at $p$ together with its possible ``velocities'' (infinitesimal displacements). So, loosely speaking, $\pi^1_M:TE\to M$ has the information of $\F$ up to degree $1$ (first derivatives involved).
			\item We want to generalize this concept of information up to any degree $k\in\N$. For that, we are going to define the $k$-jet bundle $J^k\E$. In analogy to the previous cases, the sections $W_{\!\phi}:M\to J^k\E$ will show that the bundle $\pi^k_M:J^k\E\to M$ gathers the information of $\F$ up to degree $k$ (involving derivatives up to order $k$).
		\end{enumerate}
		The goal of this appendix is to give the most important definitions and results about the spaces of jets without getting too much into the details. For the interested reader, we recommend \cite{anderson1989variational,olver2000applications,saunders1989geometry}.
		
		\subsection{Definition}\label{section: jets definition}
		We declare that two local sections $\phi_1,\phi_2\in\F$ are $\boldsymbol{k}$\textbf{-equivalent} at $p\in M$ if their Taylor expansions in an adapted chart (and hence in every adapted chart) are equal up to order $k$. This equivalence class, denoted $j^k_p(\phi)$, is called $\boldsymbol{k}$\textbf{-jet of $\boldsymbol{\phi}$ at} $\boldsymbol{p}$. Loosely speaking, it represents the coordinate-free expression of the Taylor expansion of $\phi$ at $p$ up to order $k$. Joining all these elements $j^k_p(\phi)$ for every $\phi\in\F$ we obtain the $\boldsymbol{k}$\textbf{-jet fiber} $J^k_pE$. Joining all these fibers $J^k_pE$ for every $p\in M$  we obtain the $\boldsymbol{k}$\textbf{-jet fibered bundle} $\pi^k_M:J^k\E\to M$ (with the differential structure induced by $E$, analogous to the one of $TE$). We denote $\F_k$ the space of its (local) sections i.e. $\phi_k:M\to J^k\E$ such that $\phi_k(p)\in J^k_pE$.\vspace*{2ex}
		
		$j^0_p(\phi)$ is completely determined by $\phi(p)\in E_p$, then $j^0\phi=\phi$, $J^0\E=E$ and $\F_0=\F$. Meanwhile, $j^1_p(\phi)$ is determined by $\phi\in\F$ and its derivatives in the $M$-directions. Analogously, $J^2\E$ gathers all possible values of the fields, $M$-derivatives, and second $M$-derivatives at all the points $p\in M$\!. Thus, $J^k\E$ is indeed a generalization that achieves the goal we set for ourselves in the previous section (see also the coordinate expression \eqref{eq: j infty phi} below).\vspace*{2ex}
		
		Working over some $J^k\E$ is not always enough. Indeed, some constructions using elements of order $k$ might result in objects of higher order. Hence, fixing $k$ \emph{a priori} restricts the allowed constructions. It is tempting to consider the union of all $J^k\E$. Unfortunately, the resulting space is ill-behaved. On the other hand, nothing prevents us from taking $k=\infty$ in our previous construction: two local sections $\phi_1,\phi_2\in\F$ are $\boldsymbol{\infty}$\textbf{-equivalent} at $p\in M$ if their Taylor coefficients at $p$ are the same at all orders. This defines the $\boldsymbol{\infty}$\textbf{-jets} $j^\infty_p(\phi)$ and the $\boldsymbol{\infty}$\textbf{-jet bundle} $\pi^\infty_M:J^\infty\!E\to M$\!. The only problem is that $J^\infty\!E$ is an infinite-dimensional manifold, which complicates everything, but it is still very well behaved (e.g.~it is paracompact and it admits partitions of unity). Indeed, it can be proved \cite{anderson1989variational,saunders1989geometry} that $J^\infty\!E$ is the projective limit of $\{J^k\E\}_k$ and that the notions of smooth functions, vector fields, and other relevant constructions are well defined in $J^\infty\!E$. Moreover, they only depend on some finite order (arbitrary large and it can increase when performing some operations, but finite nonetheless).\vspace*{2ex}
		
		It will be very useful to understand the following: given a local section $\phi\in\F$, we have $j^\infty_p(\phi)\in J^\infty_p E$. Dropping the dependence on $p$, we get a map $j^\infty(\phi):M\to J^\infty\!E$. To understand $j^\infty(\phi)$, we introduce some adapted coordinates $\{x^i,u^a,u^a_J\}$ on $J^\infty\!E$, where $\{x^i\}_{i=1\ldots n}$ is a chart of $M$\!, $\{u^a\}_{a=1\ldots r}$ are coordinates on the fiber $E_p$, and $J$ is a multi-index related to the partial derivatives with respect to $x^J$. One can think that $\{x^i\}$ represent all the base points in $U\subset M$, $\{u^a\}$ all possible values $\phi(p)$ of every $\phi\in\F$, $\{u^a_j\}$ all $x^j$-derivatives of every $\phi\in\F$, $\{u^a_{j_1j_2}\}$ all $x^{j_1}x^{j_2}$-derivatives of all the fields $\phi$, and so on. Of course, those coordinates are all independent. The function $j^\infty(\phi)$ maps $p=(x^1,\ldots,x^n)$ to the coordinates of $J^\infty\!E$ that ``match'' at all orders i.e.~$\{u^a\}$ are the values of $\phi(p)=(\phi^1(p),\ldots,\phi^r(p))$, $\{u^a_j\}$ its 1st-derivatives, $\{u^a_{j_1j_2}\}$ its 2nd-derivatives, and so on. We can write this as
		\begin{equation}\label{eq: j infty phi}
		(j^\infty(\phi))(x^i):=j_p^\infty\phi=\left(x^i=p^i,u^a=\phi^a(p),u^a_j=\frac{\partial \phi^a}{\partial x^j}(p),u^a_{j_1j_2}=\frac{\partial^2\phi^a}{\partial x^{j_1}\partial x^{j_2}}(p),\ldots\right)
		\end{equation}
		This is analogous to what happens if we consider $TM$, which gathers all possible points and velocities, and a curve $\gamma:I\to M$\!. There exists a natural lift $\sigma:I\to TM$ given by $\sigma(t)=(\gamma(t),\dot{\gamma}(t))$ in which the second component of $\sigma$ ``matches'' the velocity of the first component.\vspace*{2ex}
		
		As a final remark, notice that $E$ can be considered as the base manifold of the bundles $\pi^k_E:J^kE\to E$ where $\pi^k_E(j_p^k(\phi))=\phi(p)\in E_p$.
		
		\subsection[Geometric constructions over \texorpdfstring{$J^\infty\!E$}{JE}]{Geometric constructions over \texorpdfstring{$\boldsymbol{J^\infty\! E}$}{JE}}
		\subsubsection*{Generalized vector fields}
		From our discussion of section \ref{section: motivation jets}, it follows that $E$ can be viewed as formed by elements $e=\phi(p)\in E_p$ for $p\in M$ and $\phi\in\F$. Thus, a vector field $\VV\in\campos(\F)$ can be associated with a vector field $V:E\to TE$ on $E$ given by $V_e\equiv V_{\!\phi(p)}:=\VV_{\!\phi}(p)\in T_{\phi(p)}E_p$. This is equivalent to say that $V$ is a vector field over $\pi^0_E(=\mathrm{Id})$. That is, that the following diagram commutes\vspace*{-2ex}
		
		\begin{equation}\label{eq: vector field over pi0}
		\begin{tikzcd}[row sep = scriptsize]
		& TE\arrow[d,"{\pi^1_E}"]\\
		E\arrow[ur,"{V}"]\arrow[r,"{\pi^0_E}"']& E
		\end{tikzcd}
		\end{equation}
		We recall that $E$ has the information of degree $0$, so we have that $V$ is a vector field that only depends on the ``degree $0$ information''. We can generalize this dependence to allow that $V$ depends on ``higher-order information''. Thus, we define a \textbf{generalized vector field} of $E$ as a map $V:J^\infty\!E\to TE$ such that the following diagram commutes\vspace*{-2ex}
		
		\begin{equation}\label{eq: generalized vector field E}
		\begin{tikzcd}[row sep = scriptsize]
		& TE\arrow[d,"{\pi^1_E}"]\\
		J^\infty\!E\arrow[ur,"{V}"]\arrow[r,"{\pi^\infty_E}"']& E
		\end{tikzcd}
		\end{equation}
		We denote as $\mathfrak{X}^\infty_{\mathrm{gen}}(E)$ the set of generalized vector fields of $E$. Of course, there is a natural inclusion $\mathfrak{X}(E)\subset\mathfrak{X}^\infty_{\mathrm{gen}}(E)$. Likewise, we can define $\mathfrak{X}^\infty_{\mathrm{gen}}(M)$ as the set of \textbf{generalized vector fields} on $M$\!, maps $\xi:J^\infty\!E\to TM$ such that the following diagram commutes\vspace*{-1ex}
		
		\begin{equation}\label{eq: generalized vector field M}
		\begin{tikzcd}[row sep = scriptsize]
		& TM\arrow[d,"{\pi^{T\!M}_M}"]\\
		J^\infty\!E\arrow[ur,"{\xi}"]\arrow[r,"{\pi^\infty_M}"']& M
		\end{tikzcd}
		\end{equation}
		\begin{definition}\mbox{}
			\begin{itemize}
				\item A vector field $V$ over $E$ such that $(\pi^1_M)_*V=0$ is called \textbf{vertical}.
				\item A generalized vector field $V$ on $E$ which is also vertical is called an \textbf{evolutionary vector field}. We denote $\mathfrak{X}^\infty_{\mathrm{ev}}(E)$ the set of evolutionary vector fields.
			\end{itemize}
		\end{definition}
		
		In some adapted coordinates $\{x^i,u^a,u^a_J\}$ of $J^\infty\!E$, an evolutionary vector field of $E$ is of the form
		\begin{equation}\label{eq: evolutionary vector field}
		V=V^b[x^i,u^a,u^a_J]\frac{\partial}{\partial u^b}
		\end{equation}
		It is vertical because it has no ``horizontal'' components $V^j\partial/\partial x^j$. It is generalized because $V^b$ depends on elements of higher order $u^a_J$. Their importance stems from the fact that they provide the derivative in the direction of the fields i.e.~the usual variational calculus. Moreover, they are essential to obtain generalized symmetries and to formalize Noether's theorem as we will see at the end of this section.
		
		\subsubsection*{Prolongations and total vector fields}
		In the previous section we have defined generalized vector fields $V:J^\infty\!E\to TE$ and $\xi:J^\infty\!E\to TM$\!. Although they depend on $J^\infty\!E$, they are vector fields of $E$ and $M$ respectively. This section is devoted to constructing natural vector fields over $J^\infty\!E$.\vspace*{2ex}
		
		First recall that given a (local) section $\phi\in\F$, we have the map $j^\infty\!\phi:M\to J^\infty\!E$ given by \eqref{eq: j infty phi}. Therefore, its pushforward $(j^\infty\!\phi)_*:TM\to T(J^\infty\!E)$ allows us to lift vectors from $M$ to $J^\infty\!E$.
		
		\begin{definition}\mbox{}\\
			Given $\xi\in\mathfrak{X}(M)$ and $\phi\in\F$, we define the $\boldsymbol{\phi}$\textbf{-lift} of $\xi$ as $(j^\infty\!\phi)_*\xi:(j^\infty\!\phi)(M)\subset J^\infty\!E\to T(J^\infty\!E)$.
		\end{definition}
		As $\phi$ is a section, the projection of $(j^\infty\!\phi)_*\xi$ over $M$ is again $\xi$, so it is a true lift of $\xi$. Consider some coordinates $\{x^i,u^a,u^a_J\}$ and a curve $x(t)=(x^i(t))$ such that $x(0)=p$ and  $\xi_p:=\partial_t|_0x$, then
		\begin{align*}
		(j^\infty\!\phi)_*\xi_p&=\left.\frac{\d}{\d t}\right|_0\!(j^\infty\!\phi)(x^i(t))\overset{\eqref{eq: j infty phi}}{=}\left.\frac{\d}{\d t}\right|_0\!\left(x^i(t),u^a(t)=\phi^a(x^i(t)),u^a_j(t)=\frac{\partial \phi^a}{\partial x^j}(x^i(t)),\ldots\right)=\\
		&=\left(\xi^i_p,\frac{\partial\phi^a}{\partial x^m}(x^i(0))\left.\frac{\d}{\d t}\right|_0\! x^m(t),\frac{\partial^2\phi^a}{\partial x^j\partial x^m}(x^i(0))\left.\frac{\d}{\d t}\right|_0\! x^m(t),\ldots\right)=\left.\Big(\delta^i_m,u^a_m(p),u^a_{Jm}(p)\Big)\right|_\phi\xi^m_p
		\end{align*}
		Acting over a smooth function $G:J^\infty\!E\to \R$, we obtain
		\begin{align*}
		(j^\infty\!\phi)_*\xi_p(G)&=\d G((j^\infty\!\phi)_*\xi_p)=\\
		&=\frac{\partial G}{\partial x^i}\d x^i\Big((j^\infty\!\phi)_*\xi_p\Big)+\frac{\partial G}{\partial u^a}\d u^a\Big((j^\infty\!\phi)_*\xi_p\Big)+\frac{\partial G}{\partial u^a_J}\d u^a_J\Big((j^\infty\!\phi)_*\xi_p\Big)=\\
		&=\left(\xi^i_p\frac{\partial}{\partial x^i}+\xi^m_p(u^a_m|_\phi)\frac{\partial}{\partial u^a}+\xi^m_p(u^a_{J\cup\{m\}}|_\phi)\frac{\partial}{\partial u^a_J}\right)\!(G)
		\end{align*}
		Thus, the $\phi$-lift of $\xi$ can be written as $(j^\infty\!\phi)_*\xi=\xi^i(D_i|_\phi)$ where
		\begin{equation}\label{eq: Di}\definicion
		\tikzmarkin{Diprima}(0.25,-0.42)(-0.25,0.6)
		\tikzmarkin{Di}(0.2,-0.42)(-0.2,0.6)
		D_i:=\frac{\partial}{\partial x^i}+u^a_i\frac{\partial}{\partial u^a}+u^a_{J\cup\{i\}}\frac{\partial}{\partial u^a_J}\tikzmarkend{Di}
		\tikzmarkend{Diprima}
		\end{equation}
		is a vector field of $J^\infty\!E$ called the \textbf{total $\boldsymbol{i}$-derivative}. This is actually very similar to the well-known material derivative of fluid mechanics $D=\partial_t+\vec{u}\cdot{}\nabla=\partial_t+u^x\partial_x+u^y\partial_y+u^z\partial_z$. Locally, the $\{D_i\}$ span a subspace $\mathcal{H}$ of ``horizontal'' vector fields (they are not vertical). $\mathcal{H}$ in fact provides the canonical split
		\begin{equation}\label{eq: decomposition Tinfty E}
		T(J^\infty\!E) = \mathcal{H}\oplus \mathcal{V}\,,
		\end{equation}
		where $\mathcal{V}$ is the bundle of $\pi^\infty_M$-vertical vectors of $J^\infty\!E$. This decomposition is not possible on any $J^k\E$. This is another important reason to work with the infinite-dimensional manifold $J^\infty\!E$. Notice that $\mathcal{H}$ defines a connection on the bundle $\pi^\infty_M:J^\infty\!E\to M$, which turns out to be flat. In order to properly define $\mathcal{H}$, we need to introduce the following concept:
		
		\begin{definition}\mbox{}\\
			A form $\alpha\in\Omega(J^\infty\!E)$ is a \textbf{contact form} if $(j^\infty\!\phi)^* \alpha=0$ for every $\phi\in\F$. We denote the set of contact forms as $\mathcal{C}^*(J^\infty\!E)$.
		\end{definition}
		The most important example of $1$-contact form, in coordinates $\{x^i,u^a,u^a_J\}$, is
		\[\theta^a_J:=\d u^a_J-u^a_{J\cup\{m\}}\d x^m\qquad\longrightarrow\qquad\theta^a_J\big((j^\infty\!\phi)_*\xi\big)=\left.\left(u^a_{Jm}-\frac{\partial u^a_J}{\partial x^m}\right)\right|_\phi\xi^m\overset{\eqref{eq: j infty phi}}{=}0\]
		In fact, these contact forms allow us to identify which sections of $J^\infty\!E\to M$ come from a section $\phi$ of $E\to M$ because they force the coordinates to ``match'' as we explained when we derived equation \eqref{eq: j infty phi}. In fact, loosely speaking, they provide a true chain rule in $M$
		\[\d u^a_J\ ``="\ \frac{\partial u^a_J}{\partial x^m}\d x^m\]
		Recall that the coordinates are independent so, in general, this is not true (hence the quotation marks). It is then interesting to consider vector fields of $J^\infty\!E$ that belong to the kernel of theses contact forms. In that way, we obtain total derivatives that satisfy the chain rule.
		\begin{definition}\mbox{}\\
			The \textbf{total vector field} of  $\xi\in\mathfrak{X}^\infty_{\mathrm{gen}}(M)$ is the (unique) vector field $\mathrm{tot}(V)\in\mathfrak{X}(J^\infty\!E)$ such that
			\begin{enumerate}
				\item It is a lift of $\xi$ i.e.~$(\pi^\infty_M)_*(\mathrm{tot}\,\xi)=\xi$.
				\item It annihilates all contact $1$-forms i.e.~$\iota_{\mathrm{tot}(\xi)} \beta =0$ for every $\beta\in\mathcal{C}^1( J^\infty\!E)$.
			\end{enumerate}
		\end{definition}
		As its name suggested, it can be proved that the total $i$-derivative $D_i$ is the total derivative of $\partial_i$:
		\[
		D_i:=\mathrm{tot}\left(\frac{\partial}{\partial x^i}\right)\qquad\longrightarrow\qquad \mathrm{If}\ \ \xi=\xi^i\frac{\partial}{\partial x^i}\ \ \text{then}\ \ \mathrm{tot}(\xi)=\xi^i D_i
		\]
		Total vector fields are elements of $\mathfrak{X}(J^\infty\!E)$ lifted from elements of $\mathfrak{X}^\infty_{\mathrm{gen}}(M)$ annihilating the contact $1$-forms (and they actually generate all the vector fields killing the contact forms). We proceed now to define elements of $\mathfrak{X}(J^\infty\!E)$ lifted from elements of $\mathfrak{X}^\infty_{\mathrm{gen}}(E)$ and preserving the contact forms.
		
		\begin{definition}\mbox{}\\
			The \textbf{prolongation} of  $V\in\mathfrak{X}^\infty_{\mathrm{gen}}(E)$ is the (unique) vector field $\prolong(V)\in\mathfrak{X}(J^\infty\!E)$ such that
			\begin{enumerate}
				\item It is a lift of $V$ i.e.~$(\pi^\infty_E)_*(\prolong\,V)=V$.
				\item It preserves $\mathcal{C}^*(J^\infty\!E)$ i.e.~$\L_{\prolong(V)}\beta\in\mathcal{C}^*( J^\infty\!E)$ for every $\beta\in\mathcal{C}^*( J^\infty\!E)$.
			\end{enumerate}
		\end{definition}
		The idea behind this definition is that the flow $\varphi_t:E\to E$ of a vector field $V\in\mathfrak{X}^\infty_{\mathrm{gen}}(E)$ can be lifted, or prolonged, to a flow $\prolong(\varphi_t):J^\infty\!E\to J^\infty\!E$ (see \cite{olver2000applications,anderson1989variational} for more details). The vector field associated to $\prolong(\varphi_t)$ is precisely $\prolong(V)$. This equivalent definition, although geometrically more clear, it is less useful in practice because, in general, it is quite hard to compute the prolongation of a vector field. However, for $V\in\mathfrak{X}^\infty_{\mathrm{ev}}(E)$, we have a simple expression. If $V=V^a\partial/\partial u^a$ in some coordinates $\{x^i,u^a,u^a_J\}$, then
		\begin{equation}\label{eq: prolong}\prolong(V)=\sum_{|J|=0}^\infty (D_JV^a)\frac{\partial}{\partial u^a_J}\end{equation}
		The fact that the sum is infinite posses no convergence problem. This is so because for any vector field $W\in\mathfrak{X}(J^\infty\!E)$ and any real map $f\in\mathcal{C}^\infty(J^\infty\!E)$, the map $W(f)\in\mathcal{C}^\infty(J^\infty\!E)$ involves only finitely many terms i.e.~$f$ depends only on finitely many $u^a_J$.
		
		\subsubsection*{De Rham complex}
		The exterior derivative of the de Rham complex $(\Omega(J^\infty\!E),\d)$ decomposes into its horizontal and vertical part $\d=\peqsub{\d}{H}+\peqsub{\d}{V}$. Likewise, with the help of \eqref{eq: decomposition Tinfty E}, we have the following direct sum decomposition which is only valid for $k=\infty$
		\begin{equation}
		\Omega^p(J^\infty\!E)=\bigoplus_{r+s=p} \Omega^{(r,s)}(J^\infty\!E)\label{decompo}
		\end{equation}
		and a bicomplex diagram analogous to \eqref{eq: bicomplex}. Notice that, as $\d^2=0$, the maps
		\[
		\quad \peqsub{\d}{H}: \Omega^{(r,s)}(J^\infty\!E)\rightarrow \Omega^{(r+1,s)}(J^\infty\!E)\qquad\qquad\qquad \peqsub{\d}{V}: \Omega^{(r,s)}(J^\infty\!E)\rightarrow \Omega^{(r,s+1)}(J^\infty\!E)
		\]
		satisfy $\peqsub{\d}{H}^2=0$, $\peqsub{\d}{V}^2=0$, and $\peqsub{\d}{V}\peqsub{\d}{H}+\peqsub{\d}{H}\peqsub{\d}{V}=0$.
		
		\begin{theorem}[Evolutionary vector fields]\label{theorem: ev vector properties}\mbox{}\\
			Given $W\in\mathfrak{X}^\infty_{\mathrm{ev}}(E)$ and $\beta \in \Omega^{(r,s)}(J^\infty\!E)$, then $\L_{\prolong(W)} \beta \in \Omega^{(r,s)}(J^\infty\!E)$ and the following properties hold
			\[\begin{array}{lll}
			\L_{\prolong(W)} =\peqsub{\d}{V} \iota_{\prolong(W)}+\iota_{\prolong(W)}\peqsub{\d}{V} & \quad\qquad&\L_{\prolong(W)}\peqsub{\d}{H}=\peqsub{\d}{H}\L_{\prolong(W)} \\[2ex]
			0=\peqsub{\d}{H} \iota_{\prolong(W)} +\iota_{\prolong(W)}\peqsub{\d}{H} &&\L_{\prolong(W)}\peqsub{\d}{V}=\peqsub{\d}{V}\L_{\prolong(W)}
			\end{array}\]
		\end{theorem}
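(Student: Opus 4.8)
The plan is to derive the bidegree statement and all four identities from three ingredients already in place: ordinary Cartan calculus on the honest (infinite-dimensional) manifold $J^\infty\!E$, the splitting $\d=\peqsub{\d}{H}+\peqsub{\d}{V}$ of the de Rham differential coming from \eqref{eq: decomposition Tinfty E}, and the two defining features of the prolongation of an evolutionary field, namely that $\prolong(W)$ is $\pi^\infty_M$-vertical and that $\prolong(W)$ preserves the contact ideal $\mathcal{C}^*(J^\infty\!E)$. The point worth stressing at the outset is that $\prolong(W)$, although assembled from a \emph{generalized} vector field, is a genuine vector field on $J^\infty\!E$, so $\L_{\prolong(W)}=\iota_{\prolong(W)}\d+\d\iota_{\prolong(W)}$ and $\L_{\prolong(W)}\d=\d\L_{\prolong(W)}$ hold without caveats.

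First I would record the effect of verticality on the interior product. Since $W\in\mathfrak{X}^\infty_{\mathrm{ev}}(E)$ is vertical, $(\pi^\infty_M)_*\prolong(W)=\pi_*(\pi^\infty_E)_*\prolong(W)=\pi_*W=0$, so $\prolong(W)$ is a section of $\mathcal{V}$; equivalently, in adapted coordinates it has the form \eqref{eq: prolong} with no $\partial/\partial x^i$ component, so $\iota_{\prolong(W)}\d x^i=0$ (cf. \eqref{eq: Di}). Hence $\iota_{\prolong(W)}$ annihilates horizontal $1$-forms and therefore maps $\Omega^{(r,s)}(J^\infty\!E)$ into $\Omega^{(r,s-1)}(J^\infty\!E)$. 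Next I would show that $\L_{\prolong(W)}$ preserves the bigrading: a form lies in $\Omega^{(r,s)}(J^\infty\!E)$ iff it is a sum of monomials with $r$ horizontal factors and $s$ contact factors; now $\L_{\prolong(W)}\d x^i=\d\iota_{\prolong(W)}\d x^i=0$, and by its defining property $\L_{\prolong(W)}$ carries $\mathcal{C}^*(J^\infty\!E)$ into itself, so the degree-$0$ derivation $\L_{\prolong(W)}$ of $\Omega(J^\infty\!E)$ preserves each $\Omega^{(r,s)}(J^\infty\!E)$. (If one prefers an explicit verification that $\L_{\prolong(W)}$ sends contact forms to contact forms, it suffices to check the coordinate identity $[\prolong(W),D_i]=0$ directly from \eqref{eq: prolong}, \eqref{eq: Di} and $D_iD_j=D_jD_i$, which then gives $\L_{\prolong(W)}\theta^a_J$ inside the span of the $\theta$'s.) This establishes the first assertion of the theorem.

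The four identities then come out by degree bookkeeping. Writing $\L_{\prolong(W)}=\{\iota_{\prolong(W)},\d\}=\{\iota_{\prolong(W)},\peqsub{\d}{H}\}+\{\iota_{\prolong(W)},\peqsub{\d}{V}\}$, the first anticommutator shifts bidegree by $(+1,-1)$ and the second preserves it; since $\L_{\prolong(W)}$ preserves the bidegree, the $(+1,-1)$ piece must vanish, giving $\peqsub{\d}{H}\iota_{\prolong(W)}+\iota_{\prolong(W)}\peqsub{\d}{H}=0$ and, as a consequence, $\L_{\prolong(W)}=\peqsub{\d}{V}\iota_{\prolong(W)}+\iota_{\prolong(W)}\peqsub{\d}{V}$. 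Finally, applying $\L_{\prolong(W)}\d=\d\L_{\prolong(W)}$ to a form in $\Omega^{(r,s)}(J^\infty\!E)$ and comparing the components landing in $\Omega^{(r+1,s)}(J^\infty\!E)$ and in $\Omega^{(r,s+1)}(J^\infty\!E)$ — legitimate precisely because $\L_{\prolong(W)}$ preserves the bigrading and because $\peqsub{\d}{H}$, $\peqsub{\d}{V}$ are the $(1,0)$- and $(0,1)$-components of $\d$ — yields separately $\L_{\prolong(W)}\peqsub{\d}{H}=\peqsub{\d}{H}\L_{\prolong(W)}$ and $\L_{\prolong(W)}\peqsub{\d}{V}=\peqsub{\d}{V}\L_{\prolong(W)}$.

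There is no deep obstacle here; the only step needing genuine care is the bidegree-preservation of $\L_{\prolong(W)}$, i.e. confirming that a prolonged evolutionary field really is vertical \emph{and} leaves the contact ideal invariant so that $\L_{\prolong(W)}$ does not mix horizontal and contact degrees — once this is granted (either from the definition of $\prolong$ or, concretely, from $[\prolong(W),D_i]=0$), everything else is formal manipulation with $\{\iota,\d\}$ and the $(r,s)$-decomposition.
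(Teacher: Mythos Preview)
The paper does not supply a proof of this theorem; it is stated as a standard fact in the variational bicomplex literature (the appendix refers throughout to \cite{anderson1989variational}). Your argument is correct and is essentially the standard one: verticality of $\prolong(W)$ gives $\iota_{\prolong(W)}\colon\Omega^{(r,s)}\to\Omega^{(r,s-1)}$, the contact-ideal property together with $\L_{\prolong(W)}\d x^i=0$ forces $\L_{\prolong(W)}$ to preserve the bidegree, and then the four identities drop out by decomposing $\L_{\prolong(W)}=\{\iota_{\prolong(W)},\d\}$ and $[\L_{\prolong(W)},\d]=0$ into their bidegree components. Your remark that the only genuine content is the bigrading preservation (equivalently $[\prolong(W),D_i]=0$) is exactly right.
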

		We proceed to define some operators with the help of coordinates. For an intrinsic definition see \cite{anderson1989variational}.
		\begin{definition}\mbox{}\\
			A \textbf{total differential operator} of order $k\in\N$ is a map $P: \mathfrak{X}^\infty_{\mathrm{ev}}(E)\rightarrow \Omega^{(r,s)}(J^\infty\!E)$ that, locally, has the form
			\[\definicion
			\tikzmarkin{Diffoperatorprima}(0.25,-0.55)(-0.25,0.7)
			\tikzmarkin{Diffoperator}(0.2,-0.55)(-0.2,0.7)
			P(W) =\sum_{|J|=0}^k(D_JW^a) P^J_a\qquad P^J_\alpha \in \Omega^{(r,s)}(J^\infty\!E)
			\tikzmarkend{Diffoperator}
			\tikzmarkend{Diffoperatorprima}
			\]
			with $P^J_a\neq0$ for $|J|=k$.
		\end{definition}
		This definition is somewhat dual to equation \eqref{eq: prolong}. 
		
		\begin{theorem}[Integration by parts]\label{integracionpartes}\mbox{}\\
			Let $P: \mathfrak{X}^\infty_{\mathrm{ev}}(E)\rightarrow \Omega^{(n,s)}(J^\infty\!E)$ be a $k$-order total differential operator with $k\leq 2$. Then, there exists a unique globally defined, zeroth order operator $Q: \mathfrak{X}^\infty_{\mathrm{ev}}(E)\rightarrow \Omega^{(n,s)}(J^\infty\!E)$ and a  globally defined first order differential operator $R: \mathfrak{X}^\infty_{\mathrm{ev}}(E)\rightarrow \Omega^{(n-1,s)}(J^\infty\!E)$ satisfying
			\begin{equation}\label{partes}
			\comentario
			\tikzmarkin{descompPprima}(0.2,-0.28)(-0.2,0.5)
			\tikzmarkin{descompP}(0.2,-0.23)(-0.2,0.45)
			P(W)=Q(W)-\peqsub{\d}{H} R(W)
			\tikzmarkend{descompP}
			\tikzmarkend{descompPprima}
			\end{equation}
		\end{theorem}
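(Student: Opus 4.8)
The plan is to argue in local coordinates $\{x^i,u^a,u^a_J\}$ and then show the resulting objects patch together globally. First I would write the $k$-order total differential operator as $P(W)=\sum_{|J|=0}^{k}(D_J W^a)P^J_a$ with $P^J_a\in\Omega^{(n,s)}(J^\infty\!E)$, and perform a single ``integration by parts'' step on each multi-index: using the Leibniz-type rule for $D_i$ acting on the product $W^a P^J_a$ together with $\peqsub{\d}{H}$ (recall $D_i=\mathrm{tot}(\partial_i)$ and the relation between $\peqsub{\d}{H}$ and the total derivatives), rewrite $(D_{J}W^a)P^J_a$ as $(-1)^{|J|}W^a\, D_{J}P^J_a$ plus a $\peqsub{\d}{H}$-exact remainder of the form $\peqsub{\d}{H}$ of an $(n-1,s)$-form that is itself a total differential operator of order $\le |J|-1$ in $W$. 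Iterating this reduction $k$ times (here $k\le 2$, so at most two iterations) collapses all derivatives off $W$, producing
\[
P(W)=Q(W)-\peqsub{\d}{H}R(W),\qquad Q(W):=\sum_{|J|=0}^{k}(-1)^{|J|}\big(D_J P^J_a\big)W^a,
\]
with $Q$ zeroth order and $R$ an explicit first order operator (the accumulated boundary terms). The bound $k\le 2$ is what keeps $R$ first order and avoids the more delicate higher Euler operators; this is the reason the hypothesis is stated.

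Next I would check well-definedness, i.e.\ that $Q$ and $R$ are independent of the chosen adapted chart, so that they glue to globally defined operators. For $Q$ this is immediate once one notices that $Q(W)$ is characterized intrinsically: evaluating \eqref{partes} on a section, $\peqsub{\d}{H}R(W)$ integrates to a boundary term, so $Q(W)$ is the unique zeroth order operator agreeing with $P(W)$ modulo $\peqsub{\d}{H}$-exact forms; uniqueness of such a representative (an $(n,s)$-form that is $\peqsub{\d}{H}$-exact and zeroth order in $W$ must vanish, since $W$ is arbitrary and one can localize) pins $Q$ down chart-independently. Given $Q$, the form $P(W)-Q(W)$ is $\peqsub{\d}{H}$-exact and first order in $W$, and one shows its $\peqsub{\d}{H}$-primitive can be chosen canonically as a first order operator $R$; here I would invoke the horizontal exactness/algebraic Poincaré lemma for the variational bicomplex (the $J^\infty\!E$ analogue of theorem \ref{theorem: Wald exact - close}, as in \cite{anderson1989variational}) to get existence of $R$, and then fix the ambiguity (a $\peqsub{\d}{H}$-closed, hence $\peqsub{\d}{H}$-exact, first order piece) by the same normalization used in the coordinate computation. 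Uniqueness of $Q$ is clean; uniqueness of $R$ up to $\peqsub{\d}{H}$-closed terms is what one actually gets, and I would phrase the statement accordingly or fix a canonical splitting.

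The main obstacle I anticipate is the global, chart-independent construction of $R$: locally $R$ is read off from the ``boundary terms'' of the integration by parts, but different charts produce $R$'s differing by a horizontally closed first order operator, and one must verify these ambiguities are harmless (either by exhibiting an intrinsic formula for $R$ via the total homotopy operator of the horizontal complex, or by checking the cocycle condition for the local $R$'s). The order count — ensuring the iterated reduction of a $k\le2$ operator never raises the order of $R$ beyond one — is the bookkeeping one has to do carefully, but it is routine; the conceptual work is the intrinsic characterization of $Q$ and $R$.
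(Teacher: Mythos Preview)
The paper does not actually prove this theorem. It is stated as a known result in the appendix on jets, and the surrounding text makes clear that the result (and its refinement via the interior Euler operator) is taken from Anderson's monograph \cite{anderson1989variational}; the only commentary the paper adds is the remark that for $k\ge 3$ the decomposition still holds but $R$ can no longer be constructed canonically. So there is no ``paper's own proof'' to compare against.

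That said, your sketch is essentially the standard argument one finds in Anderson: write $P$ in adapted coordinates, peel total derivatives off $W$ one at a time using the Leibniz rule for $D_i$ and the identification of $\peqsub{\d}{H}$ with total derivatives in top horizontal degree, and read off $Q(W)=\sum_{|J|\le k}(-1)^{|J|}W^a D_J P^J_a$ together with the accumulated boundary operator $R$. Your identification of the real issue---that the local $R$'s must be shown to patch, and that this is precisely what fails canonically for $k\ge 3$---is exactly the point the paper alludes to. One small correction: you should not invoke the horizontal exactness theorem to produce $R$, since that theorem gives existence of a primitive but says nothing about its order in $W$ or its canonicity; the content of the $k\le 2$ hypothesis is that the explicit local formula for $R$ (the ``first-order Euler operator'' in Anderson's language) is already intrinsic, and one checks this by a direct coordinate-change computation rather than by an abstract exactness argument. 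With that adjustment your plan is the standard proof.
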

		Equation \eqref{partes} also holds globally for $k\geq 3$, but the operator $R$ cannot be canonically constructed from $P$. As a corollary, it is possible to derive a global first variational formula: given $L\in \Omega^{(n,0)}(J^\infty\!E)$, then $\iota_{\prolong(W)}\peqsub{\d}{V} L\in \Omega^{(n,0)}(J^\infty\!E)$ defines a differential operator. In \cite{anderson1989variational} it is proved that
		\begin{equation}
		\iota_{\prolong(W)}\peqsub{\d}{V} L = \iota_{\prolong(W)} E(L)-\peqsub{\d}{H} (\iota_{\prolong(W)} \Theta)\overset{\eqref{theorem: ev vector properties}}{=}\iota_{\prolong(W)} \Big(E(L)+\peqsub{\d}{H} \Theta\Big)
		\end{equation}
		This formula is only valid for the prolongation of evolutionary vector fields. In order to remove this dependence, we use that, for $s\geq 1$, we can obtain the following decomposition 
		\[\Omega^{(n,s)}(J^\infty\!E)=I\big(\Omega^{(n,s)}(J^\infty\!E)\big) \oplus \peqsub{\d}{H}\big( \Omega^{(n,s)}(J^\infty\!E)\big)\]
		where $I$ is the so called interior Euler operator \cite[page 45]{anderson1989variational}. Therefore 
		\begin{equation}\label{eq: first variation}
		\comentario
		\tikzmarkin{descompdVLprima}(0.2,-0.28)(-0.2,0.5)
		\tikzmarkin{descompdVL}(0.2,-0.23)(-0.2,0.45)
		\peqsub{\d}{V} L = E(L)+\peqsub{\d}{H}  \Theta
		\tikzmarkend{descompdVL}
		\tikzmarkend{descompdVLprima}
		\end{equation}
		globally on $J^\infty\!E$. $E(L)$ is related to the EL equations. To see how, let us first make a small digression.\vspace*{2ex}
		
		If we consider the subset $\mathcal{E}'\subset J^\infty\!E$ where $E(L)$ vanishes, we are just considering some algebraic conditions over the coordinates of $e\in\mathcal{E}'$. This is not enough as the following example shows: consider the algebraic equation $u_x=u$ with $E=\R\times\R$. This corresponds to the differential equation $u'=u$ which, in turn, implies $u''=u'$, $u'''=u''$, and so on. However, algebraically $u_x=u$ does no imply $u_{xx}=u_x$, $u_{xxx}=u_{xx}$, etc. because the coordinates are independent. That is why those differential consequences have to be included by hand. We define the \textbf{space of solutions} of the theory given by $L$ as
		\[\mathcal{E}=\{e\in J^\infty\!E\ /\  E(L) \text{ and all its differential consequences vanish at } e\}\]
		
		with the inclusion $\jmath_L:\mathcal{E}\hookrightarrow J^\infty\!E$. If we consider a manifold with boundary, once we have \eqref{eq: first variation}, we could apply the same argument that led to equation \eqref{eq: dl+Theta}. It is important to realize that, in general, the operator $\overline{q}$ analogous to $Q$ (or $\overline{b}$ in the notation of section \ref{section: variations}) is not defined over $J^\infty\!(\partial E)$ but over a slightly more general bundle over $\partial M$\!. Indeed, some boundary conditions involve derivatives in directions transversal to $\partial M$ while $J^\infty\!(\partial E)$ can only account for those tangent to $\partial M$\!.
		
		\subsubsection*{Symmetries and Noether's theorem}
		A \boldsymbol{$\peqsub{\d}{H}$}\textbf{-symmetry} (or infinitesimal variational symmetry) of the Lagrangian $L\in \Omega^{(n,0)}(J^\infty\!E)$ is an evolutionary vector field $W\in\mathfrak{X}_{\mathrm{ev}}^\infty(E)$ satisfying
		\begin{equation}\label{eq: infsymm1}
		\definicion
		\tikzmarkin{infsymmetryprima}(0.25,-0.23)(-0.25,0.4)
		\tikzmarkin{infsymmetry}(0.2,-0.23)(-0.2,0.4)
		\mathcal{L}_{\prolong(W)}L=\peqsub{\d}{H} S_W^L
		\tikzmarkend{infsymmetry}
		\tikzmarkend{infsymmetryprima}
		\end{equation}
		for some $S_W^L\in\Omega^{(n-1,0)}(J^\infty\!E)$. In that case, we define the $\boldsymbol{W}$\!\textbf{-current} as
		\begin{equation}\label{eq: Wcurrent}
		\definicion
		\tikzmarkin{Wcurrentprima}(0.25,-0.23)(-0.25,0.43)
		\tikzmarkin{Wcurrent}(0.2,-0.23)(-0.2,0.43)
		J^\Theta_W :=S_W^L+\iota_W\Theta \in \Omega^{(n-1,0)}(J^\infty\!E)
		\tikzmarkend{Wcurrent}
		\tikzmarkend{Wcurrentprima}
		\end{equation}
		
		\begin{theorem}[Noether's theorem]\label{theorem: NoetherBicomplex}\mbox{}\\
			The $W$\!-current is conserved  over the space of solutions $\mathcal{E}$:
			\[\peqsub{\d}{H} (\jmath_L^*J_W^\Theta)=0\] 
		\end{theorem}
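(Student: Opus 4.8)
The plan is to first derive an identity valid on all of $J^\infty\!E$ and then restrict it to the space of solutions $\mathcal{E}$. Since $L\in\Omega^{(n,0)}(J^\infty\!E)$ and $\prolong(W)$ is $\pi^\infty_M$-vertical, we have $\iota_{\prolong(W)}L=0$, so the homotopy formula $\mathcal{L}_{\prolong(W)}=\peqsub{\d}{V}\iota_{\prolong(W)}+\iota_{\prolong(W)}\peqsub{\d}{V}$ of Theorem \ref{theorem: ev vector properties} gives $\mathcal{L}_{\prolong(W)}L=\iota_{\prolong(W)}\peqsub{\d}{V} L$. Substituting the first variational formula \eqref{eq: first variation}, $\peqsub{\d}{V} L=E(L)+\peqsub{\d}{H}\Theta$, and using the anticommutation $\iota_{\prolong(W)}\peqsub{\d}{H}=-\peqsub{\d}{H}\iota_{\prolong(W)}$ (again from Theorem \ref{theorem: ev vector properties}), I would obtain $\mathcal{L}_{\prolong(W)}L=\iota_{\prolong(W)}E(L)-\peqsub{\d}{H}\big(\iota_{\prolong(W)}\Theta\big)$. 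Equating this with the defining condition \eqref{eq: infsymm1} of a $\peqsub{\d}{H}$-symmetry, $\mathcal{L}_{\prolong(W)}L=\peqsub{\d}{H} S_W^L$, and recalling the definition \eqref{eq: Wcurrent} of the current, $J_W^\Theta=S_W^L+\iota_{\prolong(W)}\Theta$, yields the ``off-shell'' identity
\[\peqsub{\d}{H} J_W^\Theta=\iota_{\prolong(W)}E(L)\qquad\text{on }J^\infty\!E.\]

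The second step is to pull this identity back along the inclusion $\jmath_L:\mathcal{E}\hookrightarrow J^\infty\!E$. Both the horizontal and the vertical distributions of the variational bicomplex are tangent to $\mathcal{E}$ (this is exactly why $\mathcal{E}$ is defined to contain all the differential consequences of $E(L)=0$, not merely the algebraic locus $\{E(L)=0\}$), so the bigrading restricts to $\mathcal{E}$ and $\jmath_L^*$ commutes with $\peqsub{\d}{H}$. Hence $\peqsub{\d}{H}(\jmath_L^*J_W^\Theta)=\jmath_L^*\big(\iota_{\prolong(W)}E(L)\big)$. The right-hand side is the contraction of the source form $E(L)\in\Omega^{(n,1)}(J^\infty\!E)$ with $\prolong(W)$, hence pointwise a linear expression in the components of $E(L)$; since $E(L)$ (the order-zero consequence) vanishes identically on $\mathcal{E}$, we get $\jmath_L^*\big(\iota_{\prolong(W)}E(L)\big)=0$, and therefore $\peqsub{\d}{H}(\jmath_L^*J_W^\Theta)=0$, which is the assertion.

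The only genuinely delicate point — and where I would be careful — is this last deduction: the conclusion is an equation on $\mathcal{E}$, so one needs not only $E(L)$ itself but also all of its total derivatives to restrict to zero, in order to be sure that nothing in $\peqsub{\d}{H}(\jmath_L^*J_W^\Theta)$ survives; this is precisely guaranteed by the chosen definition of $\mathcal{E}$, which simultaneously legitimizes the commutation of $\jmath_L^*$ with $\peqsub{\d}{H}$ used above. Everything else — the vanishing $\iota_{\prolong(W)}L=0$, the homotopy identity, and the anticommutation of $\iota_{\prolong(W)}$ with $\peqsub{\d}{H}$ — is immediate from Theorem \ref{theorem: ev vector properties} together with bidegree bookkeeping, so no further computation is required.
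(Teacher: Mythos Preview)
Your proof is correct and follows essentially the same route as the paper: both derive the off-shell identity $\peqsub{\d}{H} J_W^\Theta=\iota_{\prolong(W)}E(L)$ from the Cartan formula of Theorem~\ref{theorem: ev vector properties}, the anticommutation $\peqsub{\d}{H}\iota_{\prolong(W)}=-\iota_{\prolong(W)}\peqsub{\d}{H}$, the first variational formula~\eqref{eq: first variation}, and the symmetry condition~\eqref{eq: infsymm1}, and then restrict to $\mathcal{E}$. Your additional discussion of why $\jmath_L^*$ commutes with $\peqsub{\d}{H}$ and why the differential-consequence definition of $\mathcal{E}$ matters is a welcome clarification that the paper's one-line proof leaves implicit.
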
 
		\begin{proof}\mbox{}
			\[\peqsub{\d}{H} J^\Theta_W\updown{\eqref{eq: Wcurrent}}{\eqref{eq: infsymm1}}{=}\mathcal{L}_{\prolong(W)}L+\peqsub{\d}{H}\iota_{\prolong(W)}\Theta\overset{\eqref{theorem: ev vector properties}}{=}\iota_{\prolong(W)}\Big(\peqsub{\d}{V} L-\peqsub{\d}{H} \Theta\Big)\overset{\eqref{eq: first variation}}{=}\iota_{\prolong(W)} E\]
			\mbox{}\vspace*{-7ex}
			
		\end{proof}
		
		\subsection{Connection with the standard physical formalism}\label{section: connection previous formalism}
		In the main body of the article we have dealt with objects in $M\times\F$ (the following discussion applies as well for the relative version). For that, we have considered that $\F$ has nice properties and we have performed computations in the usual fashion. However, we have just learned that the proper way is to consider $\phi(p)\in E_p$ instead of $\phi\in\F$, which allows us, in particular, to consider higher-order derivatives with the help of the jet bundles (it is not always clear to what space the derivatives of $\phi$ belong to). The manipulations are roughly speaking the same, but conceptually both approaches are very different.\vspace*{2ex}
		
		Let us focus for a moment in the forms $\OOmega^{(r,s)}(M\times \F)$ over $M\times\F$. We have the (horizontal) exterior derivative $\d$ of $M$ and the (vertical) exterior derivative $\dd$ of $\F$ that define a natural bigraded structure: the one associated with the product structure of $M\times \F$ and with the exterior derivative $\mathbf{d}:=\d+\dd$. From $\mathbf{d}^2=0$ we deduce that $\d \dd=-\dd \d$ which is \eqref{eq: d dd=dd d} upon considering the change of sign mentioned after \eqref{eq: d dd=dd d}. The same happens with the interior product with vertical fields and the horizontal exterior derivative, which anti-commute according to \ref{theorem: ev vector properties}. Of course, this is just a matter of convention and the important results remain the same.\vspace*{2ex}
		
		In order to connect this bigraded complex $\OOmega(M\times \F)$ with the de Rham complex $\Omega(J^\infty\!E)$, we use the evaluation map $\mathrm{Eval}^\infty: M\times \F\rightarrow J^\infty\!E$ given by $\mathrm{Eval}^\infty(p,\phi)=j_p^\infty(\phi)$ \cite{zuckerman1987action}. It allows us to define the sub-bicomplex  of $\OOmega(M\times \F)$ 
		\[\OOmega_{\mathrm{loc}}(M\times \F):=(\mathrm{Eval}^\infty)^*\Omega(J^\infty\!E)\qquad\longrightarrow\qquad\OOmega^p_{\mathrm{loc}}(M\times \F):= \bigoplus_{r+s=p} \OOmega^{(r,s)}_{\mathrm{loc}}(M\times \F)\]
		From \cite{takens1979global} we have that $(\mathrm{Eval}^\infty)^*:\Omega(J^\infty\!E)\to\OOmega( M\times \F)$ is injective, so $\Omega(J^\infty\!E)$ is isomorphic to its image, $\OOmega_{\mathrm{loc}}(M\times\F)$. This identification provides a dictionary to rewrite this paper in the jet language. For instance, $\dd\phi^a$ can be properly written, in some coordinates $\{x^i,u^a,u^a_J\}$, as $\mathbf{d}u^a$. Likewise, a Lagrangian $L\in\OOmega^{(n,0)}(M\times\F)$ can be understood as a horizontal element $L\in\Omega^n(J^\infty\!E)$ i.e.~in coordinates it is of the form $L=f(x^i,u^a,u^a_J)\d x^1\wedge\cdots\wedge \d x^n$ (with no $\d u$ term). The action $\SS\in\Omega^0(\F)$ and the symplectic form $\Omega_\phi\in \Omega^2(\F)$ are given by
		\begin{equation}\label{eq: S(phi)=int j^*phi L}
		S(\phi)=\int_M(j^\infty\!\phi)^*L\qquad\qquad\qquad\qquad\Omega^\imath_\phi=\int_\Sigma \imath^*(j^\infty\!\phi)^*\peqsub{\d}{V}\Theta
		\end{equation}
		The relevant formulas for the computations, in the $\{x^i,u^a,u^a_J\}$  coordinates, are
		\begin{align}
		&\peqsub{\d}{H} F=(D_iF)\mathbf{d} x^i && \peqsub{\d}{V} F=\frac{\partial F}{\partial u^a_J}\theta^a_J\\
		&\peqsub{\d}{H} x^i=\mathbf{d}x^i && \peqsub{\d}{V}x^i=0\\
		&\peqsub{\d}{H}u^a_J=u^a_{J\cup\{i\}}\mathbf{d}x^i &&\peqsub{\d}{V}u^a_J=\theta^a_J\\
		&\peqsub{\d}{H}\theta^a_I=\mathbf{d}x^i\wedge\theta^a_{I\cup\{i\}}&&\peqsub{\d}{V}\theta^\alpha_I=0\end{align}
		
		\newpage
		
		\section{Some important results}\label{Appendix: results}
		
		\subsection{Stokes' theorems}\label{section stokes theorems}
		
		In this section we assume that $M$ is an oriented and connected $n$-manifold with boundary $\partial M\overset{\jmath}{\hookrightarrow}M$, possibly empty, with the induced orientation.
		
		\centerline{
			\begin{tabular}[t]{|p{.5\textwidth}|p{.5\textwidth}|}\hline
				\begin{minipage}[t]{\linewidth} 
					\textbf{Stokes}\rule{0ex}{3ex}\\[0.2ex]
					If $\partial M=\varnothing$ and $\omega=\d\alpha\in\Omega_c^{n}(M)$, then
					\[\int_M\omega=0\]
				\end{minipage} 
				&
				\begin{minipage}[t]{\linewidth}
					\textbf{Inverse Stokes}\rule{0ex}{2.5ex}\\[0.2ex]
					If $\partial M=\varnothing$ and $\omega\in\Omega_c^n(M)$ such that
					\[\int_M\omega=0\]
					then $\exists\alpha\in\Omega_c^{n-1}(M)$ with $\omega=\d \alpha$.
				\end{minipage}
				\mbox{}\vspace*{0.1ex}\mbox{}
				\\ \hline
				\begin{minipage}[t]{\linewidth}
					\textbf{Stokes with boundary}\rule{0ex}{2.5ex}\\[0.2ex]
					If $\omega=\d\alpha\in\Omega_c^{n-1}(M)$, $\overline{\beta}=\jmath^*\alpha-\d\overline{\gamma}\in\Omega_c^{n-1}(\partial M)$, then
					\[\int_M\omega=\int_{\partial M}\overline{\beta}\]		
				\end{minipage}&
				\begin{minipage}[t]{\linewidth}
					\textbf{Inverse Stokes with boundary}\rule{0ex}{2.5ex}\\[0.2ex]
					If $\omega\in\Omega_c^n(M)$ and $\overline{\beta}\in\Omega_c^{n-1}(\partial M)$ are such that
					\[\int_M\omega=\int_{\partial M}\overline{\beta}\]
					then $\exists\alpha\in\Omega_c^{n-1}(M),\overline{\gamma}\in\Omega_c^{n-2}(\partial M)$  with $\omega=\d \alpha$ and $\overline{\beta}-\jmath^*\alpha=\d\overline{\gamma}$.
				\end{minipage}
				\mbox{}\vspace*{0.1ex}\mbox{}
				\\\hline
			\end{tabular}
		}
		\mbox{}
		
		The Stokes' theorems with and without boundary are standard results. The top-right theorem follows from the isomorphism $\int_M:H^n(M)\to\R$ given by de Rham's theorem \cite{weintraub2014differential}. The last one is a consequence of the fact that if $M$ has non-empty boundary, then $H^n(M)=0$ \cite[8.4.8]{weintraub2014differential}. Thus $\omega$ is exact and we can apply de Rham's theorem to the boundary. Following the theory developed in section \ref{section: geometry Mxpartial M}, we can state their relative versions.
		
		\centerline{
			\begin{tabular}[t]{|p{.5\textwidth}|p{.5\textwidth}|}\hline
				\begin{minipage}[t]{\linewidth} 
					\textbf{Relative Stokes}\rule{0ex}{2.5ex}\\[0.2ex]
					If $(\omega,\overline{\beta})=\underline{\d}(\alpha,\overline{\gamma})\in\Omega_c^{n}(M,\partial M)$, then
					\[\int_{(M,\partial M)}(\omega,\overline{\beta})=0\]	
				\end{minipage} 
				&
				\begin{minipage}[t]{\linewidth}
					\textbf{Inverse relative Stokes}\rule{0ex}{2.5ex}\\[0.2ex]
					If $(\omega,\overline{\beta})\in\Omega_c^n(M,\partial M)$ such that
					\[\int_M(\omega,\overline{\beta})=0\]
					then $\exists(\alpha,\overline{\gamma})\in\Omega_c^{n-1}(M,\partial M)$ with $(\omega,\overline{\beta})=\underline{\d}(\alpha,\overline{\gamma})$.			
				\end{minipage}
				\mbox{}\vspace*{0.1ex}\mbox{}
				\\ \hline
				\begin{minipage}[t]{\linewidth}
					\textbf{Relative Stokes with boundary}\rule{0ex}{2.5ex}\\[0.2ex]
					If $((\omega,\overline{\beta}),(\alpha,\overline{\gamma}))=\underline{\underline{\d}}((a,\overline{b}),(c,\overline{d}))$ then \[\int_{(M,N)}(\omega,\overline{\beta})=\int_{\underline{\partial}(M,N)}(\alpha,\overline{\gamma})\]
				\end{minipage}&
				\begin{minipage}[t]{\linewidth}
					\textbf{Inverse relative Stokes with boundary}\rule{0ex}{2.5ex}\\[0.2ex]
					If $((\omega,\overline{\beta}),(\alpha,\overline{\gamma}))\in\Omega_c^n((M,N),\underline{\partial}(M,N))$ are such that
					\[\int_{(M,N)}(\omega,\overline{\beta})=\int_{\underline{\partial}(M,N)}(\alpha,\overline{\gamma})\]
					then $\exists((a,\overline{b}),(c,\overline{d}))\in\Omega_c^{n-1}((M,N),\underline{\partial}(M,N))$ with $((\omega,\overline{\beta}),(\alpha,\overline{\gamma}))=\underline{\underline{\d}}((a,\overline{b}),(c,\overline{d}))$.		
				\end{minipage}
				\mbox{}\vspace*{0.1ex}\mbox{}
				\\\hline
			\end{tabular}
		}
		\mbox{}
		
		The first row is equal to the second row of the previous table. The second row of this table follows from remark \ref{remark: pair of pairs}.
		
		\subsection{Cohomological results}
		Consider $M$ a connected and oriented $n$-manifold with boundary $\partial M$ (possibly empty).
		\[\begin{array}{|l|l|l|}\hline
		\rule{0ex}{4ex}H^0(M)\cong\R&\rule{0ex}{4ex}H^0_c(M)\cong\left\{\!\!\begin{array}{ll}\R&\text{if }M\text{ is compact}\\0&\text{if }M\text{ is non-compact}\end{array}\right.&\rule{0ex}{4ex}H^n_c(M)\cong\left\{\!\!\begin{array}{ll}\R&\text{if }\partial M=\varnothing\\0&\text{if }\partial M\neq\varnothing\end{array}\right.\\[2ex]\hline
		\rule{0ex}{3ex}H^k(M\times\R)\cong H^k(M)&\rule{0ex}{3ex}H_c^k(M\times\R)\cong H^{k-1}_c(M)&\rule{0ex}{3ex}H_c^n(M,\partial M)\cong\R\\[.5ex]\hline
		\end{array}\]
		
		Recall that $(M,\varnothing)=M$\!. Moreover, if $M$ is compact, then $H^k_c(M)=H^k(M)$. All these results can be found in \cite{weintraub2014differential} except the last one, which follows from the Lefschetz duality \cite{maunder1996algebraic}. We also have the following important result
		\begin{equation}\label{eq: ismorphism relative contractible}
		H^k(M\times\R,\partial M\times\R)\cong H^k(M,\partial M)
		\end{equation}
		which is the relative analog to $H^k(M\times\R)\cong H^k(M)$. The isomorphism holds if we restrict to those $n$-pair of forms which are integrable over $(M,\partial M)$ but not necessarily over $(M\times\R,\partial M\times\R)$.
		
		\subsection{Other results}\label{section: other results}
		We consider the space of null Lagrangians i.e.~those with no Euler-Lagrange equations
		\[\Lag_{\mathrm{null}}(M)=\Big\{(L,\overline{\ell})\in\Omega^n(J^\infty\!E,J^\infty(\partial E))\ / \ [\underline{\d}_V(L,\overline{\ell})]=0\Big\}\] 
		We denote $H^n_{\mathrm{null}}(M)$ the space formed by the cohomological (horizontal) classes of null Lagrangians.
		
		\begin{theorem}\label{theorem: null lagrangians}\leavevmode
			\begin{itemize}
				\item If $E\to M$ is a bundle over the $n$-manifold $M$, then $H^n_{\mathrm{null}}(M)\cong H^n(E,\partial E)$, where $\partial E\to\partial M$ is the induced bundle.
				\item If $E\to M$ is a contractible bundle over the $n$-manifold $M$, then $H^n_{\mathrm{null}}(M)\cong H^n(M,\partial M)$.
			\end{itemize}
		\end{theorem}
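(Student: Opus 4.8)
The plan is to reduce both statements to a computation inside the (relative) variational bicomplex and then to a purely topological statement about the pair $(E,\partial E)$. First I would pass to the jet language of appendix \ref{Appendix: jets}, using the identification $\OOmega_{\mathrm{loc}}((M,\partial M)\times\F)\cong\Omega(J^\infty E,J^\infty(\partial E))$ of section \ref{section: connection previous formalism}. By the relative first variation formula \eqref{eq: first variation}--\eqref{eq: d(L,l)}, a pair $(L,\overline{\ell})\in\Lag(M)$ is null exactly when its relative source form $(E_I,\overline{b}_I)$ vanishes, equivalently when $\underline{\dd}(L,\overline{\ell})=\underline{\d}(\Theta^L,\overline{\theta}^{(L,\overline{\ell})})$. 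Hence, by definition, $H^n_{\mathrm{null}}(M)$ is the degree-$n$ cohomology of the relative Euler--Lagrange (edge) complex obtained by quotienting the augmented relative variational bicomplex by $\underline{\d}$-exact terms: \[\cdots\xrightarrow{\ \underline{\d}\ }\Omega^{(n-1,0)}\xrightarrow{\ \underline{\d}\ }\Omega^{(n,0)}\xrightarrow{\ \mathcal{E}\ }\mathcal{F}^{1}\xrightarrow{\ \delta\ }\cdots\] in the relative pair.

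The heart of the argument is to identify the cohomology of this complex with the relative de Rham cohomology $H^n(J^\infty E,J^\infty(\partial E))$. In the boundaryless, non-relative case this is the classical theorem (Anderson \cite{anderson1989variational}, Takens \cite{takens1979global}, Tulczyjew--Vinogradov) that the Euler--Lagrange complex is a resolution computing $H^\bullet_{dR}(J^\infty E)$; its proof rests on exactness of the rows of the bicomplex (our horizontal exactness Theorem \ref{theorem: Wald exact - close}, already stated for manifolds with boundary), the homotopy operators of the integration-by-parts Theorem \ref{integracionpartes}, the interior Euler operator, and the commutation rules of Theorem \ref{theorem: ev vector properties}. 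I would run the same ``tic-tac-toe'' argument relatively, but the cleanest route is to exploit the mapping-cone structure $\Omega^{(r,s)}((M,\partial M)\times\F)=\Omega^{(r,s)}(M\times\F)\oplus\Omega^{(r-1,s)}(\partial M\times\F)$: it yields a short exact sequence of cochain complexes \[0\longrightarrow\Omega^{(\bullet-1,0)}(\partial M\times\F)\longrightarrow\Omega^{(\bullet,0)}((M,\partial M)\times\F)\longrightarrow\Omega^{(\bullet,0)}(M\times\F)\longrightarrow0,\] hence a long exact sequence in Euler--Lagrange cohomology, compatible with the analogous de Rham sequence for $(J^\infty E,J^\infty(\partial E))$. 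Applying the already-established non-relative theorem to $M$ and to $\partial M$ separately, the five lemma gives $H^n_{\mathrm{null}}(M)\cong H^n(J^\infty E,J^\infty(\partial E))$.

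To finish the first bullet, note that $\pi^\infty_E:J^\infty E\to E$ is a projective limit of affine bundles (each $\pi^{k+1}_{k}$ is an affine bundle), so $J^\infty E$ deformation retracts onto $E$, and this retraction restricts to one of $J^\infty(\partial E)$ onto $\partial E$; by homotopy invariance of relative de Rham cohomology, $H^n(J^\infty E,J^\infty(\partial E))\cong H^n(E,\partial E)$. For the second bullet, a fiber bundle $E\to M$ with contractible fiber admits a global section and is fiberwise contractible, so $E$ deformation retracts onto the image of a section, which is diffeomorphic to $M$; restricting the section to $\partial M$ gives the compatible statement for $\partial E$ and $\partial M$, whence $H^n(E,\partial E)\cong H^n(M,\partial M)$, and combining with the first bullet proves $H^n_{\mathrm{null}}(M)\cong H^n(M,\partial M)$. (For the trivial bundle this last step is iterated \eqref{eq: ismorphism relative contractible}; the specialization $H^n(M,\partial M)=0$ when $\partial M\neq\varnothing$ recovers Theorem \ref{theorem: trivial action trivial lagrangians}.)

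The main obstacle is the middle step: ensuring that the full variational-bicomplex machinery --- row exactness, the homotopy and interior-Euler operators, and the resulting resolution property --- genuinely survives passage to the relative pair $(M,\partial M)$ and to manifolds with (simple) corners. The delicate point is the mismatch flagged in appendix \ref{Appendix: jets}: the boundary Euler--Lagrange data $\overline{b}_I$ naturally lives on a bundle slightly larger than $J^\infty(\partial E)$, even though $\overline{\ell}$ itself only involves tangential jets; one must check that this larger bundle deformation-retracts onto $\partial E$ as well and that the comparison with de Rham cohomology only ever uses that retraction, so the cohomology class is unaffected. The five-lemma route mitigates this, since it invokes the classical theorem only on $M$ and on $\partial M$ as genuine manifolds, leaving only the compatibility of the restriction maps to be verified.
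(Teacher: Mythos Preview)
Your approach is essentially the paper's own: reduce the first bullet to Anderson's theorem 5.9 adapted to the relative pair, then deduce the second from fiber contractibility. The paper simply asserts that Anderson's cohomological proof ``works out'' in the relative setting and invokes \eqref{eq: ismorphism relative contractible}, whereas you propose a concrete mechanism---the mapping-cone short exact sequence plus the five lemma---to transport the non-relative result on $M$ and on $\partial M$ to the pair. That is a sound and arguably cleaner strategy than rerunning the whole tic-tac-toe argument relatively; the only point to check carefully, as you note, is that the restriction maps intertwine the Euler--Lagrange and de Rham long exact sequences.

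One genuine error: your final parenthetical claims $H^n(M,\partial M)=0$ when $\partial M\neq\varnothing$, and that this recovers Theorem~\ref{theorem: trivial action trivial lagrangians}. This is false---by Lefschetz duality (see the table in Appendix~\ref{Appendix: results}) one has $H^n_c(M,\partial M)\cong\R$, and the proof of Theorem~\ref{theorem: trivial action trivial lagrangians} uses precisely that this space is \emph{one}-dimensional (not zero) together with the vanishing of the integral to force the coefficient $\alpha=0$. The parenthetical is inessential to your main argument, but you should delete or correct it.
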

		\begin{proof}\mbox{}\\
			The first result follows adapting the proof of \cite[theorem 5.9]{anderson1989variational} to the relative case. Everything works out as the proof relies on cohomological techniques (see also \cite{takens1979global}). The second point follows from \eqref{eq: ismorphism relative contractible} and the fact that the fibers are contractible.
		\end{proof}

		\begin{theorem}\label{theorem: trivial action trivial lagrangians}\mbox{}\\
			Let $\F$ be the space of sections of a contractible bundle and $(L,\overline{\ell})\in\Lag(M)$. If $(L,\overline{\ell})\equivint 0$, then $[(L,\overline{\ell})]=0$
		\end{theorem}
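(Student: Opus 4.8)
The plan is to show that $(L,\overline\ell)\equivint 0$ forces $(L,\overline\ell)$ to be a null Lagrangian, to identify its horizontal cohomology class with a class in $H^n(M,\partial M)$ via Theorem \ref{theorem: null lagrangians}, to observe that the action is nothing but the integral of a representative of that class, and then to invoke the vanishing of the action together with the (relative) de Rham--Lefschetz isomorphism $\int_{(M,\partial M)}\colon H^n_c(M,\partial M)\xrightarrow{\sim}\R$ to conclude that the class, hence $(L,\overline\ell)$ itself, is $\underline{\d}$-exact with a local potential. The general equivalence in Lemma \ref{lemma: equivalence relations} then follows by linearity, since $(L_1,\overline\ell_1)\equivint(L_2,\overline\ell_2)$ is equivalent to $(L_1-L_2,\overline\ell_1-\overline\ell_2)\equivint 0$ and all the operations involved are linear.

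First I would record that $(L,\overline\ell)$ is a null Lagrangian. Indeed, $\SS=[\![(L,\overline\ell)]\!]$ is identically zero, hence so is $\dd\SS$; feeding \eqref{eq: dS} with variations $\VV_\phi$ supported away from the lids, and then with arbitrary boundary variations, gives $(E_I,\overline b_I)\equiv 0$. More directly, Lemma \ref{lemma: equivalent Lagrangians same eq} already yields $\underline{\dd}(L,\overline\ell)=\underline{\d}\bigl(\Theta^L,\overline\theta^{(L,\overline\ell)}\bigr)$, so $[\underline{\dd}(L,\overline\ell)]=0$ and therefore $(L,\overline\ell)\in\Lag_{\mathrm{null}}(M)$.

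Since $E\to M$ is a contractible bundle, the second part of Theorem \ref{theorem: null lagrangians} gives $H^n_{\mathrm{null}}(M)\cong H^n(M,\partial M)$. Concretely, this says that a null Lagrangian is, up to a $\underline{\d}$-exact term with a local potential, the pullback along $\pi^\infty_M$ of a closed relative $n$-form on $(M,\partial M)$: there exist a closed $(\omega,\overline\omega)\in\Omega^n(M,\partial M)$, viewed as a field-independent element of $\OOmega^{(n,0)}\bigl((M,\partial M)\times\F\bigr)$, and $(Y,\overline y)\in\OOmega^{(n-1,0)}\bigl((M,\partial M)\times\F\bigr)$ with $(L,\overline\ell)=(\omega,\overline\omega)+\underline{\d}(Y,\overline y)$. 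Evaluating at any $\phi\in\F$ and applying the relative Stokes' theorem \eqref{eq: stokes pair} (the relative boundary of $(M,\partial M)$ being empty), the $\underline{\d}$-exact piece integrates to zero, so
\[\SS(\phi)=\int_{(M,\partial M)}(L,\overline\ell)(\phi)=\int_{(M,\partial M)}(\omega,\overline\omega),\]
which is independent of $\phi$. Because $\SS\equiv 0$, this integral vanishes; as $M$ is connected and oriented, integration induces an isomorphism $H^n_c(M,\partial M)\cong\R$ (the relative analogue of de Rham's theorem via Lefschetz duality, cf.\ the cohomological table and \eqref{eq: ismorphism relative contractible}, restricted to integrable relative forms), so $[(\omega,\overline\omega)]=0$ in $H^n(M,\partial M)$, i.e.\ $(\omega,\overline\omega)$ is itself $\underline{\d}$-exact. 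Absorbing its potential into $(Y,\overline y)$ produces $(L,\overline\ell)=\underline{\d}(Y',\overline y')$ with $(Y',\overline y')\in\OOmega^{(n-1,0)}\bigl((M,\partial M)\times\F\bigr)$, that is, $[(L,\overline\ell)]=0$ in the sense of \eqref{eq: relation Lagrangians}.

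I expect the main obstacle to be the second step: making precise, at the level of representatives, the isomorphism of Theorem \ref{theorem: null lagrangians} --- that every null Lagrangian over a contractible bundle is horizontally cohomologous to a field-independent closed relative form and that this identification is compatible with integration over $(M,\partial M)$ --- together with the careful bookkeeping of compact supports and integrability required to apply the relative Stokes' theorem and the de Rham--Lefschetz isomorphism when $M$ is non-compact. This is where one must genuinely trace through (the relative adaptation of) the cohomological argument behind Theorem \ref{theorem: null lagrangians}; everything else is a diagram chase assembling that input with the remarks of Section \ref{section: action} on the failure of the non-contractible case.
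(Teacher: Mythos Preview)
Your proposal is correct and follows essentially the same route as the paper: show $(L,\overline\ell)$ is null, invoke Theorem~\ref{theorem: null lagrangians} to identify $H^n_{\mathrm{null}}(M)\cong H^n(M,\partial M)$, and then use integration over $(M,\partial M)$ (the relative de Rham/Lefschetz isomorphism in top degree) to kill the remaining class. The only cosmetic difference is that the paper phrases the last step by picking the explicit generator $(\pi^\infty_M)^*(\vol_M,\vol_{\partial M})$ of the one-dimensional space $H^n_{\mathrm{null}}(M)$, writing $[(L,\overline\ell)]=\alpha\,[(\pi^\infty_M)^*(\vol_M,\vol_{\partial M})]$, and showing $\alpha=0$ by integrating, whereas you pick a de Rham representative $(\omega,\overline\omega)$ directly and appeal to the integration isomorphism.
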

		\begin{proof}\mbox{}\\
			The action obtained from $(L,\overline{\ell})$ is identically zero, so it provides no EL equation. That means that $[(L,\overline{\ell})]\in H^n_{\mathrm{null}}(M)$. From the previous theorem we have $H^n_{\mathrm{null}}(M)\cong H^n(M,\partial M)$ and the latter space is generated, according to the table in section \ref{section: other results}, just by one element which will be of the form $[(\vol_M,\vol_{\partial M})]\neq0$. Now notice that $(\pi^\infty_M)^*(\vol_M,\vol_{\partial M})\in\Omega^n(J^\infty\!E,J^\infty(\partial E))$ is independent of the fields, so its vertical derivative is zero i.e.~it has no EL equation either. This means that
			\[[(\pi^\infty_M)^*(\vol_M,\vol_{\partial M})]\in H^n_{\mathrm{null}}(M)\]
			Let us prove that it is non-zero. Assume that $(\pi^\infty_M)^*(\vol_M,\vol_{\partial M})$ were exact, then its pullback through $j^\infty\!\phi$ had to be also exact. But this would be a contradiction because
			\[(j^\infty_M)^*(\pi^\infty_M)^*(\vol_M,\vol_{\partial M})=(\pi^\infty_M\circ j^\infty\!\phi)^*(\vol_M,\vol_{\partial M})\overset{\dagger}{=}\mathrm{Id}^*(\vol_M,\vol_{\partial M})=(\vol_M,\vol_{\partial M})\]
			which is non-exact. Notice that in $\dagger$ we have used that $j^\infty\!\phi$ is a section of the bundle $\pi^\infty_M:J^\infty\!E\to M$\!.\vspace*{2ex}
			
			As $H^n_{\mathrm{nul}}(M)$ has dimension $1$, then $(\pi^\infty_M)^*(\vol_M,\vol_{\partial M})$ alone forms a generating system. In particular, there exists some $\alpha\in\R$ such that $[(L,\overline{\ell})]=\alpha[(\pi^\infty_M)^*(\vol_M,\vol_{\partial M})]$. Integrating this expression, which only depends on the cohomology, and using that by hypothesis $(L,\overline{\ell})\equivint0$, leads to
			\begin{align*}
			0&=\int_{(M,\partial M)}(L,\overline{\ell})(\phi)=\int_{(M,\partial M)}\alpha\Big((\pi^\infty_M)^*(\vol_M,\vol_{\partial M})\Big)(\phi)\overset{\eqref{eq: S(phi)=int j^*phi L}}{=}\\
			&=\alpha\int_{(M,\partial M)}(j^\infty\!\phi)^*(\pi^\infty_M)^*(\vol_M,\vol_{\partial M})=\alpha\int_{(M,\partial M)}\Big(\pi^\infty_M\circ j^\infty\!\phi\Big)^*(\vol_M,\vol_{\partial M})=\\
			&=\alpha\int_{(M,\partial M)}\mathrm{Id}^*(\vol_M,\vol_{\partial M})=\alpha\int_{(M,\partial M)}(\vol_M,\vol_{\partial M})
			\end{align*}
			The last integral is non-zero applying the inverse relative Stokes' theorem, stated in section \ref{section stokes theorems}, to  $[(\vol_M,\vol_{\partial M})]\neq0$. Therefore, $\alpha=0$ which shows that $[(L,\overline{\ell})]=0$. 
		\end{proof}
	\end{appendices}

	\bibliographystyle{plainnat}
	
	\small
	\bibliography{bibligraphy}
\end{document}